\newtheorem{theorem}{Theorem}
\newtheorem{corollary}[theorem]{Corollary}
\newtheorem{assumption}{Assumption}
\newtheorem{lemma}[theorem]{Lemma}
\newtheorem{definition}[theorem]{Definition}
\newtheorem{proposition}[theorem]{Proposition}
\newtheorem{remark}{Remark}
\newtheorem{lemmaRESTATED}{Lemma}
\begin{document}

\def\myparagraph#1{\vspace{2pt}\noindent{\bf #1~~}}



\def\Q{{Q}}

\long\def\ignore#1{}
\def\myps[#1]#2{\includegraphics[#1]{#2}}
\def\etal{{\em et al.}}
\def\Bar#1{{\bar #1}}
\def\br(#1,#2){{\langle #1,#2 \rangle}}
\def\setZ[#1,#2]{{[ #1 .. #2 ]}}
\def\Pr#1{\mbox{\tt Pr}\left[{#1}\right]}
\def\REACHED{\mbox{\tt REACHED}}
\def\AdjustFlow{\mbox{\tt AdjustFlow}}
\def\GetNeighbors{\mbox{\tt GetNeighbors}}
\def\true{\mbox{\tt true}}
\def\false{\mbox{\tt false}}
\def\Process{\mbox{\tt Process}}
\def\ProcessLeft{\mbox{\tt ProcessLeft}}
\def\ProcessRight{\mbox{\tt ProcessRight}}
\def\Add{\mbox{\tt Add}}

\newcommand{\eqdef}{{\stackrel{\mbox{\tiny \tt ~def~}}{=}}}

\def\setof#1{{\left\{#1\right\}}}
\def\suchthat#1#2{\setof{\,#1\mid#2\,}} 
\def\event#1{\setof{#1}}
\def\q={\quad=\quad}
\def\qq={\qquad=\qquad}
\def\calA{{\cal A}}
\def\calB{{\cal B}}
\def\calC{{\cal C}}
\def\calD{{\cal D}}
\def\calE{{\cal E}}
\def\calF{{\cal F}}
\def\calG{{\cal G}}
\def\calI{{\cal I}}
\def\calH{{H}}
\def\calL{{\cal L}}
\def\calN{{\cal N}}
\def\calP{{\cal P}}
\def\calR{{\cal R}}
\def\calS{{\cal S}}
\def\calT{{\cal T}}
\def\calU{{\cal U}}
\def\calV{{\cal V}}
\def\calO{{\cal O}}
\def\calX{{\cal X}}

\def\X{{\Omega}}
\def\E{{\calE}}
\def\T{{\calF}}
\def\U{{\Phi}}
\def\XX(#1){{{#1}^\downarrow}}

\def\calY{{\cal Y}}
\def\s{\footnotesize}
\def\calNG{{\cal N_G}}
\def\psfile[#1]#2{}
\def\psfilehere[#1]#2{}
\def\epsfw#1#2{\includegraphics[width=#1\hsize]{#2}}
\def\assign(#1,#2){\langle#1,#2\rangle}
\def\edge(#1,#2){(#1,#2)}
\def\VS{\calV^s}
\def\VT{\calV^t}
\def\slack(#1){\texttt{slack}({#1})}
\def\barslack(#1){\overline{\texttt{slack}}({#1})}
\def\NULL{\texttt{NULL}}
\def\PARENT{\texttt{PARENT}}
\def\GRANDPARENT{\texttt{GRANDPARENT}}
\def\TAIL{\texttt{TAIL}}
\def\HEADORIG{\texttt{HEAD$\_\:$ORIG}}
\def\TAILORIG{\texttt{TAIL$\_\:$ORIG}}
\def\HEAD{\texttt{HEAD}}
\def\CURRENTEDGE{\texttt{CURRENT$\!\_\:$EDGE}}

\def\unitvec(#1){{{\bf u}_{#1}}}
\def\uvec{{\bf u}}
\def\vvec{{\bf v}}
\def\Nvec{{\bf N}}

\newcommand{\bg}{\mbox{$\bf g$}}
\newcommand{\bh}{\mbox{$\bf h$}}

\newcommand{\bx}{\mbox{$x$}}
\newcommand{\by}{\mbox{\boldmath $y$}}
\newcommand{\bz}{\mbox{\boldmath $z$}}
\newcommand{\bu}{\mbox{\boldmath $u$}}
\newcommand{\bv}{\mbox{\boldmath $v$}}
\newcommand{\bw}{\mbox{\boldmath $w$}}
\newcommand{\bvarphi}{\mbox{\boldmath $\varphi$}}
\newcommand{\balpha}{\mbox{\boldmath $\alpha$}}

\newcommand\myqed{{}}

\newcommand{\IBFSFS}{{IBFS$^{\mbox{~\!\tiny FS}}$}}
\newcommand{\IBFSAL}{{IBFS$^{\mbox{~\!\tiny AL}}$}}
\newcommand{\BKFS}{{BK$^{\mbox{~\!\tiny FS}}$}}
\newcommand{\BKAL}{{BK$^{\mbox{~\!\tiny AL}}$}}

\newcommand{\XL}{{X_{\le L}}}

\def\br#1{{\llbracket #1 \rrbracket}}


\title{\Large\bf  }
\author{}
\title{\Large\bf  \vspace{0pt} OrderedCuts: A new approach for computing Gomory-Hu tree}
\author{Vladimir Kolmogorov \\ \normalsize Institute of Science and Technology Austria (ISTA) \\ {\normalsize\tt vnk@ist.ac.at}}
\date{}
\maketitle

\begin{abstract}
The Gomory-Hu tree, or a cut tree, is a classic data structure 
that stores minimum $s$-$t$ cuts of an undirected weighted graph for all pairs of nodes $(s,t)$.
We propose a new approach for computing the cut tree based on a reduction
to the problem that we call {\tt OrderedCuts}. 
Given a sequence of nodes $s,v_1,\ldots,v_\ell$,
its goal is to compute minimum $\{s,v_1,\ldots,v_{i-1}\}$-$v_i$ cuts for all $i\in[\ell]$.
We show that the cut tree  can be computed by $\tilde O(1)$ calls to  {\tt OrderedCuts}.
We also establish new results for {\tt OrderedCuts} that may be of independent interest.
First, we prove that all $\ell$ cuts can be stored compactly with $O(n)$ space in a data structure that we call an {\em {\tt OC} tree}.
Second, we prove results that allow divide-and-conquer algorithms for computing OC tree.
Finally, we describe a practical implementation based on {\tt OrderedCuts},
and compare it experimentally with two existing implementations of the classical Gomory-Hu tree
algorithm as well as with our implementations.
The results suggest that the  {\tt OrderedCuts}-based approach is the most robust:
on many family of problems it outperforms other algorithms by 1-2 orders of magnitude,
and is never slower by more than a small factor.
Our implementation is publicly available at \url{https://pub.ist.ac.at/~vnk/software.html}.
\end{abstract}

\section{Introduction}\label{sec:intro}

We study the problem of computing the {\em Gomory-Hu (GH) tree}~\cite{GomoryHu:61} (aka {\em cut tree}) in an undirected weighted graph $G=(V,E,w)$ with non-negative edge weights.
This is a weighted tree $\calT$ on nodes $V$ with the property that for any distinct $s,t\in V$,
a minimum $s$-$t$ cut $S\subseteq V$ in $\calT$ is also a minimum $s$-$t$ cut in $G$.
It has numerous applications in various domains (see e.g. Section 1.4 in~\cite{Abboud:STOC21}).

More than half a century ago Gomory and Hu~\cite{GomoryHu:61} showed that such tree exists and can be computed using $n-1$ maximum flow computations and graph contractions. 
(As usual, we denote $n=|V|$ and $m=|E|$).
This complexity has been improved upon only relatively recently in several breakthrough results:
first to $\tilde O(n^2)$ 
\cite{GH:subcubic} 
and then to $O(m^{1+o(1)})$~\cite{GH:linear,GHdeterminstic:FOCS25}.
(The methods in~\cite{GH:subcubic,GH:linear} are randomized Monte-Carlo,
while the method in~\cite{GHdeterminstic:FOCS25} is deterministic).

Unfortunately, the algorithms in~\cite{GH:subcubic,GH:linear,GHdeterminstic:FOCS25}
 do not appear very practical at present.
These algorithms quite are involved and rely on several complex subroutines.
The authors of~\cite{GH:linear} write that their reduction to $n^{o(1)}$ maxflow computations is
``{\em is highly technical and requires several heavy hammers and over 50 pages to describe}''.
The determinsitic approach in~\cite{GHdeterminstic:FOCS25} is even more extensive and introduces further sophisticated ingredients such as expander decompositions.
Note that several key steps in~\cite{GH:subcubic,GH:linear} are Monte-Carlo (correct only with high probability).
This means that certain parameters such as the number of iterations 
need to be set to their worst-case bounds,
as the success of an individual iteration is not directly observable.
These bounds often involve large constants that likely offset the asymptotic gains on practical datasets.

To our knowledge, the algorithms in~\cite{GH:subcubic,GH:linear,GHdeterminstic:FOCS25}
have not been implemented. All reported implementations that we aware of~\cite{Goldberg:01,MassiveGraphs,GH:parallel:11,GH:parallel:12,GH:parallel:17,GH:parallel:20} use 
either variants of the original algorithm by Gomory and Hu~\cite{GomoryHu:61}
or its simplification due to Gusfield~\cite{Gusfield:90}.

In this paper we propose a conceptually new approach for  constructing a GH tree.
It is based on a reduction to the computational problem that we call {\tt OrderedCuts}.
The latter is formulated (algorithmically) below;
 we use notation $f(U,v)$ to denote the cost of a minimum $U$-$v$ cut in $G=(V,E,w)$.
 To our knowledge, it has not been studied in the literature before.

\begin{algorithm}[H]
  \DontPrintSemicolon
\SetNoFillComment
  for each $i\in[\ell]$ compute $s$-$v_i$ cut $S_i$ with $v_i\in S_i$ s.t.\ ${\tt cost}(S_i)\le f(\{s,v_1,\ldots,v_{i-1}\},v_i)$\!\!\!\!\!\!\!\!\!\!\!\!\!\! \\
  for each $v\in V$ compute  $\lambda(v)=\min_{i\in [\ell]:v\in S_i} {\tt cost}(S_i)$,
  return $\lambda$ 
      {\em\small \hspace{250pt}$\slash\ast$~ 
      convention: minimum over the empty set is $+\infty$
      ~$\ast\slash$\!\!\!\!\!\!}
      
      \caption{${\tt OrderedCuts}(s,v_1,\ldots,v_\ell;G)$ {\em\small \hspace{20pt}$\slash\ast$~ $s,v_1,\ldots,v_\ell$ are distinct nodes in $G$ ~$\ast\slash$\!\!\!} 
      }
\end{algorithm}

Note that the definition is rather flexible, and allows several implementations.
One could for example choose $S_i$ as a minimum $\{s,v_1,\ldots,v_{i-1}\}$-$v_i$ cut for each $i\in[\ell]$;
we will refer to such version as {\em strong} ${\tt OrderedCuts}$ (and accordingly to the general version as {\em weak} {\tt OrderedCuts}).
Another option is to choose $S_i$ as a minimum $s$-$v_i$ cut; then the procedure would return vector~$\lambda$ with $\lambda(v_i)=f(s,v_i)$.
Below we discuss 
(i) how to use it to compute the GH tree, and
(ii) how to implement (strong) ${\tt OrderedCuts}$.

\myparagraph{Computing GH tree via {\tt OrderedCuts}}
Using techniques from~\cite{Abboud:FOCS20} and~\cite{LiPanigrahi:SICOMP24}, we establish the following result.
\begin{theorem}\label{th:main}
There exists a randomized (Las-Vegas) algorithm for computing GH tree for graph~$G$ with
expected complexity $\tilde O(1)\cdot (t_{\tt OC}(n,m) + t_{\tt MC}(n,m))$,
where $t_{\tt OC}(n,m)$ and $t_{\tt MC}(n,m)$ are the complexities of ${\tt OrderedCuts}(\cdot)$ and minimum $s$-$t$ cut computations respectively on a sequence
of graphs $H_1,\ldots,H_k$ that have $O(n)$ nodes and $O(m)$ edges in total.
\footnote{
When stating complexity bounds, we assume that $t_{\tt MC}(n,m)=\Omega(n+m)$ and $t_{\tt OC}(n,m)=\Omega(n+m)$.
}
\end{theorem}
The algorithm in Theorem~\ref{th:main} calls {\tt OrderedCuts} for subsets of nodes that are sampled according
to certain probabilities (and then ordered according to the current estimates of $f(s,v)$).
We found that setting these probabilities to 1 is very effective in practice:
in our experiments the number of calls to  {\tt OrderedCuts} was a very small constant.
Accordingly, in our implementation we use a deterministic algorithm
that does not have guarantees of Theorem~\ref{th:main}, but appears to work well in practice.

\myparagraph{Implementing {\tt OrderedCuts}} 
We prove that strong  {\tt OrderedCuts} have several interesting properties.
First, we show that all cuts $S_1,\ldots,S_\ell$ can be stored compactly with $O(n)$ space using a data structure
that we call an {\em {\tt OC} tree}. It is given by a rooted tree on nodes $U=\{s,v_1,\ldots,v_\ell\}$
satisfying a certain monotonicity property together with a partition of $V_G$ such that each partition contains
exactly one node in $U$.


Second, we present lemmas that allow the use of divide-and-conquer strategies
for computing {\tt OC} tree.
One of the strategies can be formulated as follows (ignoring base cases):
(1)~split input sequence as $(s,v_1,\ldots,v_\ell)=s\alpha\beta$ with $|\alpha|\approx |\beta|$;
(2)~solve recursively the problem for sequence $s\alpha$ in $G$;
(3)~compute minimum $s\alpha$-$\beta$ cut $(S,T)$ in $G$;
(4)~solve recursively the problem for sequence $s\beta$ in the graph obtained from $G$ by contracting $S$ to $s$;
(5)~combine the results. 
If, for example, the order of $(v_1,\ldots,v_\ell)$ is a random
permutation then we will have $|T|\le \tfrac 12 |V_G|$ with probability at least $\tfrac 12$
(since orderings $\alpha\beta$ and $\beta\alpha$ are equally likely),
which leads to an improved complexity over a naive approach.~\footnote{
The actual algorithm formulated in Section~\ref{sec:implementation}
is slightly different: 
instead of computing minimum $s\alpha$-$\beta$ cut, 
we split $G$ into components using the information obtained in step (2),
and then compute an appropriate minimum cut in each component. The guarantees for random permutations are still preserved.
}
\begin{theorem}\label{th:main2}
There exists an algorithm for solving strong {\tt OrderedCuts}
for sequence $(s,v_1,\ldots,v_\ell)$ that
has complexity of  maximum flow computations on $O(n)$ graphs of size $(n,m)$ each.
If the order of $v_1,\ldots,v_\ell$ is a uniformly random permutation
then the expected total number of nodes and edges in these graphs is $(O(n^{1+\gamma}),O(n^\gamma m))$
where $\gamma=\log_2 1.5=0.584...$.
\end{theorem}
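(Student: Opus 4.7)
My plan follows the divide-and-conquer template sketched just before the theorem, including the component-splitting refinement of the footnote (where, rather than computing an explicit $s\alpha$-$\beta$ cut, one uses the left subproblem's output to partition $G$ into components and then processes $\beta$ component-by-component).

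For the $O(n)$ maximum-flow bound, the argument is purely structural. The recursion tree has $\ell \le n$ leaves, one per $v_i$, and hence at most $\ell - 1$ internal nodes, each responsible for a single minimum-cut computation. Every such cut is computed in a minor of $G$ obtained by contractions and component restrictions, so its dimensions are trivially bounded by $(n,m)$.

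For the random-permutation bound, I would work with the recurrence
\[
N(n,\ell) \;\le\; n \;+\; N(n,\ell/2) \;+\; \mathbb{E}\bigl[N(|T|+1,\ell/2)\bigr]
\]
for the expected total node count across all minimum-cut invocations, where $(S,T)$ is the cut produced in step~(3) of the sketch. The decisive ingredient is the probabilistic lemma $\mathbb{E}[|T|] \le n/2$ (up to lower-order corrections) for a uniformly random permutation. This is a symmetry argument: writing $C = V \setminus (\{s\}\cup\alpha\cup\beta)$, the sink sides of the two minimum cuts corresponding to orderings $\alpha\beta$ and $\beta\alpha$ satisfy $T_{\alpha\beta} \subseteq \beta \cup C$ and $T_{\beta\alpha} \subseteq \alpha \cup C$, giving $|T_{\alpha\beta}| + |T_{\beta\alpha}| \le n - 1 + |C|$; since the two orderings are equally likely under the random permutation, averaging yields the bound. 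Plugging into the recurrence with the ansatz $N(n,\ell) = c_1 n \ell^\gamma + c_0 n$ reduces the problem to the single equation $2^\gamma = 3/2$, i.e.\ $\gamma = \log_2 1.5$, and gives $N(n,n) = O(n^{1+\gamma})$. The edge bound is obtained by the parallel recurrence with $m$ in place of $n$ in the subproblem sizes, exploiting that contracting $S$ to $s$ deletes all edges internal to $S$.

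The main obstacle is propagating the one-step expectation bound through the recursion. After one split, a subproblem has a sequence of length only $\ell/2$ living in a graph that may still be nearly full size, so the slack term $|C|$ in the symmetry argument is no longer negligible relative to the subgraph size, and the clean bound $\mathbb{E}[|T|]\le n/2$ weakens. Controlling this will require either tracking the recurrence jointly as a function of both $n$ and $\ell$ and showing that the slack contributions telescope, or modifying the algorithm so that the effective $|C|$ stays small at every level (for instance, by padding the sequence with the remaining unordered nodes, or by invoking the OC tree structure from the earlier part of the paper to localize each subproblem to a tight component).
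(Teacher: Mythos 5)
Your high-level outline matches the paper's divide-and-conquer algorithm, but there are two genuine gaps, one in the symmetry step and one in the accounting. The symmetry bound you derive is too weak: from $T_{\alpha\beta}\subseteq\beta\cup C$ and $T_{\beta\alpha}\subseteq\alpha\cup C$ you get $|T_{\alpha\beta}|+|T_{\beta\alpha}|\le n-1+|C|$, so averaging only yields $\mathbb{E}|T|\le (n-1+|C|)/2$, which approaches $n$ rather than $n/2$ when $|C|$ is a constant fraction of $V$ --- this is not a lower-order correction. The paper needs and proves a sharper fact (Lemma~\ref{lemma:PAOISHGASFADASGAKGHAKSHDBAJAKJDFJADFBAJDF}(a)): after first passing to a random subset $A'\subseteq\alpha$ of size $|\beta|$ to balance the two sides, the \emph{minimal} minimum $A'$-$\beta$ cut sink side and the \emph{minimal} minimum $\beta$-$A'$ cut sink side are \emph{disjoint} (this uses posimodularity of undirected cut functions together with minimality), and the corresponding induced subgraphs are edge-disjoint too. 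This gives $|T_{A'\beta}|+|T_{\beta A'}|\le n$ exactly, and averaging over the $\{A',\beta\}\mapsto(A',\beta)$ vs.\ $(\beta,A')$ coin flip then yields $\mathbb{E}|T|\le n/2$ and $\mathbb{E}|E[T]|\le m/2$ with no slack term to control.

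The second gap you diagnosed yourself, but your two proposed remedies are not what the paper does, and the actual fix is worth spelling out because it is the decisive idea. The paper abandons the $(n,\ell)$-recurrence in favor of a labeling argument (Lemmas~\ref{lemma:GALSG} and~\ref{lemma:PAOISHGASFADASGAKGHAKSHDBAJAKJDFJADFBAJDF}(b)): each recursion-tree node $\sigma$ is assigned a binary string $\lambda(\sigma)$, the left child appends $0$, right children append $1$. The depth is $O(\log\ell)=O(\log n)$ because the sequence halves at every level, and the per-level disjointness from Lemma~\ref{lemma:GALSG} plus the expectation bound above show that the expected total node count over all recursion nodes carrying a fixed label $\mu$ is at most $n\cdot 2^{-\|\mu\|}$, where $\|\mu\|$ is the number of $1$s in $\mu$ (and $m\cdot 2^{-\|\mu\|}$ for edges). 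Summing,
\begin{equation*}
\sum_{\mu}2^{-\|\mu\|}\;\le\;\sum_{d=0}^{O(\log n)}\sum_{k=0}^{d}\binom{d}{k}2^{-k}\;=\;\sum_{d=0}^{O(\log n)}\Bigl(\tfrac{3}{2}\Bigr)^{d}\;=\;O\bigl(n^{\log_2(3/2)}\bigr),
\end{equation*}
which is where the $3/2$ actually comes from. The left-branch slack that breaks your recurrence is absorbed here because appending a $0$ does not change $\|\mu\|$, while the $O(\log n)$ depth bound caps the number of $0$s; grouping by $\|\mu\|$ rather than by level or by $(n,\ell)$ is the missing ingredient. (Minor point: each internal node of Algorithm~\ref{alg:divide-and-conquer} performs $|\alpha|$ min-cut calls, not one, so your ``at most $\ell-1$ internal nodes, one cut each'' tally does not directly give the $O(n)$ bound either; the paper derives it from $|\Lambda|=O(|\varphi|)$ together with Lemma~\ref{lemma:GALSG}(c).)
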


Our practical implementation uses a different divide-and-conquer strategy for computing {\tt OC} tree.
On the high-level, the method has the same form as the computations performed by the classical
Gomory-Hu algorithm:
given a problem on the current graph $G$, run maxflow algorithm
to obtain a certain cut $(S,T)$ and then recurse on the problems defined by $S$ and by $T$
(contracting the other set to a single vertex). 
There is, however, an important distinction between the two:
cuts computed in the GH approach are minimum $s$-$t$ cuts in the original graph
for some terminals $s,t$, whereas cuts in the OC approach
are minimum $S$-$T$ cuts in the original graph
where the sizes of $S,T$ may grow at deeper levels of the recursion, and furthermore
the size of $T$ can be controlled to a certain degree. 
Thus, one may expect the OC approach to give more balanced cuts in practice.
Note, if all cuts $(S,T)$ were balanced
(meaning $\min\{|S|,|T|\}=\Omega(|V_G|)$) then the maximum depth of the recursion would be logarithmic in $n$
and the approaches would take $\tilde O(1)$ calls to the maxflow algorithm on graphs of size $(O(n),O(m))$.

Our experiments confirm this intuition.
On all of our test instances (except for one family of graphs, namely cycle graphs)
the total size of graphs on which the maxflow algorithm is run was significantly smaller
in the OC approach compared to the classical GH approach. For the majority of test instances this translated
to smaller runtimes (and outperformed previous implementations of Goldberg and Tsioutsiouliklis~\cite{Goldberg:01} and of Akiba et al.~\cite{MassiveGraphs}).

\myparagraph{Related work}
GH tree construction algorithms have been extensively studied in the literature.
A significant progress has been made for unweighted simple graphs~\cite{Bhalgat:07,Abboud:STOC21,Abboud:FOCS21,Li:FOCS21,Abboud:SODA22}
and for the approximate version of the problem~\cite{Abboud:FOCS20,LiPanigrahi:STOC21,FairCuts}.

Several authors gave reductions from some other problems.
For example,
Abboud, Krauthgamer and Trabelsi showed in~\cite{Abboud:FOCS20} that the GH tree can
be constructed via $\tilde O(1)$ calls to an oracle for the ``Single Source Min Cut problem'',
whose goal is to compute minimum $s$-$v$ cuts for a fixed source node $s$ and all
other nodes $v$.
Other results in this direction are described in Appendix~\ref{sec:related}.

\myparagraph{Connection to the Hao-Orlin algorithm}
The strong version of {\tt OrderedCuts} is reminiscent of the Hao-Orlin algorithm~\cite{HaoOrlin:94} for computing a minimum cut in a graph.
The latter also computes minimum $\{s,v_1,\ldots,v_{i-1}\}$-$v_i$ cut for each $i\in[\ell]$,
for some sequence of nodes $s,v_1,\ldots,v_\ell$. It has the same complexity $O(nm\log (n^2/m))$ as a single call to a push-relabel maxflow algorithm.
The crucial difference is that the Hao-Orlin algorithm selects the order of nodes $v_1,\ldots,v_\ell$ itself;
it cannot be used for a given ordering of nodes. Note that the Hao-Orlin algorithm works for directed graphs,
whereas our results for {\tt OrderedCuts} are restricted to undirected weighted graphs.

\vspace{5pt}

The rest of the paper is organized as follows. Section~\ref{sec:background} gives background on GH tree construction
algorithms. 
Section~\ref{sec:implementation} studied properties of strong {\tt OrderedCuts}
and discusses its computation.
Section~\ref{sec:impl} describes details of our implementations,
and Section~\ref{sec:experiments} presents experimental results.
Due to space limitations some parts are moved to appendices.
Section~\ref{sec:related} elaborates some of related work on the theoretical side.
Section~\ref{sec:depth1} shows how to construct a GH tree using $\tilde O(1)$
calls to an oracle for {\em depth-1 {\tt OC} tree}.
(It is a weaker version of the {\tt OC} tree obtained from the latter by repeatedly removing leaves at depth two or larger).
Section~\ref{sec:alg} shows how to construct a GH tree using $\tilde O(1)$ oracle calls
for weak {\tt OrderedCuts}.
Section~\ref{sec:implementation:proofs} contains proofs omitted in Sections~\ref{sec:implementation} and~\ref{sec:impl}.

\section{Background and notation}\label{sec:background}
Consider an undirected weighted graph $G=(V_G,E_G,w_G)$.
A {\em cut} of $G$ is a set $U$ with $\varnothing\subsetneq U\subsetneq V_G$.
It is an {\em $S$-$T$ cut} for disjoint subsets $S,T$ of $V_G$ if $T\subseteq U\subseteq V_G-S$.
The cost of $U$ is defined as ${\tt cost}_G(U)=\sum_{uv\in E_G:|\{u,v\}\cap U|=1}w_G(uv)$.
The cost of a minimum $S$-$T$ cut in $G$ is denoted as $f_G(S,T)$. 
If one of the sets $S,T$ is singleton, e.g. $T=\{t\}$,
then we say ``$S$-$t$ cut'' and write $f_G(S,t)$ for brevity.

When graph $G$ is clear from the context
we omit subscript $G$ and write $V$, $E$, ${\tt cost}(U)$, $f(S,T)$.

\myparagraph{Gomory-Hu algorithm}
Below we review the classical Gomory-Hu algorithm~\cite{GomoryHu:61} for computing the cut tree.
It works with 
a {\em partition tree} for $G$ which is a tree $\calT=(V_\calT,E_\calT)$
such that $V_\calT$ is a partition of $V$. An element $X\in V_\calT$ is called a {\em supernode of $\calT$}.
Each edge $XY\in E_\calT$ defines a cut in $G$ in a natural way; it will be denoted as $C_{XY}$ where we assume that $Y\subseteq C_{XY}$.
We view $\calT$ as a {\bf weighted} tree where the weight of $XY$ (equivalently, $f_\calT(X,Y)$) equals ${\tt cost}_G(C_{XY})$.
Tree $\calT$ is {\em complete} if all supernodes are singleton subsets of the form $\{v\}$;
such $\calT$ can be identified with a tree on nodes $V$ in a natural way.

We use the following notation for a graph $G$, partition tree $\calT$ on $V$ and supernode $X\in V_\calT$:
\begin{itemize}
\item
 $H=G[\calT,X]$ is the auxiliary graph obtained from $G$ as follows:
 (i) initialize $H:=G$, let $\calF$ be the forest  on $V_\calT-\{X\}$ obtained from tree $\calT$ by removing node $X$;
 (ii) for each edge $XY\in E_\calT$ find the connected component $\calC_Y$ of $\calF$ containing $Y$,
 and then modify $H$ by contracting nodes in $\bigcup_{C\in\calC_Y}C$ to a single node called $v_Y$.
 Note that $V_H=X\cup\{v_Y\::\:XY\in E_\calT\}$.
\end{itemize}

\begin{algorithm}[H]
\setcounter{AlgoLine}{0}
  \DontPrintSemicolon
  set $\calT=(\{V\},\varnothing)$ \\
  \While{exists $X\in V_\calT$ with $|X|\ge 2$}
  {
 pick supernode $X\in V_\calT$ with $|X|\ge 2$ and distinct $s,t\in X$
 \\
 form auxiliary graph $H=G[\calT,X]$
 \\
 compute minimum $s$-$t$ cut $S$ in $H$ \\ 
 let $(A,B)=(X-S,X\cap S)$, 
	update $V_\calT:=(V_\calT-\{X\})\cup\{A,B\}$  and  $E_\calT:=E_\calT\cup\{AB\}$\!\!\!\!\!\!\!\!\!\!\!\!\!\!\!\!\!
	\\ for each  $XY\in E_\calT$ update $E_\calT:=(E_\calT-\{XY\})\cup\{CY\}$
	where $C=\begin{cases}A & \mbox{if }v_{Y}\notin S \\ B & \mbox{if }v_{Y}\in S \end{cases}$\!\!\!\!\!\!\!\!\!\!\!\!\!\!\!\!\!\!\!
}
      \caption{Gomory-Hu algorithm for graph $G$.
      }\label{alg:GH}
\end{algorithm}

Note that at every step Algorithm~\ref{alg:GH} splits some supernode $X$ into two smaller supernodes, $A$~and~$B$.
We will use a generalized version proposed by Abboud et al.~\cite{Abboud:FOCS20} in which one iteration may split $X$ into more than two supernodes.

\begin{algorithm}[H]
\setcounter{AlgoLine}{0}
  \DontPrintSemicolon
  set $\calT=(\{V\},\varnothing)$ \\
  \While{exists $X\in V_\calT$ with $|X|\ge 2$}
  {
 pick supernode $X\in V_\calT$ with $|X|\ge 2$, 
 form auxiliary graph $H=G[\calT,X]$
 \\
 find node $s\in X$ and non-empty laminar family $\Pi$ of subsets of $V_H-\{s\}$ such that each~$S\in\Pi$ is a minimum $s$-$t$ cut in $H$ for some $t\in X$ 
 \\
 \While{$\Pi\ne\varnothing$}
 {
 	pick minimal set $S\in\Pi$ \\
	update $\calT$ as in lines 6-7 of Alg.~\ref{alg:GH} for given $S$ (with $(A,B)=(X-S,X\cap S)$) \\
	update $X:=A$, update $H$ accordingly to restore $H=G[\calT,X]$~~ \tcp{\footnotesize \em\!\!\!now $v_B\in V_H\!\!\!\!\!\!\!~$}
	remove $S$ from $\Pi$, for each $S'\in\Pi$ with $S\subseteq S'$ replace $S'$ with $(S'-S)\cup v_B$ in $\Pi$
 }
}
      \caption{Generalized Gomory-Hu algorithm for graph $G$.
      }\label{alg:GH'}
\end{algorithm}
Note that graph $H$ at line 8 is updated by contracting set $S$ to a single vertex named $v_B$.
Clearly, such transformation and  the update at line 9
preserve the property of set $\Pi$: it is still a laminar family of subsets of $V_H-\{s\}$
such that each~$S\in\Pi$ is a minimum $s$-$t$ cut in $H$ for some $t\in X$.
This means Algorithm~\ref{alg:GH'} updates tree $\calT$ in the same way as Algorithm~\ref{alg:GH}
(with certain choices of triplets $(X,s,t)$). This fact implies the correctness of Algorithm~\ref{alg:GH'}.

To simplify the analysis, in the theoretical part we will use family $\Pi$ in which all sets are pairwise-disjoint;
clearly, such family is laminar. Our practical implementation in Section~\ref{sec:impl}
uses general laminar families.

We will need another result that appeared in \cite{Goldberg:01} in the context of the Hao-Orlin algorithm.
Using our notation, it can be stated as follows.
\begin{lemma}[{\cite{Goldberg:01}}]\label{lemma:Goldberg}
Consider distinct nodes $v_0,v_1,\ldots,v_k$ in an undirected graph $G$ with $k\ge 1$.
If $f(\{v_0,\ldots,v_{k-1}\},v_k)\le \min_{i\in[k]} f(\{v_0,\ldots,v_{i-1}\},v_i)$
then $f(\{v_0,\ldots,v_{k-1}\},v_k)=f(v_0,v_k)$.
\end{lemma}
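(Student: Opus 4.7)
The plan is to prove the two directions of the equality separately. One direction, $f(\{v_0,\ldots,v_{k-1}\},v_k)\ge f(v_0,v_k)$, is immediate by monotonicity: any $\{v_0,\ldots,v_{k-1}\}$-$v_k$ cut is in particular a $v_0$-$v_k$ cut, since $\{v_0\}\subseteq\{v_0,\ldots,v_{k-1}\}$. So the real content is the reverse inequality, and for this I will argue using a minimum $v_0$-$v_k$ cut together with the hypothesis.

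Denote $f_i \eqdef f(\{v_0,\ldots,v_{i-1}\},v_i)$ for $i\in[k]$, and let $D$ be a minimum $v_0$-$v_k$ cut in $G$ with the side convention $v_0\notin D$ and $v_k\in D$; thus ${\tt cost}(D)=f(v_0,v_k)$. Define $i^\star$ to be the smallest index in $\{1,\ldots,k-1\}$ with $v_{i^\star}\in D$, if one exists. The case split is then very short: if no such $i^\star$ exists, then $v_0,v_1,\ldots,v_{k-1}\notin D$ while $v_k\in D$, so $D$ is itself a $\{v_0,\ldots,v_{k-1}\}$-$v_k$ cut and ${\tt cost}(D)\ge f_k$. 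Otherwise $v_0,\ldots,v_{i^\star-1}\notin D$ and $v_{i^\star}\in D$, so $D$ is a $\{v_0,\ldots,v_{i^\star-1}\}$-$v_{i^\star}$ cut, giving ${\tt cost}(D)\ge f_{i^\star}$; applying the hypothesis $f_{i^\star}\ge f_k$ (valid since $i^\star\in[k]$) yields ${\tt cost}(D)\ge f_k$ in this case as well. Combining the two cases, $f(v_0,v_k)={\tt cost}(D)\ge f_k$, which is the desired reverse inequality.

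No submodularity or uncrossing argument is needed; the only subtle point is picking the \emph{smallest} index $i^\star$ so that all earlier sources $v_0,\ldots,v_{i^\star-1}$ are guaranteed to lie on the same side of $D$ as $v_0$, so that $D$ genuinely certifies $f_{i^\star}$. I do not anticipate a real obstacle here — the statement essentially asks us to notice that a minimum $v_0$-$v_k$ cut either separates all of $v_0,\ldots,v_{k-1}$ from $v_k$, or else it exhibits a separation that one of the prior values $f_{i^\star}$ must dominate, and the hypothesis is precisely calibrated to close the loop.
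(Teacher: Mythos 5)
Your proof is correct and takes essentially the same approach as the paper: both identify the first index at which the sequence $v_0,\ldots,v_{k-1}$ crosses a minimum $v_0$-$v_k$ cut and observe that the cut then certifies one of the earlier values $f_i$, which the hypothesis bounds below by $f_k$. The only cosmetic differences are that you phrase it directly as a two-sided inequality while the paper argues by contradiction, and you split off the $i^\star$-doesn't-exist case explicitly whereas the paper absorbs it into allowing $i=k$.
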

\begin{proof}
Denote $\lambda_i=f(\{v_0,\ldots,v_{i-1}\},v_i)$.
Suppose the claim is false, then there exists $v_0$-$v_k$ cut $S$ with $v_k\in S$ and ${\tt cost}(S)<\lambda_k$.
Let $i$ be the largest index in $[k]$ such that $\{s,v_1,\ldots,v_{i-1}\}\subseteq V-S$, then $v_i\in S$
and so ${\tt cost}(S)\ge \lambda_i$, contradicting condition $\lambda_k \le \min_{i\in[k]} \lambda_i$.
\end{proof}

\myparagraph{GH tree via {\tt OrderedCuts}}
We can now give a high-level idea of how we use the {\tt OrderedCuts} subroutine to implement line 4 of Algorithm~\ref{alg:GH'}
for a given source node $s$. For simplicity, let us assume that ${\tt OrderedCuts}(s,v_1,\ldots,v_\ell;H)$ explicitly
outputs $s$-$v_i$ cuts $S_i$ with ${\tt cost}(S_i)\le f(\{s,v_1,\ldots,v_{i-1}\},v_i)$.
We maintain upper bounds
$\lambda(v)$ on minimum cut costs $f(s,v)$ for nodes $v\in X-\{s\}$. In each iteration
we choose a subset $Y\subseteq X-\{s\}$, sort nodes in $Y$ in the non-increasing order of $\lambda(\cdot)$,
and call ${\tt OrderedCuts}$ for this sequence of nodes to obtain cuts $S_1,\ldots,S_\ell$.
Lemma~\ref{lemma:Goldberg} allows to certify that some of these cuts are ``true'' minimum $s$-$v_i$ cuts.
We add such cuts to $\Pi$,  update bounds $\lambda(\cdot)$ based on cuts $S_1,\ldots,S_\ell$, and repeat. For further details we refer to later sections;
they differ for practical and theoretical implementations, and also depend on the version of {\tt OrderedCuts} being used.

Throughout the paper we will use notation $C_{uv}$ to denote the minimal minimum $u$-$v$ cut in a given graph.
(The graph will always be clear from the context). Note that $v\in C_{uv}$. Set $C_{uv}$ is also known as the {\em latest $u$-$v$ cut}~\cite{Gabow:FOCS91}.
It is well-known that the family $\{C_{sv}\::\:v\ne s\}$ for a fixed node $s$ is laminar (this can be easily deduced from the submodularity inequality for cuts).

We  denote $\langle\Pi\rangle=\bigcup_{S\in \Pi} S$
 for a set of subsets~$\Pi$.


\section{Properties and computation of strong {\tt OrderedCuts}}\label{sec:implementation}

%

Consider undirected weighted graph $G=(V,E,w)$.
We will use letters $\varphi,\alpha,\beta,\ldots$ to denote sequences of distinct nodes in $V$.
For a sequence $\varphi$ and a set $X\subseteq V$ we denote $\varphi\cap X$
to be the subsequence of $\varphi$ containing only nodes in~$X$. 
With some abuse of notation we will often view sequence $\varphi=v_0 \ldots v_\ell$ as a set $\{v_0,\ldots,v_\ell\}$;
such places should always be clear from the context. 
Sequence $\varphi$ imposes a total order on nodes in $\varphi$;
we will denote this order as $\sqsubseteq$ (so that $v_0\sqsubset v_1\sqsubset\ldots\sqsubset v_\ell$), when $\varphi$ is clear from the context.

\begin{definition}\label{def:representing-tree}
Consider  sequence $\varphi=s\ldots$.
An {\em {\tt OC} tree for $\varphi$} is a pair $(\X,\calE)$ where
\begin{itemize}
\item $(\varphi,\calE)$ is a rooted tree with the root $s$ and edges oriented towards the root such that for any $uv\in\calE$ we have $v\sqsubset u$.
We write $u\preceq v$ for nodes $u,v\in\varphi$  if this tree has a directed path from $u$ to $v$.
\item $\X$ is a partition of some set $V$ with $|\X|=|\varphi|$ such that $|A\cap \varphi|=1$ for all $A\in\X$.
For node $v\in V$ we denote $[v]$ to be the unique component in $\X$ containing $v$
(then $[u]\ne[v]$ for distinct $u,v\in \varphi$).
We also denote $\XX([v])=\bigcup_{u\in\varphi\::\:u\preceq v} [u]$ for $v\in\varphi$.
\end{itemize}
We say that $(\X,\calE)$ is an {\em {\tt OC} tree for $(\varphi,G)$} 
(or {\em is valid for $G$}, when $\varphi$ is clear)
if the following holds: if $\varphi=\alpha v\ldots$ with $|\alpha|\ge 1$ then $\XX([v])$ is a minimum $\alpha$-$v$ cut in $G$.
\end{definition}

\begin{figure}
  \begin{minipage}[c]{0.4\textwidth}
			\includegraphics[trim=180 180 200 210, clip, scale=0.42]{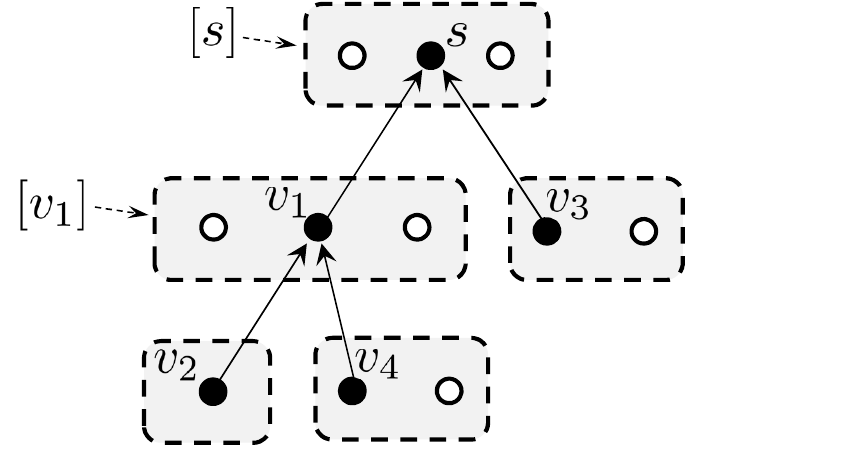} 
  \end{minipage}\hfill
  \begin{minipage}[c]{0.6\textwidth}
    \caption{Example {\tt OC} tree for sequence $\varphi=sv_1v_2v_3v_4$.
    Black and white circles are nodes in $\varphi$ and in $V-\varphi$, respectively.
    A valid {\tt OC} tree encodes all  cuts in  the definition of the strong {\tt OrderedCuts} problem;
    for example, set $[v_1]^\downarrow=[v_1]\cup [v_2]\cup [v_4]$ is a minimum $s$-$v_1$ cut,
    and set $[v_3]^\downarrow=[v_3]$ is a minimum $\{s,v_1,v_2\}$-$v_3$ cut.
    } \label{fig:03-03}
  \end{minipage}
\end{figure}

Note that operation $[\cdot]$ depends on $\X$.
Below we will need to work with multiple {\tt OC} trees. Unless noted otherwise, we will use the following convention:
different {\tt OC} trees will be denoted with an additional symbol, e.g.\ $(\X,\calE)$, $(\X',\calE')$, ...,
and the corresponding operations will be denoted accordingly as  $[\cdot]$, $[\cdot]'$,  ....
Operations $\preceq$ and $(\cdot)^\downarrow$ also depend on the {\tt OC} tree, but we will always use the same notation
for them; the corresponding {\tt OC} tree should always be clear from the context
(or these operations would be the same for all considered {\tt OC} trees).

Consider {\tt OC} tree  $(\Omega,\calE)$ for sequence $\varphi$ and a leaf node $u\in\varphi$.
Let $v$ be the parent of $u$ in $(\varphi,\calE)$ (i.e.\ $uv\in\calE$).
We define  $\varphi^{-u}$ to be the sequence obtained from $\varphi$ by removing node $u$,
and  $(\Omega^{-u},\calE^{-u})$ to be the {\tt OC} tree for $\varphi^{-u}$ obtained
 from $(\Omega,\calE)$ by removing edge $uv$ and merging components $[u]$ and $[v]$ of $\Omega$ into a single component $[v]^{-u}=[v]\cup[u]$.
 Equivalently, $(\Omega^{-u},\calE^{-u})$ can be defined as the unique {\tt OC} tree for $\varphi^{-u}$
 with the property that ${[w]^{-u}}^\downarrow = [w]^\downarrow$ for all $w\in\varphi^{-u}$.

It will be convenient to introduce the following notation for graph $G=(V,E,w)$,
subset $X\subseteq V$ and node $s\in X$: $G[X;s]$ is the graph obtained from $G$ by contracting set $(V-X)\cup\{s\}$ to $s$.
Note that the set of nodes of $G[X;s]$ equals $X$.
We will often use an $s$-$t$ cut in $G[X;s]$ where $s,t\in X$;
clearly, this is the same as an $s$-$t$ cut $U$ in $G$ satisfying $U\subseteq X$.

\begin{lemma}\label{lemma:OC-add-node}
Suppose that $(\X,\calE)$ is an {\tt OC} tree
for $\varphi u$ and $v\in\varphi$ is the parent of $u$ in $(\varphi u,\calE)$.
Suppose further that $(\X^{-u},\calE^{-u})$ is an {\tt OC} tree for $(\varphi,G)$.
Then $(\X,\calE)$ is valid for~$G$ if and only if set $[u]$ is a minimum $v$-$u$ cut in $G[[v]^{-u};v]$.
\end{lemma}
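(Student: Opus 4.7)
The plan is to first reduce the biconditional to a single minimum-cut condition on $[u]$, and then to establish the equality $f_G(\varphi,u)=f_{G[[v]^{-u};v]}(v,u)$ via submodularity of the cut function. Validity of $(\X,\calE)$ for $G$ asks, for each non-root $w\in\varphi u$, that $[w]^\downarrow$ be a minimum $\pi_w$-$w$ cut in $G$, where $\pi_w$ denotes the part of $\varphi u$ strictly preceding $w$. For $w\in\varphi$ the prefix $\pi_w$ is the same in $\varphi u$ and $\varphi$, and the identity $[w]^\downarrow={[w]^{-u}}^\downarrow$ (built into the characterization of $\X^{-u}$) makes each such condition automatic from the hypothesis on $(\X^{-u},\calE^{-u})$. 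Only the case $w=u$ remains, so the lemma reduces to: $[u]$ is a minimum $\varphi$-$u$ cut in $G$ iff it is a minimum $v$-$u$ cut in $G[[v]^{-u};v]$. Since $[u]\subseteq[v]^{-u}$ and $v\notin[u]$, the set $[u]$ is feasible in both problems, and a direct bookkeeping of the contraction gives ${\tt cost}_G(U)={\tt cost}_{G[[v]^{-u};v]}(U)$ for every $U\subseteq[v]^{-u}$ with $v\notin U$; the biconditional is therefore equivalent to the equality of the two minimum-cut values.

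The ``$\le$'' direction of that equality is immediate: a minimum $v$-$u$ cut in $G[[v]^{-u};v]$, read as a subset of $V$, is a $\varphi$-$u$ cut in $G$ of the same cost (its intersection with $\varphi$ is forced to lie in $\{v\}$, which it avoids). For the ``$\ge$'' direction I would fix an arbitrary $\varphi$-$u$ cut $S$ in $G$ and reduce it to $S\cap[v]^{-u}$, a $v$-$u$ cut in $G[[v]^{-u};v]$, without ever increasing cost, in two stages. Stage one sets $L=[v]^\downarrow={[v]^{-u}}^\downarrow$; validity of $(\X^{-u},\calE^{-u})$ makes $L$ a minimum $\pi_v$-$v$ cut, $S\cup L$ is again a $\pi_v$-$v$ cut so $f(S\cup L)\ge f(L)$, and submodularity $f(S)+f(L)\ge f(S\cup L)+f(S\cap L)$ then yields $f(S)\ge f(S\cap L)$. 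Stage two iterates over the children $c$ of $v$ in $(\X^{-u},\calE^{-u})$: to the current working set $T$ one applies posimodularity $f(T)+f([c]^\downarrow)\ge f(T-[c]^\downarrow)+f([c]^\downarrow-T)$ (valid for the symmetric submodular cut function), and observes that $c\in[c]^\downarrow$ lies in $\varphi$ but not in $T\subseteq S$, while $\pi_c\cap[c]^\downarrow=\varnothing$; thus $[c]^\downarrow-T$ is still a $\pi_c$-$c$ cut, has cost at least ${\tt cost}([c]^\downarrow)=f(\pi_c,c)$, and hence $f(T)\ge f(T-[c]^\downarrow)$. Trimming over all such $c$ reduces the working set to exactly $S\cap[v]^{-u}$, because $[v]^{-u}=L-\bigcup_c [c]^\downarrow$.

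The main obstacle I expect is the trimming step: one must verify, as the working set shrinks, both that $u$ stays inside it (guaranteed because each $[c]^\downarrow$ is disjoint from $[u]\subseteq[v]^{-u}$) and that the lower bound $f([c]^\downarrow-T)\ge f([c]^\downarrow)$ is retained at every iteration (guaranteed because the $c$-sources $\pi_c\subseteq\varphi$ never enter $T\subseteq S$). Once these invariants are in place, the two stages combine to $f(S)\ge f(S\cap L)\ge f(S\cap[v]^{-u})\ge f_{G[[v]^{-u};v]}(v,u)$, which closes the equality of minimum-cut values and hence the biconditional.
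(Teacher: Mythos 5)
Your proof is correct and follows the same underlying uncrossing idea as the paper, but implements it through a genuinely different route. The paper's proof focuses on the \emph{minimal} minimum $\varphi$-$u$ cut $T$ and reduces the claim to showing $T\subseteq[v]^{-u}$; it establishes $T\subseteq[v]^\downarrow$ by invoking the parametric-maxflow lemma (Lemma~\ref{lemma:parametric}), and then, for each descendant $w\prec v$, performs one submodular uncrossing to produce a minimum $\alpha v\beta w$-$u$ cut $B$ disjoint from $[w]^\downarrow$ and invokes Lemma~\ref{lemma:parametric} a second time to force $T\subseteq B$, hence $T\cap[w]^\downarrow=\varnothing$. You instead prove the equality $f_G(\varphi,u)=f_{G[[v]^{-u};v]}(v,u)$ directly by taking an \emph{arbitrary} $\varphi$-$u$ cut $S$ and trimming it to $S\cap[v]^{-u}$ without cost increase: one submodularity step to intersect with $L=[v]^\downarrow$, then one posimodularity step per child $c$ of $v$ to delete $[c]^\downarrow$. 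This avoids Lemma~\ref{lemma:parametric} altogether and is more elementary and self-contained; the price is having to track the two invariants you flag (that $u$ survives the trim, and that the trimmed-out pieces $[c]^\downarrow - T$ remain valid $\pi_c$-$c$ cuts), both of which you verify correctly. The reduction at the start (that validity of $(\X,\calE)$ reduces to $[u]$ being a min $\varphi$-$u$ cut in $G$, and then to the equality of the two min-cut values) matches the paper, and the ``$\le$'' direction and the final chain of inequalities are all sound.
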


Lemma~\ref{lemma:OC-add-node} gives a constructive proof of the existence of {\tt OC} tree for a given $(\varphi,G)$.
\begin{corollary}
For every undirected graph $G$ and non-empty sequence $\varphi$ of nodes in $G$ there exists {\tt OC} tree $(\X,\calE)$ for $(\varphi,G)$.
\end{corollary}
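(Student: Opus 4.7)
The plan is to prove the corollary by induction on $\ell=|\varphi|-1$, using Lemma~\ref{lemma:OC-add-node} as the only nontrivial tool. For the base case $\ell=0$, write $\varphi=s$ and take $\X=\{V\}$ and $\calE=\varnothing$; then $[s]=V$, the tree $(\varphi,\calE)$ is a single root, and the validity condition in Definition~\ref{def:representing-tree} is vacuous because no prefix $\alpha$ with $|\alpha|\ge 1$ exists.

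For the inductive step, write $\varphi=\alpha u$ with $|\alpha|\ge 1$ and invoke the hypothesis to obtain an {\tt OC} tree $(\X^-,\calE^-)$ for $(\alpha,G)$. Since $\X^-$ partitions $V$ and every block of $\X^-$ contains exactly one element of $\alpha$, there is a unique $v\in\alpha$ with $u\in [v]^-$. Inside the contracted graph $G[[v]^-;v]$ the nodes $v$ and $u$ are distinct, so a minimum $v$-$u$ cut $U$ in $G[[v]^-;v]$ exists (with $u\in U$ and $v\notin U$). I would then construct $(\X,\calE)$ by splitting $[v]^-$ into two blocks $[v]=[v]^- - U$ and $[u]=U$, keeping every other block of $\X^-$ unchanged, and by setting $\calE=\calE^-\cup\{uv\}$. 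The order $\sqsubseteq$ on the extended sequence $\varphi=\alpha u$ has $v\sqsubset u$, so the new edge points towards the root as required by Definition~\ref{def:representing-tree}, and $(\varphi,\calE)$ is a tree because we added $u$ as a single new leaf whose parent is $v$.

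With this construction, the structural part of Definition~\ref{def:representing-tree} is immediate: $|\X|=|\alpha|+1=|\varphi|$ and every block contains exactly one element of $\varphi$. By design the {\tt OC} tree obtained from $(\X,\calE)$ by removing the leaf $u$ is precisely $(\X^-,\calE^-)$, which is valid for $(\alpha,G)$ by the inductive hypothesis. The additional hypothesis required by Lemma~\ref{lemma:OC-add-node}, namely that $[u]$ be a minimum $v$-$u$ cut in $G[[v]^{-u};v]=G[[v]^-;v]$, holds by the choice of $U$. The lemma therefore certifies that $(\X,\calE)$ is valid for $(\varphi,G)$, completing the induction.

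I do not anticipate a genuine obstacle; the whole argument is bookkeeping on top of Lemma~\ref{lemma:OC-add-node}. The only point worth double-checking is that the newly added element $u$ is indeed a leaf of the extended tree, since this is precisely the configuration in which Lemma~\ref{lemma:OC-add-node} is stated; this is automatic because $u$ appears last in $\varphi=\alpha u$ and is connected by the single new edge $uv$.
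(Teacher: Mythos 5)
Your proof is correct and follows the same approach as the paper: induction on $|\varphi|$, adding one node at a time by computing a minimum $v$-$u$ cut in the appropriate contracted graph and applying Lemma~\ref{lemma:OC-add-node}. You spell out the base case and run the induction top-down (writing $\varphi=\alpha u$ and invoking the hypothesis on $\alpha$) rather than bottom-up as the paper does, but this is a cosmetic difference and the argument is the same.
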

\begin{proof}
It suffices to show that if there exists {\tt OC} tree $(\Omega,\calE)$ for  $(\varphi,G)$
and $u$ is a node in $V-\varphi$
then there also exists {\tt OC} tree $(\tilde\Omega,\tilde\calE)$ for  $(\varphi u,G)$; the claim will then follow by induction.
Let $v\in\varphi$ be the unique node in $\varphi$ with $u\in[v]$.
Let $T$ be a minimum $v$-$u$ cut in $G[[v];v]$. Define 
 $\tilde\X=(\X-[v])\cup\{[v]-T,T\}$ and $\tilde\calE=\calE\cup\{uv\}$.
 Note that $(\X,\calE)=({\tilde\X}^{-u},{\tilde\calE}^{-u})$.
 By Lemma~\ref{lemma:OC-add-node}, $(\tilde\X,\tilde\calE)$ is an {\tt OC} tree for $(\varphi u,G)$.
\end{proof}

\myparagraph{Extracting minimum $s$-$t$ cuts from an {\tt OC} tree}
Let $(\X,\calE)$ be an {\tt OC} tree for $(\varphi,G)$ with $\varphi=s\ldots u\ldots$,
and let $v$ be the parent of $u$ in $(\varphi,\calE)$.
Define $\pi(u)$ as the maximal node $w\in\varphi$ w.r.t. $\sqsubseteq$ 
satisfying the following conditions:
(i)  $w\sqsubset u$; 
(ii) $w\in\{v\}\cup\{w\::\:wv\in\calE\}$.
Clearly, by following pointers $\pi(\cdot)$ we will eventually arrive at the root node $s$ (visiting $v$ on the way).
Let $\pi^\ast(u)=s\ldots v\ldots =\ldots\pi(u)$ be the reversed sequence of nodes traversed during this process. 
We will prove the following result.

\begin{lemma}\label{lemma:pistar}
{\rm (a)}
Set $[u]^\downarrow$ is a minimum $\pi^\ast(u)$-$u$ cut. \\
{\rm (b)} If  ${\tt cost}([w]^\downarrow)\ge {\tt cost}([u]^\downarrow)$ for all $w\in \pi^\ast(u)-\{s\}$
then $[u]^\downarrow$ is a minimum $s$-$u$ cut.
\end{lemma}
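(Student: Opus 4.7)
The plan is to prove (a) by strong induction on $k = |\pi^\ast(u)|$ and then derive (b) immediately from (a) and Lemma~\ref{lemma:Goldberg}. Before starting, observe that $[u]^\downarrow$ is always a $\pi^\ast(u)$-$u$ cut: $u\in [u]^\downarrow$ trivially, and every $w\in \pi^\ast(u)$ satisfies $w\sqsubset u$, so $w$ was inserted before $u$ and hence lies outside $u$'s subtree in the OC tree, giving $w\notin [u]^\downarrow$. For the base case $k=1$, the condition $\pi(u)=s$ forces $v=s$ and that no sibling of $u$ precedes $u$ in $\sqsubseteq$, which in turn forces the prefix $\alpha$ of $\varphi$ before $u$ to equal $\{s\}$; the OC tree property then makes $[u]^\downarrow$ a minimum $s$-$u$ cut.

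For the inductive step let $w=\pi(u)$, and distinguish two cases. In Case A ($w=v$), for any $\pi^\ast(u)$-$u$ cut $S$ the set $S\cup [v]^\downarrow$ is a $\pi^\ast(v)$-$v$ cut (since $\pi^\ast(v)=\pi^\ast(u)\setminus\{v\}$ is disjoint from both $S$ and $[v]^\downarrow$), so the induction hypothesis at $v$ combined with submodularity of the cut function yields ${\tt cost}(S\cap [v]^\downarrow)\le {\tt cost}(S)$. In Case A no sibling of $u$ precedes $u$, so by repeatedly removing leaves of $(\Omega,\calE)$ that come after $u$ in $\sqsubseteq$ we obtain a valid OC tree for the prefix of $\varphi$ ending at $u$ in which $v$ has $u$ as its only child; Lemma~\ref{lemma:OC-add-node} then identifies $[u]^\downarrow$ as a minimum $v$-$u$ cut in $G[[v]^\downarrow;v]$. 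Since $S\cap [v]^\downarrow$ is also a $v$-$u$ cut in this contracted graph, ${\tt cost}([u]^\downarrow)\le {\tt cost}(S\cap [v]^\downarrow)\le {\tt cost}(S)$.

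In Case B ($w$ is an earlier sibling of $u$), set $T:=[w]^\downarrow$; then $[u]^\downarrow$ and $T$ are disjoint and $u\notin T$. For any $\pi^\ast(u)$-$u$ cut $S$, posimodularity of the cut function yields ${\tt cost}(S\setminus T)+{\tt cost}(T\setminus S)\le {\tt cost}(S)+{\tt cost}(T)$. The set $T\setminus S$ contains $w$ (since $w\in T$ and $w\in \pi^\ast(u)$ forces $w\notin S$) and is disjoint from $\pi^\ast(w)=\pi^\ast(u)\setminus\{w\}$, so it is a $\pi^\ast(w)$-$w$ cut; the induction hypothesis at $w$ then gives ${\tt cost}(T\setminus S)\ge {\tt cost}(T)$, whence $S':=S\setminus T$ is a $\pi^\ast(u)$-$u$ cut of cost $\le {\tt cost}(S)$, disjoint from $T$. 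To conclude, I pass to the reduced OC tree $(\Omega^\ast,\calE^\ast)$ obtained from $(\Omega,\calE)$ by repeatedly removing leaves until $w$ and all of its OC-descendants are gone; this is a valid OC tree for a subsequence $\varphi^\ast$, with $[u]^\downarrow$ unchanged in it (as $u$'s subtree is disjoint from $w$'s) and with the new $\pi^\ast(u)$ equal to $\pi^\ast(u)\setminus\{w\}$, of size $k-1$. The induction hypothesis applied in $(\Omega^\ast,\calE^\ast)$ at $u$ then gives ${\tt cost}([u]^\downarrow)\le {\tt cost}(S')\le {\tt cost}(S)$.

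Part (b) follows immediately from (a) and Lemma~\ref{lemma:Goldberg}: writing $(v_0,v_1,\ldots,v_{k-1},v_k):=(\pi^\ast(u),u)$, part (a) identifies $f(\{v_0,\ldots,v_{i-1}\},v_i)$ with ${\tt cost}([v_i]^\downarrow)$ for each $i\in [k]$, so the hypothesis of (b) is exactly the hypothesis of Lemma~\ref{lemma:Goldberg}; its conclusion gives $f(s,u)={\tt cost}([u]^\downarrow)$, i.e.\ $[u]^\downarrow$ is a minimum $s$-$u$ cut. The main difficulty I anticipate is the bookkeeping in Case B — formally verifying that the leaf-removal procedure yields a valid OC tree with $[u]^\downarrow$ invariant and $\pi^\ast(u)$ losing exactly $\{w\}$ — both of which ultimately follow from the disjointness of $u$'s and $w$'s subtrees and from tracing how the $\pi$ pointer updates when the largest earlier sibling of $u$ is deleted.
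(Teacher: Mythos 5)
Your overall strategy---strong induction on $k=|\pi^\ast(u)|$, with submodularity (Case A) and posimodularity (Case B) used to replace an arbitrary $\pi^\ast(u)$-$u$ cut by a cheaper one that is better aligned with the tree structure---is genuinely different from the paper's proof. The paper instead proves Lemma~\ref{lemma:free-leaf} (removing a \emph{free} leaf preserves the {\tt OC}-tree property), then repeatedly removes free leaves $t\notin\pi^\ast(u)\cup\{u\}$, shows $\pi^\ast(u)$ is invariant under this process, and finally shows the process terminates with $\varphi=\pi^\ast(u)\,u$; at that point the claim is just the {\tt OC}-tree property for $u$. Your Case A is in the same spirit (reducing the suffix after $u$ via last-node removals, which are always free-leaf removals) and is correct; the submodularity step there works.

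Case B, however, has a real gap at exactly the point you flag as ``the main difficulty,'' and I don't think the disjointness of $u$'s and $w$'s subtrees is enough to close it. The problem is that $w=\pi(u)$ is \emph{not} a free leaf of the current tree: $u$ comes after $w$ and has the same parent $v$, so the removal of $w$ is not covered by Lemma~\ref{lemma:free-leaf}, and the resulting pair $(\Omega^\ast,\calE^\ast)$ need not be a valid {\tt OC} tree for $(\varphi^\ast,G)$. Concretely, take $\varphi=s\,w\,z\,u$ with $w,u$ children of $s$ and $z$ a child of $w$. Then $\pi^\ast(u)=sw$, and the {\tt OC}-tree property only tells you $[u]^\downarrow$ is a minimum $\{s,w,z\}$-$u$ cut. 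After your Case-B reduction (removing $w$ and its subtree $\{w,z\}$) the reduced sequence is $s\,u$; validity of the reduced tree would require $[u]^\downarrow$ to be a minimum $s$-$u$ cut in $G$, which does not follow and is false in general --- dropping $w$ from the source side can strictly lower the min-cut value. More generally, applying the induction hypothesis in the reduced tree would assert that $[u]^\downarrow$ is a minimum $(\pi^\ast(u)\setminus\{w\})$-$u$ cut, a strictly stronger conclusion than the lemma claims, and one that is simply not true. The disjointness of subtrees guarantees the partition bookkeeping (that $[u]^\downarrow$ is unchanged and $\pi^\ast(u)$ shrinks by exactly $\{w\}$), but it says nothing about preserving the \emph{minimality} conditions in Definition~\ref{def:representing-tree}, which is what breaks.

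A fix would need either to work with a contracted graph (e.g.\ contract $T=[w]^\downarrow$ into the source so that the minimality constraint only ranges over cuts disjoint from $T$, matching the role of $S'$) or to abandon the ``remove $w$'' step and instead keep $w$ in the sequence while removing only free leaves outside $\pi^\ast(u)\cup\{u\}$, as the paper does. Part (b) is correctly reduced to part (a) together with Lemma~\ref{lemma:Goldberg}, same as in the paper.
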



\subsection{Divide-and-conquer algorithms} Next, we give two results
that can be used to design a divide-and-conquer algorithm for constructing {\tt OC} tree
(see Figs.~\ref{fig:A} and~\ref{fig:B} for their illustration).

\begin{figure}
    \hspace{25pt}
    \includegraphics[scale=0.60]{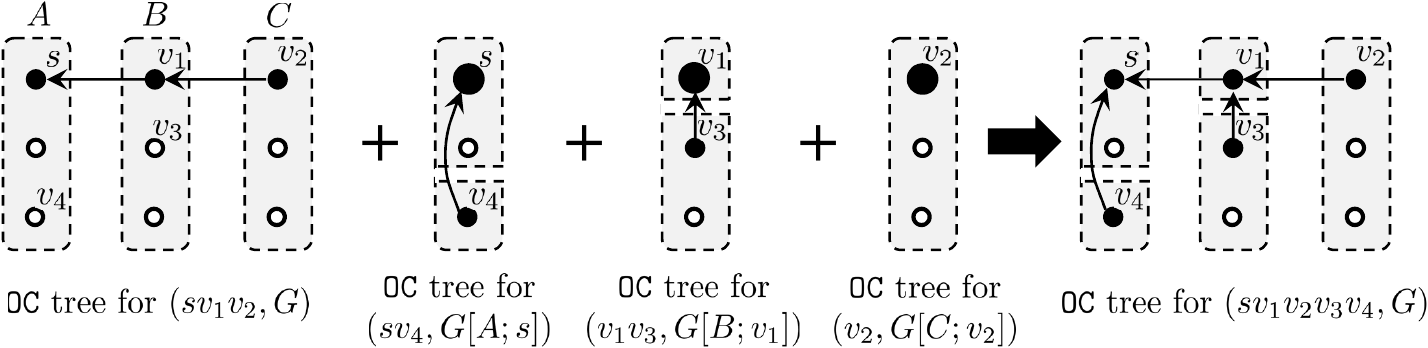}
    \caption{Illustration of Lemma~\ref{lemma:divide-and-conquer:one} with $\alpha=sv_1v_2$ and $\varphi=sv_1v_2 v_3v_4$. Thick solid circles in three graphs in the middle  represent contracted nodes, e.g.\ $v_1$ in $G[B;v_1]$ corresponds to subset $\{v_1\}\cup A\cup C$ in~$G$.
    Here we denoted $A=[s]^\circ$, $B=[v_1]^\circ$, $C=[v_2]^\circ$.
    } \label{fig:A}
\end{figure}

\begin{lemma}\label{lemma:divide-and-conquer:one}
Consider sequence $\varphi=\alpha\ldots$ in graph $G$ with $|\alpha|\ge 1$.
Suppose that $(\X^\circ,\calE^\circ)$ is an ${\tt OC}$ tree for $(\alpha,G)$,
and for each $v\in\alpha$ pair 
$(\X^v,\calE^v)$ is an {\tt OC} tree for $(\varphi\cap [v]^\circ,G[[v]^\circ;v])$.
Then $(\X,\calE)$ is an {\tt OC} tree for $(\varphi,G)$
where 
$\X=\bigcup_{v\in\alpha}\Omega^v$
and $\calE=\calE^\circ\cup\bigcup_{v\in\alpha}\calE^v$.
\end{lemma}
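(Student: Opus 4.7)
The plan is to verify Definition~\ref{def:representing-tree} for $(\X,\calE)$ in two stages: structural consistency, then validity for $G$. Structurally, $\X$ is a partition of $V$ because $\{[v]^\circ\}_{v\in\alpha}$ partitions $V$ and each $\X^v$ refines $[v]^\circ$; the condition $|A\cap\varphi|=1$ for every $A\in\X$ follows since $\varphi$ is the disjoint union of the sequences $\varphi\cap[v]^\circ$ and each $\X^v$ contains exactly one $\varphi$-node per part. The edge set $\calE=\calE^\circ\cup\bigcup_{v\in\alpha}\calE^v$ yields a tree on $\varphi$ rooted at $s$: the root of each $\calE^v$ is $v\in\alpha$, which itself hangs from $s$ in $\calE^\circ$. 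The orientation requirement $v\sqsubset u$ for $uv\in\calE$ is inherited because $\sqsubseteq$ restricted to $\alpha$ and to each $\varphi\cap[v]^\circ$ agrees with $\sqsubseteq$ on $\varphi$.

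The key bookkeeping step is to compute $[w]^\downarrow$ in the new tree. For $w\in\varphi\cap[v]^\circ$ with $w\ne v$ (so in particular $w\notin\alpha$), descendants of $w$ stay inside the $\calE^v$-subtree, yielding $[w]^\downarrow=[w]^{v\downarrow}\subseteq[v]^\circ$. For $w\in\alpha$, descendants consist of all $u\in\alpha$ with $u\preceq^\circ w$ together with the entire $\calE^u$-subtree attached at each such $u$; since $\bigcup_{x\in\varphi\cap[u]^\circ}[x]^u=[u]^\circ$ (because $\X^u$ partitions $[u]^\circ$), this gives $[w]^\downarrow=\bigcup_{u\preceq^\circ w}[u]^\circ=[w]^{\circ\downarrow}$. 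The case $w\in\alpha\setminus\{s\}$ of validity is now immediate from validity of $(\X^\circ,\calE^\circ)$, since the prefix of $w$ in $\varphi$ equals the prefix of $w$ in $\alpha$.

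The main case is $w\notin\alpha$. Let $v\in\alpha$ be the unique node with $w\in[v]^\circ$, let $\alpha''$ be the prefix of $w$ in $\varphi$, and set $\beta=\alpha''\cap[v]^\circ$, which is precisely the prefix of $w$ in $\varphi\cap[v]^\circ$. Validity of $(\X^v,\calE^v)$ gives ${\tt cost}_G([w]^{v\downarrow})=f_{G[[v]^\circ;v]}(\beta,w)$, and $[w]^{v\downarrow}\subseteq[v]^\circ-\beta$ with $w\in[w]^{v\downarrow}$ is already an $\alpha''$-$w$ cut in $G$. What remains, and what I expect to be the main obstacle, is the equality ${\tt cost}_G([w]^{v\downarrow})=f_G(\alpha'',w)$, since a priori more sources could exist outside $[v]^\circ$ that make the min cut cheaper in $G$ than in $G[[v]^\circ;v]$.

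I would establish this equality by an uncrossing argument: starting from an arbitrary minimum $\alpha''$-$w$ cut $U'$ in $G$, produce a minimum $\alpha''$-$w$ cut contained in $[v]^\circ$. When $v\ne s$, submodularity applied to $U'$ and $U_v:=[v]^{\circ\downarrow}$ shows that $U'\cap U_v$ is a minimum $\alpha''$-$w$ cut (using that $w\in[v]^\circ\subseteq U_v$, while $U'\cup U_v$ is an $\alpha_v$-$v$ cut of cost at least ${\tt cost}(U_v)$ by validity of $(\X^\circ,\calE^\circ)$, where $\alpha_v$ is the prefix of $v$ in $\alpha$). Then for each child $v'$ of $v$ in the $\calE^\circ$-tree, posimodularity applied to the current cut $U^*$ and $[v']^{\circ\downarrow}$ shows that $U^*-[v']^{\circ\downarrow}$ is again minimum: $[v']^{\circ\downarrow}-U^*$ is an $\alpha_{v'}$-$v'$ cut (since $v'\in\alpha''$ forces $v'\notin U^*$) whose cost is at least ${\tt cost}([v']^{\circ\downarrow})$ by minimality. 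Iterating over all children of $v$ and using the disjoint decomposition $[v]^{\circ\downarrow}=[v]^\circ\cup\bigcup_{v'}[v']^{\circ\downarrow}$, we arrive at $U'\cap[v]^\circ$ as a minimum $\alpha''$-$w$ cut contained in $[v]^\circ$. Viewed as a $\beta$-$w$ cut in $G[[v]^\circ;v]$, this yields ${\tt cost}_G([w]^{v\downarrow})\le{\tt cost}_G(U'\cap[v]^\circ)=f_G(\alpha'',w)$, and the reverse inequality is trivial since $\beta\subseteq\alpha''$. The case $v=s$ is simpler, requiring only the posimodular subtractions over the children of $s$.
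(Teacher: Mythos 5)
Your proof is correct, and it takes a genuinely different route from the paper's. The paper proves the lemma by induction on $|\varphi|$: it peels off the last node $u$ of $\varphi$, invokes Lemma~\ref{lemma:free-leaf} to argue that $((\X^v)^{-u},(\calE^v)^{-u})$ is still an {\tt OC} tree, applies the induction hypothesis, and then uses Lemma~\ref{lemma:OC-add-node} (twice) to re-attach $u$. Your proof bypasses both auxiliary lemmas and instead verifies validity node-by-node directly, via a submodular uncrossing with $[v]^{\circ\downarrow}$ followed by a chain of posimodular subtractions with $[v']^{\circ\downarrow}$ for the children $v'$ of $v$ in $(\alpha,\calE^\circ)$; this confines an arbitrary minimum $\alpha''$-$w$ cut to $[v]^\circ$, where the inner tree $(\X^v,\calE^v)$ takes over. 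The two approaches use the same underlying tools (submodularity of cut cost and the parametric/laminarity structure of minimum cuts), but organize them differently: the paper's inductive route is modular and reuses Lemma~\ref{lemma:OC-add-node}, while yours is self-contained and makes the interplay between the outer tree $(\X^\circ,\calE^\circ)$ and the inner trees $(\X^v,\calE^v)$ more explicit, at the cost of a somewhat longer uncrossing argument. Both are valid; the one minor thing I'd ask you to state rather than leave implicit is that $v\sqsubset w$ (so that $v\in\alpha''$ and $v\notin U'$), which follows because $v$ is the root of the subtree $(\varphi\cap[v]^\circ,\calE^v)$ and hence the $\sqsubseteq$-minimal element of $\varphi\cap[v]^\circ$; this is used when you conclude $v\notin U'\cap[v]^\circ$ so that the set can be interpreted as a cut in $G[[v]^\circ;v]$.
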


\begin{figure}
    \hspace{20pt}
    \includegraphics[scale=0.60]{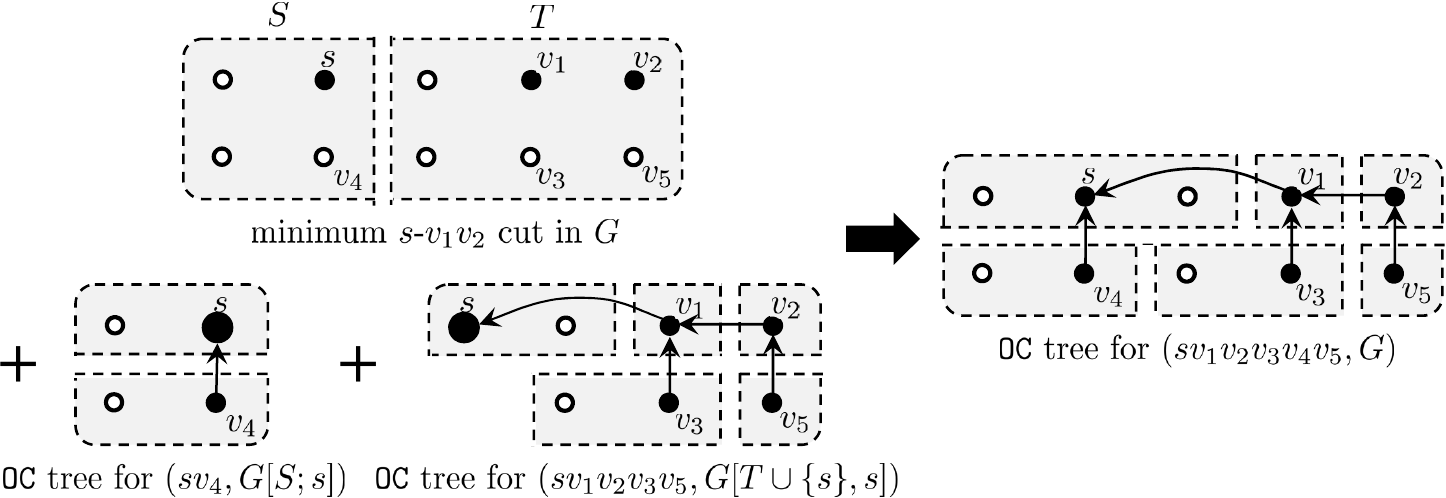}
    \caption{Illustration of Lemma~\ref{lemma:divide-and-conquer:two} with $\alpha=v_1v_2$ and $\varphi=sv_1v_2 v_3v_4v_5$. 
        } \label{fig:B}
\end{figure}

\begin{lemma}\label{lemma:divide-and-conquer:two}
Consider sequence $\varphi=s\alpha\ldots$ in graph $G$ with $|\alpha|\ge 1$.
Let $(S,T)$ be a minimum $s$-$\alpha$ cut in $G$, and let $T_s=T\cup\{s\}$.
Suppose that $(\Omega',\calE')$ is an {\tt OC} tree for $(\varphi\cap S,G[S;s])$
and $(\Omega'',\calE'')$ is an {\tt OC} tree for $(\varphi\cap T_s,G[T_s;s])$.
Then $(\X,\calE)$ is an {\tt OC} tree for $(\varphi,G)$
where $\Omega=\{[s]'\cup[s]''\}\cup (\Omega'-\{[s]'\}) \cup (\Omega''-\{[s]''\})$
and $\calE=\calE'\cup\calE''$.

Note, if $\varphi\cap S=s$ then $(\Omega',\calE')=(\{S\},\varnothing)$
and hence
 $\Omega=\{S\cup[s]''\}\cup (\Omega''-\{[s]''\})$
and $\calE=\calE''$.
\end{lemma}

Using these results, in Section~\ref{sec:permutations} we present a divide-and-conquer algorithm
with the following properties.

\begin{theorem}\label{th:random-permutation}
Procedure ${\tt OrderedCuts}(s,v_1,\ldots,v_\ell,G)$
can be implemented with the complexity of  maximum flow computations on $O(\ell)$ graphs of size $(n,m)$ each.
If the order of $v_1,\ldots,v_\ell$ is a uniformly random permutation
then the expected total number of nodes and edges in these graphs is $(O(n^{1+\gamma}),O(n^\gamma m))$
where $\gamma=\log_2 1.5=0.584...$.
\end{theorem}

For our experiments we used a different divide-and-conquer strategy given in Algorithm~\ref{alg:practical:OC}.
Essentially, it directly translates Lemma~\ref{lemma:divide-and-conquer:two} into an algorithm.
(Note that in Algorithm~\ref{alg:practical:OC} we only used a special case given in the last line of Lemma~\ref{lemma:divide-and-conquer:two}).

\begin{algorithm}[H]
\setcounter{AlgoLine}{0}

  \DontPrintSemicolon
    if $\ell=0$ then return {\tt OC} tree $(\{V\},\varnothing)$ \\
     choose $k\in[1,\ell]$,    compute minimum $s$-$v_1\ldots v_k$ cut $(S,T)$ in $G$ \\
    call $(\Omega',\calE')\leftarrow {\tt OrderedCuts}(\varphi\cap S,G[S;s])$ \\
    \If{$k=1$}
    {
	    call $(\Omega'',\calE'')\leftarrow {\tt OrderedCuts}(\varphi\cap T,G[T;v_1])$ \\
	    return $(\;\;\Omega' \cup \Omega''\;\;,\;\;\calE'\cup\calE''\cup\{sv_1\}\;\;)$
    }
    \Else
    {
	    call $(\Omega'',\calE'')\leftarrow {\tt OrderedCuts}(\varphi\cap T_s,G[T_s;s])$ where $T_s=T\cup \{s\}$ \\
	    return $(\;\;\{[s]'\cup[s]''\}\cup (\Omega'-\{[s]'\}) \cup (\Omega''-\{[s]''\})\;\;,\;\;\calE'\cup\calE''\;\;)$
    }
      \caption{${\tt OrderedCuts}(\varphi;G)$ with $\varphi=sv_1\ldots v_\ell$ and $G=(V,E,w)$. 
      }\label{alg:practical:OC}
\end{algorithm}

Next, we discuss how to choose index $k$ at line 2.
Ideally, we would like to choose $k$ so that the cut $(S,T)$ is balanced, i.e.\ $|S|\approx |T|$.
Note that $(S,T)$ depends monotonically on $k$: as $k$ grows, set $S$ shrinks and set $T$ expands
(see e.g.~\cite{Gallo:SICOMP89}).
Thus, one possibility would be to do a binary search on $k$ to find a value that maximizes $\min\{|S|,|T|\}$.
This would mean throwing away intermediate cuts, and keeping only the last one.
We used an alternative technique where we use every cut (even if it is unbalanced), but adjust the value of $k$ for recursive calls.
Specifically, we pass integer parameter $\bar k$ to every call of ${\tt OrderedCuts}(\cdot)$.
For the initial call we set $\bar k=1$.
At line 2 we set $k=\max\{1,\min\{\lfloor \ell/2\rfloor, \bar k\}\}$.
Then after computing cut $(S,T)$ we update $\bar k={\tt round}(k2^{1-2|T|/(|V|-1)})$,
and pass this value to recursive calls at lines 3,5,8.
Thus, $\bar k\in [\lfloor k/2\rfloor,2k]$, and if $S=\{s\}$ then $\bar k=\lfloor k/2\rfloor$.
We refer to Section-\ref{sec:experiments} for an experimental evaluation of this scheme.


\begin{remark}
We have also implemented a different procedure: (i) compute $s$-$\{v_1,\ldots,v_i\}$ for all $i\in[\ell]$;
the result is a nested family of cuts $T_1\subset T_2 \subset \ldots \subset T_k$ where 
$T_1$ is a minimum $s$-$v_1$ cut and $T_k$ is a minimum $s$-$\{v_1,\ldots,v_\ell\}$ cut;
(ii) solve recursively problems on subsets $T_1-\{v_1\}$, $T_2-T_1$, $\ldots$, $T_k-T_{k-1}$
(contracting other nodes to a single vertex); (iii) merge the results.
Note that the problem (i) is an instance of the parametric maxflow problem (cf.~\cite{Gallo:SICOMP89});
to solve it, we used a natural divide-and-conquer strategy with a binary search 
(except that the initial index $i$ was taken as $\min\{2^d,\lceil\ell /2\rceil\}$ where $d$ is depth of the recursion).

The procedure described earlier performed better than this parametric scheme, so we settled for the former.
\end{remark}


\section{Summary of implementations}\label{sec:impl}
In this section we describe our implementations of an {\tt OrderedCuts}-based approach
(``OC algorithm'') and of the classical Gomory-Hu tree technique
(''GH algorithm''). When reporting results, we denote ${\tt size}(G)=(n,m)$ to be the size of the original graph
and ${\tt size}(MF)$ to be the total size of graphs on which the maxflow problem is solved (excluding terminals and their incident edges).
For the OC approach we let ${\tt size}(OC)$ be the total size of graphs on which the {\tt OrderedCuts} procedure is called.
We will always have ${\tt size}(OC)\le {\tt size}(MF)$.

\subsection{OC algorithm}
Our implementation follows Algorithm~\ref{alg:GH'}. Its main computational subroutine is line 4.
Details of this subroutine are described below.


As stated in Theorem~\ref{th:main}, the GH tree can be constructed via $\tilde O(1)$ expected number of calls
to ${\tt OrderedCuts}$, as we prove in Section~\ref{sec:alg}.
Our proof  makes an essential use
of randomization: we select the source node $s$ uniformly at random, and then randomly subsample a subset of nodes to be passed to {\tt OrderedCuts}
(with varying probabilities). 
This achieves ${\tt size}(OC)=O(n\log^4 n,m\log^4 n)$.

We implemented a simpler (deterministic) scheme that does not use subsampling (see Algorithm~\ref{alg:practical:SSPQ}).
We take {\em all} nodes in $X$, sort them according to the current estimates on the min-cut costs, and 
compute {\tt OC} tree for the obtained sequence. 
We can only show a trivial bound ${\tt size}(OC)=(O(n^2), O(nm))$ for such scheme
(see Appendix~\ref{sec:practical:proof:complexity}). Nonetheless, experimental results indicate that this approach is very
effective: on all of our test instances ${\tt size}(OC)$ was at most $(cn,cm)$ for a small constant $c<10$  (in fact, $c<4$ for most of the families).
Based on these results, we did not investigate the approach based on subsampling (which we conjecture would be slower in practice).

\begin{algorithm}[H]
\setcounter{AlgoLine}{0}

  \DontPrintSemicolon
    choose node $s\in X$ and vector $\mu:X-\{s\}\rightarrow\mathbb R$ so that $\mu(v)\ge f(s,v)$ for each $v\in X-\{s\}$ \\
  		sort nodes in $X-\{s\}$ as $v_1,\ldots,v_\ell$ so that $\mu(v_1)\ge \ldots \ge \mu(v_\ell)$, $\ell=|X-\{s\}|$ \\
		compute {\tt OC} tree $(\Omega,\calE)$ for sequence $s v_1 \ldots v_\ell$ and graph $H$ using Alg.~\ref{alg:practical:OC} \\
		use Lemma~\ref{lemma:pistar} to find subset $U\subseteq X-\{s\}$ s.t.\ $[v]^\downarrow$ is a minimum $s$-$v$ cut for each $v\in U$ \\
		let $\Pi=\{[v]^\downarrow\::\:v\in U\}$,   return $(s,\Pi)$

      \caption{Line 4 of Alg.~\ref{alg:GH'}. 
      }\label{alg:practical:SSPQ}
\end{algorithm}
We need to specify line 1.
For the initial supernode $X=V$ we let source $s$ to be a node that maximizes ${\tt cost}(\{s\})$,
and set $\mu(v)={\tt cost}(\{v\})$ for all $v\in X-\{s\}$.
Now consider lines 6-9 of Alg.~\ref{alg:GH'} for set $S=[v]^\downarrow$.
Recall that it splits $X$ into supernodes $A$ and $B$ and then replaces $X:= A$.
We let $v$ to be the source for supernode $B$,
and set upper bounds $\mu$ for $B$ based on the cuts stored in the {\tt OC} tree $(\Omega,\calE)$:
$\mu(u)=\min_{w:u\preceq w\prec v}{\tt cost}([w]^\downarrow)$.
When $\Pi$ becomes empty, we update upper bounds $\mu$ for nodes $u\in X-\{s\}$ in a similar way
(except that we also take into account current upper bounds):
$\mu(u):=\min\{\mu(u),\min_{w:u\preceq w\prec s}{\tt cost}([w]^\downarrow)\}$.
Note that the source node $s$ is not changed for $X$.

It may easily happen that some nodes $v\in X-\{s\}$ have identical upper bounds $\mu(v)$.
To break the ties, we use an additional rule when sorting nodes at line 2:
if $\mu(u)=\mu(v)$ then we choose the same order for $u,v$ that was used in the previous iteration.

\subsection{GH algorithm}
Versions of the classical Gomory-Hu tree method (Alg.~\ref{alg:GH}) have been implemented in~\cite{Goldberg:01} (for general graphs)
and in~\cite{MassiveGraphs} (for unweighted simple graphs). To make the comparison with the OC approach fairer,
we have additionally implemented our own version of the GH method
to make sure that OC and GH approaches use the same maxflow algorithms (see Section~\ref{sec:impl:MF}).
Details are described below.

The main question in implementing Algorithm~\ref{alg:GH} is 
how to select nodes $s,t$ at line 3.
There are several guiding principles proposed in the literature.
\cite{Goldberg:01} proposes two heuristics that try to find a balanced cut:
(i) choose two heaviest nodes (i.e.\ nodes with the highest total capacity of incident edges);
(ii) after choosing $t$, choose $s$ that is furthest away from $t$ (with respect to unit edge lengths).
\cite{MassiveGraphs} uses a somewhat opposite strategy:
pick $s,t$ which are adjacent in $G$, if exist (otherwise pick an arbitrary pair).
They argue that such choice leads to a smaller time for  computing maxflow.

We implemented the following heuristic. For each component $X$ we maintain (1) nodes $s,t\in X$ and (2)
integers $D_s(i),D_t(i)$ for all $i\in X$ which approximate the distance from $s$ and $t$, respectively.
In the beginning we compute $s,t$ as two heaviest nodes, and then compute distances $D_s(\cdot)$ and $D_s(\cdot)$
(with respect to the unit edge lengths)
exactly using breadth-first searches from $s$ and from $t$. Now suppose that $X$ is split into supernodes $A,B$ after computing a minimum $s$-$t$ cut.
The new terminals $s_A,t_A$ for component $A$ are computed as follows.
We set $s_A=s$. We then do breadth-first search in $A$ from (contracted) $t$, and 
and for each layer check whether it has nodes eligible to become new sink $t$.
If yes, then we choose a node with the the smallest $D_s(i)$.
If there are multiple such nodes then we choose the node with the largest degree.
Intuitively, this strategy attempts to find new sink $t$ which is close both to $s$ and to the old sink $t$.
A similar strategy is used for $B$. The integers $D_s(\cdot)$, $D_t(\cdot)$ are not modified during this procedure.

We also implemented an alternative scheme in which we always choose $s,t$ as two heaviest nodes in the current component.
However, the scheme above was usually substantially faster, probably because warm-starting the maxflow algorithm
was much more effective. Consequently, we report results only for the former scheme.

\subsection{Maxflow algorithm}\label{sec:impl:MF}
We reimplemented the IBFS maxflow algorithm~\cite{IBFS:11},
with dynamic updates as in~\cite{IBFS:15}. As in~\cite{IBFS:15},
we do not store arcs $(s,i)$, $(i,t)$ explicitly, but only the difference of their residual capacities.
All other arcs are stored using the forward star representation.
We augmented our implementation with the following operation:
\begin{itemize}
\item After computing a maximum $s$-$t$ flow and corresponding cut $(S,T)$,
remove arcs between $S$ and $T$. Optionally, add new node $\hat s$ (corresponding to contracted $S$)
together with contracted arcs to $T$, and
new node $\hat t$ (corresponding to contracted $T$)
together with contracted arcs from $S$.
This may require allocating new memory for contracted arcs, since arcs incident to each node
are stored in a contiguous array.
Note that new nodes $\hat s,\hat t$ are needed for the {\tt GH} algorithm, but not for the {\tt OC} algorithm \footnote{In the {\tt OC} algorithm (Alg.~\ref{alg:practical:OC}),
node $s$ is present in both graphs $G[S;s]$ and $G[T\cup \{s\},s]$ (see Fig.~\ref{fig:B}).
However, in the latter graph $s$ will always be connected to the source with an infinite capacity in subsequent maxflow computations.
Hence, $s$ can be immediately eliminated from  $G[T\cup \{s\},s]$, and so no new nodes need to be allocated.
}.
\end{itemize}
We maintain a list of ``components'' of the current graph; computing maxflow in a given component splits this component into two.
Each component stores a pointer to the a singly-link list of its nodes, as well as other information needed
for warm-starts
(e.g.\ maximum levels of source and sink search trees in the IBFS algorithm).
We order components in the decreasing order of their size (which is the number of nodes), and always process the smallest component.
It can be seen that the size of each component is at most half the size of the previous component, and
hence at each moment there are at most $O(\log n)$ components.

In both {\tt OC} and {\tt GH} algorithm computed cuts $(S,T)$ can be very unbalanced.
For efficiency, it is important not to traverse the larger component in such cases.
To achieve this, we maintain the sizes of components during IBFS operations.
After IBFS finishes, we find nodes of the smaller component using breadth-first search from
the appropriate terminal. During this operation we also detect ``boundary'' edges between $S$ and $T$ and remove them from the graph.

Note that in {\tt OC} algorithm we also need to maintain a doubly-linked list of ordered nodes $(v_1,\ldots,v_\ell)$.
We do it in $O(L\log L)$ time where $L$ is the number of nodes in $\{v_1,\ldots,v_\ell\}$ that end up in the smaller component.
(The order of these nodes in the larger component comes from the original order, while nodes in the smaller component are sorted from scratch).

\section{Experimental results}\label{sec:experiments}

We used a machine with 12th Gen Intel(R) Core(TM) i7-1255U CPU with 10 cores, 12 threads @4.7GHz max and 15Gb RAM.
All codes were written in C / C++. For each instance we reordered nodes and edges
by performing a breadth-first search (in order to improve locality). Next, we describe baseline methods,
test instances, and discuss the results.

\subsection{Baseline methods}
Our implementations are termed as {\tt OC} and {\tt GH}. Other implementations, described below,
are always denoted with lower-case letters.

The study~\cite{Goldberg:01} reported 4 algorithms named {\tt gus}, {\tt gh}, {\tt ghs} and {\tt ghg}.
The first one is an implementation of the Gusfield's algorithm~\cite{Gusfield:90} that works on the original graph,
while the other three perform graph contractions. {\tt gh} and {\tt ghs} differ in the source-sink
selection strategy: the former picks an $s$-$t$ pair at random while the latter chooses $s$
which is furthest away from the current $t$.  {\tt ghg} uses the Hao-Orlin algorithm instead of the standard preflow algorithm.
\cite{Goldberg:01} shows that a single Hao-Orlin run can identify several valid cuts (1 or 2 in their implementation).
In our tests  {\tt ghs} crashed with a segmentation fault on almost all TSP instances, and was never faster than  {\tt ghg} by more than a factor 2.
For these reasons we do not report {\tt ghs} results.

We also evaluated the algorithm from~\cite{MassiveGraphs} that we term {\tt mg}
(which stands for ``Massive Graphs'' from the title of the paper). Their code supports only unweighted simple graphs.
It uses a bidirectional Dinitz algorithm with a ``goal-oriented search'':
it precomputes a shortest path tree from a fixed sink $t$ and then
uses this tree for multiple sources $s$. It also uses heuristics such as augmenting non-shortest paths
with ``detour edges'' (see~\cite{MassiveGraphs} for more details).
According to~\cite{MassiveGraphs}, the code supports several preprocessing heuristics for graph reduction, namely tree packing for identifying singleton valid cuts,
removing bridge edges, and contracting degree-2 nodes. 
We ran the code with the flag {\tt cut\_tree\_enable\_greedy\_tree\_packing} set to {\tt true} ({\tt mg+}) and to {\tt false} ({\tt mg-}).

Finally, we experimented with the Gomory-Hu algorithm from the {\tt Lemon} graph library~\cite{Lemon},
but found it not to be competetive.

As discussed in the introduction, the theoretically-fast algorithms in~\cite{GH:subcubic,GH:linear,GHdeterminstic:FOCS25}
are unlikely to be competetive as described.
A direct comparison is currently infeasible: these algorithms lack publicly available implementations, 
and the papers do not provide pseudocode at a level of detail that would allow for a faithful reproduction of the results.

\subsection{Test instances}
\myparagraph{Weighted graphs}
First, we tested the algorithms on synthetic classes of graphs used in~\cite{Goldberg:01} and in earlier studies
(only of bigger sizes). We used the generators that come with the code in~\cite{Goldberg:01}.
In tables~\ref{table:BIKEWHE}-\ref{table:WHE} we specify the exact commands that were used to generate the instances.
The generators are randomized (except for CYC1 and DBLCYC1); as in~\cite{Goldberg:01},
we report the numbers averaged over 5 runs with different random seeds.

Second, we tested TSP instances from the TSPLIB dataset~\cite{TSPLIB}. Each instance
is given by a set of points and a function that allows to compute a distance
between two points. To obtain a sparse instance, we took the $kn$ edges with the smallest weight for $k=2,4,8$.

Results for weighted graphs are presented in tables \ref{table:TSP} (TSP instances) and~\ref{table:BIKEWHE}-\ref{table:WHE} (synthetic instances).
Table~\ref{table:ALL} summarizes the last entries in tables \ref{table:BIKEWHE}-\ref{table:WHE}.
(In general, these entries are the hardest instances in the corresponding family, except for NOI5 and NOI6).

\myparagraph{Unweighted simple graphs} Recall that the {\tt mg} implementation allows only unweighted
simple graphs. To generate such graphs, we first took the last instance in each of the tables~\ref{table:BIKEWHE}-\ref{table:WHE}
and changed the weight of all edges to one. Results are given in Table~\ref{table:unit:ALL}.
Table~\ref{table:unit:TSP} shows the results on TSP instances from Table~\ref{table:TSP}
converted to unit weights.

Finally, we took large social and web graphs from~\cite{snapnets} used in~\cite{MassiveGraphs}
(Table~\ref{table:unit:SOC}). Some of them are directed; we converted them to undirected.
We also removed bridge edges and took the largest remaining connected component
(motivated by the preprocessing technique in~\cite{MassiveGraphs}).

To summarize, we tested the algorithms on the following instances:
\begin{itemize}
\item Synthetic classes of graphs used in~\cite{Goldberg:01} and in earlier studies (only of bigger sizes):
BIKEWHE, CYC1, DBLCYC1, NO1, ..., NO6, PATH, PR1, PR5, ..., PR8, REG1, TREE, WHE.
\item TSP instances from the TSPLIB dataset~\cite{TSPLIB} with different sparsity ($m/n\in\{2,4,8\}$).
\item Unweighted versions of the graphs above (with all edge weights set to 1).
\item Large (unweighted) social and web graphs from~\cite{snapnets} used in~\cite{MassiveGraphs}
\end{itemize}

\subsection{Performance measures}
For each family of graphs we present two tables: left and right (see e.g.\ Table~\ref{table:ALL}).
The left table contains the following information (see Section~\ref{sec:impl} for
the description of ${\tt size}(G)$, ${\tt size}(OC)$, ${\tt size}(MF)$):
\begin{itemize}
\item column 2: ${\tt size}(G)=(n,m)$
\item column 3: GH tree diameter (or a range of diameters, if different algorithms produced different trees)
\item column 4: $\frac{{\tt size}(OC)}{{\tt size}(G)}$ for the {\tt OC} method. (The ratios are computed component-wise, i.e.\ separately for nodes and for edges).
This measures the effectiveness of our reduction technique to the {\tt OrderedCuts} problems.
\item column 5: $\frac{{\tt size}(MF)}{{\tt size}(OC)}$ for the {\tt OC} method. This can be viewed as a measure
of effectiveness of our implementation of the {\tt OrderedCuts} procedure.
\item columns 6-7:  $\frac{{\tt size}(MF)}{{\tt size}(G)}$ for {\tt OC} and {\tt GH} methods.
\end{itemize}
The right table contains runtimes in seconds. We report two numbers in the format $t_{\tt total} / t_{\tt MF}$
where $t_{\tt total}$ is the overall runtime and $t_{\tt MF}\le t_{\tt total}$ is the time spent in maxflow computations.
(If $t_{\tt MF}$ is not available then we report only $t_{\tt total}$).

The caption in each table (except for Table~\ref{table:ALL}) also gives the time of the {\tt Lemon}'s implementation for the first instance in the corresponding table.


\subsection{Discussion of results}
\myparagraph{Performance of the reduction to {\tt OrderedCuts}}
On all test instances the ratios $\frac{{\tt size}(OC)}{{\tt size}(G)}$ is  small constants
($<10$ on CYC1, $<6$ on PATH and $<4$ on all other instances).
This indicates that our reduction to {\tt OrderedCuts} is very effective, despite the lack the theoretical guarantees.

\myparagraph{Sizes of maxflow graphs}
Let us now discuss the ratio $\frac{{\tt size}(MF)}{{\tt size}(G)}$. For most of the problems
this ratio was dramatically smaller for the OC approach compared to the GH approach.
The latter had ${\tt size}(MF)\approx n\cdot {\tt size}(G)$ for many classes of problems,
indicating that most $s$-$t$ cuts computed during the algorithm were very unbalanced.
However, there was one single exception: for CYC1 problems (which are cycle graphs on $n$ nodes and $n$ edges)
${\tt size}(MF)$ was much larger for the OC approach.


\myparagraph{{\tt GH} vs. {\tt gus}, {\tt gh}, {\tt ghg}}
The maxflow time in GH is always smaller than that in the codes of~\cite{Goldberg:01}.
This could be attributed to several factors:
(1) we use the more recent IBFS maxflow algorithm;
(2) our warm-starts may be more efficient; in particular, we avoid traversing the entire graphs after unbalanced splits,
and reuse graphs stored in memory instead of allocating new ones (see Sec.~\ref{sec:impl:MF});
(3) we use the forward star graph representation instead of adjacent lists.
On top of it, on some families the codes of~\cite{Goldberg:01} have a very large overhead over maxflow computations (e.g.\ for TSP instances),
whereas the overhead in our codes is usually below 50\%.
As a result, our {\tt GH} implementation strictly dominates the codes of~\cite{Goldberg:01}.

\myparagraph{{\tt OC} vs. {\tt GH}} On almost all instances {\tt OC} is faster than {\tt GH}, and often significantly.
In particular, out of 18 families of weighted synthetic instances in Table~\ref{table:ALL}, {\tt OC} is faster on 16 families
(out of which by 10 times or more on 14 families). On weighted TSP instances in Table~\ref{table:TSP} {\tt OC} is usually faster than {\tt GH} by a factor of 1.5 - 2.
Unweighted instances (Tables \ref{table:unit:ALL} and \ref{table:unit:TSP}) exhibit a similar pattern.
{\tt OC} is slower than {\tt GH} only on two families: by a factor of 6 on CYC1 (cycles with random weights) and by a factor of 3 on DBLCYC (rows 2,3 in Table \ref{table:ALL}).
Note that on unweighted cycles {\tt OC} becomes faster than {\tt GH} (row 2 in Table \ref{table:unit:ALL}).

\myparagraph{{\tt OC} vs. {\tt mg} on unweighted simple graphs} This comparison is given Tables \ref{table:unit:ALL}, \ref{table:unit:TSP}, \ref{table:unit:SOC}.
In the first two tables (synthetic and TSP instances) {\tt OC} is slower than {\tt mg} in the first two rows of Table \ref{table:unit:ALL}
(by a factor of 2-5) and faster for all other instances, often by 1-2 ordered of magnitude.
On social and web graphs (Table \ref{table:unit:SOC}) {\tt OC} is slower than {\tt mg} on most instances, but by a small factor (usually less than 2).

\myparagraph{Summary} Overall, {\tt OC} is the most robust algorithm in the tests above:
on the majority of families it outperforms {\tt GH} and {\tt mg} (when the latter is applicable) by 1-2 orders of
magnitude, and is never slower by more than a factor of 6.
The biggest challendge for {\tt OC} appears to be the CYC1 family (weighted cycles).
We remark that it should be possible to design a customized algorithm for such families
(e.g. by contracting paths containing degree-2 nodes).

The results also suggests that {\tt GH} is currently the fastest implementation
of the classical Gomory-Hu tree algorithm on weighted graphs.

\def\ABIKEWHE{
\P{1024.0}{2045.0} & 2 & \P{1.0}{1.0} & \P{27.2754}{18.7042} & \P{27.2754}{18.7042} & \P{1023.0}{1023.0} \\ 
\P{2048.0}{4093.0} & 2 & \P{1.0}{1.0} & \P{36.3496}{24.0994} & \P{36.3496}{24.0994} & \P{2047.0}{2047.0} \\ 
\P{4196.0}{8389.0} & 2 & \P{1.0}{1.0} & \P{43.2896}{28.9809} & \P{43.2896}{28.9809} & \P{4195.0}{4195.0} \\ 
}
\def\BBIKEWHE{
\Q{0.009851794}{0.0087076} & \Q{0.05465714000000001}{0.05400594000000001} & \Q{2.3493736}{2.2835308000000003} & \Q{2.1473164000000002}{2.0611821999999997} & \Q{0.5630128}{0.4855838}  \\
\Q{0.023825740000000005}{0.022104359999999997} & \Q{0.2213604}{0.21997219999999995} & \Q{17.2121044}{16.9133654} & \Q{16.2657002}{15.860764399999999} & \Q{3.644588}{3.3037749999999995}  \\
\Q{0.098969}{0.09543172} & \Q{1.068994}{1.065664} & \Q{177.8264038}{176.33547059999998} & \Q{154.74469919999999}{152.45422979999998} & \Q{22.2316876}{20.7858218}  \\
}

\def\ACYCONE{
\P{4196.0}{4196.0} & 4195 & \P{6.98213}{6.98213} & \P{17.69101978909015}{16.60438863212229} & \P{123.521}{115.934} & \P{20.8799}{20.8799} \\ 
\P{8392.0}{8392.0} & 6875-7493 & \P{7.72081}{7.72081} & \P{18.99178972154476}{17.8779169543092} & \P{146.632}{138.032} & \P{25.7324}{25.7324} \\ 
\P{16784.0}{16784.0} & 15267-15885 & \P{7.39377}{7.39377} & \P{21.891403167802086}{20.7876360774003} & \P{161.86}{153.699} & \P{26.7837}{26.7837} \\ 
}
\def\BCYCONE{
\Q{0.0415024}{0.0229392} & \Q{0.0105462}{0.0065812} & \Q{2.168719}{0.711024} & \Q{1.138642}{0.005533} & \Q{0.85358}{0.005706}  \\
\Q{0.0884025}{0.050836} & \Q{0.0176617}{0.0111302} & \Q{9.796519}{3.57604} & \Q{4.513091}{0.017993} & \Q{3.635111}{0.020176}  \\
\Q{0.173987}{0.0994704} & \Q{0.0334107}{0.02139} & \Q{53.268906}{16.128618} & \Q{27.557215}{0.040244} & \Q{21.189104}{0.039382}  \\
}

\def\ADBLCYC{
\P{2048.0}{4096.0} & 5 & \P{2.00098}{2.00098} & \P{37.74540475167168}{34.4484702495777} & \P{75.5278}{68.9307} & \P{1021.99}{1021.82} \\ 
\P{4196.0}{8392.0} & 5 & \P{1.99976}{1.99964} & \P{66.64549745969516}{63.25638614950691} & \P{133.275}{126.49} & \P{2096.99}{2096.83} \\ 
\P{8392.0}{16784.0} & 5 & \P{2.00024}{2.00024} & \P{53.9960204775427}{50.61742590889094} & \P{108.005}{101.247} & \P{4194.51}{4194.35} \\ 
}
\def\BDBLCYC{
\Q{0.0158835}{0.0107259} & \Q{0.00869429}{0.00643989} & \Q{24.938303}{24.568422} & \Q{1.68083}{1.289032} & \Q{0.39465}{0.063931}  \\
\Q{0.0319196}{0.0215965} & \Q{0.0139821}{0.010779} & \Q{180.307007}{178.713165} & \Q{11.104429}{9.307926} & \Q{1.91377}{0.565612}  \\
\Q{0.0565502}{0.0390912} & \Q{0.0190084}{0.0138528} & \Q{1733.210693}{1725.780884} & \Q{60.329689}{51.719501} & \Q{6.63112}{1.432986}  \\
}

\def\ANOIONE{
\P{400.0}{39900.0} & 2 & \P{1.0}{1.0} & \P{14.39}{7.392026000000001} & \P{14.39}{7.392026000000001} & \P{399.0}{341.52919999999995} \\ 
\P{600.0}{89850.0} & 2 & \P{1.0}{1.0} & \P{15.6067}{8.025079999999999} & \P{15.6067}{8.025079999999999} & \P{599.0}{512.3467999999999} \\ 
\P{800.0}{159800.0} & 2 & \P{1.0}{1.0} & \P{16.4175}{8.462890000000002} & \P{16.4175}{8.462890000000002} & \P{799.0}{683.6982} \\ 
\P{1000.0}{249750.0} & 2 & \P{1.0}{1.0} & \P{17.106}{8.818312} & \P{17.106}{8.818312} & \P{999.0}{855.083} \\ 
}
\def\BNOIONE{
\Q{0.007644072}{0.003337562} & \Q{0.22913360000000002}{0.2251932} & \Q{0.3004898}{0.1159966} & \Q{0.44448119999999997}{0.1338848} & \Q{0.4983792}{0.2745202}  \\
\Q{0.01780304}{0.007003384} & \Q{0.7321875999999999}{0.7233642} & \Q{1.012618}{0.3969242} & \Q{1.5524706}{0.4601128} & \Q{1.7148062}{0.9176286000000001}  \\
\Q{0.034791819999999994}{0.012772779999999997} & \Q{1.6888020000000001}{1.671496} & \Q{3.453572}{1.3238532} & \Q{5.4443684}{1.5275945999999998} & \Q{5.8766532}{3.1077369999999997}  \\
\Q{0.05597426}{0.01929914} & \Q{3.231376}{3.203306} & \Q{7.481244200000001}{2.874071} & \Q{11.7368366}{3.3123036} & \Q{13.0124244}{7.0519663999999995}  \\
}

\def\ANOITWO{
\P{400.0}{39900.0} & 3 & \P{2.005}{1.505364} & \P{14.625935162094763}{6.719622629476991} & \P{29.325}{10.115478} & \P{207.38500000000005}{97.45808} \\ 
\P{600.0}{89850.0} & 3 & \P{2.00333}{1.5010819999999998} & \P{15.998362726061107}{7.2676762495320055} & \P{32.05}{10.909378} & \P{305.6142}{137.23579999999998} \\ 
\P{800.0}{159800.0} & 3 & \P{2.0025}{1.500526} & \P{17.01085642946317}{7.955450288765407} & \P{34.06424}{11.937360000000002} & \P{404.5502}{178.6096} \\ 
\P{1000.0}{249750.0} & 3 & \P{2.002}{1.502422} & \P{17.091808191808195}{7.412832080467405} & \P{34.217800000000004}{11.137201999999998} & \P{505.6704}{223.12239999999997} \\ 
}
\def\BNOITWO{
\Q{0.01175556}{0.005926672} & \Q{0.06904438}{0.0660309} & \Q{0.2830398}{0.1179668} & \Q{0.18924760000000002}{0.0536466} & \Q{0.22900900000000002}{0.1240604}  \\
\Q{0.02369174}{0.01125394} & \Q{0.1984132}{0.19219499999999998} & \Q{1.0472808}{0.4296171999999999} & \Q{0.6299524}{0.17475759999999999} & \Q{0.7557484000000001}{0.41996199999999995}  \\
\Q{0.04508564}{0.02028034} & \Q{0.4661156}{0.454543} & \Q{3.5087316}{1.394961} & \Q{1.5073394}{0.4133368} & \Q{1.8878426000000001}{1.0313226}  \\
\Q{0.07140118}{0.0294554} & \Q{0.8887740000000001}{0.8690198} & \Q{7.595719000000001}{3.0078178} & \Q{3.8159959999999997}{0.9948924000000001} & \Q{4.4666652}{2.3928225999999997}  \\
}

\def\ANOITHREE{
\P{1000.0}{24975.0} & 2 & \P{1.0}{1.0} & \P{17.106}{8.652444} & \P{17.106}{8.652444} & \P{999.0}{982.1568} \\ 
\P{1000.0}{49950.0} & 2 & \P{1.0}{1.0} & \P{17.106}{8.722933999999999} & \P{17.106}{8.722933999999999} & \P{999.0}{965.7714} \\ 
\P{1000.0}{124875.0} & 2 & \P{1.0}{1.0} & \P{17.106}{8.784332000000001} & \P{17.106}{8.784332000000001} & \P{999.0}{920.324} \\ 
\P{1000.0}{249750.0} & 2 & \P{1.0}{1.0} & \P{17.106}{8.818312} & \P{17.106}{8.818312} & \P{999.0}{855.083} \\ 
\P{1000.0}{374625.0} & 2 & \P{1.0}{1.0} & \P{17.106}{8.831526} & \P{17.106}{8.831526} & \P{999.0}{798.9264000000001} \\ 
\P{1000.0}{499500.0} & 2 & \P{1.0}{1.0} & \P{17.106}{8.84085} & \P{17.106}{8.84085} & \P{999.0}{750.9464} \\ 
}
\def\BNOITHREE{
\Q{0.007653986}{0.0045384859999999996} & \Q{0.36697040000000003}{0.3627614} & \Q{0.5917178000000001}{0.251402} & \Q{0.9825208}{0.3562278} & \Q{1.1636484}{0.675597}  \\
\Q{0.01313482}{0.006659474} & \Q{0.7209248}{0.713557} & \Q{1.0196011999999999}{0.42462} & \Q{1.5608064000000001}{0.5359944000000001} & \Q{1.8562965999999999}{1.0926788}  \\
\Q{0.027939360000000003}{0.011452399999999998} & \Q{1.7815360000000002}{1.7670280000000003} & \Q{2.8487532}{1.1157982} & \Q{4.53506}{1.3340109999999998} & \Q{5.1592215999999995}{2.8560478000000002}  \\
\Q{0.05573376000000001}{0.01917668} & \Q{3.239444}{3.210866} & \Q{7.431343000000001}{2.8528046000000002} & \Q{11.7257996}{3.3059563999999995} & \Q{13.027487400000002}{7.0607112}  \\
\Q{0.09072526}{0.02793658} & \Q{4.696834}{4.6552500000000006} & \Q{11.7177866}{4.299871400000001} & \Q{18.233640599999998}{4.9047491999999995} & \Q{20.645401}{11.0815444}  \\
\Q{0.131793}{0.0374498} & \Q{6.396582}{6.340027999999999} & \Q{15.8665428}{5.6599488000000004} & \Q{24.6179486}{6.3898852} & \Q{25.937971999999995}{12.971258599999999}  \\
}

\def\ANOIFOUR{
\P{1000.0}{24975.0} & 3 & \P{2.002}{1.540196} & \P{16.289310689310692}{6.4818893179829065} & \P{32.611200000000004}{9.98338} & \P{502.14399999999995}{267.79499999999996} \\ 
\P{1000.0}{49950.0} & 3 & \P{2.002}{1.518196} & \P{17.343356643356643}{7.4968317661224235} & \P{34.721399999999996}{11.38166} & \P{503.28780000000006}{254.52079999999995} \\ 
\P{1000.0}{124875.0} & 3 & \P{2.002}{1.506342} & \P{16.48061938061938}{6.794029509898813} & \P{32.9942}{10.234131999999999} & \P{502.4192}{236.34179999999998} \\ 
\P{1000.0}{249750.0} & 3 & \P{2.002}{1.502422} & \P{17.091808191808195}{7.412832080467405} & \P{34.217800000000004}{11.137201999999998} & \P{505.6704}{223.12239999999997} \\ 
\P{1000.0}{374625.0} & 3 & \P{2.002}{1.4997159999999998} & \P{16.929070929070928}{7.379873256003137} & \P{33.891999999999996}{11.067713999999999} & \P{506.246}{209.33759999999998} \\ 
\P{1000.0}{499500.0} & 3 & \P{2.002}{1.4957699999999998} & \P{17.303196803196805}{7.760796111701667} & \P{34.641}{11.608366} & \P{506.2496}{196.9808} \\ 
}
\def\BNOIFOUR{
\Q{0.0125245}{0.007726012} & \Q{0.12073459999999998}{0.11761379999999999} & \Q{0.6704269999999999}{0.30622679999999997} & \Q{0.47012339999999997}{0.1690944} & \Q{0.5595752}{0.3380116}  \\
\Q{0.01721714}{0.009716293999999999} & \Q{0.22875040000000002}{0.2239638} & \Q{1.0099886000000002}{0.4517112} & \Q{0.7584681999999999}{0.24704259999999997} & \Q{0.9324062}{0.5547834}  \\
\Q{0.03614154000000001}{0.01766278} & \Q{0.4439266}{0.4342466} & \Q{2.9656244000000003}{1.2090252} & \Q{1.5911272}{0.4552184} & \Q{1.8588692000000002}{1.0258379999999998}  \\
\Q{0.06973732000000002}{0.02890128} & \Q{0.8842008}{0.865024} & \Q{7.6067431999999995}{3.0145998} & \Q{3.8187188}{0.9946545999999999} & \Q{4.4468608000000005}{2.3791556}  \\
\Q{0.1103}{0.0414418} & \Q{1.267928}{1.23986} & \Q{11.925386399999999}{4.5139924} & \Q{7.685649}{1.9593192} & \Q{8.6996662}{4.520223}  \\
\Q{0.15832759999999999}{0.05505462000000001} & \Q{1.663372}{1.6259059999999999} & \Q{16.130343600000003}{5.913485200000001} & \Q{11.1345242}{2.7733228} & \Q{12.4172796}{6.2421884}  \\
}

\def\ANOIFIVE{
\P{1000.0}{249750.0} & 2 & \P{1.0}{1.0} & \P{17.106}{8.818312} & \P{17.106}{8.818312} & \P{999.0}{855.083} \\ 
\P{1000.0}{249750.0} & 6.2 & \P{3.6124}{2.6222439999999994} & \P{12.26481009854944}{9.650924932996322} & \P{44.3054}{25.307080000000003} & \P{562.3804}{474.6188} \\ 
\P{1000.0}{249750.0} & 3 & \P{2.002}{1.502422} & \P{17.091808191808195}{7.412832080467405} & \P{34.217800000000004}{11.137201999999998} & \P{505.6704}{223.12239999999997} \\ 
\P{1000.0}{249750.0} & 4 & \P{2.0467999999999997}{1.23099} & \P{17.57367598202072}{7.300725432375568} & \P{35.969800000000006}{8.98712} & \P{227.0022}{57.77038} \\ 
\P{1000.0}{249750.0} & 4 & \P{2.3826}{1.32823} & \P{17.94762024678922}{7.351740285944452} & \P{42.762}{9.764802} & \P{122.44940000000001}{25.282559999999997} \\ 
\P{1000.0}{249750.0} & 5.6 & \P{2.7748}{1.579266} & \P{17.862476574888284}{9.494132084145418} & \P{49.564800000000005}{14.99376} & \P{62.56}{21.039119999999997} \\ 
\P{1000.0}{249750.0} & 6.4 & \P{2.9362}{1.712162} & \P{16.749199645800694}{9.68020549457353} & \P{49.178999999999995}{16.574080000000002} & \P{55.736799999999995}{24.013560000000002} \\ 
\P{1000.0}{249750.0} & 8 & \P{3.2312}{1.9432639999999999} & \P{14.131468185194356}{8.460764980980453} & \P{45.6616}{16.441499999999998} & \P{99.0848}{55.7013} \\ 
\P{1000.0}{249750.0} & 8 & \P{3.4198}{2.230466} & \P{14.110006433124743}{10.499976238149339} & \P{48.2534}{23.41984} & \P{272.03639999999996}{197.72420000000002} \\ 
\P{1000.0}{249750.0} & 6.4 & \P{3.532}{2.496152} & \P{13.576443941109853}{10.987744336082097} & \P{47.952}{27.427079999999997} & \P{494.6412}{409.8754} \\ 
}
\def\BNOIFIVE{
\Q{0.057001500000000004}{0.01960152} & \Q{3.2420299999999997}{3.2127460000000005} & \Q{7.4488684}{2.8596678} & \Q{11.8356362}{3.3412978000000004} & \Q{12.970643599999999}{7.035105400000001}  \\
\Q{0.09590067999999999}{0.038454880000000004} & \Q{1.375261}{1.3520244} & \Q{8.668163400000001}{4.0823278} & \Q{13.679091}{5.6879982} & \Q{22.284239200000002}{15.421907199999998}  \\
\Q{0.06919478}{0.028813} & \Q{0.8842612000000001}{0.8648087999999999} & \Q{7.6321118}{3.0268116} & \Q{3.8141962}{0.9951048} & \Q{4.4792787999999994}{2.4003242}  \\
\Q{0.06646752}{0.034080719999999995} & \Q{0.2644934}{0.2503838} & \Q{7.8500546}{3.2486726000000004} & \Q{0.8544271999999999}{0.23644379999999998} & \Q{1.3167558}{0.7000766}  \\
\Q{0.06984148000000001}{0.03416442} & \Q{0.1324798}{0.1196512} & \Q{8.144613}{3.5613116000000007} & \Q{0.40759239999999997}{0.167792} & \Q{0.7212409999999999}{0.4116676}  \\
\Q{0.07955024}{0.038257780000000005} & \Q{0.10121374000000001}{0.0879003} & \Q{9.214341600000001}{4.6396526} & \Q{0.8620664}{0.4742108} & \Q{0.8576790000000001}{0.5732134}  \\
\Q{0.08738074}{0.04276744} & \Q{0.10091704000000001}{0.08689842} & \Q{10.154416600000001}{5.5821048} & \Q{1.8062770000000001}{1.0504658} & \Q{1.3112458}{0.9423594}  \\
\Q{0.09273918}{0.04500556} & \Q{0.18011960000000002}{0.1649312} & \Q{10.8150626}{6.257327200000001} & \Q{5.0773436}{2.893979} & \Q{2.7273488}{2.0284862}  \\
\Q{0.0960635}{0.0450327} & \Q{0.6261444}{0.6072732000000001} & \Q{9.8041084}{5.2348404} & \Q{10.590464800000001}{5.335715} & \Q{9.324684600000001}{6.4784048}  \\
\Q{0.0948601}{0.041569140000000004} & \Q{0.8728111999999999}{0.850487} & \Q{8.771822799999999}{4.1851956} & \Q{13.3781146}{5.745402400000001} & \Q{19.3348556}{13.0546248}  \\
}

\def\ANOISIX{
\P{1000.0}{249750.0} & 2 & \P{1.0}{1.0} & \P{17.106}{8.815684} & \P{17.106}{8.815684} & \P{999.0}{855.0436000000002} \\ 
\P{1000.0}{249750.0} & 3 & \P{2.002}{1.49961} & \P{18.193106893106897}{8.53136482152026} & \P{36.4226}{12.793719999999999} & \P{506.0072}{223.6072} \\ 
\P{1000.0}{249750.0} & 2 & \P{1.0}{1.0} & \P{17.106}{8.840876} & \P{17.106}{8.840876} & \P{999.0}{855.608} \\ 
\P{1000.0}{249750.0} & 2 & \P{1.0}{1.0} & \P{17.450400000000002}{8.836794000000001} & \P{17.450400000000002}{8.836794000000001} & \P{999.0}{855.6677999999999} \\ 
\P{1000.0}{249750.0} & 2 & \P{1.0}{1.0} & \P{17.9932}{8.864355999999999} & \P{17.9932}{8.864355999999999} & \P{999.0}{855.7819999999999} \\ 
\P{1000.0}{249750.0} & 2 & \P{1.0}{1.0} & \P{18.7376}{8.444410000000001} & \P{18.7376}{8.444410000000001} & \P{999.0}{855.6964} \\ 
\P{1000.0}{249750.0} & 2 & \P{1.0}{1.0} & \P{18.752799999999997}{8.861825999999999} & \P{18.752799999999997}{8.861825999999999} & \P{999.0}{855.6458} \\ 
\P{1000.0}{249750.0} & 3 & \P{2.002}{1.423774} & \P{16.286813186813188}{7.731777655723449} & \P{32.6062}{11.008303999999999} & \P{505.27920000000006}{222.5462} \\ 
\P{1000.0}{249750.0} & 3 & \P{2.002}{1.502422} & \P{17.091808191808195}{7.412832080467405} & \P{34.217800000000004}{11.137201999999998} & \P{505.6704}{223.12239999999997} \\ 
}
\def\BNOISIX{
\Q{0.0586849}{0.019875280000000002} & \Q{3.1861939999999995}{3.1581960000000002} & \Q{7.4710408}{2.8839812} & \Q{11.7325374}{3.3156708000000004} & \Q{12.979194199999998}{7.012371399999999}  \\
\Q{0.07220422000000001}{0.03185632} & \Q{0.8713092}{0.852392} & \Q{7.6602184}{3.0474357999999997} & \Q{3.940063}{1.1133278} & \Q{4.6385844}{2.5013844}  \\
\Q{0.05950742}{0.02307764} & \Q{3.8470459999999997}{3.8190639999999996} & \Q{7.5793534000000005}{2.9859851999999996} & \Q{11.9371302}{3.5183405999999997} & \Q{14.8719778}{8.9012828}  \\
\Q{0.06694454000000001}{0.02979212} & \Q{4.3429199999999994}{4.315338} & \Q{7.792582}{3.1970378} & \Q{12.3444152}{3.935846} & \Q{15.091567800000002}{9.1038606}  \\
\Q{0.07231678}{0.03491154} & \Q{3.883896}{3.85625} & \Q{8.076376999999999}{3.4860892} & \Q{12.703514}{4.3028652} & \Q{15.917205000000001}{9.893867}  \\
\Q{0.06770624}{0.02945716} & \Q{3.961884}{3.933732} & \Q{8.4226066}{3.8338612000000003} & \Q{13.0866346}{4.6856212} & \Q{16.001866}{9.9631162}  \\
\Q{0.07372086}{0.03712912} & \Q{4.067241999999999}{4.038221999999999} & \Q{8.739317999999999}{4.1512286} & \Q{13.376771600000001}{4.964092600000001} & \Q{16.620753999999998}{10.5851044}  \\
\Q{0.06683356}{0.0261244} & \Q{0.8828811999999999}{0.863634} & \Q{7.636826600000001}{3.0368876} & \Q{3.8929258000000004}{1.0238482} & \Q{4.4061006}{2.3214124}  \\
\Q{0.07051347999999999}{0.029121400000000002} & \Q{0.8876924}{0.8681547999999999} & \Q{7.6106646}{3.0171794000000003} & \Q{3.8056748000000007}{0.9941566} & \Q{4.470398}{2.3964635999999997}  \\
}

\def\APATH{
\P{2000.0}{21990.0} & 2 & \P{1.0}{1.0} & \P{19.0845}{8.970023999999999} & \P{19.0845}{8.970023999999999} & \P{1999.0}{1992.2559999999999} \\ 
\P{2000.0}{21990.0} & 4 & \P{1.7401}{1.5879759999999998} & \P{10.53261306821447}{2.694280014307521} & \P{18.3278}{4.278452} & \P{519.2232}{242.90699999999998} \\ 
\P{2000.0}{21990.0} & 4 & \P{2.0082}{2.027068} & \P{11.646150781794642}{4.004524761872814} & \P{23.3878}{8.117444} & \P{168.0494}{78.87748} \\ 
\P{2000.0}{21990.0} & 6.6 & \P{2.5321}{2.394188} & \P{13.689190790253154}{5.273311870245778} & \P{34.662400000000005}{12.6253} & \P{78.5892}{52.63271999999999} \\ 
\P{2000.0}{21990.0} & 13 & \P{3.6995000000000005}{2.8804719999999997} & \P{14.584727665900793}{7.907669298642722} & \P{53.95619999999999}{22.77782} & \P{59.842600000000004}{59.96307999999999} \\ 
\P{2000.0}{21990.0} & 26.4 & \P{4.7306}{3.587822} & \P{14.021202384475542}{9.392021120334286} & \P{66.3287}{33.6969} & \P{77.3018}{99.79042000000001} \\ 
\P{2000.0}{21990.0} & 40.4 & \P{5.1345}{3.909024} & \P{14.172460804362643}{10.705449749093379} & \P{72.76849999999999}{41.84786} & \P{109.63}{157.98160000000001} \\ 
}
\def\BPATH{
\Q{0.0073666700000000005}{0.002747418} & \Q{0.009743176000000001}{0.005318612} & \Q{0.806926}{0.043058400000000004} & \Q{2.1776194}{0.7250786} & \Q{2.1063754}{1.0948201999999998}  \\
\Q{0.010613934}{0.00486981} & \Q{0.009286446}{0.004562775999999999} & \Q{0.9339156}{0.1894444} & \Q{2.6880485999999997}{1.2314805999999998} & \Q{0.3268254}{0.07763020000000001}  \\
\Q{0.01534944}{0.009184104} & \Q{0.015764000000000004}{0.011999491999999999} & \Q{1.3346998}{0.5797764} & \Q{3.3435078}{1.9124546000000002} & \Q{0.2137108}{0.0377448}  \\
\Q{0.02316012}{0.01597122} & \Q{0.032656679999999993}{0.02878926} & \Q{1.7645298}{0.9934118000000002} & \Q{3.1378703999999997}{1.9517571999999999} & \Q{0.2471162}{0.0676956}  \\
\Q{0.0375287}{0.028454840000000002} & \Q{0.054104740000000005}{0.050400900000000005} & \Q{2.1370156000000002}{1.3587817999999998} & \Q{1.0752138000000002}{0.6035672} & \Q{0.350985}{0.16332739999999996}  \\
\Q{0.06689866}{0.055237499999999995} & \Q{0.0989714}{0.09491962} & \Q{3.8275722}{3.0590192} & \Q{0.9934622000000001}{0.6133778} & \Q{0.8578994}{0.6193514}  \\
\Q{0.08795198}{0.0753786} & \Q{0.15757559999999998}{0.15326040000000002} & \Q{7.9084506}{7.1394766} & \Q{1.2716216}{0.8939523999999999} & \Q{1.6946138000000002}{1.4178226}  \\
}

\def\APRONE{
\P{400.0}{1986.8} & 3 & \P{1.798}{1.7995959999999998} & \P{10.596496106785319}{6.5000811293201375} & \P{19.052500000000002}{11.69752} & \P{397.26}{397.85119999999995} \\ 
\P{800.0}{7151.4} & 2 & \P{1.0}{1.0} & \P{16.79076}{8.537836} & \P{16.79076}{8.537836} & \P{799.0}{799.0} \\ 
\P{1200.0}{15522.4} & 2 & \P{1.0}{1.0} & \P{17.62816}{8.793786} & \P{17.62816}{8.793786} & \P{1199.0}{1199.0} \\ 
\P{1600.0}{27019.6} & 2 & \P{1.0}{1.0} & \P{18.42612}{9.238428} & \P{18.42612}{9.238428} & \P{1599.0}{1599.0} \\ 
\P{2000.0}{41731.2} & 2 & \P{1.0}{1.0} & \P{19.0845}{9.58615} & \P{19.0845}{9.58615} & \P{1999.0}{1999.0} \\ 
}
\def\BPRONE{
\Q{0.002381114}{0.0014811528000000002} & \Q{0.019637800000000004}{0.018959359999999998} & \Q{0.026901}{0.011756199999999998} & \Q{0.0485594}{0.022336599999999998} & \Q{0.06090720000000001}{0.0389552}  \\
\Q{0.005001152}{0.0031771799999999995} & \Q{0.08813365999999999}{0.08645076} & \Q{0.18034920000000002}{0.0669838} & \Q{0.3084632}{0.1137678} & \Q{0.3059436}{0.17173}  \\
\Q{0.008049794}{0.0049934919999999995} & \Q{0.2689858}{0.2654656} & \Q{0.5379122}{0.211463} & \Q{0.9107466000000001}{0.33122060000000003} & \Q{0.9531152}{0.5259576}  \\
\Q{0.010931951999999998}{0.006607394} & \Q{0.6208102}{0.6150266} & \Q{1.1087745999999998}{0.4500438} & \Q{1.9571268}{0.7091194} & \Q{2.1108108000000003}{1.2160256}  \\
\Q{0.014285619999999999}{0.008176918} & \Q{1.232836}{1.223436} & \Q{2.0556042}{0.8411012} & \Q{3.5724080000000002}{1.2775850000000002} & \Q{3.8054734000000003}{2.1680018}  \\
}

\def\APRFIVE{
\P{400.0}{1986.8} & 5 & \P{2.006}{1.798198} & \P{13.064307078763711}{5.850984151912082} & \P{26.207}{10.521228000000002} & \P{187.411}{152.1126} \\ 
\P{800.0}{7151.4} & 3.2 & \P{2.0025}{1.670016} & \P{15.49487141073658}{6.558430577910632} & \P{31.028480000000002}{10.952684000000001} & \P{403.686}{271.9886} \\ 
\P{1200.0}{15522.4} & 3 & \P{2.00167}{1.6138379999999999} & \P{16.35051731803944}{6.471619828012478} & \P{32.72834}{10.444146} & \P{601.4972}{369.3806} \\ 
\P{1600.0}{27019.6} & 3 & \P{2.00125}{1.588752} & \P{17.39262960649594}{6.858037000110779} & \P{34.807}{10.89572} & \P{801.5980000000001}{472.259} \\ 
\P{2000.0}{41731.2} & 3 & \P{2.001}{1.571962} & \P{18.099300349825086}{7.1772345641943} & \P{36.216699999999996}{11.28234} & \P{1001.5}{573.1028} \\ 
}
\def\BPRFIVE{
\Q{0.0037951260000000002}{0.0024553659999999996} & \Q{0.010152294000000001}{0.009438956} & \Q{0.0340306}{0.019634199999999997} & \Q{0.0299308}{0.0149} & \Q{0.036836999999999995}{0.025055}  \\
\Q{0.009154187999999999}{0.005900996} & \Q{0.04569867999999999}{0.044137039999999995} & \Q{0.24485980000000002}{0.1158166} & \Q{0.1527476}{0.062363800000000004} & \Q{0.1822162}{0.1170558}  \\
\Q{0.01210436}{0.007826528} & \Q{0.1157314}{0.112869} & \Q{0.581723}{0.2762494} & \Q{0.4436508}{0.1697812} & \Q{0.49621560000000003}{0.30294239999999995}  \\
\Q{0.01667496}{0.010622103999999999} & \Q{0.22658040000000002}{0.2212348} & \Q{1.249615}{0.5935474} & \Q{0.9591008000000001}{0.36362719999999993} & \Q{1.0971123999999999}{0.668677}  \\
\Q{0.022043280000000002}{0.0138561} & \Q{0.3889874000000001}{0.3817322} & \Q{2.2495443999999996}{1.0602588000000002} & \Q{1.751387}{0.6541156000000001} & \Q{1.960414}{1.1896044}  \\
}

\def\APRSIX{
\P{400.0}{8345.2} & 3 & \P{2.005}{1.5664840000000002} & \P{13.13142144638404}{5.231890016112515} & \P{26.3285}{8.195671999999998} & \P{201.2936}{114.16119999999998} \\ 
\P{800.0}{32601.6} & 3 & \P{2.0025}{1.534882} & \P{15.779895131086143}{6.52069409896005} & \P{31.599239999999998}{10.008496} & \P{402.39399999999995}{216.4632} \\ 
\P{1200.0}{72936.4} & 3 & \P{2.00167}{1.523372} & \P{17.000964194897264}{6.935720231171376} & \P{34.03032}{10.565682} & \P{602.0964}{316.1228} \\ 
\P{1600.0}{129196.4} & 3 & \P{2.00125}{1.517706} & \P{17.36783510306059}{6.800239308535382} & \P{34.757380000000005}{10.320764} & \P{801.9975999999999}{416.1316} \\ 
\P{2000.0}{201455.4} & 3 & \P{2.001}{1.5148740000000003} & \P{18.35192403798101}{7.363885049185607} & \P{36.7222}{11.155358} & \P{1001.6}{516.2506000000001} \\ 
}
\def\BPRSIX{
\Q{0.005108492}{0.003099696} & \Q{0.02349172}{0.02248646} & \Q{0.09242239999999999}{0.0405032} & \Q{0.053622199999999995}{0.020683800000000002} & \Q{0.0680786}{0.041434}  \\
\Q{0.011342379999999999}{0.006496608000000001} & \Q{0.131782}{0.128463} & \Q{0.5824317999999999}{0.2588174} & \Q{0.4072694}{0.1340894} & \Q{0.4868226}{0.28272699999999995}  \\
\Q{0.02272376}{0.012778859999999998} & \Q{0.38678739999999995}{0.379527} & \Q{1.8438668}{0.8115045999999999} & \Q{1.2989602}{0.40281599999999995} & \Q{1.5399652}{0.8867345999999999}  \\
\Q{0.039394599999999995}{0.0201892} & \Q{0.7847017999999999}{0.7714198} & \Q{5.453562}{2.2815984} & \Q{2.9573443999999998}{0.8851516} & \Q{3.4693917999999995}{1.9733209999999999}  \\
\Q{0.06132749999999999}{0.03019212} & \Q{1.495344}{1.4750159999999999} & \Q{12.268595000000001}{5.1404206} & \Q{6.569169199999999}{1.8337896} & \Q{7.6596876}{4.2394804}  \\
}

\def\APRSEVEN{
\P{400.0}{40252.6} & 3 & \P{2.005}{1.510638} & \P{13.297755610972569}{5.343166264849686} & \P{26.662}{8.07159} & \P{201.39319999999998}{103.2126} \\ 
\P{800.0}{160309.0} & 3 & \P{2.0025}{1.505316} & \P{15.38450936329588}{6.226931753864305} & \P{30.807479999999998}{9.3735} & \P{401.7958}{203.8168} \\ 
\P{1200.0}{360614.8} & 3 & \P{2.00167}{1.5037740000000002} & \P{16.551259698152045}{6.543488582725861} & \P{33.130160000000004}{9.839928} & \P{601.597}{303.6588} \\ 
\P{1600.0}{640753.2} & 3 & \P{2.00125}{1.5026220000000001} & \P{17.265199250468456}{6.700707163877542} & \P{34.55198}{10.06863} & \P{801.4984000000001}{403.3424} \\ 
\P{2000.0}{1000790.2} & 3 & \P{2.001}{1.502188} & \P{18.551974012993504}{7.532927969069116} & \P{37.1225}{11.315873999999999} & \P{1001.5}{503.43559999999997} \\ 
}
\def\BPRSEVEN{
\Q{0.010607292}{0.005216429999999999} & \Q{0.07659988000000001}{0.07374458} & \Q{0.29192840000000003}{0.1292162} & \Q{0.2263382}{0.0635478} & \Q{0.2476318}{0.1304236}  \\
\Q{0.03773282}{0.017594859999999997} & \Q{0.5300471999999999}{0.5183586} & \Q{3.4157494}{1.4153976} & \Q{1.6687176000000001}{0.4659584} & \Q{1.8786798}{1.0094496}  \\
\Q{0.09350652}{0.039000179999999995} & \Q{1.7277740000000001}{1.69765} & \Q{12.7323704}{5.124987} & \Q{9.5755488}{2.5985326} & \Q{10.535532}{5.5012998}  \\
\Q{0.1954142}{0.0757693} & \Q{3.9850919999999994}{3.9261220000000003} & \Q{30.3950048}{11.9309734} & \Q{25.361935999999996}{6.7016696} & \Q{28.0148194}{14.6309866}  \\
\Q{0.3330646}{0.12605539999999998} & \Q{7.694462}{7.596446} & \Q{59.606494}{23.299953600000002} & \Q{51.235544399999995}{13.1722094} & \Q{59.677415399999994}{32.355335600000004}  \\
}

\def\APREIGHT{
\P{400.0}{79010.0} & 3 & \P{2.005}{1.5038040000000001} & \P{14.032668329177058}{6.286611819093444} & \P{28.1355}{9.453831999999998} & \P{202.28879999999998}{103.20219999999999} \\ 
\P{800.0}{316399.4} & 3 & \P{2.0025}{1.501872} & \P{15.928229712858926}{6.8351510648044576} & \P{31.896279999999997}{10.265522} & \P{402.19440000000003}{203.0444} \\ 
\P{1200.0}{712166.0} & 3 & \P{2.00167}{1.501284} & \P{16.503299744713164}{6.548345283104329} & \P{33.03416}{9.830926} & \P{601.3974000000001}{301.8676} \\ 
\P{1600.0}{1266394.4} & 3 & \P{2.00125}{1.500896} & \P{17.76320799500312}{7.264412724132785} & \P{35.54862}{10.903128} & \P{801.8976}{402.5656} \\ 
\P{2000.0}{1978984.6} & 3 & \P{2.001}{1.500758} & \P{18.751324337831086}{7.828883804051018} & \P{37.5214}{11.749259999999998} & \P{1002.2}{503.05640000000005} \\ 
}
\def\BPREIGHT{
\Q{0.01533746}{0.006747132} & \Q{0.1536886}{0.1488222} & \Q{0.6775711999999999}{0.29852259999999997} & \Q{0.409197}{0.11640819999999999} & \Q{0.42455479999999995}{0.22027380000000002}  \\
\Q{0.06854025999999999}{0.027011779999999996} & \Q{1.168636}{1.145604} & \Q{7.2790194}{2.9259812} & \Q{5.3351234}{1.4638732} & \Q{5.638775600000001}{2.857622}  \\
\Q{0.1809806}{0.06377150000000001} & \Q{3.727198}{3.669444} & \Q{25.142334800000004}{9.832786800000001} & \Q{21.325452}{5.539895} & \Q{22.9044576}{11.6447934}  \\
\Q{0.3526242}{0.1212822} & \Q{8.700972}{8.587641999999999} & \Q{60.1293036}{23.1732886} & \Q{51.92189040000001}{13.158719600000001} & \Q{55.499157000000004}{27.7716558}  \\
\Q{0.5950294}{0.20696459999999997} & \Q{18.22782}{18.021819999999998} & \Q{117.45659479999999}{45.4418312} & \Q{102.387973}{25.7238026} & \Q{111.4467392}{56.483566399999994}  \\
}

\def\AREGONE{
\P{1000.0}{1000.0} & 2-49 & \P{1.0}{1.0} & \P{66.632}{63.275} & \P{66.632}{63.275} & \P{999.0}{999.0} \\ 
\P{1000.0}{4000.0} & 2 & \P{1.0}{1.0} & \P{17.921}{10.32972} & \P{17.921}{10.32972} & \P{999.0}{997.6256} \\ 
\P{1000.0}{16000.0} & 2 & \P{1.0}{1.0} & \P{19.0726}{10.90336} & \P{19.0726}{10.90336} & \P{999.0}{989.4626000000001} \\ 
\P{1000.0}{64000.0} & 2 & \P{1.0}{1.0} & \P{18.0844}{9.882278} & \P{18.0844}{9.882278} & \P{999.0}{958.5840000000001} \\ 
\P{1000.0}{128000.0} & 2 & \P{1.0}{1.0} & \P{18.085}{9.880774} & \P{18.085}{9.880774} & \P{999.0}{921.0234} \\ 
}
\def\BREGONE{
\Q{0.0037123399999999993}{0.002323888} & \Q{0.0037323960000000002}{0.000953588} & \Q{0.10434460000000001}{0.028914799999999997} & \Q{0.13426860000000002}{0.0399204} & \Q{0.146659}{0.0460428}  \\
\Q{0.00517072}{0.0034599340000000004} & \Q{0.07136400000000001}{0.06943834} & \Q{0.18109419999999998}{0.0775322} & \Q{0.2439194}{0.10901779999999998} & \Q{0.2536236}{0.1378904}  \\
\Q{0.007095417999999999}{0.0046687659999999995} & \Q{0.20662180000000002}{0.2030602} & \Q{0.4473574}{0.21792540000000002} & \Q{0.6688274}{0.24549379999999998} & \Q{0.5801968}{0.325101}  \\
\Q{0.017391200000000002}{0.007598903999999999} & \Q{0.8054012}{0.7961278} & \Q{1.2930784}{0.5774144} & \Q{1.9394174}{0.6362943999999999} & \Q{2.4543345999999997}{1.599165}  \\
\Q{0.031511660000000004}{0.011178440000000001} & \Q{1.162472}{1.14669} & \Q{3.1154332000000005}{1.2738704} & \Q{4.687476599999999}{1.3466917999999999} & \Q{8.044814999999998}{5.358095400000001}  \\
}

\def\ATREE{
\P{800.0}{160600.0} & 2 & \P{1.0}{1.0} & \P{16.4175}{8.506514} & \P{16.4175}{8.506514} & \P{799.0}{685.501} \\ 
\P{800.0}{160600.0} & 4 & \P{1.0}{1.0} & \P{14.97326}{4.885458} & \P{14.97326}{4.885458} & \P{454.47299999999996}{263.3916} \\ 
\P{800.0}{160600.0} & 4 & \P{1.0}{1.0} & \P{14.590259999999997}{4.062878} & \P{14.590259999999997}{4.062878} & \P{378.1278}{205.47820000000002} \\ 
\P{800.0}{160600.0} & 4 & \P{1.282}{1.112526} & \P{11.426271450858032}{3.2238221848298374} & \P{14.648479999999998}{3.5865859999999996} & \P{291.4662}{150.28298} \\ 
\P{800.0}{160600.0} & 4.4 & \P{1.491}{1.2125059999999999} & \P{11.352783366867873}{3.5801257890682607} & \P{16.927}{4.340924} & \P{236.6808}{121.89962} \\ 
\P{800.0}{160600.0} & 6 & \P{1.6164999999999998}{1.328058} & \P{12.492731209403031}{4.354504095453662} & \P{20.194499999999998}{5.783034} & \P{219.1462}{128.5964} \\ 
\P{800.0}{160600.0} & 6.2 & \P{1.78325}{1.481706} & \P{12.669418197112014}{4.944635440498993} & \P{22.59274}{7.326496000000001} & \P{212.19279999999998}{134.93779999999998} \\ 
\P{800.0}{160600.0} & 7.2 & \P{2.24375}{1.758194} & \P{12.235204456824512}{5.780295007263136} & \P{27.45274}{10.162880000000001} & \P{227.57840000000002}{158.9234} \\ 
\P{800.0}{160600.0} & 7.8 & \P{2.5782499999999997}{1.984366} & \P{12.771856879666442}{6.515924985612533} & \P{32.92904}{12.92998} & \P{249.44539999999998}{191.9864} \\ 
}
\def\BTREE{
\Q{0.02791666}{0.00539632} & \Q{0.178679}{0.1643538} & \Q{2.3134384}{0.1594554} & \Q{5.6924092}{1.714434} & \Q{5.9569350000000005}{3.0586946}  \\
\Q{0.027473020000000004}{0.006871756} & \Q{0.10297448000000001}{0.09162398000000001} & \Q{2.6681528}{0.5210778} & \Q{6.0219302}{2.1675032} & \Q{1.8413650000000001}{0.7798733999999999}  \\
\Q{0.0290756}{0.008409709999999999} & \Q{0.09281606}{0.08229782000000001} & \Q{2.8128237999999994}{0.6908982} & \Q{6.442002}{2.5327758} & \Q{0.9119997999999999}{0.4053196}  \\
\Q{0.032065800000000005}{0.010275567999999999} & \Q{0.07734906}{0.06721414} & \Q{3.0672572000000002}{0.9226130000000001} & \Q{6.6964163999999995}{2.8650114} & \Q{0.4889962}{0.2195754}  \\
\Q{0.03743265999999999}{0.013095260000000001} & \Q{0.06935972}{0.06026222} & \Q{3.2926843999999997}{1.1470145999999999} & \Q{6.711486600000001}{2.976668} & \Q{0.3663232}{0.1730826}  \\
\Q{0.04064004}{0.01670582} & \Q{0.09235661999999999}{0.0825451} & \Q{3.4652554}{1.3242606000000001} & \Q{6.9960252}{3.3143496} & \Q{0.5506506}{0.287116}  \\
\Q{0.049454}{0.02468402} & \Q{0.1782562}{0.16820739999999998} & \Q{3.5990572}{1.4568562} & \Q{7.305594999999999}{3.5863416} & \Q{0.8834500000000001}{0.520389}  \\
\Q{0.06059388}{0.03357365999999999} & \Q{0.3682796}{0.3563916} & \Q{3.7814563999999997}{1.6377758} & \Q{7.0789990000000005}{3.5618590000000006} & \Q{1.8389412}{1.257272}  \\
\Q{0.07137492}{0.04209148} & \Q{0.3616461999999999}{0.3496392} & \Q{3.9694781999999997}{1.838635} & \Q{6.771694800000001}{3.5117933999999997} & \Q{3.3231803999999996}{2.4595125999999996}  \\
}

\def\AWHE{
\P{1024.0}{2046.0} & 2 & \P{1.0}{1.0} & \P{32.6836}{22.1877} & \P{32.6836}{22.1877} & \P{1023.0}{1023.0} \\ 
\P{2048.0}{4094.0} & 2 & \P{1.0}{1.0} & \P{35.0293}{22.6778} & \P{35.0293}{22.6778} & \P{2047.0}{2047.0} \\ 
\P{4196.0}{8390.0} & 2 & \P{1.0}{1.0} & \P{47.7776}{31.5483} & \P{47.7776}{31.5483} & \P{4195.0}{4195.0} \\ 
}
\def\BWHE{
\Q{0.007839697999999999}{0.006147472} & \Q{0.07077937999999999}{0.0699804} & \Q{2.9001520000000003}{2.8118811999999997} & \Q{2.9881026}{2.8736327999999998} & \Q{0.168514}{0.06910039999999999}  \\
\Q{0.0183575}{0.01599202} & \Q{0.2751012}{0.27331019999999995} & \Q{21.0215974}{20.651191} & \Q{20.3091396}{19.8176036} & \Q{0.6886864000000001}{0.30155580000000004}  \\
\Q{0.057898700000000004}{0.05401754000000001} & \Q{1.0666108}{1.062039} & \Q{168.3279844}{166.73970960000003} & \Q{148.8357484}{146.434732} & \Q{3.1668634}{1.4448196000000002}  \\
}

\def\AALL{
BIKEWHE &
\P{4196.0}{8389.0} & 2 & \P{1.0}{1.0} & \P{43.2896}{28.9809} & \P{43.2896}{28.9809} & \P{4195.0}{4195.0} \\ 
CYC1 &
\P{16784.0}{16784.0} & 15267-15885 & \P{7.39377}{7.39377} & \P{21.891403167802086}{20.7876360774003} & \P{161.86}{153.699} & \P{26.7837}{26.7837} \\ 
DBLCYC &
\P{8392.0}{16784.0} & 5 & \P{2.00024}{2.00024} & \P{53.9960204775427}{50.61742590889094} & \P{108.005}{101.247} & \P{4194.51}{4194.35} \\ 
NOI1 &
\P{1000.0}{249750.0} & 2 & \P{1.0}{1.0} & \P{17.106}{8.818312} & \P{17.106}{8.818312} & \P{999.0}{855.083} \\ 
NOI2 &
\P{1000.0}{249750.0} & 3 & \P{2.002}{1.502422} & \P{17.091808191808195}{7.412832080467405} & \P{34.217800000000004}{11.137201999999998} & \P{505.6704}{223.12239999999997} \\ 
NOI3 &
\P{1000.0}{499500.0} & 2 & \P{1.0}{1.0} & \P{17.106}{8.84085} & \P{17.106}{8.84085} & \P{999.0}{750.9464} \\ 
NOI4 &
\P{1000.0}{499500.0} & 3 & \P{2.002}{1.4957699999999998} & \P{17.303196803196805}{7.760796111701667} & \P{34.641}{11.608366} & \P{506.2496}{196.9808} \\ 
NOI5 &
\P{1000.0}{249750.0} & 6.2 & \P{3.6124}{2.6222439999999994} & \P{12.26481009854944}{9.650924932996322} & \P{44.3054}{25.307080000000003} & \P{562.3804}{474.6188} \\ 
NOI6 &
\P{1000.0}{249750.0} & 3 & \P{2.002}{1.49961} & \P{18.193106893106897}{8.53136482152026} & \P{36.4226}{12.793719999999999} & \P{506.0072}{223.6072} \\ 
PATH &
\P{2000.0}{21990.0} & 40.4 & \P{5.1345}{3.909024} & \P{14.172460804362643}{10.705449749093379} & \P{72.76849999999999}{41.84786} & \P{109.63}{157.98160000000001} \\ 
PR1 &
\P{2000.0}{41731.2} & 2 & \P{1.0}{1.0} & \P{19.0845}{9.58615} & \P{19.0845}{9.58615} & \P{1999.0}{1999.0} \\ 
PR5 &
\P{2000.0}{41731.2} & 3 & \P{2.001}{1.571962} & \P{18.099300349825086}{7.1772345641943} & \P{36.216699999999996}{11.28234} & \P{1001.5}{573.1028} \\ 
PR6 &
\P{2000.0}{201455.4} & 3 & \P{2.001}{1.5148740000000003} & \P{18.35192403798101}{7.363885049185607} & \P{36.7222}{11.155358} & \P{1001.6}{516.2506000000001} \\ 
PR7 &
\P{2000.0}{1000790.2} & 3 & \P{2.001}{1.502188} & \P{18.551974012993504}{7.532927969069116} & \P{37.1225}{11.315873999999999} & \P{1001.5}{503.43559999999997} \\ 
PR8 &
\P{2000.0}{1978984.6} & 3 & \P{2.001}{1.500758} & \P{18.751324337831086}{7.828883804051018} & \P{37.5214}{11.749259999999998} & \P{1002.2}{503.05640000000005} \\ 
REG1 &
\P{1000.0}{128000.0} & 2 & \P{1.0}{1.0} & \P{18.085}{9.880774} & \P{18.085}{9.880774} & \P{999.0}{921.0234} \\ 
TREE &
\P{800.0}{160600.0} & 7.8 & \P{2.5782499999999997}{1.984366} & \P{12.771856879666442}{6.515924985612533} & \P{32.92904}{12.92998} & \P{249.44539999999998}{191.9864} \\ 
WHE &
\P{4196.0}{8390.0} & 2 & \P{1.0}{1.0} & \P{47.7776}{31.5483} & \P{47.7776}{31.5483} & \P{4195.0}{4195.0} \\ 
}
\def\BALL{
\Q{0.098969}{0.09543172} & \Q{1.068994}{1.065664} & \Q{177.8264038}{176.33547059999998} & \Q{154.74469919999999}{152.45422979999998} & \Q{22.2316876}{20.7858218}  \\
\Q{0.173987}{0.0994704} & \Q{0.0334107}{0.02139} & \Q{53.268906}{16.128618} & \Q{27.557215}{0.040244} & \Q{21.189104}{0.039382}  \\
\Q{0.0565502}{0.0390912} & \Q{0.0190084}{0.0138528} & \Q{1733.210693}{1725.780884} & \Q{60.329689}{51.719501} & \Q{6.63112}{1.432986}  \\
\Q{0.05597426}{0.01929914} & \Q{3.231376}{3.203306} & \Q{7.481244200000001}{2.874071} & \Q{11.7368366}{3.3123036} & \Q{13.0124244}{7.0519663999999995}  \\
\Q{0.07140118}{0.0294554} & \Q{0.8887740000000001}{0.8690198} & \Q{7.595719000000001}{3.0078178} & \Q{3.8159959999999997}{0.9948924000000001} & \Q{4.4666652}{2.3928225999999997}  \\
\Q{0.131793}{0.0374498} & \Q{6.396582}{6.340027999999999} & \Q{15.8665428}{5.6599488000000004} & \Q{24.6179486}{6.3898852} & \Q{25.937971999999995}{12.971258599999999}  \\
\Q{0.15832759999999999}{0.05505462000000001} & \Q{1.663372}{1.6259059999999999} & \Q{16.130343600000003}{5.913485200000001} & \Q{11.1345242}{2.7733228} & \Q{12.4172796}{6.2421884}  \\
\Q{0.09590067999999999}{0.038454880000000004} & \Q{1.375261}{1.3520244} & \Q{8.668163400000001}{4.0823278} & \Q{13.679091}{5.6879982} & \Q{22.284239200000002}{15.421907199999998}  \\
\Q{0.07220422000000001}{0.03185632} & \Q{0.8713092}{0.852392} & \Q{7.6602184}{3.0474357999999997} & \Q{3.940063}{1.1133278} & \Q{4.6385844}{2.5013844}  \\
\Q{0.08795198}{0.0753786} & \Q{0.15757559999999998}{0.15326040000000002} & \Q{7.9084506}{7.1394766} & \Q{1.2716216}{0.8939523999999999} & \Q{1.6946138000000002}{1.4178226}  \\
\Q{0.014285619999999999}{0.008176918} & \Q{1.232836}{1.223436} & \Q{2.0556042}{0.8411012} & \Q{3.5724080000000002}{1.2775850000000002} & \Q{3.8054734000000003}{2.1680018}  \\
\Q{0.022043280000000002}{0.0138561} & \Q{0.3889874000000001}{0.3817322} & \Q{2.2495443999999996}{1.0602588000000002} & \Q{1.751387}{0.6541156000000001} & \Q{1.960414}{1.1896044}  \\
\Q{0.06132749999999999}{0.03019212} & \Q{1.495344}{1.4750159999999999} & \Q{12.268595000000001}{5.1404206} & \Q{6.569169199999999}{1.8337896} & \Q{7.6596876}{4.2394804}  \\
\Q{0.3330646}{0.12605539999999998} & \Q{7.694462}{7.596446} & \Q{59.606494}{23.299953600000002} & \Q{51.235544399999995}{13.1722094} & \Q{59.677415399999994}{32.355335600000004}  \\
\Q{0.5950294}{0.20696459999999997} & \Q{18.22782}{18.021819999999998} & \Q{117.45659479999999}{45.4418312} & \Q{102.387973}{25.7238026} & \Q{111.4467392}{56.483566399999994}  \\
\Q{0.031511660000000004}{0.011178440000000001} & \Q{1.162472}{1.14669} & \Q{3.1154332000000005}{1.2738704} & \Q{4.687476599999999}{1.3466917999999999} & \Q{8.044814999999998}{5.358095400000001}  \\
\Q{0.07137492}{0.04209148} & \Q{0.3616461999999999}{0.3496392} & \Q{3.9694781999999997}{1.838635} & \Q{6.771694800000001}{3.5117933999999997} & \Q{3.3231803999999996}{2.4595125999999996}  \\
\Q{0.057898700000000004}{0.05401754000000001} & \Q{1.0666108}{1.062039} & \Q{168.3279844}{166.73970960000003} & \Q{148.8357484}{146.434732} & \Q{3.1668634}{1.4448196000000002}  \\
}

\def\ATSP{
       &
\P{5933.0}{11868.0} & 145-523 & \P{3.48753}{3.29331} & \P{44.953018325290394}{31.929578448430302} & \P{156.775}{105.154} & \P{1574.29}{1312.29} \\ 
rl5934 &
\P{5934.0}{23736.0} & 77-95 & \P{3.80435}{3.75426} & \P{22.58979852011513}{19.45339427743417} & \P{85.9395}{73.0331} & \P{871.341}{866.298} \\ 
       &
\P{5934.0}{47472.0} & 17-19 & \P{2.90529}{2.90795} & \P{24.96766243645213}{18.569301397891987} & \P{72.5383}{53.9986} & \P{4515.03}{4544.7} \\ 
\hline
        &
\P{11846.0}{23698.0} & 171-605 & \P{3.77838}{3.60313} & \P{44.36848596488442}{35.4303064280223} & \P{167.641}{127.66} & \P{2308.02}{2005.44} \\ 
rl11849 &
\P{11845.0}{47396.0} & 69-112 & \P{4.08651}{4.06494} & \P{27.669576239872168}{23.482363823328267} & \P{113.072}{95.4544} & \P{1436.38}{1498.93} \\ 
        &
\P{11847.0}{94792.0} & 24-28 & \P{3.02751}{3.00714} & \P{31.434446128997095}{24.63626568766336} & \P{95.1681}{74.0847} & \P{9178.85}{9299.45} \\ 
\hline
         &
\P{13504.0}{27018.0} & 79-5517 & \P{2.39078}{2.63002} & \P{704.1551292883495}{58.98320164865667} & \P{1683.48}{155.127} & \P{7312.77}{3968.04} \\ 
usa13509 &
\P{13499.0}{54036.0} & 116-2508 & \P{3.20157}{3.34827} & \P{158.18270411079564}{38.607101577829745} & \P{506.433}{129.267} & \P{4312.63}{2761.13} \\ 
         &
\P{13492.0}{108072.0} & 59-884 & \P{3.12977}{3.15546} & \P{46.63377820095406}{23.817890260057172} & \P{145.953}{75.1564} & \P{3806.49}{3058.18} \\ 
\hline
         &
\P{14049.0}{28102.0} & 215-2111 & \P{3.67355}{3.94178} & \P{114.11686243551877}{44.707974569864376} & \P{419.214}{176.229} & \P{3639.37}{2227.69} \\ 
brd14051 &
\P{14051.0}{56204.0} & 79-252 & \P{3.28902}{3.261} & \P{31.087983654705656}{24.679116835326585} & \P{102.249}{80.4786} & \P{3682.79}{3440.98} \\ 
         &
\P{14051.0}{112408.0} & 18-27 & \P{2.9805}{2.99087} & \P{27.039020298607614}{21.396683908026763} & \P{80.5898}{63.9947} & \P{11585.7}{11661.8} \\ 
\hline
       &
\P{18512.0}{37024.0} & 232-2251 & \P{3.79295}{3.99001} & \P{110.58305540542322}{46.6557727925494} & \P{419.436}{186.157} & \P{4125.23}{2514.42} \\ 
d18512 &
\P{18510.0}{74048.0} & 69-186 & \P{3.51831}{3.56224} & \P{26.79525112909323}{21.719507950053895} & \P{94.274}{77.3701} & \P{7309.69}{7327.03} \\ 
       &
\P{18510.0}{148096.0} & 15-21 & \P{2.99924}{2.99961} & \P{28.66089409317027}{23.718483402842367} & \P{85.9609}{71.1462} & \P{18145.2}{18202.6} \\ 
\hline
         &
\P{33678.0}{67620.0} & 555-643 & \P{3.99766}{3.95618} & \P{32.16206480791263}{27.80940199889793} & \P{128.573}{110.019} & \P{7423.89}{7578.13} \\ 
pla33810 &
\P{33678.0}{135240.0} & 214-389 & \P{2.42981}{2.35454} & \P{45.76983385532203}{37.28193192725543} & \P{111.212}{87.7818} & \P{12593.7}{13065.8} \\ 
         &
\P{33678.0}{270480.0} & 350-459 & \P{2.21931}{2.12964} & \P{34.79536432494784}{27.270383726827067} & \P{77.2217}{58.0761} & \P{13562.4}{13796.5} \\ 
\hline
         &
\P{85895.0}{171800.0} & 708-724 & \P{3.77097}{3.70335} & \P{39.716571598289036}{34.471222001701165} & \P{149.77}{127.659} & \P{24412.7}{24813.0} \\ 
pla85900 &
\P{85900.0}{343600.0} & 586-684 & \P{2.79994}{2.71933} & \P{30.462938491539106}{25.7809460418559} & \P{85.2944}{70.1069} & \P{28884.4}{30651.6} \\ 
         &
\P{85900.0}{687200.0} & 542-716 & \P{2.19753}{2.12504} & \P{34.167997706515955}{26.86881188118812} & \P{75.0852}{57.0973} & \P{34413.5}{35254.3} \\ 
}
\def\BTSP{
\Q{0.0395688}{0.0215097} & \Q{0.0172281}{0.0110908} & \Q{3.982885}{0.517756} & \Q{2.491096}{0.02837} & \Q{1.730519}{0.024981}  \\
\Q{0.0741346}{0.0529951} & \Q{0.0591459}{0.05218} & \Q{8.503441}{4.142922} & \Q{3.270704}{0.667202} & \Q{1.8126}{0.399798}  \\
\Q{0.132454}{0.109746} & \Q{0.173008}{0.162865} & \Q{15.929574}{9.49995} & \Q{19.154417}{10.346512} & \Q{24.176291}{18.335863}  \\
\hline
\Q{0.0830688}{0.0478546} & \Q{0.0360326}{0.0249556} & \Q{22.916979}{3.044926} & \Q{11.858}{0.114285} & \Q{8.685229}{0.085098}  \\
\Q{0.213536}{0.161026} & \Q{0.152704}{0.138231} & \Q{43.183525}{17.804699} & \Q{18.741196}{5.347511} & \Q{12.354114}{4.256952}  \\
\Q{0.313533}{0.260349} & \Q{0.393464}{0.371319} & \Q{82.610153}{46.725613} & \Q{112.335091}{58.300358} & \Q{102.824944}{69.662407}  \\
\hline
\Q{0.0449206}{0.0247488} & \Q{0.164436}{0.0468961} & \Q{30.726727}{3.515384} & \Q{20.134825}{1.336602} & \Q{23.005323}{1.395459}  \\
\Q{0.103451}{0.0678338} & \Q{0.151017}{0.113968} & \Q{41.060879}{6.57196} & \Q{17.965162}{0.760942} & \Q{14.213658}{1.003081}  \\
\Q{0.15934}{0.115505} & \Q{0.392597}{0.36995} & \Q{86.476883}{37.819153} & \Q{29.464273}{8.583704} & \Q{20.325426}{7.190765}  \\
\hline
\Q{0.138019}{0.0941339} & \Q{0.101866}{0.0692317} & \Q{37.827507}{8.05111} & \Q{19.643707}{1.092053} & \Q{15.112179}{0.894004}  \\
\Q{0.218189}{0.174468} & \Q{0.247082}{0.227885} & \Q{74.860077}{37.648521} & \Q{59.101749}{29.133675} & \Q{26.053017}{12.196156}  \\
\Q{0.341008}{0.293915} & \Q{0.642283}{0.611013} & \Q{125.569542}{74.222801} & \Q{218.205475}{115.221214} & \Q{202.566559}{152.845154}  \\
\hline
\Q{0.180889}{0.122185} & \Q{0.130319}{0.0894885} & \Q{71.222557}{15.228923} & \Q{36.852749}{1.439736} & \Q{27.913727}{1.177244}  \\
\Q{0.314427}{0.254742} & \Q{0.441308}{0.408947} & \Q{144.667221}{75.779602} & \Q{131.33075}{65.538567} & \Q{87.334496}{54.312943}  \\
\Q{0.545342}{0.482168} & \Q{1.00892}{0.962819} & \Q{235.861908}{145.111618} & \Q{422.161041}{228.38295} & \Q{557.235901}{437.847931}  \\
\hline
\Q{0.38173}{0.278496} & \Q{0.215483}{0.188335} & \Q{244.405609}{49.689354} & \Q{143.068848}{10.898605} & \Q{97.177185}{2.383076}  \\
\Q{0.431362}{0.339895} & \Q{0.531054}{0.492299} & \Q{296.713226}{62.850746} & \Q{189.91571}{50.404537} & \Q{104.324585}{18.867157}  \\
\Q{0.620161}{0.513759} & \Q{1.02062}{0.968147} & \Q{387.664429}{74.970398} & \Q{236.400772}{85.356934} & \Q{123.825859}{36.422699}  \\
\hline
\Q{1.1803}{0.912832} & \Q{0.881368}{0.804327} & \Q{1824.549438}{445.145233} & \Q{1069.449463}{132.392761} & \Q{672.013489}{32.977211}  \\
\Q{1.47488}{1.22513} & \Q{1.63362}{1.53277} & \Q{2219.099365}{579.746948} & \Q{1409.31665}{427.812286} & \Q{718.629639}{152.045441}  \\
\Q{1.87947}{1.59451} & \Q{2.81132}{2.6745} & \Q{2869.375}{711.488586} & \Q{1786.32312}{698.623474} & \Q{867.133911}{250.825699}  \\
}

\def\AUNITALL{
BIKEWHE &
\P{4195.0}{4195.0} & 2 & \P{1.0}{1.0} & \P{21.2381}{9.87591} & \P{21.2381}{9.87591} & \P{4195.0}{4195.0} \\ 
CYC1 &
\P{16783.0}{16783.0} & 2-73 & \P{1.0}{1.0} & \P{165.22}{163.055} & \P{165.22}{163.055} & \P{16783.0}{16783.0} \\ 
DBLCYC &
\P{8391.0}{8391.0} & 2 & \P{1.0}{1.0} & \P{129.892}{126.551} & \P{129.892}{126.551} & \P{8391.0}{8391.0} \\ 
NOI1 &
\P{999.0}{999.0} & 2 & \P{1.0}{1.0} & \P{17.106}{8.816906} & \P{17.106}{8.816906} & \P{999.0}{999.0} \\ 
NOI2 &
\P{999.0}{999.0} & 2 & \P{1.0}{1.0} & \P{17.106}{8.816906} & \P{17.106}{8.816906} & \P{999.0}{999.0} \\ 
NOI3 &
\P{999.0}{999.0} & 2 & \P{1.0}{1.0} & \P{17.106}{8.848578} & \P{17.106}{8.848578} & \P{999.0}{999.0} \\ 
NOI4 &
\P{999.0}{999.0} & 2 & \P{1.0}{1.0} & \P{17.106}{8.848578} & \P{17.106}{8.848578} & \P{999.0}{999.0} \\ 
NOI5 &
\P{999.0}{999.0} & 2 & \P{1.0}{1.0} & \P{17.106}{8.816906} & \P{17.106}{8.816906} & \P{999.0}{999.0} \\ 
NOI6 &
\P{999.0}{999.0} & 2 & \P{1.0}{1.0} & \P{17.106}{8.816906} & \P{17.106}{8.816906} & \P{999.0}{999.0} \\ 
PATH &
\P{1999.0}{1999.0} & 2 & \P{1.0}{1.0} & \P{19.080999999999996}{9.493960000000001} & \P{19.080999999999996}{9.493960000000001} & \P{1999.0}{1999.0} \\ 
PR1 &
\P{1999.0}{1999.0} & 2 & \P{1.0}{1.0} & \P{19.6797}{10.167972} & \P{19.6797}{10.167972} & \P{1999.0}{1999.0} \\ 
PR5 &
\P{1999.0}{1999.0} & 2 & \P{1.0}{1.0} & \P{19.6797}{10.167972} & \P{19.6797}{10.167972} & \P{1999.0}{1999.0} \\ 
PR6 &
\P{1999.0}{1999.0} & 2 & \P{1.0}{1.0} & \P{19.0845}{9.741628} & \P{19.0845}{9.741628} & \P{1999.0}{1999.0} \\ 
PR7 &
\P{1999.0}{1999.0} & 2 & \P{1.0}{1.0} & \P{19.2822}{10.033252000000001} & \P{19.2822}{10.033252000000001} & \P{1999.0}{1999.0} \\ 
PR8 &
\P{1999.0}{1999.0} & 2 & \P{1.0}{1.0} & \P{19.8753}{10.669112} & \P{19.8753}{10.669112} & \P{1999.0}{1999.0} \\ 
REG1 &
\P{999.0}{999.0} & 2 & \P{1.0}{1.0} & \P{17.693600000000004}{9.449549999999999} & \P{17.693600000000004}{9.449549999999999} & \P{999.0}{999.0} \\ 
TREE &
\P{799.0}{799.0} & 2 & \P{1.0}{1.0} & \P{16.4175}{8.460388} & \P{16.4175}{8.460388} & \P{799.0}{799.0} \\ 
WHE &
\P{4195.0}{4195.0} & 2 & \P{1.0}{1.0} & \P{21.2376}{9.62098} & \P{21.2376}{9.62098} & \P{4195.0}{4195.0} \\ 
}
\def\BUNITALL{
\Q{0.006680573999999999}{0.0029321819999999998} & \Q{0.01521458}{0.002500896} & 0.0034804000000000002 & 0.002898 \\
\Q{0.0264175}{0.0145923} & \Q{0.18981}{0.00760612} & 0.006292 & 0.006201 \\
\Q{0.023071}{0.0153404} & \Q{0.0124385}{0.00725613} & 0.933069 & 0.872467 \\
\Q{0.03598053999999999}{0.012427039999999999} & \Q{2.8011920000000003}{2.781016} & 0.2044368 & 0.077954 \\
\Q{0.035540119999999994}{0.01215232} & \Q{2.8264199999999997}{2.805504} & 0.20497139999999997 & 0.07747760000000001 \\
\Q{0.05902562}{0.01745724} & \Q{4.08036}{4.046036} & 0.49701520000000005 & 0.1794032 \\
\Q{0.061020160000000004}{0.01812552} & \Q{4.144284}{4.109056} & 0.49780860000000005 & 0.1836344 \\
\Q{0.03543986}{0.01216714} & \Q{2.82851}{2.808174} & 0.2040622 & 0.0783018 \\
\Q{0.03516022}{0.012094480000000001} & \Q{2.824118}{2.80359} & 0.2019472 & 0.078781 \\
\Q{0.007942988000000002}{0.004881628} & \Q{0.5102059999999999}{0.5051013999999999} & 0.0100034 & 0.008529 \\
\Q{0.01112624}{0.006208542} & \Q{1.069212}{1.0618024000000001} & 0.017178 & 0.013854 \\
\Q{0.012799839999999998}{0.00724041} & \Q{1.0701612}{1.0627110000000002} & 0.017155800000000002 & 0.013500999999999999 \\
\Q{0.03897346}{0.016176279999999998} & \Q{5.258972}{5.234688} & 0.12651020000000002 & 0.0591424 \\
\Q{0.335084}{0.0839281} & \Q{28.530160000000002}{28.36366} & 2.5613274000000006 & 0.8959620000000001 \\
\Q{0.8929817999999999}{0.09918372} & \Q{41.341359999999995}{40.990500000000004} & 6.8050538000000005 & 0.4921458 \\
\Q{0.02905544}{0.01098592} & \Q{1.6140599999999998}{1.5987680000000002} & 0.1040858 & 0.050642200000000005 \\
\Q{0.02196612}{0.007890616} & \Q{1.5061520000000002}{1.49339} & 0.1148008 & 0.048926800000000006 \\
\Q{0.00957687}{0.004396716} & \Q{0.023096899999999997}{0.003056076} & 0.0047388 & 0.0042398 \\
}

\def\AUNITTSP{
       &
\P{1771.61}{1488.85} & 134-509 & \P{3.30671}{3.08536} & \P{65.43029174012841}{45.66468742707496} & \P{216.359}{140.892} & \P{1771.61}{1488.85} \\ 
rl5934 &
\P{636.632}{634.994} & 85-93 & \P{3.53101}{3.46436} & \P{23.7573385518591}{20.227776559018118} & \P{83.8874}{70.0763} & \P{636.632}{634.994} \\ 
       &
\P{4413.24}{4446.35} & 19-25 & \P{2.93259}{2.92693} & \P{25.823009694502133}{19.932523155661393} & \P{75.7283}{58.3411} & \P{4413.24}{4446.35} \\ 
\hline
        &
\P{2440.42}{2110.42} & 143-587 & \P{3.53388}{3.37092} & \P{61.07904060126547}{47.0948583769416} & \P{215.846}{158.753} & \P{2440.42}{2110.42} \\ 
rl11849 &
\P{2906.67}{2835.26} & 69-118 & \P{4.0227}{3.96384} & \P{28.469435950978195}{24.19900904173731} & \P{114.524}{95.921} & \P{2906.67}{2835.26} \\ 
        &
\P{9065.62}{9119.11} & 23-30 & \P{3.02726}{3.00666} & \P{33.9799026182092}{27.55406331277896} & \P{102.866}{82.8457} & \P{9065.62}{9119.11} \\ 
\hline
         &
\P{7633.19}{4192.75} & 70-5528 & \P{2.28203}{2.52979} & \P{825.7297230974177}{78.95082200498855} & \P{1884.34}{199.729} & \P{7633.19}{4192.75} \\ 
usa13509 &
\P{4562.4}{2947.46} & 80-2486 & \P{2.9923}{3.19328} & \P{189.50472880393005}{39.94231636436516} & \P{567.055}{127.547} & \P{4562.4}{2947.46} \\ 
         &
\P{3366.36}{2709.36} & 39-884 & \P{3.12895}{3.155} & \P{57.913677112130266}{29.24944532488114} & \P{181.209}{92.282} & \P{3366.36}{2709.36} \\ 
\hline
         &
\P{3740.33}{2362.81} & 129-2132 & \P{3.33001}{3.72578} & \P{159.3250470719307}{61.35440095765182} & \P{530.554}{228.593} & \P{3740.33}{2362.81} \\ 
brd14051 &
\P{4257.59}{3933.93} & 62-234 & \P{3.46509}{3.59725} & \P{37.63048001639207}{30.313155882966154} & \P{130.393}{109.044} & \P{4257.59}{3933.93} \\ 
         &
\P{11408.7}{11401.7} & 15-25 & \P{2.86542}{2.91562} & \P{28.486120708308036}{23.053450038070807} & \P{81.6247}{67.2151} & \P{11408.7}{11401.7} \\ 
\hline
       &
\P{4869.86}{3124.46} & 171-2284 & \P{3.43653}{3.75022} & \P{153.62269498593056}{59.692498040114984} & \P{527.929}{223.86} & \P{4869.86}{3124.46} \\ 
d18512 &
\P{8044.19}{7907.21} & 54-192 & \P{3.3562}{3.47076} & \P{28.89017341040463}{23.59532206202676} & \P{96.9612}{81.8937} & \P{8044.19}{7907.21} \\ 
       &
\P{18108.5}{18155.7} & 12-19 & \P{2.98974}{2.99298} & \P{27.68498264063096}{22.49022713148768} & \P{82.7709}{67.3128} & \P{18108.5}{18155.7} \\ 
\hline
         &
\P{8720.32}{8897.5} & 513-663 & \P{3.42949}{3.40337} & \P{40.056976401739036}{33.23529325345173} & \P{137.375}{113.112} & \P{8720.32}{8897.5} \\ 
pla33810 &
\P{13961.2}{14624.5} & 214-410 & \P{2.31923}{2.24913} & \P{49.41381406760002}{40.58702698376705} & \P{114.602}{91.2855} & \P{13961.2}{14624.5} \\ 
         &
\P{16823.4}{17095.8} & 401-459 & \P{2.18299}{2.11247} & \P{37.63439136230583}{29.615189801511974} & \P{82.1555}{62.5612} & \P{16823.4}{17095.8} \\ 
\hline
         &
\P{32060.8}{32484.6} & 596-652 & \P{3.35228}{3.30736} & \P{42.139976374288544}{35.79078177156403} & \P{141.265}{118.373} & \P{32060.8}{32484.6} \\ 
pla85900 &
\P{30976.4}{32646.8} & 471-686 & \P{2.44783}{2.39128} & \P{35.31662738016937}{30.20917667525342} & \P{86.4491}{72.2386} & \P{30976.4}{32646.8} \\ 
         &
\P{34521.4}{35170.7} & 461-714 & \P{2.19527}{2.11928} & \P{37.249449953764234}{29.67904193877166} & \P{81.7726}{62.8982} & \P{34521.4}{35170.7} \\ 
}
\def\BUNITTSP{
\Q{0.0258322}{0.0136152} & \Q{0.015596}{0.0097764} & 0.009005 & 0.008836 \\
\Q{0.0554348}{0.0384047} & \Q{0.0400921}{0.0344079} & 0.035197 & 0.038269 \\
\Q{0.0906538}{0.0734127} & \Q{0.135854}{0.127841} & 2.620322 & 2.688721 \\
\hline
\Q{0.0783284}{0.0449405} & \Q{0.035649}{0.023704} & 0.026674 & 0.026861 \\
\Q{0.186912}{0.136237} & \Q{0.137835}{0.124196} & 1.31369 & 1.4113 \\
\Q{0.301017}{0.249581} & \Q{0.371959}{0.350458} & 20.732124 & 21.807738 \\
\hline
\Q{0.0524207}{0.0284387} & \Q{0.176925}{0.0521505} & 0.038268 & 0.036595 \\
\Q{0.10618}{0.0673304} & \Q{0.157509}{0.118679} & 0.073089 & 0.071499 \\
\Q{0.184817}{0.132467} & \Q{0.360457}{0.335983} & 0.67336 & 0.695362 \\
\hline
\Q{0.0915992}{0.0624141} & \Q{0.0658232}{0.0422179} & 0.044809 & 0.043116 \\
\Q{0.15976}{0.122908} & \Q{0.187466}{0.171567} & 0.462677 & 0.480098 \\
\Q{0.227506}{0.194416} & \Q{0.426376}{0.403891} & 16.312923 & 19.570724 \\
\hline
\Q{0.117512}{0.079} & \Q{0.0866971}{0.0569216} & 0.065474 & 0.067244 \\
\Q{0.228397}{0.181414} & \Q{0.312515}{0.287054} & 1.690695 & 1.8216 \\
\Q{0.339002}{0.289005} & \Q{0.785043}{0.749396} & 40.865703 & 45.151446 \\
\hline
\Q{0.249216}{0.177219} & \Q{0.145243}{0.122535} & 0.436157 & 0.451767 \\
\Q{0.288595}{0.219966} & \Q{0.327482}{0.297853} & 4.448652 & 4.694508 \\
\Q{0.37595}{0.296517} & \Q{0.709025}{0.66822} & 11.602148 & 12.444321 \\
\hline
\Q{0.769017}{0.569932} & \Q{0.52834}{0.470386} & 5.623785 & 5.768864 \\
\Q{0.952175}{0.768038} & \Q{1.097}{1.01585} & 34.689222 & 37.409061 \\
\Q{1.27082}{1.04339} & \Q{2.07587}{1.96543} & 86.069119 & 95.086323 \\
}

\def\AUNITSOC{
facebook &
\P{906.489}{937.106} & 10-11 & \P{2.36428}{2.564} & \P{17.036603109614767}{10.253393135725428} & \P{40.2793}{26.2897} & \P{906.489}{937.106} \\ 
wiki-Vote &
\P{4783.06}{4783.95} & 3-4 & \P{1.0}{1.0} & \P{21.4185}{8.11042} & \P{21.4185}{8.11042} & \P{4783.06}{4783.95} \\ 
ca-CondMat &
\P{12670.3}{13633.7} & 10-14 & \P{2.80978}{2.89049} & \P{25.168767661525102}{15.434061352919404} & \P{70.7187}{44.612} & \P{12670.3}{13633.7} \\ 
soc-Epinions1 &
\P{31540.9}{33132.9} & 8-9 & \P{2.90324}{2.98198} & \P{25.693501053994847}{15.742258499386313} & \P{74.5944}{46.9431} & \P{31540.9}{33132.9} \\ 
soc-Slashdot0811 &
\P{45431.2}{46065.9} & 7-8 & \P{2.96408}{2.99368} & \P{26.962835011200777}{17.26944763635392} & \P{79.92}{51.6992} & \P{45431.2}{46065.9} \\ 
twitter &
\P{65058.4}{67436.4} & 8 & \P{2.91421}{2.95529} & \P{28.34054512200562}{17.392675507310617} & \P{82.5903}{51.4004} & \P{65058.4}{67436.4} \\ 
com-dblp &
\P{118245.0}{132385.0} & 16-18 & \P{4.04616}{4.37513} & \P{32.33312572908634}{22.975317304857224} & \P{130.825}{100.52} & \P{118245.0}{132385.0} \\ 
}
\def\BUNITSOC{
\Q{0.0346289}{0.0190134} & \Q{0.15706}{0.145618} & 0.044846 & 0.032164 \\
\Q{0.0317201}{0.0147456} & \Q{0.623322}{0.555552} & 0.057813 & 0.037263 \\
\Q{0.1594}{0.108079} & \Q{1.41696}{1.29902} & 0.068107 & 0.06597 \\
\Q{0.31085}{0.160931} & \Q{4.08302}{3.37373} & 0.21321 & 0.160391 \\
\Q{0.478451}{0.259984} & \Q{8.09458}{6.61972} & 0.255493 & 0.219538 \\
\Q{1.61045}{0.93908} & \Q{31.3256}{26.6782} & 1.999582 & 1.808726 \\
\Q{5.25383}{4.00329} & \Q{250.152}{220.93} & 3.808826 & 3.932471 \\
}

\renewcommand\P[2]{#1 &,& #2}
\renewcommand\Q[2]{#1 &/& #2}
\newcommand\QQQQ { \multicolumn{3}{c|}{} & \multicolumn{3}{c|}{} & \multicolumn{3}{c|}{} & \multicolumn{3}{c|}{} }
\newcommand\QQQ { \multicolumn{3}{c|}{} & \multicolumn{3}{c|}{} & \multicolumn{3}{c|}{} }

\def\tableA#1#2#3#4#5#6{{
\begin{tabular}{
	!{\vrule} R{#2}{0} @{\hspace{0pt}} c @{\hspace{2pt}} R{#2}{0}
	!{\vrule} c
	!{\vrule} R{#3}{1} @{\hspace{2pt}} c @{\hspace{0pt}} R{#3}{1}
	!{\vrule} R{#4}{1} @{\hspace{2pt}} c @{\hspace{0pt}} R{#4}{1}
	!{\vrule} R{#5}{1} @{\hspace{2pt}} c @{\hspace{0pt}} R{#5}{1}
	!{\vrule} R{#6}{0} @{\hspace{2pt}} c @{\hspace{0pt}} R{#6}{0}|
	} 
 \cline{5-16} \multicolumn{4}{c|}{}    & \multicolumn{9}{c|}{{\tt OC} }   & \multicolumn{3}{c|}{{\tt GH}}  \\ 
 \hline \multicolumn{3}{ !{\vrule} c !{\vrule} }{${\tt size}(G)$}   
 	& \ensuremath{D} 
 	& \multicolumn{3}{c !{\vrule} }{$\tfrac{{\tt size}(OC)}{{\tt size}(G)}$} 
 	& \multicolumn{3}{c !{\vrule} }{$\tfrac{{\tt size}(MF)}{{\tt size}(OC)}$} 
 	& \multicolumn{3}{c !{\vrule} }{$\tfrac{{\tt size}(MF)}{{\tt size}(G)}$} 
 	& \multicolumn{3}{c !{\vrule} }{$\tfrac{{\tt size}(MF)}{{\tt size}(G)}$}  \\ \hline
 #1
\hline
\end{tabular} 
}}

\def\tableAx#1#2#3#4#5#6#7{{
\begin{tabular}{|@{\hspace{2pt}}c @{\hspace{3pt}}
	!{\vrule} @{\hspace{-6pt}} R{#2}{0} @{\hspace{0pt}} c @{\hspace{2pt}} R{#2}{0} @{\hspace{3pt}}
	!{\vrule} @{\hspace{3pt}} c @{\hspace{3pt}}
	!{\vrule} @{\hspace{0pt}} R{#3}{1} @{\hspace{0pt}} c @{\hspace{-7pt}} R{#3}{1} @{\hspace{0pt}}
	!{\vrule} R{#4}{1} @{\hspace{2pt}} c @{\hspace{0pt}} R{#4}{1}
	!{\vrule} R{#5}{1} @{\hspace{2pt}} c @{\hspace{0pt}} R{#5}{1}
	!{\vrule} R{#6}{0} @{\hspace{2pt}} c @{\hspace{0pt}} R{#7}{0}|
	} 
 \cline{6-17} \multicolumn{5}{c !{\vrule} }{}    & \multicolumn{9}{c|}{{\tt OC} }   & \multicolumn{3}{c|}{{\tt GH}}  \\ 
 \hline &\multicolumn{3}{c !{\vrule} @{\hspace{3pt}} }{${\tt size}(G)$}   
        & \ensuremath{D} 
        & \multicolumn{3}{c !{\vrule} }{$\tfrac{{\tt size}(OC)}{{\tt size}(G)}$} 
        & \multicolumn{3}{c !{\vrule} }{$\tfrac{{\tt size}(MF)}{{\tt size}(OC)}$} 
        & \multicolumn{3}{c !{\vrule} }{$\tfrac{{\tt size}(MF)}{{\tt size}(G)}$} 
        & \multicolumn{3}{c !{\vrule} }{$\tfrac{{\tt size}(MF)}{{\tt size}(G)}$}  \\ \hline
 #1
\hline
\end{tabular} 
}}

\def\tableB#1#2#3#4#5#6#7#8{{
\raisebox{-6.5pt}{
\begin{tabular}{
	!{\vrule} @{\hspace{0pt}} R{#2}{#3} @{\hspace{2pt}} c @{\hspace{0pt}} R{#2}{#3} @{\hspace{4pt}}
	!{\vrule} @{\hspace{-2pt}}  R{#4}{#5} @{\hspace{2pt}} c @{\hspace{0pt}} R{#4}{#5} @{\hspace{4pt}} 
	!{\vrule} @{\hspace{2pt}} R{#6}{1} @{\hspace{2pt}} c @{\hspace{0pt}} R{#6}{1} @{\hspace{4pt}}
	!{\vrule} @{\hspace{2pt}} R{#7}{1} @{\hspace{2pt}} c @{\hspace{0pt}} R{#7}{1} @{\hspace{4pt}} 
	!{\vrule} @{\hspace{2pt}} R{#8}{1} @{\hspace{2pt}} c @{\hspace{0pt}} R{#8}{1} @{\hspace{4pt}} !{\vrule}
	} 
 \hline
 	\multicolumn{3}{!{\vrule} c !{\vrule} @{\hspace{-2pt}} }{{\tt OC}} & 
 	\multicolumn{3}{c !{\vrule} @{\hspace{2pt}}  }{{\tt GH}} & 
 	\multicolumn{3}{c !{\vrule} @{\hspace{2pt}} }{gus}   &
 	\multicolumn{3}{c !{\vrule}  @{\hspace{2pt}} }{gh}  & 
 	\multicolumn{3}{c !{\vrule} }{ghg} 
 	\\ \hline
 #1
\hline
\end{tabular}
}
}}

\def\tableUNITB#1#2#3#4#5#6#7{{
\raisebox{-6.5pt}{
\begin{tabular}{
	!{\vrule} @{\hspace{0pt}} R{#2}{#3} @{\hspace{2pt}} c @{\hspace{0pt}} R{#2}{#3} @{\hspace{4pt}}
	!{\vrule} @{\hspace{-2pt}}  R{#4}{#5} @{\hspace{2pt}} c @{\hspace{0pt}} R{#4}{#5} @{\hspace{4pt}} 
	|R{#6}{#7} @{\hspace{2pt}}  
	|R{#6}{#7} @{\hspace{2pt}}  |  
	} 
 \hline
 	\multicolumn{3}{!{\vrule} c !{\vrule} @{\hspace{-2pt}} }{{\tt OC}} & 
 	\multicolumn{3}{c !{\vrule} @{\hspace{0pt}}  }{{\tt GH}} & 
	\multicolumn{1}{c !{\vrule} }{\tt mg+}  &
 	\multicolumn{1}{c !{\vrule} }{\tt mg-} 
 	\\ \hline
 #1
\hline
\end{tabular}
}
}}


\begin{table}\centering {\scriptsize
\resizebox{1.10\textwidth}{!}{\hspace{-60pt}
\tableAx{\AALL}{7}{2}{3}{3}{4}{5}
\hspace{-10pt}
\tableB{\BALL}{2}{2}{3}{2}{4}{3}{3}
}
} \caption{\footnotesize  Synthetic instances: summary (see tables \ref{table:BIKEWHE} - \ref{table:WHE}). } \label{table:ALL} \end{table}

\begin{table}\centering {\scriptsize
\resizebox{1.10\textwidth}{!}{\hspace{-60pt}
\tableAx{\ATSP}{7}{2}{3}{3}{5}{6}
\hspace{-10pt}
\tableB{\BTSP}{2}{2}{2}{2}{4}{4}{3}
}
} \caption{\footnotesize  TSP instances (for average degree $k=2,4,8$).  ({\tt Lemon}, first row: 2.63). } \label{table:TSP} \end{table}

\begin{table}\centering {\scriptsize
\resizebox{1.05\textwidth}{!}{\hspace{-30pt}
\tableAx{\AUNITALL}{7}{2}{3}{3}{4}{5}
\hspace{-10pt}
\tableUNITB{\BUNITALL}{2}{3}{3}{2}{1}{3}
}
} \caption{\footnotesize Unweighted simple graphs: synthetic instances. } \label{table:unit:ALL} \end{table}

\begin{table}\centering {\scriptsize
\resizebox{1.05\textwidth}{!}{\hspace{-30pt}
\tableAx{\AUNITTSP}{7}{2}{3}{3}{5}{6}
\hspace{-10pt}
\tableUNITB{\BUNITTSP}{2}{2}{3}{2}{1}{2}
}
} \caption{\footnotesize Unweighted simple graphs: TSP instances. } \label{table:unit:TSP} \end{table}

\begin{table}\centering {\scriptsize
\resizebox{1.10\textwidth}{!}{\hspace{-40pt}
\tableAx{\AUNITSOC}{8}{2}{3}{3}{5}{6}
\hspace{-10pt}
\tableUNITB{\BUNITSOC}{2}{2}{4}{2}{1}{2}
}
} \caption{\footnotesize Unweighted simple graphs: social and web graphs. } \label{table:unit:SOC} \end{table}


\addtocounter{table}{1} 
\cleardoublepage 
\edef\maintableno{\thetable} 

\renewcommand{\thetable}{\maintableno a}
\begin{table}\centering {\scriptsize
\resizebox{1.10\textwidth}{!}{\hspace{-40pt}
\tableA{\ABIKEWHE}{4}{1}{2}{2}{5}
\hspace{-10pt}
\tableB{\BBIKEWHE}{2}{3}{2}{3}{3}{3}{2}
}
} \caption{\footnotesize BIKEWHE: bikewheelgen 1024, ..., 4196. ({\tt Lemon}, first row: 155). } \label{table:BIKEWHE} \end{table}


\renewcommand{\thetable}{\maintableno b}
\begin{table}\centering {\scriptsize
\resizebox{1.10\textwidth}{!}{\hspace{-40pt}
\tableA{\ACYCONE}{5}{2}{2}{3}{2}
\hspace{-10pt}
\tableB{\BCYCONE}{2}{3}{2}{3}{2}{2}{2}
}
} \caption{\footnotesize CYC1: cyclegen 4196, ..., 16784. ({\tt Lemon}, first row: 10.0). } \label{table:CYC1} \end{table}


\renewcommand{\thetable}{\maintableno c}
\begin{table}\centering {\scriptsize
\resizebox{1.10\textwidth}{!}{\hspace{-40pt}
\tableA{\ADBLCYC}{5}{2}{2}{2}{5}
\hspace{-10pt}
\tableB{\BDBLCYC}{2}{3}{2}{3}{4}{2}{2}
}
} \caption{\footnotesize DBLCYC1: dblcyclegen 2048, ..., 8392. ({\tt Lemon}, first row: 2642). } \label{table:DBLCYC1} \end{table}


\renewcommand{\thetable}{\maintableno d}
\begin{table}\centering {\scriptsize
\resizebox{1.10\textwidth}{!}{\hspace{-40pt}
\tableA{\ANOIONE}{6}{2}{2}{2}{4}
\hspace{-10pt}
\tableB{\BNOIONE}{2}{3}{2}{2}{2}{2}{2}
}
} \caption{\footnotesize NOI1: noigen 400 50 1 400, ..., 1000 50 1 1000. ({\tt Lemon}, first row: 1.42). } \label{table:NOI1} \end{table}


\renewcommand{\thetable}{\maintableno e}
\begin{table}\centering {\scriptsize
\resizebox{1.10\textwidth}{!}{\hspace{-40pt}
\tableA{\ANOITWO}{6}{2}{2}{2}{4}
\hspace{-10pt}
\tableB{\BNOITWO}{2}{3}{2}{2}{2}{2}{2}
}
} \caption{\footnotesize NOI2: noigen 400 50 2 400, ..., 1000 50 2 1000.  ({\tt Lemon}, first row: 1.51). } \label{table:NOI2} \end{table}


\renewcommand{\thetable}{\maintableno f}
\begin{table}\centering {\scriptsize
\resizebox{1.10\textwidth}{!}{\hspace{-40pt}
\tableA{\ANOITHREE}{6}{2}{2}{2}{4}
\hspace{-10pt}
\tableB{\BNOITHREE}{2}{2}{2}{1}{2}{2}{2}
}
} \caption{\footnotesize NOI3: noigen 1000 d 1 1000, d=5,10,25,50,75,100.  ({\tt Lemon}, first row: 1.54). } \label{table:NOI3} \end{table}


\renewcommand{\thetable}{\maintableno g}
\begin{table}\centering {\scriptsize
\resizebox{1.10\textwidth}{!}{\hspace{-40pt}
\tableA{\ANOIFOUR}{6}{2}{2}{2}{4}
\hspace{-10pt}
\tableB{\BNOIFOUR}{2}{2}{2}{1}{2}{2}{2}
}
} \caption{\footnotesize NOI4: noigen 1000 d 2 1000, d=5,10,25,50,75,100.  ({\tt Lemon}, first row: 2.41). } \label{table:NOI4} \end{table}


\renewcommand{\thetable}{\maintableno h}
\begin{table}\centering {\scriptsize
\resizebox{1.10\textwidth}{!}{\hspace{-40pt}
\tableA{\ANOIFIVE}{6}{2}{2}{2}{4}
\hspace{-10pt}
\tableB{\BNOIFIVE}{2}{2}{2}{2}{2}{2}{2}
}
} \caption{\footnotesize NOI5: noigen 1000 50 k 1000, k=1,2,5,10,30,50,100,200,400,500.  ({\tt Lemon}, first row: 28.6). } \label{table:NOI5} \end{table}


\renewcommand{\thetable}{\maintableno i}
\begin{table}\centering {\scriptsize
\resizebox{1.10\textwidth}{!}{\hspace{-40pt}
\tableA{\ANOISIX}{6}{2}{2}{2}{4}
\hspace{-10pt}
\tableB{\BNOISIX}{2}{2}{2}{1}{2}{2}{2}
}
} \caption{\footnotesize NOI6: noigen 1000 50 2 P, P=1,10,50,100,150,200,500,1000,5000.  ({\tt Lemon}, first row: 28.4). } \label{table:NOI6} \end{table}


\renewcommand{\thetable}{\maintableno j}
\begin{table}\centering {\scriptsize
\resizebox{1.10\textwidth}{!}{\hspace{-40pt}
\tableA{\APATH}{5}{2}{2}{2}{4}
\hspace{-10pt}
\tableB{\BPATH}{2}{3}{2}{3}{2}{2}{2}
}
} \caption{\footnotesize PATH: pathgen 2000 1 k 1000, k=1,4,15,50,200,800,2000. ({\tt Lemon}, first row: 2.64). } \label{table:PATH} \end{table}


\renewcommand{\thetable}{\maintableno k}
\begin{table}\centering {\scriptsize
\resizebox{1.10\textwidth}{!}{\hspace{-40pt}
\tableA{\APRONE}{5}{2}{2}{2}{5}
\hspace{-10pt}
\tableB{\BPRONE}{2}{3}{2}{2}{2}{2}{2}
}
} \caption{\footnotesize PR1: prgen 400 2 1, ..., 2000 2 1.  ({\tt Lemon}, first row: 0.071). } \label{table:PR1} \end{table}


\renewcommand{\thetable}{\maintableno l}
\begin{table}\centering {\scriptsize
\resizebox{1.10\textwidth}{!}{\hspace{-40pt}
\tableA{\APRFIVE}{5}{2}{2}{2}{4}
\hspace{-10pt}
\tableB{\BPRFIVE}{2}{3}{2}{2}{2}{2}{2}
}
} \caption{\footnotesize PR5: prgen 400 2 2, ..., 2000 2 2.  ({\tt Lemon}, first row: 0.092). } \label{table:PR5} \end{table}


\renewcommand{\thetable}{\maintableno m}
\begin{table}\centering {\scriptsize
\resizebox{1.10\textwidth}{!}{\hspace{-40pt}
\tableA{\APRSIX}{6}{2}{2}{2}{4}
\hspace{-10pt}
\tableB{\BPRSIX}{2}{3}{2}{2}{2}{2}{2}
}
} \caption{ PR6: prgen 400 10 2, ..., 2000 10 2. ({\tt Lemon}, first row: 0.31). } \label{table:PR6} \end{table}


\renewcommand{\thetable}{\maintableno n}
\begin{table}\centering {\scriptsize
\resizebox{1.10\textwidth}{!}{\hspace{-40pt}
\tableA{\APRSEVEN}{7}{2}{2}{2}{4}
\hspace{-10pt}
\tableB{\BPRSEVEN}{2}{2}{2}{1}{2}{2}{2}
}
} \caption{\footnotesize PR7: prgen 400 50 2, ..., 2000 50 2.  ({\tt Lemon}, first row: 1.98). } \label{table:PR7} \end{table}


\renewcommand{\thetable}{\maintableno o}
\begin{table}\centering {\scriptsize
\resizebox{1.10\textwidth}{!}{\hspace{-40pt}
\tableA{\APREIGHT}{7}{2}{2}{2}{4}
\hspace{-10pt}
\tableB{\BPREIGHT}{2}{2}{3}{2}{3}{4}{4}
}
} \caption{\footnotesize PR8: prgen 400 100 2, ..., 2000 100 2.  ({\tt Lemon}, first row: 5.5). } \label{table:NOI8} \end{table}


\renewcommand{\thetable}{\maintableno p}
\begin{table}\centering {\scriptsize
\resizebox{1.10\textwidth}{!}{\hspace{-40pt}
\tableA{\AREGONE}{6}{2}{2}{2}{3}
\hspace{-10pt}
\tableB{\BREGONE}{2}{3}{2}{2}{2}{2}{2}
}
} \caption{\footnotesize REG1: regulargen 1000 k -w1000, k=1,4,16,64.  ({\tt Lemon}, first row: 0.38). } \label{table:REG1} \end{table}


\renewcommand{\thetable}{\maintableno q}
\begin{table}\centering {\scriptsize
\resizebox{1.10\textwidth}{!}{\hspace{-40pt}
\tableA{\ATREE}{6}{2}{2}{2}{3}
\hspace{-10pt}
\tableB{\BTREE}{2}{3}{2}{2}{2}{2}{2}
}
} \caption{\footnotesize TREE: treegen 800 50 k 1000, k=1,3,5,10,20,50,100,200,400. ({\tt Lemon}, first row: 11.5). } \label{table:TREE} \end{table}


\renewcommand{\thetable}{\maintableno r}
\begin{table}\centering {\scriptsize
\resizebox{1.10\textwidth}{!}{\hspace{-40pt}
\tableA{\AWHE}{4}{2}{2}{2}{5}
\hspace{-10pt}
\tableB{\BWHE}{2}{3}{2}{2}{3}{3}{2}
}
} \caption{\footnotesize WHE: wheelgen 1024, ..., 4196.  ({\tt Lemon}, first row: 162). } \label{table:WHE} \end{table}

\renewcommand{\thetable}{\arabic{table}} 

\clearpage
\newpage

\bibliographystyle{alpha}
\bibliography{maxflow}

\appendix


\section{Additional related work}\label{sec:related}
In this section we describe some existing results that reduce the problem of computing the GH tree to other problems.

Abboud, Krauthgamer and Trabelsi considered in~\cite{Abboud:FOCS20}
the ``Single Source Min Cut problem'' (SSMC) whose goal is to compute minimum $s$-$v$ cuts $C_{sv}$
for a fixed source node $s$ and all other nodes $v$. More precisely, Abboud et al.\ considered two variants;
one of them  asks for the set of edges of $C_{sv}$,
and the other one asks only for the cost of $C_{sv}$ which we denote as $f(s,v)={\tt cost}(C_{sv})$.
These problems can be formulated as the follows.

\begin{algorithm}[H]
  \DontPrintSemicolon
  for each $v\in V-\{s\}$ compute minimum $s$-$v$ cut $C_{sv}$ \\
  build data structure $\calD$ that supports operation ${\tt Query}(v)$ for given $v\in V-\{s\}$,  return $\calD$
\SetNoFillComment
  \tcc{${\tt Query}(v)$ should return $\delta C_{sv}$ in time $\tilde O(|\delta C_{sv}|)$, where $\delta C_{sv}$ is the set of edges of the cut $(C_{sv},V-C_{sv})$}  

      \caption{${\tt SingleSourceMinCut}(s;G)$. 
      }
\end{algorithm}

\begin{algorithm}[H]
  \DontPrintSemicolon
  for each $v\in V-\{s\}$ compute $\lambda(v)=f(s,v)$, 
  return $\lambda$

      \caption{${\tt SingleSourceMaxFlow}(s;G)$. 
      }
\end{algorithm}

It was shown in \cite{Abboud:FOCS20}  that the GH tree can be constructed by $\tilde O(1)$ calls to ${\tt SingleSourceMinCut}(\cdot)$
on graphs with $O(n)$ nodes and $O(m)$ edges plus $\tilde O(m)$ additional work. 
For the second procedure \cite{Abboud:FOCS20} was able to show a weaker result, namely how to construct a {\em flow-equivalent tree} for $G$
by $\tilde O(1)$ calls to ${\tt SingleSourceMaxFlow}(\cdot)$
on graphs with $O(n)$ nodes and $O(m)$ edges plus $\tilde O(n)$ additional work.\footnote{
A flow-equivalent tree for $G$ is a spanning tree $\calT$ on $V$ such that for any $s,t\in V$, the costs of the minimum $s$-$t$ cut in $G$ and in $\calT$ coincide.
Every GH tree is a also flow-equivalent tree, but the reverse is not true.}
Li, Panigrahi and Saranurak observed in~\cite{Li:FOCS21}
that instead of computing values $f(s,v)$ it suffices
to verify whether $f(s,v)$ equals a given upper bound $\mu(v)$. More precisely, they considered the following problem; value $\gamma$ in the precondition is a positive constant.

\begin{algorithm}[H]
  \DontPrintSemicolon
  compute $Y=\{v\in X\::\: \mu(v)=f(s,v)\}$,  return $Y$
      \caption{${\tt SingleSourceMaxflowVerification}(s,X,\mu;G)$. \\
      {\em\small \hspace{5pt}$\slash\slash$~ 
      preconditions: $X\subseteq V-\{s\}$,~~\! $\max\limits_{v\in X} f(s,v)\le (1+\gamma) \min\limits_{v\in X} f(s,v)$,~~\! $\mu(v)\ge f(s,v)$ for all 
      $v\in X$\!\!\!\!\!\!\!\!\!\!\!\!\!\!\!\!}
      }

\end{algorithm}

\begin{sloppypar}
It was shown in~\cite{Li:FOCS21} that the GH tree can be constructed via $\tilde O(1)$ calls to ${\tt SingleSourceMaxflowVerification}$
and to the maxflow algorithm on graphs of size $(n,m)$; if $\gamma=+\infty$ then the answer is correct with probability 1,
otherwise it is correct w.h.p.. Abboud et al. designed in~\cite{GH:subcubic} and~\cite{GH:linear}
Monte-Carlo algorithms for ${\tt SingleSourceMaxflowVerification}$ with $\gamma=0.1$
whose complexities are $\tilde O(n^2)$ and $O(m^{1+o(1)})$ respectively,
and thus obtained the improvement over $O(n)$ maxflow computations mentioned earlier.
\end{sloppypar}

We remark that ${\tt OrderedCuts}$  and  ${\tt SingleSourceMaxflowVerification}$ 
can be viewed as non-standard relaxations of the ${\tt SingleSourceMaxflow}$ problem.
The former two problems appear incomparable.

Many of the results mentioned earlier rely on the following procedure.

\begin{algorithm}[H]
  \DontPrintSemicolon
  for each $v\in Y$ compute minimum $(Y\cup\{s\}-\{v\})$-$v$ cut $S_v$,
  return these cuts
      \caption{${\tt IsolatingCuts}(s,Y;G)$. {\em\small \hspace{20pt}$\slash\ast$~ $s\in V_G$, $Y\subseteq V_G-\{s\}$  ~$\ast\slash$\!\!\!}
      }

\end{algorithm}
As shown in~\cite{Li:FOCS20}
(and independently in~\cite{Abboud:STOC21}), it can be efficiently implemented.

\begin{theorem}[{\cite{Li:FOCS20}}]\label{th:IsolatingCuts}
There exists an implementation of ${\tt IsolatingCuts}(s,\{v_1,\ldots,v_\ell\};G)$ that runs in time $O(\log \ell)\cdot t_{\tt MC}(n,m)$
and outputs a set of disjoint cuts $\{S_{v_i}\::\:i\in[\ell]\}$.~\footnote{
Ignoring base cases, the algorithm can be described as follows: 
\begin{itemize}[topsep=0pt, partopsep=0pt, itemsep=-3pt]
\item[(1)] compute minimum $\{s,v_1,\ldots,v_k\}$-$\{v_{k+1},\ldots,v_\ell\}$ cut $(S,T)$ in $G$ where $k=\lfloor \ell/2 \rfloor$; 
\item[(2)] make recursive calls ${\tt IsolatingCuts}(s'',\{v_1,\ldots,v_k\};G'')$ and ${\tt IsolatingCuts}(s',\{v_{k+1},\ldots,v_\ell\};G')$ 
where $G'$ and $G''$ are the graphs obtained from $G$ by contracting respectively sets $S$ and $T$ to a single nodes $s'$ and $s''$; 
\item[(3)] combine the results. 
\end{itemize}
}
\end{theorem}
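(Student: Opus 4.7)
The plan is to follow the recursive scheme sketched in the footnote. In the base case $|Y|=1$, compute the isolating cut via a single $s$-$v$ maxflow. Otherwise split $Y=Y_1\sqcup Y_2$ with $|Y_1|=\lfloor \ell/2 \rfloor$, compute in one maxflow a minimum $(\{s\}\cup Y_1)$-$Y_2$ cut $(S,T)$ in $G$ (so that $\{s\}\cup Y_1\subseteq S$ and $Y_2\subseteq T$), form $G'=G/S$ and $G''=G/T$ by contracting the two sides to new super-sources $s'$ and $s''$, recurse on $(s',Y_2;G')$ and $(s'',Y_1;G'')$, and return the union of the two returned collections.

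The heart of the proof is correctness, and it rests on an uncrossing argument. Fix $v\in Y_2$ and let $A$ be any minimum isolating cut for $v$ in $G$, so $v\in A$ while $A\cap(\{s\}\cup Y\setminus\{v\})=\varnothing$. Submodularity of ${\tt cost}$ yields ${\tt cost}(A)+{\tt cost}(T)\ge{\tt cost}(A\cap T)+{\tt cost}(A\cup T)$. The set $A\cap T$ contains $v$ and avoids $s$, $Y_1$ (which lie in $S=V\setminus T$) and the remaining vertices of $Y_2$ (which lie outside $A$), so it is still an isolating cut for $v$. The set $A\cup T$ contains $Y_2$ and avoids $\{s\}\cup Y_1$, so it is a valid $(\{s\}\cup Y_1)$-$Y_2$ cut, and minimality of $T$ gives ${\tt cost}(A\cup T)\ge{\tt cost}(T)$. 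Combining, ${\tt cost}(A\cap T)\le{\tt cost}(A)$, so a minimum isolating cut for $v$ can be chosen inside $T$, and after contracting $S$ to $s'$ this is exactly a minimum isolating cut for $v$ in $G'$. The case $v\in Y_1$ is symmetric. Disjointness of the returned family is then immediate by induction: cuts produced in the $G'$ branch lie in $T$, cuts produced in the $G''$ branch lie in $S$, and within each branch disjointness follows from the inductive hypothesis.

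For the complexity, the recursion has depth $\lceil\log_2\ell\rceil$, and at each depth the subproblems operate on contracted graphs whose vertex sets essentially partition $V$ (a split of $H$ into $H/S,H/T$ gives $|V_{H/S}|+|V_{H/T}|=|V_H|+2$), so the total number of vertices across all subproblems at one level is $O(n+\ell)=O(n)$, and after collapsing parallel edges incident to the super-sources the total edge count at one level is $O(m)$. Under the standing assumption that $t_{\tt MC}$ is (super-)linear, the work at one level is bounded by $t_{\tt MC}(O(n),O(m))$, and summing over the $O(\log\ell)$ levels yields the claimed bound.

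The main obstacle is the uncrossing step; once one sees that the $v$-side of a minimum isolating cut for $v\in Y_2$ can always be pushed inside $T$, the contraction $G\mapsto G/S$ is justified and everything else---the induction, the disjointness, and the level-by-level complexity accounting---is routine. No essential difficulty is anticipated for the base case or for combining recursion outputs.
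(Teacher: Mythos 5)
Your proposal is correct and follows exactly the recursive scheme sketched in the paper's footnote, filling in the uncrossing/submodularity argument that justifies contracting $S$ (resp.\ $T$) before recursing; this is also the argument used in \cite{Li:FOCS20}, which the paper cites for this theorem without reproducing a proof. The one place to be slightly more careful is the level-by-level complexity accounting: crossing edges of each split appear in both child graphs, but since they become parallel edges incident to the new super-source they collapse to at most one edge per vertex of the other side, so the $O(m)$ edge bound per level indeed holds, as you indicate.
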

We will need this result later on.

\section{Computing Gomory-Hu tree via depth-1 {\tt OC} trees}\label{sec:depth1} 

In this section we introduce the notion of a {\em depth-1 {\tt OC} tree}, or ${\tt OC}_1$ tree.
We then show how we can use ${\tt OC}_1$ tree to compute the GH tree.
\begin{definition}\label{def:OC1}
Consider  sequence $\varphi=s\ldots$ in graph $G$.
A {\em ${\tt OC}_1$ tree for $(\varphi,G)$}  is a pair $(\Omega,\calE)$ such that there exists subsequence $\psi=s\ldots$ of $\varphi$ with the
following properties.
\begin{itemize}
\item $(\psi,\calE)$ is a star graph with the root $s$, i.e.\ $\calE=\{vs\::\:v\in\psi-\{s\}\}$.
\item $(\Omega,\calE)$ is an {\tt OC} tree for $(\psi,G)$.
\item For every $v\in \varphi-\{s\}$ there exists $u\in\psi-\{s\}$ such that $u\sqsubseteq v$ and $v\in[u]$.
\end{itemize}
\end{definition}

The following result shows the existence of an ${\tt OC}_1$ tree.
\begin{lemma}\label{lemma:OC1}
Let $(\Omega,\calE)$ be an {\tt OC} tree for $(\varphi,G)$.
Define $\psi=\varphi$, and let us modify tuple $(\psi,\Omega,\calE)$ using the following algorithm:
while $(\psi,\calE)$ has a leaf node of depth two or larger, pick the rightmost such node $u\in\psi$
(w.r.t.\ $\sqsubseteq$)
and replace $(\psi,\Omega,\calE)$ with $(\psi^{-u},\Omega^{-u},\calE^{-u})$.
Then the resulting pair $(\Omega,\calE)$ is an ${\tt OC}_1$ tree for $(\varphi,G)$.
\end{lemma}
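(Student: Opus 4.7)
The plan is to maintain three invariants through the execution: (a) $(\Omega,\calE)$ remains a valid OC tree for $(\psi,G)$; (b) every $v\in\varphi-\{s\}$ admits a witness $u\in\psi-\{s\}$ with $u\sqsubseteq v$ and $v\in[u]$; and (c) at termination $(\psi,\calE)$ is a star rooted at $s$. Termination and (c) are immediate: $|\psi|$ strictly decreases, and once no leaf has depth $\ge 2$, the whole tree has depth $\le 1$ and is therefore a star rooted at $s$.

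The main work is (a). Fix an iteration that removes the rightmost leaf $u$ of depth $\ge 2$ and let $v$ be its parent (so $v\ne s$). For any node $w\in\psi$ with $w\sqsubset u$, the prefix preceding $w$ in $\psi$ is unchanged by the deletion and $\XX([w])$ is preserved, so the OC condition at $w$ is trivially preserved. The interesting case is $w\in\psi$ with $u\sqsubset w$. Here the word ``rightmost'' forces $w$ to be a leaf at depth exactly $1$ in the current tree: otherwise either $w$ itself, or some leaf descendant of $w$ (which is $\sqsupset w$ by the orientation of $\calE$, hence $\sqsupset u$), would be a leaf of depth $\ge 2$ strictly to the right of $u$. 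Consequently $\XX([w])=[w]$ is unchanged by the removal and was a minimum $\alpha$-$w$ cut, where $\alpha$ is the prefix of $\psi$ preceding $w$. It remains to show $[w]$ is still minimum after deleting $u$ from $\alpha$.

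Suppose for contradiction some cut $C^*$ separating $\alpha\setminus\{u\}$ from $w$ has ${\tt cost}(C^*)<{\tt cost}([w])$; then necessarily $u\in C^*$. I would uncross $C^*$ with $B=\XX([v])$, which by OC validity is a minimum cut separating $v$ from its own prefix $\beta$ in $\psi$. Apply posimodularity of the undirected cut function,
\[
{\tt cost}(A)+{\tt cost}(B)\;\ge\;{\tt cost}(A\setminus B)+{\tt cost}(B\setminus A),
\]
with $A=C^*$. The key containments are: $v\in\alpha\setminus\{u\}$ (since $v\sqsubset u\sqsubset w$ and $v\ne u$), so $v\notin A$; $w$ is at depth $1$ under $s$, so $w\notin B$; and $u\in A\cap B$ since $u$ is a descendant of $v$. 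Hence $A\setminus B$ separates all of $\alpha$ from $w$ (the sole possible $\alpha$-member in $A$, namely $u$, is absorbed by $B$), and $B\setminus A$ separates $\beta$ from $v$ (still containing $v$). OC optimality of $[w]$ and $\XX([v])$ gives ${\tt cost}(A\setminus B)\ge{\tt cost}([w])$ and ${\tt cost}(B\setminus A)\ge{\tt cost}(B)$, whence ${\tt cost}(C^*)\ge{\tt cost}([w])$, a contradiction.

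For (b) I maintain the witness map inductively. Initially $u=v$ works for every $v\in\varphi-\{s\}$. When the algorithm removes $u^*$ with parent $v^*$, any previous witness $u\ne u^*$ remains valid because $[u]$ only grows when a descendant leaf is absorbed. For a node $v$ whose witness was $u^*$, the parent $v^*$ is a replacement: $v\in[u^*]\subseteq[v^*]^{-u^*}$, $v^*\sqsubset u^*\sqsubseteq v$, and crucially $v^*\ne s$ because $u^*$ had depth $\ge 2$. The hardest step is the posimodular uncrossing in (a), and it is precisely there that both algorithmic restrictions bite: ``rightmost'' ensures $[w]$ is unchanged for $u\sqsubset w$, and ``depth $\ge 2$'' ensures $v\ne s$ so that $\XX([v])$ really is a nontrivial cut to uncross against.
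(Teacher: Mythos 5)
Your proof is correct, but it takes a genuinely different route from the paper's. The paper's proof delegates the hard part to a separately-proved structural result, Lemma~\ref{lemma:free-leaf}, which states that removing any ``free leaf'' from an OC tree preserves OC validity; Lemma~\ref{lemma:free-leaf} is itself proved by induction on $|\varphi|$ using Lemma~\ref{lemma:OC-add-node}, which in turn relies on the parametric-mincut Lemma~\ref{lemma:parametric}. Given that machinery, the paper's proof of the present lemma reduces to a two-line combinatorial check that the rightmost leaf of depth $\ge 2$ is a free leaf. You instead give a direct, self-contained argument: you note that ``rightmost'' forces every node $w\sqsupset u$ to be a depth-1 leaf with $[w]^\downarrow=[w]$ unchanged, and then uncross a hypothetical cheaper $(\alpha\setminus\{u\})$-$w$ cut $C^*$ against $\XX([v])$ (with $v$ the parent of $u$, so $v\ne s$ because $u$ has depth $\ge 2$) via posimodularity, using OC-optimality of $[w]^\downarrow$ and $\XX([v])$ to derive the contradiction. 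Both arguments are valid. Your route is more economical for this particular lemma and isolates cleanly where ``rightmost'' and ``depth $\ge 2$'' are each used; the paper's route pays for the free-leaf abstraction up front but reuses it elsewhere (e.g.\ in the proofs of Lemmas~\ref{lemma:pistar} and~\ref{lemma:divide-and-conquer:one}). You also spell out the verification of condition~(iii) of Definition~\ref{def:OC1} (the witness map, with the replacement witness $v^*$ when $u^*$ is deleted, and why $v^*\ne s$), which the paper's proof leaves implicit.
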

\begin{proof}
We will use Lemma~\ref{lemma:free-leaf}. In the light of this lemma, it suffices to show the following:
if $(\Omega,\calE)$ is an {\tt OC} tree for sequence $\varphi$ and $t$ is the rightmost
leaf of $(\varphi,\calE)$ that has depth two or larger then $t$ is a free leaf in $(\varphi,\calE)$.
Suppose not, then $\varphi=\ldots t \ldots u\ldots $ where $t,u$ have the same parent. Thus,
$u$ has depth two or larger. Let $v$ be a leaf of the subtree of $(\varphi,\calE)$ rooted at $u$,
then $v$ also has depth two or larger. Furthermore, either $v=u$ or $\varphi=\ldots t \ldots u \ldots v \ldots$.
This contradicts the choice of $t$.
\end{proof}

The lemma means that an ${\tt OC}_1$ tree for $(\varphi,G)$ can be trivially constructed if we have ${\tt OC}$ tree for $(\varphi,G)$.
However, an ${\tt OC}_1$ tree can potentially be constructed faster than an {\tt OC} tree,
since we do not need to recurse on subproblems corresponding to depths larger than 1.

Let ${\tt OrderedCuts}_1(\varphi;G)$ be a procedure that outputs an ${\tt OC}_1$ tree for $(\varphi,G)$.
The main result of this section is the following theorem.

\begin{theorem}\label{th:main:OC1}
There exists a randomized (Las-Vegas) algorithm for computing GH tree for graph~$G$ with
expected complexity $\tilde O(1)\cdot (t_{{\tt OC}_1}(n,m))$,
where $t_{{\tt OC}_1}(n,m)$ is the complexity of ${\tt OrderedCuts}_1(\cdot)$ on
a graph with $O(n)$ nodes and $O(m)$ edges.
\end{theorem}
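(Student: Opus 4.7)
The plan is to adapt the Las-Vegas algorithm that establishes Theorem~\ref{th:main}, replacing each call to the ${\tt OC}$ oracle with $\tilde O(1)$ calls to the ${\tt OC}_1$ oracle on graphs whose total size is $O(n), O(m)$. Concretely, I would show how to assemble a full ${\tt OC}$ tree for $(\varphi, G)$ recursively from ${\tt OC}_1$ outputs: call ${\tt OrderedCuts}_1(\varphi; G)$ to obtain a star tree $(\Omega^\circ, \calE^\circ)$ supported on some subsequence $\psi \subseteq \varphi$; this gives a partition $\{[u]^\circ : u \in \psi\}$ of $V_G$. For each $u \in \psi - \{s\}$, recursively compute an ${\tt OC}$ tree for the contracted input $(\varphi \cap [u]^\circ, G[[u]^\circ; u])$, and combine the outputs using Lemma~\ref{lemma:divide-and-conquer:one}. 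Correctness is immediate from that lemma together with the fact that $(\psi, \calE^\circ)$ is an {\tt OC} tree for $(\psi, G)$.

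For the complexity analysis, I would observe that at each recursion level the contracted subgraphs $G[[u]^\circ; u]$ are edge-disjoint (after the appropriate contractions), so their total size is at most $(O(n), O(m))$. Hence each level contributes $\tilde O(t_{{\tt OC}_1}(n, m))$ oracle work, and it suffices to bound the expected number of levels by $O(\log n)$. To force geometric progress, I would randomize the input: before each ${\tt OC}_1$ call, uniformly permute the non-root portion of $\varphi$. The intuition is that for a random ordering the ${\tt OC}_1$ tree must reveal many depth-$1$ nodes in expectation (since any {\tt OC} tree for the sequence has an expected $\Omega(|\varphi|)$ leaves once the ordering is random), so the largest component $[u]^\circ$ has, in expectation, size bounded away from $|V_G|$ by a constant factor. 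Summing the resulting geometric series and multiplying by the $\tilde O(1)$ outer calls from Theorem~\ref{th:main} yields the claimed bound.

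The main obstacle is establishing the probabilistic geometric-progress claim rigorously. The definition of ${\tt OC}_1$ tree is permissive (the adversary chooses which depth-$1$ leaves to keep), so I cannot simply appeal to the structure of the canonical {\tt OC} tree obtained via Lemma~\ref{lemma:OC1}. If the direct random-ordering argument fails, the fallback is to avoid simulating the full ${\tt OC}$ tree altogether: instead, plug ${\tt OC}_1$ directly into the high-level algorithm of Theorem~\ref{th:main}, which (from the discussion preceding the theorem) ultimately only requires minimum $s$-$v$ cuts or verification of cut costs for vertices selected by the GH framework. By sorting candidate vertices by decreasing (estimated) cost and invoking Lemma~\ref{lemma:Goldberg}, the depth-$1$ cuts produced by ${\tt OC}_1$ become minimum $s$-$u$ cuts in $G$, which supplies exactly the information consumed by the reduction in Theorem~\ref{th:main}. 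The extra logarithmic factor then comes from a standard doubling search over the correct cost threshold and from the randomization used to guarantee coverage of all relevant nodes.
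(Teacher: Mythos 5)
Your primary plan (reconstructing a full {\tt OC} tree recursively from ${\tt OC}_1$ outputs and arguing geometric progress via random permutation) has a structural flaw that you partly anticipate, but the problem is worse than an adversarial choice of leaves. Even if the ${\tt OC}_1$ oracle returns the canonical tree of Lemma~\ref{lemma:OC1}, the component $[u]^\circ$ for the first non-root node $u$ of $\psi$ is the (unique, minimal) minimum $s$-$u$ cut, and nothing about a random permutation of $\varphi$ prevents this cut from containing almost all of $V_G$ (e.g.\ a path graph rooted at one end). The split sizes are governed by the minimum-cut structure of $G$, not by the ordering of $\varphi$, so no permutation argument can bound the recursion depth by $O(\log n)$ in the worst case. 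This is precisely why the paper does \emph{not} try to simulate an {\tt OC}-tree oracle from an ${\tt OC}_1$ oracle.

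Your fallback plan is closer in spirit but does not yield a proof. It assumes a ``high-level algorithm of Theorem~\ref{th:main}'' that consumes ${\tt OC}_1$ output, but that theorem is proved by a different reduction (Section~\ref{sec:alg}, via weak {\tt OrderedCuts} plus {\tt IsolatingCuts}), and the paper deliberately gives a separate, parallel development for ${\tt OC}_1$ in Section~\ref{sec:depth1}. What your sketch is missing is the entire core of that development: (i) the multi-scale random sampling of~\cite{LiPanigrahi:STOC21} with probabilities $2^0,2^{-1},\ldots,2^{-d}$ repeated $\Theta(\log^2|X|)$ times, analyzed via the ``hitting'' and $R$-rooted-set machinery of Lemmas~\ref{lemma:QQQ}--\ref{lemma:expectation-reduction}, which is what guarantees that a named partition covering all of $X$ is found whp with only $O(\log^3 |X|)$ oracle calls; (ii) the source-selection loop of Algorithm~\ref{alg:select-s1}, which repeatedly re-centers $s$ based on the total order $\sqsubseteq$ of Section~\ref{sec:select-s1} and whose $\Omega(1)$ success probability per round is established in Lemma~\ref{lemma:GAKHDGAG}; and (iii) the Isolation Lemma perturbation (Proposition~\ref{prop:GASF}) needed to make minimum $s$-$v$ cuts unique, without which the $\sim$ equivalence classes and the laminarity argument underlying (i) do not go through. ``A standard doubling search over the correct cost threshold'' is not what the paper uses and would not by itself deliver coverage of all nodes. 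Your observation that sorting by decreasing estimated cost and invoking Lemma~\ref{lemma:Goldberg} turns depth-$1$ cuts into minimum $s$-$u$ cuts is a genuine and correct piece of the picture (it appears at line 9 of Algorithm~\ref{alg:SSPQ1} and in the proof of Lemma~\ref{lemma:RRR}), but it is only one of several indispensable ingredients.
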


It will be convenient to define a {\em named subpartition of set $V$}
as a pair $(\calS,\Pi)$ where $\calS\subseteq V$ and $\Pi$ is a set of disjoint subsets of $V$ such that (i) $\calS\subseteq \langle\Pi\rangle$
and (ii) for each $v\in\calS$ there exists unique $U\in\Pi$ with $v\in U$.
With some abuse of terminology we will denote a named subpartition with a single letter $\calS$ (treating it as a subset of $V$ when needed),
and will denote set $U\in\Pi$ containing $v$ as $\calS_v$.
We denote $\langle\calS\rangle=\bigcup_{v\in\calS} \calS_v$.

Note that an ${\tt OC}_1$ tree $(\Omega,\calE)$ can be uniquely represented by a named subpartition $\calS$
with $\calS=\{v\::vs\in\calE\}$ and $\calS_v\in\Omega$ for $v\in\calS$.
We can thus assume that procedure ${\tt OrderedCuts}_1(\varphi;G)$ for sequence $\varphi=s\ldots$
returns named subpartition $\calS$ of $V_G-\{s\}$, with $\calS\subseteq\varphi-\{s\}$.
In the new notation Definition~\ref{def:OC1} can be reformulated as follows.

\begin{definition}\label{def:OC1'}
Named subpartition $\calS$ of $V_G-\{s\}$ is an ${\tt OC}_1$ tree for sequence $\varphi=sv_1\ldots v_\ell$ and graph~$G$ if \\
{\rm (i)} $\calS\subseteq \{v_1,\ldots,v_\ell\}$. \\
{\rm (ii)} For every $v_k\in\calS$ set $\calS_{v_k}$ is a minimum $s((v_1\ldots v_{k-1})\cap\calS)$-$v_k$ cut in $G$. \\
{\rm (iii)} For every $k\in[\ell]$ there exists $v_i\in\{v_1,\ldots,v_k\}\cap \calS$ with $v_k\in\calS_{v_i}$.
\end{definition}

\begin{remark}
In the algorithm below instead of condition {\rm (ii)} we will only need a slightly weaker property: 
condition {\rm (ii)} needs to hold only for $k\in[\ell]$ such that $\{v_1,\ldots,v_{k-1}\}\cap C_{sv_k}=\varnothing$
where $C_{sv_k}$ is a minimum $s$-$v_k$ cut.
\end{remark}

\subsection{Computing cuts for a fixed source}\label{sec:FixedSourcePartition1}
Let us consider graph $G=(V,E,w)$, node $s\in V$ and subset $X\subseteq V-\{s\}$.
We first show how we can use procedure  ${\tt OrderedCuts}_1$ to compute minimum $s$-$v$ cuts for many nodes $v\in X$
under the following condition.
\begin{assumption}\label{as}
For every $v\in X$ the minimum $s$-$v$ cut in $G$ is unique. 
\end{assumption}
In particular, we will present an algorithm with the following properties.
\begin{theorem}\label{th:FixedSourcePartition1}
There exists randomized algorithm ${\tt FixedSourcePartition}_1(s,X;G)$ that outputs
a named subpartition $\calS$ of $V_G-\{s\}$ with $\calS\subseteq X$ satisfying the following: \\
{\rm (a)} $\calS_v$ is a minimum $s$-$v$ cut in $G$ for every $v\in\calS$. \\
{\rm (b)} If Assumption~\ref{as} holds  then $X\subseteq \langle\calS\rangle$ with probability at least  $1-1/poly(|X|)$ (for an arbitrary fixed polynomial). \\
It makes $O(\log^3|X|)$ calls to procedure ${\tt OrderedCuts}_1(\cdot)$  for graph~$G$ and performs $ O(|X|\log^4|X|)$ additional work.
\end{theorem}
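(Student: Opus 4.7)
The plan is to invoke ${\tt OrderedCuts}_1$ on many random sub-families and permutations of $X$, mark a cut as verified whenever Lemma~\ref{lemma:Goldberg} certifies it, and then assemble the verified cuts into a named partition. Concretely, for each scale $k\in\{0,1,\ldots,\lceil\log_2|X|\rceil\}$ and each of $T=\Theta(\log^2|X|)$ independent repetitions, I would sample $R\subseteq X$ by including each node independently with probability $2^{-k}$, draw a uniformly random permutation $\sigma$ of $R$, and call $\calT\leftarrow{\tt OrderedCuts}_1(s,\sigma(R);G)$. This accounts for the $O(\log^3|X|)$ oracle calls claimed in the theorem.

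For each returned $\calT$, I would scan its members in input order $v_1,v_2,\ldots$ and mark $v_k\in\calT$ as \emph{verified} whenever ${\tt cost}(\calT_{v_k})\le{\tt cost}(\calT_{v_j})$ for every earlier $v_j\in\calT$. Since Definition~\ref{def:OC1'}(i) guarantees that $\calT_{v_k}$ is the minimum $\{s\}\cup(\calT\cap\{v_1,\ldots,v_{k-1}\})$-$v_k$ cut in $G$, Lemma~\ref{lemma:Goldberg} applied to the subsequence $s\calT$ immediately yields that every verified $\calT_{v_k}$ is an honest minimum $s$-$v_k$ cut (and, under Assumption~\ref{as}, equals the latest min cut $C_{sv_k}$). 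Aggregating all verified cuts across trials into a laminar family $\calF$ (laminar because $\{C_{sw}:w\in V-\{s\}\}$ is), I take $\calS\subseteq X$ to be the representatives of the maximal elements of $\calF$, assign $\calS_v$ to the corresponding maximal cut, and leave any remaining part of $V_G-\{s\}$ as an unnamed part. Property~(a) then holds by construction.

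The bulk of the work is establishing property~(b). Fix $u\in X$ and consider its ancestor chain $u=a_0\prec a_1\prec\ldots\prec a_r$ in $\{C_{sw}:w\in V-\{s\}\}$, so that $C_{su}\subsetneq C_{sa_1}\subsetneq\ldots$; since $|C_{sa_i}\cap X|$ at least doubles along the chain, $r=O(\log|X|)$. For the ancestor $a_i$ matched to scale $k_i=\lfloor\log_2|C_{sa_i}\cap X|\rfloor$, I would show (via a probabilistic analysis in the spirit of~\cite{Abboud:FOCS20}) that a scale-$k_i$ trial yields the favorable event ``$a_i\in R$, $R\cap C_{sa_i}=\{a_i\}$, and no previously processed $R$-element displaces $a_i$ from depth~$1$'' with probability $\Omega(1/\log|X|)$. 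On this event, an inductive application of Lemma~\ref{lemma:OC-add-node} together with Assumption~\ref{as} forces $\calT_{a_i}=C_{sa_i}$, and a random-permutation running-min argument certifies $a_i$ as verified with at least constant conditional probability. Boosting over $T$ trials per scale and union-bounding over the $O(\log|X|)$ ancestors and the $|X|$ choices of $u$ drives the failure probability to $1/\mathrm{poly}(|X|)$; since $u\in C_{sa_i}$ for every $i$, any verified ancestor contributes a cut covering $u$.

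The main obstacle is justifying the conditional conclusion $\calT_{a_i}=C_{sa_i}$: after isolating $a_i$ in $R$, I still need to control what happens to $a_i$ as earlier $R$-elements are processed by the ${\tt OC}_1$ construction, ruling out minimum cuts in the contracted subgraphs that swallow $a_i$ and push it below depth~$1$. This requires a delicate inductive argument along $\sigma(R)$ about the Lemma~\ref{lemma:OC-add-node} recursion, using Assumption~\ref{as} to propagate the latest-cut property through the successive contractions and to invoke laminarity between each processed cut and $C_{sa_i}$. Once this estimate is settled, the remaining bookkeeping is routine: each ${\tt OrderedCuts}_1$ output has size $O(|X|)$, the running-min scan is linear, and maintaining $\calF$ laminarly while extracting its maximal elements over $O(\log^3|X|)$ verified cuts costs $O(|X|\log^4|X|)$ with standard laminar data structures, matching the additional-work bound.
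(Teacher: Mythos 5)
Your outline is a genuinely different strategy from the paper's, and it contains a concrete quantitative error that I do not think can be repaired without essentially rediscovering the paper's adaptive mechanism.

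First, the error: you claim that the favorable event ``$a_i\in R$, $R\cap C_{sa_i}=\{a_i\}$, \dots'' occurs with probability $\Omega(1/\log|X|)$ in a scale-$k_i$ trial. With $k_i=\lfloor\log_2|C_{sa_i}\cap X|\rfloor$, the probability that $a_i\in R$ is $2^{-k_i}=\Theta(1/|C_{sa_i}\cap X|)$, which for the top ancestor can be as small as $\Theta(1/|X|)$. Boosting with $T=\Theta(\log^2|X|)$ repetitions then gives success probability $\Theta(\log^2|X|/|X|)$, far from constant, so the union bound over ancestors and over $u\in X$ does not close. Relatedly, the claim that ``$|C_{sa_i}\cap X|$ at least doubles along the chain, so $r=O(\log|X|)$'' is not true for the full ancestor chain in $\{C_{sw}\}$ (it can have length $\Theta(|X|)$); you presumably mean a geometric subchain, but then which $a_i$ covers $u$ when verified becomes a bookkeeping question you would have to spell out.

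Second, and more structurally, your scheme is fully non-adaptive: independent random subsets and uniformly random permutations, with certified cuts simply aggregated at the end. The paper's Algorithm~\ref{alg:SSPQ1} is adaptive in two essential ways: it carries a monotone estimate $\lambda(\cdot)$ across iterations and \emph{sorts each sample $Y$ by $\lambda$} before calling ${\tt OrderedCuts}_1$, and it \emph{shrinks} $X$ by removing $\calS_v-\{v\}$. Both adaptive features are load-bearing in Lemma~\ref{lemma:QQQ}: the contradiction ``$\lambda_{i-1}(u)=f(B)<f(A)\le\lambda_{i-1}(v)$ so $u$ comes after $v$'' is exactly the step that guarantees a hit node $v$ has no predecessor inside $C_{sv}$, and the shrinkage of $X$ keeps previously resolved nodes from being re-sampled and interfering. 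These are precisely the issues you flag at the end as ``the main obstacle'' (controlling what happens to $a_i$ as earlier $R$-elements are processed). In other words, you have correctly diagnosed the missing step, but the paper fills that gap not with a stronger probabilistic estimate along a fixed ancestor chain, but with the adaptive ordering plus a global potential argument: Lemma~\ref{lemma:expectation-reduction} bounds the expected shrinkage of $|\langle\Psi\rangle|$ by $(1-\Theta(1/\log|X|))$ per round via an out-degree/in-degree counting over all hit nodes, rather than by targeting individual ancestors $a_i$. That argument is what makes the $1/\mathrm{poly}(|X|)$ failure bound go through with only $\Theta(\log^3|X|)$ oracle calls.

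So: property (a) in your outline is fine (it follows from Definition~\ref{def:OC1'}(i) and Lemma~\ref{lemma:Goldberg}, exactly as in the paper). The gap is in (b). To close it you would either need to (i) replace the per-ancestor probability claim with the paper's expected-potential argument over all nodes, and (ii) justify why a random permutation gives the running-min / isolation structure that the paper instead enforces by sorting by the accumulated $\lambda$ and shrinking $X$. As written, step (ii) is exactly the inductive argument you acknowledge you have not carried out, and I do not see how to do it without introducing the adaptive state.
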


The algorithm is given below. 
${\tt RandomSubset}(X;\alpha)$ in line 3  is a random subset of $X$ in which each element $v\in X$ is
included independently with prob.\ $\alpha$.

\begin{algorithm}[H]
\setcounter{AlgoLine}{0}

  \DontPrintSemicolon
    set $\lambda(v)={\tt cost}(\{v\})$ for all $v\in X$ \\
	\For{$i=1,\ldots,N+1$}
  	{
  		sample $Y\leftarrow{\tt RandomSubset}(X;\alpha_i)$ \hspace{105pt} \tcc{\small use $\alpha_{N+1}=1$} 
  		sort nodes in $Y$ as $v_1,\ldots,v_\ell$ so that $\lambda(v_1)\ge \ldots \ge \lambda(v_\ell)$, $\ell=|Y|$ \\
  		call $\calS\leftarrow{\tt OrderedCuts}_1(sv_1,\ldots v_\ell;G)$ \hspace{105pt} \tcc{\small $\calS\subseteq Y\subseteq X$}  
		\If{$i\le N$}
		{
	  		for each $v\in \calS$ update $X:=X-(\calS_v-\{v\})$ \\
	  		for each $v\in \calS$ and $u\in\calS_v$ update $\lambda(u):=\min\{\lambda(u),{\tt cost}(\calS_v)\}$ 
	  	}
	  	\Else
	  	{
			let $\calS^\ast=\{v_k\in \calS\::\: {\tt cost}(\calS_{v_k})\le {\tt cost}(\calS_v)\;\;\forall v\in\{v_1,\ldots,v_{k}\}\cap \calS\}$ \\
			return named subpartition $\calS^\ast$ where $\calS^\ast_v=\calS_v$ for $v\in\calS^\ast$
	  	}
  	}

      \caption{${\tt FixedSourcePartition}_1(s,X;G)$. 
      }\label{alg:SSPQ1}
\end{algorithm}

Note that the algorithm depends on the sequence $\alpha_1,\ldots,\alpha_{N}$.
We set it by taking sequence $(2^0,2^{-1},\ldots,2^{-d})$ for $d=\lfloor \log_2 |X| \rfloor$
and repeating it $\Theta(\log^2 |X|)$ times with an appropriate constant
(so that $N=\Theta(\log^3|X|)$).
The idea of this construction comes from~\cite{LiPanigrahi:STOC21}.
In the remainder of this section we prove that with this choice the complexity
of Algorithm~\ref{alg:SSPQ1} is as stated in Theorem~\ref{th:FixedSourcePartition1}.

Note that each update at line 7 preserves the property $\calS\subseteq X$,
since the sets in $\{\calS_v\::v\in \calS\}$ are disjoint. This means, in particular,
that the update $X:=X-(\calS_v-\{v\})$ is equivalent to the update $X:=(X-\calS_v)\cup\{v\}$.

It can be seen that the output of  Algorithm~\ref{alg:SSPQ1} satisfies property {\rm (a)} of Theorem~\ref{th:FixedSourcePartition1};
this follows from the definition of ${\tt OrderedCuts}_1$ and from Lemma~\ref{lemma:Goldberg}.
We will show next that there exists  sequence $\alpha_1,\ldots,\alpha_N$ with $N=\Theta(\log^3 |X|)$ 
so that property {\rm (b)} of Theorem~\ref{th:FixedSourcePartition1} is satisfied.
In the remainder of the analysis we assume that Assumption~\ref{as} holds.

We let $\lambda_i,X_i,Y_i$ for $i\in[N+1]$ to be the  variables at the end of the $i$-th iteration,
and $\lambda_0,X_0$ be the variables after initialization at line 1. 
Note that $Y_i={\tt RandomSubset}(X_{i-1};\alpha_i)$ for $i\in[N+1]$. We always use letter $X$ to denote the original set $X=X_0$.

We will need some additional notation.
Recall that  $C_{sv}$ for $v\in X$ is the minimum $s$-$v$ cut in~$G$ (which is now unique),
and the family $\{C_{sv}\::\:v\in X\}$ is laminar.
We write $u\sim v$ for nodes $u,v\in X$ if $C_{su}=C_{sv}$.
Clearly, $\sim$ is an equivalence relation on $X$.
Let $\Phi$ be the set of equivalence classes of this relation with $\langle\Phi\rangle=X$,
and for $v\in X$ let $\br{v}\in\Phi$ be the class to which $v$ belongs.
We write $\br{u}\preceq \br{v}$ if $C_{su}\subseteq C_{sv}$.
Clearly, $\preceq$ is a partial order on $\Phi$. Furthermore, laminarity of cuts $C_{sv}$ implies that $\preceq$ can be represented by a rooted
forest $\calF$ on nodes $\Phi$ so that $A\preceq B$ for $A,B\in\Phi$ iff $A$ belongs to the subtree of $\calF$ rooted at $B$.
We denote ${\tt Roots}(\calF)$ to be the set of roots of this forest.
For node $A\in \Phi$ we denote $A^\downarrow =\{B\in\Phi\::\:B\preceq A\}$; equivalently, $A^\downarrow$
is the set of nodes of the subtree of $\calF$ rooted at $A$.
It can be seen that $C_{sv}\cap X=\langle \br{v}^\downarrow \rangle$ for $v\in X$.
We define function $f:\Phi\rightarrow\mathbb R$ via  $f(\br{v})={\tt cost}(C_{sv})=f(s,v)$ for $v\in X$.
Assumption~\ref{as} implies that $f(B)<f(A)$ for $B\prec A$.

\begin{lemma} \label{lemma:PPP}
For each $R\in{\tt Roots}(\calF)$ and $i\in [N]$ there holds $R\cap X_i\ne \varnothing$.
\end{lemma}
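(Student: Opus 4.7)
The plan is to prove the lemma by induction on $i$, with the base case $i=0$ immediate since $X_0 = X \supseteq R$. For the inductive step, I assume $R \cap X_{i-1} \ne \varnothing$ and show $R \cap X_i \ne \varnothing$.

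In iteration $i$, the algorithm samples $Y \subseteq X_{i-1}$, sorts $Y$ by decreasing $\lambda$ to get $v_1, \ldots, v_\ell$, computes the named partition $\calS \subseteq Y$ corresponding to an ${\tt OC}_1$ tree for $sv_1\ldots v_\ell$, and then removes $\calS_v - \{v\}$ from $X$ for each $v \in \calS$. Under the underlying ${\tt OC}_1$-tree structure (Definition~\ref{def:OC1}), $V_G$ is partitioned into the ``root'' component $[s]$ and the components $\{\calS_v : v \in \calS\}$, and the third bullet of that definition guarantees $Y \cap [s] = \varnothing$. Hence $X_i = \calS \cup (X_{i-1} \cap [s])$, and the inductive step reduces to showing that whenever $R$ is a root of $\calF$ with $R \cap X_{i-1} \ne \varnothing$, $R$ meets $\calS$ or has an element in $X_{i-1} \cap [s]$.

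To establish this I argue by contradiction: suppose every $w \in R \cap X_{i-1}$ lies in some $\calS_{u_w}$ with $u_w \in \calS \setminus R$. Fix such a pair $(w,u)$ with $u = u_w$. Then $\calS_u$ is a minimum $sA$-$u$ cut (where $A$ is the set of elements of $\calS$ preceding $u$ in the sort order) containing both $u$ and $w$. Since $R$ is a root of $\calF$, the laminarity of $\{C_{sv}\}_{v \in X}$ combined with uniqueness from Assumption~\ref{as} implies that either $\br{u} \prec R$ in $\calF$ (whence $u \in C_{sw}$ and $f(\br{u}) < f(R)$, using uniqueness to rule out equality) or $\br{u}$ is incomparable with $R$ (whence $C_{su}$ and $C_{sw}$ are disjoint). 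In either sub-case, ${\tt cost}(\calS_u) \ge f(s,w) = f(R)$ because $\calS_u$ is an $s$-$w$ cut.

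The hard part of the proof is the final structural step that derives a contradiction from this global absorption pattern. The idea is to apply Lemma~\ref{lemma:parametric} to $\calS_u$ (enlarging its source by $w$, or its sink by $w$ depending on the sub-case) and combine it with submodularity of cut costs and the maximality of $C_{sw}$ in the laminar family to show that $\calS_u$ cannot simultaneously absorb all of $R \cap X_{i-1}$; equivalently, there exists $w^\ast \in R \cap X_{i-1}$ that is not contained in $\calS_u$ for any $u \in \calS \setminus R$. By disjointness of the components $\{\calS_v : v \in \calS\}$, such a $w^\ast$ must lie in $\calS$ or in $[s]$, contradicting the assumed absorption and completing the induction.
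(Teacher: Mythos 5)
Your proof reaches the right setup---the induction and the reduction to ``some $w\in R\cap X_{i-1}$ survives into $X_i$'' are both sound---but it stops exactly where the work begins. The paragraph you flag as ``the hard part'' is a sketch, not a proof: you list ingredients (Lemma~\ref{lemma:parametric}, submodularity, maximality of $C_{sw}$) without combining them, and the maneuver you hint at (``enlarging the source of $\calS_u$ by $w$, or its sink by $w$'') does not match the argument that actually works. Moreover, the statement you set out to prove---``$\calS_u$ cannot simultaneously absorb all of $R\cap X_{i-1}$''---should really be about the union over all $u\in\calS\setminus R$; as written it is not even the claim you need.

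The missing fact is simpler and much stronger than your target: for every $u\in\calS\setminus R$ one has $\calS_u\cap R=\varnothing$, so nothing of $R$ is ever absorbed. This follows from two observations. First, $\calS_u\subseteq C_{su}$: since $\calS_u$ is a minimum $sA$-$u$ cut, $\calS_u\cap C_{su}$ is still an $sA$-$u$ cut and $\calS_u\cup C_{su}$ is still an $s$-$u$ cut, so submodularity gives ${\tt cost}(\calS_u\cup C_{su})\le {\tt cost}(C_{su})$; by Assumption~\ref{as} the minimum $s$-$u$ cut is unique, hence $\calS_u\cup C_{su}=C_{su}$. (This is exactly what the paper extracts from Lemma~\ref{lemma:parametric} together with uniqueness.) Second, since $u\notin R$ and $R\in{\tt Roots}(\calF)$, we have $R\not\preceq\br{u}$, so $R\notin\br{u}^\downarrow$ and therefore $C_{su}\cap R=\langle\br{u}^\downarrow\rangle\cap R=\varnothing$---a fact your case analysis already establishes but then fails to exploit. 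Combining the two gives $\calS_u\cap R=\varnothing$, which contradicts your absorption hypothesis outright and collapses your final paragraph into a single line. Rework the argument to prove $\calS_u\cap R=\varnothing$ directly; the detour through ${\tt cost}(\calS_u)\ge f(R)$ is true but leads nowhere.
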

\begin{proof}
We use induction on $i$ (for a fixed $R\in{\tt Roots}(\calF)$).
Consider the $i$-th iteration, and let  $\calS$  be the ${\tt OC}_1$ tree computed at line 5.
We make the following claim:
\begin{itemize}
\item If $v\in \calS$ and $v\notin R$ then $\calS_v\cap R=\varnothing$.
Indeed, conditions $v\notin R$ and $R\in{\tt Roots}(\calF)$ imply that $C_{sv}\cap R=\varnothing$.

By the definition of ${\tt OC}_1$ tree, set $\calS_v$ is a minimum $s \alpha$-$v$ cut for some sequence $\alpha$.
Lemma~\ref{lemma:parametric} gives that $\calS_v\subseteq C_{sv}$, implying the claim.
\end{itemize}
The claim of Lemma~\ref{lemma:PPP} now easily follows
(recalling that the update $X:=X-(\calS_v-\{v\})$ at line 7 is equivalent to the update $X:=(X-\calS_v)\cup\{v\}$).
\end{proof}

Consider node $R\in\Phi$.
We say that subset $\Psi\subseteq\Phi$ is  {\em $R$-rooted} if it can be represented as $\Psi=R^\downarrow-\bigcup_{A\in \tilde\Psi} A^\downarrow$
for some subset $\tilde\Psi\subseteq R^\downarrow$. The minimal set $\tilde\Psi$ in this representation will be denoted as $\Psi^-$;
clearly, such minimal $\tilde\Psi$ exists, and equals the set of maximal elements of $R^\downarrow -\Psi$ w.r.t.\ $\preceq$.
One exception is the set $\Psi=\varnothing$: for such $\Psi$ we define $\Psi^-=\{R\}$ 
(node $R$ will always be clear from the context).
It can be seen that if $\Psi$ is $R$-rooted and $A,B$ are distinct nodes in $\Psi^-$ then sets $A^\downarrow ,B^\downarrow $ are disjoint.

Now consider subsets $Y\subseteq X$, $\Psi\subseteq \Phi$ and node $v\in X$. 
Borrowing terminology from~\cite{LiPanigrahi:STOC21}, we say that
$Y$ {\em hits $v$ in $\Psi$} if $C_{sv}\cap \langle \Psi\rangle\cap Y=\{v\}$. 
We denote
$$
\Psi[Y]=\Psi-\bigcup_{\mbox{\scriptsize $v:Y$ hits $v$ in $\Psi$}} \br{v}^\downarrow
$$
For subsets $Y_1,\ldots,Y_k$ we also denote $\Psi[Y_1,\ldots,Y_k]=\Psi[Y_1]\ldots[Y_k]$.
Clearly, if $\Psi$ is $R$-rooted then so is $\Psi[Y_1,\ldots,Y_k]$.
With this notation in place, we proceed with the analysis of Algorithm~\ref{alg:SSPQ1}.

\begin{lemma}\label{lemma:QQQ}
Consider root $R\in{\tt Roots}(\calF)$,
and for $i\in[N]$ denote $\Psi_i=R^\downarrow[Y_1,\ldots,Y_i]$, so that
$R^\downarrow=\Psi_0\supseteq\Psi_1\supseteq\ldots\supseteq\Psi_N$.
Then the following holds for every $A\in \Psi^-_i$, $i\in[N]$:
\begin{itemize}
\item $|\langle A^\downarrow\rangle \cap X_i|\le 1$. If $\langle A^\downarrow\rangle \cap X_i=\{v\}$ then $v\in A$ and $\lambda_i(v)=f(A)$.
\end{itemize}
\end{lemma}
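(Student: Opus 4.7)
The plan is to argue by induction on $i$, with the $i=0$ case vacuous since $\Psi_0^-=\varnothing$. For the inductive step I first classify each $A\in\Psi_i^-$ as either (Case 1) already lying in $\Psi_{i-1}^-$, or (Case 2) newly created at step~$i$. A short preliminary argument shows that in Case~2 necessarily $A=\br{v_*}$ for some $v_*$ hit by $Y_i$ in $\Psi_{i-1}$: if $A\notin\Psi_{i-1}$ then some $B\in\Psi_{i-1}^-$ with $B\succeq A$ survives in $R^\downarrow-\Psi_i$, violating maximality of $A$, so $A\in\Psi_{i-1}$; then $A\notin\Psi_i$ forces $A\preceq\br{v_*}$ for some hit $v_*$, and maximality combined with the fact that hit classes form an antichain (a strict chain of hits would place two elements of $Y_i\cap\langle\Psi_{i-1}\rangle$ into the same $C_{sv}$, contradicting the hit condition) pins $A=\br{v_*}$.

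Case~1 is immediate from the IH: since $X_i\subseteq X_{i-1}$ and $\lambda_i\le\lambda_{i-1}$ pointwise, the unique representative $v$ (if any) supplied by the IH is the only candidate in $\langle A^\downarrow\rangle\cap X_i$, and any cut through $v$ has cost $\ge f(s,v)=f(A)$, so $\lambda_i(v)\ge f(A)$ matches the upper bound $\lambda_{i-1}(v)=f(A)$. For Case~2, writing $v_*=v_k$ in the sorted order, the key step — and in my view the main obstacle — is the sub-claim $\{v_1,\ldots,v_{k-1}\}\cap C_{sv_*}=\varnothing$. For any $v_j$ in this intersection, laminarity and Assumption~\ref{as} give $\br{v_j}\preceq\br{v_*}$ (else $C_{sv_*}$ would be a strictly cheaper $s$-$v_j$ cut than $C_{sv_j}$); the hit condition together with $v_j\in Y_i$ excludes $v_j\in\langle\Psi_{i-1}\rangle$ (else $v_j\in C_{sv_*}\cap\langle\Psi_{i-1}\rangle\cap Y_i=\{v_*\}$ forces $v_j=v_*$), so $\br{v_j}\in B^\downarrow$ for some $B\in\Psi_{i-1}^-$ with $B\prec\br{v_*}$; the IH then identifies $v_j$ as the unique representative of $B$ with $\lambda_{i-1}(v_j)=f(B)<f(\br{v_*})\le\lambda_{i-1}(v_*)$, contradicting the sorted order $\lambda_{i-1}(v_j)\ge\lambda_{i-1}(v_*)$.

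Applying Definition~\ref{def:OC1'}(ii) at index $k$ then yields $v_{j_0}\in\{v_1,\ldots,v_k\}\cap\calS$ with $v_*\in\calS_{v_{j_0}}$, and I split into two subcases. If $j_0=k$, then $\calS_{v_*}$ is a minimum $s(\{v_1,\ldots,v_{k-1}\}\cap\calS)$-$v_*$ cut; the sub-claim shows that $C_{sv_*}$ itself is a valid such cut of cost $f(s,v_*)$, so any minimum such cut is also a minimum $s$-$v_*$ cut, and Assumption~\ref{as} forces $\calS_{v_*}=C_{sv_*}$. Consequently every element of $C_{sv_*}\cap X_{i-1}$ other than $v_*$ lies in $\calS_{v_*}-\{v_*\}$ and is deleted, while $v_*$ survives by disjointness of the $\calS_w$'s, giving $\lambda_i(v_*)=\min\{\lambda_{i-1}(v_*),f(A)\}=f(A)$. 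If $j_0<k$, applying submodularity to $U=\calS_{v_{j_0}}$ and $C_{sv_*}$ yields ${\tt cost}(U)+{\tt cost}(C_{sv_*})\ge{\tt cost}(U\cup C_{sv_*})+{\tt cost}(U\cap C_{sv_*})$, and since $U\cap C_{sv_*}$ is an $s$-$v_*$ cut of cost $\ge f(s,v_*)$ we obtain ${\tt cost}(U\cup C_{sv_*})\le{\tt cost}(U)$; the sub-claim ensures $\alpha\cap C_{sv_*}=\varnothing$ for $\alpha=\{v_1,\ldots,v_{j_0-1}\}\cap\calS$, so $U\cup C_{sv_*}$ is a valid $s\alpha$-$v_{j_0}$ cut, forcing equality and hence $U\cap C_{sv_*}=C_{sv_*}$ by uniqueness, i.e.\ $C_{sv_*}\subseteq\calS_{v_{j_0}}$; since $v_{j_0}\notin C_{sv_*}$ by the sub-claim, every element of $C_{sv_*}\cap X_{i-1}$ (including $v_*$) lies in $\calS_{v_{j_0}}-\{v_{j_0}\}$ and is removed, giving $\langle A^\downarrow\rangle\cap X_i=\varnothing$.
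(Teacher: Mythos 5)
Your proof is correct and follows essentially the same route as the paper's: induction on $i$, reduction to a newly created $A=\br{v_*}$ hit by $Y_i$, the key sub-claim $\{v_1,\ldots,v_{k-1}\}\cap C_{sv_*}=\varnothing$ via the inductive hypothesis and the sorted order, then a case split on $j_0=k$ (where $\calS_{v_*}=C_{sv_*}$) versus $j_0<k$ (where $C_{sv_*}\subseteq\calS_{v_{j_0}}-\{v_{j_0}\}$, yielding $\langle A^\downarrow\rangle\cap X_i=\varnothing$; you derive this via submodular uncrossing, the paper via Lemma~\ref{lemma:parametric}, which are equivalent). One small slip in a parenthetical: from $v_j\in C_{sv_*}$ with $C_{sv_j}\not\subseteq C_{sv_*}$, laminarity forces $C_{sv_*}\subsetneq C_{sv_j}$, making $C_{sv_*}$ a strictly \emph{more} expensive $s$-$v_j$ cut (not cheaper); the contradiction instead comes from also noting $C_{sv_j}$ would be a strictly more expensive $s$-$v_*$ cut, giving $f(s,v_*)>f(s,v_j)>f(s,v_*)$ — but the claimed conclusion $\br{v_j}\preceq\br{v_*}$ is correct and the rest of the argument is unaffected.
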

\begin{proof}
We use induction on $i$. 
If $i=0$ then $\Psi_i^-=\varnothing$ and the claim is vacuous.
 Suppose it holds for $i-1$; let us prove for $i\in[N]$.

Note that $\Psi_i =\Psi_{i-1}[ Y_i]$.
 Consider $A\in \Psi_i^-$. 
 If $A\in\Psi_{i-1}^-$ then the claim holds by the induction hypothesis and the facts that $X_i\subseteq X_{i-1}$
 and $\mu_i(v)\le \mu_{i-1}(v)$ for all~$v$.
 Suppose that $A\notin\Psi_{i-1}^-$. Condition $A\in\Psi_i^-=(\Psi_{i-1}[Y_i])^-$ then
 implies that there exists $v\in A$ such that $Y_i$ hits $v$ in $\Psi_{i-1}$, i.e.\ $C_{sv}\cap \langle\Psi_{i-1}\rangle\cap Y_i=\{v\}$.
 We then have $A=\br{v}\in\Psi_{i-1}$.

 Let $\varphi=s\alpha v\ldots$ be the ordering of nodes in $\{s\}\cup Y_i$ used in the algorithm.
 We claim that $\alpha\cap C_{sv}=\varnothing$. 
 Indeed, suppose that there exists $u\in \alpha\cap C_{sv} \subseteq Y_i\cap C_{sv}$,
 then we must have $u\notin\langle\Psi_{i-1}\rangle$ and hence $\br{u}\notin \Psi_{i-1}$.
 Condition $u\in C_{sv}$ implies that $\br{u}\preceq {\br v}=A$.
 Conditions $A\in\Psi_{i-1}$, $\br{u}\preceq A$, $\br{u}\notin \Psi_{i-1}$ imply that
 there exists $B\in\Psi_{i-1}^-$ with $\br{u}\preceq B\prec A$.
 We have $u\in \langle B^\downarrow \rangle$ and $u\in Y_i\subseteq X_{i-1}$.
 The induction hypothesis gives that $\langle B^\downarrow\rangle\cap X_{i-1}=\{u\}$, $u\in B$ and $\lambda_{i-1}(u)=f(B)$.
  We obtain that $\lambda_{i-1}(u)=f(B)<f(A)\le \lambda_{i-1}(v)$ and so $u$ should come after $v$ in $\varphi$ - a contradiction.
 
 Let $\calS={\tt OrderedCuts}_1(\varphi;G)$  be the ${\tt OC}_1$ tree 
 computed at iteration $i$. By the definition ${\tt OC}_1$ tree, there exists $u\in(\alpha v)\cap \calS$ such that 
 $\calS_u$ is a minimum $s\alpha'$-$u$ cut for some $\alpha'\subseteq\alpha-\{u\}$, and $v\in \calS_u$. 
 As shown above, we have $\alpha'\cap C_{sv}=\varnothing$. 
  By construction, we have $X_i\subseteq X_{i-1}-(\calS_u-\{u\})$. 
  Recall that $C_{sv}\cap X=\langle A^\downarrow\rangle$ and so $\langle A^\downarrow \rangle\cap X_i\subseteq C_{sv}\cap X_i$.
  We now consider two possible cases.
  \begin{itemize}
 \item $u=v$. The conditions above imply that $\calS_v=C_{sv}$, and so
 $\langle A^\downarrow \rangle\cap X_i\subseteq C_{sv}\cap X_i\subseteq\{v\}$.
 Furthermore, if $v\in X_i$ then we will have $\lambda_i(v)\le {\tt cost}(\calS_v)=f(A)$.
\item $u\ne v$. Since $\alpha'\cap C_{sv}=\varnothing$, set $C_{sv}$ is also (the unique) minimum $s\alpha'$-$v$ cut.
Recall that $\calS_u$ is a minimum $s\alpha'$-$u$ cut and $v\in\calS_u$,
and so $\calS_u$ is a minimum $s\alpha'$-$uv$ cut.
Lemma~\ref{lemma:parametric} gives that $C_{sv}\subseteq \calS_u$.
Condition $\alpha\cap C_{sv}=\varnothing$ implies that $u\notin C_{sv}$
and so $C_{sv}\subseteq \calS_u-\{u\}$.
Thus, we will have $\langle A^\downarrow \rangle\cap X_i\subseteq C_{sv}\cap X_i=\varnothing$.
 \end{itemize}
In both cases the claim of Lemma~\ref{lemma:QQQ} holds.
 
\end{proof}

\begin{lemma}\label{lemma:RRR}
Suppose that $R^\downarrow[Y_1,\ldots,Y_N]=\varnothing$ for all $R\in{\tt Roots}(\calF)$.
Then the output of Algorithm~\ref{alg:SSPQ1} satisfies $X\subseteq\langle\calS^\ast\rangle$.
\end{lemma}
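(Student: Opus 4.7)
\textbf{Proof plan for Lemma~\ref{lemma:RRR}.}

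My plan is to show that in the final iteration $N+1$ (where $\alpha_{N+1}=1$, so $Y_{N+1}=X_N$), the set $X_N$ consists of exactly one representative from each root of $\calF$, that every such representative lies in $\calS$ with $\calS_v=C_{sv}$, and that the filter defining $\calS^\ast$ retains all of them. Combining these facts will give $\langle\calS^\ast\rangle\supseteq X$.

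First I would apply Lemma~\ref{lemma:QQQ} at index $i=N$ together with the assumption $\Psi_N=R^\downarrow[Y_1,\ldots,Y_N]=\varnothing$, which by the stated convention gives $\Psi_N^-=\{R\}$. The lemma then yields $|\langle R^\downarrow\rangle\cap X_N|\le 1$ for each root $R$, while Lemma~\ref{lemma:PPP} guarantees $R\cap X_N\ne\varnothing$, and hence $\langle R^\downarrow\rangle\cap X_N=\{v_R\}$ for a unique $v_R\in R$ with $\lambda_N(v_R)=f(R)$. Since $X=\bigcup_{R\in{\tt Roots}(\calF)}\langle R^\downarrow\rangle$, the set $X_N$ consists of exactly one $v_R$ per root.

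In iteration $N+1$ we sort $X_N$ as $v_1,\ldots,v_\ell$ with $f(R_1)\ge\ldots\ge f(R_\ell)$ (where $v_k=v_{R_k}$) and call ${\tt OrderedCuts}_1$. The key combinatorial observation is that since distinct roots are incomparable in $\preceq$ and $C_{sv_k}\cap X=\langle R_k^\downarrow\rangle$, for any $i<k$ we have $v_i\in R_i\not\subseteq R_k^\downarrow$, so $\{v_1,\ldots,v_{k-1}\}\cap C_{sv_k}=\varnothing$. This means the minimum $s\{v_1,\ldots,v_{k-1}\}$-$v_k$ cut has cost $f(s,v_k)=f(R_k)$, with $C_{sv_k}$ itself being a minimizer. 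For any $v_i\in\calS$, letting $U=(v_1\ldots v_{i-1})\cap\calS$, the Definition~\ref{def:OC1'}(i) cut $\calS_{v_i}$ is a minimum $sU$-$v_i$ cut of cost $f(s,v_i)$. A submodularity argument on $\calS_{v_i}$ and $C_{sv_i}$ (both have cost $f(s,v_i)$, both $\calS_{v_i}\cap C_{sv_i}$ and $\calS_{v_i}\cup C_{sv_i}$ are $s$-$v_i$ cuts), combined with Assumption~\ref{as}, forces $\calS_{v_i}=C_{sv_i}$. Consequently, if $v_k\in\calS_{v_i}=C_{sv_i}$ for some $v_i\in\calS$ with $i\le k$, then $R_k\preceq R_i$, which for roots forces $R_k=R_i$ and $v_k=v_i\in\calS$; applying Definition~\ref{def:OC1'}(ii) then gives $X_N\subseteq\calS$.

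Finally, since $\calS=X_N$ and each $\calS_{v_k}=C_{sv_k}$ has cost $f(R_k)$, the filter in the definition of $\calS^\ast$ reads ${\tt cost}(\calS_{v_k})=f(R_k)\le f(R_i)={\tt cost}(\calS_{v_i})$ for $i\le k$, which holds by the sort; hence $\calS^\ast=\calS$. Therefore
\[
\langle\calS^\ast\rangle \;=\; \bigcup_{R\in{\tt Roots}(\calF)} C_{sv_R} \;\supseteq\; \bigcup_{R\in{\tt Roots}(\calF)} (C_{sv_R}\cap X) \;=\; \bigcup_{R\in{\tt Roots}(\calF)} \langle R^\downarrow\rangle \;=\; X,
\]
which is the desired containment. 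The main obstacle is the submodularity step that identifies $\calS_{v_i}$ with the latest min $s$-$v_i$ cut $C_{sv_i}$; without this, one cannot rule out the possibility that $\calS_{v_i}$ covers an element belonging to a different root, which would break both the ``$X_N\subseteq\calS$'' step and the final union argument.
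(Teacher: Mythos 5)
Your proof is correct and follows essentially the same route as the paper's: use Lemmas~\ref{lemma:PPP} and~\ref{lemma:QQQ} to conclude that $X_N$ contains exactly one representative $v_R$ per root $R$ with $\lambda_N(v_R)=f(R)$, show the last call to ${\tt OrderedCuts}_1$ must yield $\calS=X_N$ with $\calS_v=C_{sv}$, and observe the sort makes the filter trivial so $\calS^\ast=\calS$. One small simplification you could make: once you know $U\cap C_{sv_i}=\varnothing$, the cost argument already shows $\calS_{v_i}$ is a minimum $s$-$v_i$ cut, so Assumption~\ref{as} (uniqueness) directly gives $\calS_{v_i}=C_{sv_i}$ --- the auxiliary submodularity step is unnecessary.
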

\begin{proof}
Condition $R^\downarrow[Y_1,\ldots,Y_N]=\varnothing$ implies that $(R^\downarrow[Y_1,\ldots,Y_N])^-=\{R\}$.
Lemma~\ref{lemma:QQQ} gives that $|\langle R^\downarrow \rangle \cap X_N|\le 1$.
By Lemma~\ref{lemma:PPP} we conclude that $\langle R^\downarrow \rangle \cap X_N=\{v\}$ for some $v\in R$.
Furthermore, $\lambda_N(v)=f(R)$ by Lemma~\ref{lemma:QQQ}. 

To summarize, we showed that $X_N\subseteq\langle{\tt Roots}(\calF)\rangle$, and for each $R\in{\tt Roots}(\calF)$ there exists unique $v\in R\cap X_N$;
this $v$ satisfies $\lambda_N(v)=f(R)$. Let $\varphi=sv_1\ldots v_\ell$ be the ordering of nodes in $Y_N=X_N$ used in the algorithm,
and let $\calS={\tt OrderedCuts}_1(\varphi;G)$ be the computed ${\tt OC}_1$ tree for $(\varphi,G)$.
By the definition of ${\tt OC}_1$ tree, for each $v\in \calS$ set $\calS_v$ is a minimum $s\alpha$-$v$ cut for some $\alpha\subseteq X_N-\{v\}$.
We have $\alpha\cap C_{sv}=\varnothing$, and thus $\calS_v=C_{sv}$.
We must also have $\calS=X_N$ (if there exists $v\in X_N-\calS$ then $v\in \calS_u=C_{su}$ for some $u\in \calS$, $u\ne v$ - a contradiction).
Since $\lambda(v_1)\ge \ldots \ge \lambda(v_\ell)$, we will have ${\tt cost}(\calS_{v_1})\ge \ldots{\tt cost}(\calS_{v_\ell})$ and so line 9 outputs set $\calS^\ast=\calS$.
The claim follows.
\end{proof}

Our next goal is to construct sequence $\alpha_1,\ldots,\alpha_{N}$ such that the precondition of Lemma~\ref{lemma:RRR}
holds w.h.p.. For that we will use the argument from~\cite{LiPanigrahi:STOC21}. First, we show the following.

\begin{lemma}\label{lemma:expectation-reduction}
Let $\Psi$ be an $R$-rooted set for some $R\in\Phi$, and 
let $Z_0,\ldots,Z_d$ be a sequence of independent subsets generated via $Z_i\leftarrow {\tt RandomSubset}(X,2^{-i})$ where $d=\lfloor \log_2 |X| \rfloor$.
Then $\mathbb E|\langle\Psi[Z_0,\ldots,Z_d]\rangle|\le \left(1-\tfrac{\Theta(1)}{1+d}\right) |\langle \Psi\rangle|=\left(1-\tfrac{\Theta(1)}{\log|X|}\right) |\langle \Psi\rangle|$.
\end{lemma}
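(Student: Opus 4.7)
My plan is to show that for each $w\in\langle\Psi\rangle$, the probability $\mathbb P[w\notin\langle\Psi[Z_0,\ldots,Z_d]\rangle]$ is at least $\Omega(T_w/(1+d))$, where I set $N_w=\{v\in\langle\Psi\rangle:\br{w}\preceq\br{v}\}$, $k_v=|C_{sv}\cap\langle\Psi\rangle|$, and $T_w=\sum_{v\in N_w}1/k_v$. Double-counting gives the identity $\sum_{w\in\langle\Psi\rangle}T_w=\sum_v(1/k_v)\cdot|\{w:v\in N_w\}|=\sum_v(1/k_v)\cdot k_v=|\langle\Psi\rangle|$, so the per-element bound will immediately yield $|\langle\Psi\rangle|-\mathbb E|\langle\Psi[Z_0,\ldots,Z_d]\rangle|\ge\Omega(|\langle\Psi\rangle|/(1+d))$, which is the stated inequality.

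The per-element bound proceeds by a union bound across levels. For each $v\in\langle\Psi\rangle$ and $i\in\{0,\ldots,d\}$ let $Y_{v,i}$ be the indicator of $C_{sv}\cap\langle\Psi\rangle\cap Z_i=\{v\}$; the inclusion $\langle\Psi_{i-1}\rangle\subseteq\langle\Psi\rangle$ (for $\Psi_{i-1}=\Psi[Z_0,\ldots,Z_{i-1}]$) implies that $Y_{v,i}=1$ forces $\br{v}\notin\Psi_i$, so in particular $w$ is removed whenever $Y_{v,i}=1$ for some $v\in N_w$. Laminarity of $\{C_{sv}\}_{v\in N_w}$ makes the events $\{Y_{v,i}=1\}_{v\in N_w}$ pairwise disjoint at a fixed level, so with $p_i=2^{-i}$ and $E_{w,i}=\bigcup_v\{Y_{v,i}=1\}$ one has $\mathbb P[E_{w,i}]=\sum_{v\in N_w}p_i(1-p_i)^{k_v-1}$. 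Independence across levels then yields
\[
\mathbb P[w\text{ not removed}]\le\exp\!\Big(-\sum_{i=0}^d\mathbb P[E_{w,i}]\Big),
\]
and evaluating $\sum_i p_i(1-p_i)^{k_v-1}$ at the level $i^\ast=\lfloor\log_2 k_v\rfloor$ (where $p_{i^\ast}\in[1/k_v,2/k_v]$) gives a routine per-$v$ bound of $1/(e^2k_v)$, so $\sum_i\mathbb P[E_{w,i}]\ge T_w/e^2$ and $1-e^{-x}\ge x/(1+x)$ then produces $\mathbb P[w\text{ removed}]\ge T_w/(e^2+T_w)$.

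To finish I would bound $T_w=O(d)$ by telescoping. Ordering the ancestral classes of $\br{w}$ in $\Psi$ as $\br{w}=B_0\prec\ldots\prec B_t$ and using $k_{B_j}-k_{B_{j-1}}\ge|B_j|$ (at minimum the new class $B_j\in\Psi$ is adjoined when passing from $B_{j-1}^\downarrow$ to $B_j^\downarrow$), the inequality $1-x\le-\log x$ for $x\in(0,1]$ yields $|B_j|/k_{B_j}\le\log(k_{B_j}/k_{B_{j-1}})$, and summing telescopes to $T_w\le 1+\log k_{B_t}\le d+2$ since $k_{B_t}\le|X|\le 2^{d+1}$. Hence $\mathbb P[w\text{ removed}]\ge T_w/(e^2+d+2)=\Omega(T_w/(1+d))$, and summing over $w$ closes the argument. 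The main obstacle I anticipate is the bookkeeping to pass from the clean event ``$Z_i$ hits $v$ in the original $\Psi$'' (which enjoys independence across $i$ and disjointness within a fixed $i$) to actual removal in the sequential process $\Psi_0\supseteq\Psi_1\supseteq\ldots$; the laminarity of $\{C_{sv}\}_{v\in N_w}$ is the critical ingredient both there and in the telescoping estimate of $T_w$.
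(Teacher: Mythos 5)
Your proof is correct but takes a genuinely different route from the paper. The paper constructs a directed multigraph $P$ on $\langle\Psi\rangle$ (adding edges $(v,u)$ for all $u\in C_{sv}\cap\langle\Psi\rangle$ whenever $Z_i$ hits $v$ in $\Psi$), shows that each node has expected out-degree $\Omega(1)$ by picking the single level $i=\lfloor\log_2 n_v\rfloor$, bounds the in-degree of every node deterministically by $d+1$ via laminarity, and then combines: since any $u$ with an incoming edge drops out of $\Psi[Z_0,\ldots,Z_d]$, the expected number of removed elements is at least $\mathbb{E}[|P|]/(d+1)=\Omega(|\langle\Psi\rangle|/(d+1))$. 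You instead bound the removal probability of each fixed $w$ directly: at each level the events $\{C_{sv}\cap\langle\Psi\rangle\cap Z_i=\{v\}\}$ over the ancestral chain $v\in N_w$ are mutually exclusive (same laminarity/chain structure), and independence across levels gives $\mathbb{P}[w\text{ survives}]\le\exp(-\sum_i\mathbb{P}[E_{w,i}])$, which you lower-bound via the same ``pick the level $\lfloor\log_2 k_v\rfloor$'' trick and then convert to $\Omega(T_w/(1+d))$. The price of the per-element route is that you need the extra telescoping argument $T_w\le 1+\ln k_{B_t}=O(d)$ to make the conversion work --- that deterministic bound plays exactly the role that the paper's deterministic in-degree cap of $d+1$ plays in the global counting argument, but it has to be established separately rather than falling out of the level-by-level disjointness. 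Your double-counting identity $\sum_w T_w=|\langle\Psi\rangle|$ is the clean replacement for the paper's ``expected out-degree $\Omega(1)$ for each $v$, so $\Omega(|\langle\Psi\rangle|)$ edges''; the two identities are essentially transposes of one another. Both proofs rest on the same three structural ingredients (hit-in-$\Psi$ implies removal, per-level disjointness from laminarity, per-$v$ hitting probability $\Theta(1/k_v)$ at the right level), so neither is more powerful, but the paper's global count avoids the telescoping lemma and the product/exponential estimate, while your version yields an explicit per-element removal probability, which would be the right starting point if one wanted a concentration rather than expectation bound. Two small terminological slips worth fixing in a write-up: the passage $\mathbb{P}[w\text{ survives}]\le\prod_i(1-\mathbb{P}[E_{w,i}])\le\exp(-\sum_i\mathbb{P}[E_{w,i}])$ uses independence and $1-x\le e^{-x}$, not a union bound; and the sets $\{C_{sv}\}_{v\in N_w}$ are not merely laminar but form a nested chain, which is what makes the per-level events disjoint and the telescoping go through.
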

\begin{proof}
Define directed multigraph $(\langle\Psi\rangle,P)$ as follows: for every $i\in[0,d]$ and every $v\in \langle\Psi\rangle$ such that $Z_i$ hits $v$ in $\Psi$
add edges $\{(v,u)\::\:u\in C_{sv}\cap\langle\Psi\rangle \}$ to $P$. (We say that these edges are {\em at level~$i$}). We will show the following: \\
{\rm (a)}  The expected out-degree of each $v\in \langle \Psi\rangle$ is $\Omega(1)$. \\
{\rm (b)} The in-degree of each $u\in \langle \Psi\rangle$ is at most $d+1$. \\
{\rm (c)} If $(v,u)\in P$ then $u\notin \Psi[Z_0,\ldots,Z_d]$. 

Part (a) will imply that $P$ has $\Omega(\langle\Psi\rangle)$ edges in expectation;
combined with (b), this will mean that the expected number of nodes $u\in\langle \Psi\rangle$  with at least one incoming edge
is $\Omega(|\langle\Psi\rangle|/(d+1))$.
Part~(c) will then give the claim of the lemma. We now focus on proving (a)-(c).

\myparagraph{Part (a)}
Denote $n_v=|C_{sv}\cap\langle \Psi\rangle|$. There must exist $i\in[0,d]$ such that $n_v\in[2^i,2^{i+1}]$.
The probability that $Z_i$ hits $v$ in $\Psi$
equals $2^{-i}\cdot (1-2^{-i})^{n_v-1}=\Theta(2^{-i})$. If this event happens then $n_v$ edges of the form $(v,u),u\in C_{sv}\cap\langle\Psi\rangle$ are added to $P$.
Thus, the expected number of outgoing edges from $v$ at level $i$ is at least $\Theta(2^{-i})\cdot n_v=\Theta(1)$.

\myparagraph{Part (b)} 
Suppose that $P$ contains edges $(v,u)$, $(v',u)$ with $v\ne v'$ at the same level $i$. 
$Z_i$ hits both $v$ and $v'$ in $\Psi$, so $u\in C_{sv}\cap C_{sv'}$ and thus by laminarity
either $C_{sv}\subsetneq C_{sv'}$ or $C_{sv'}\subsetneq C_{sv}$.
Assume w.l.o.g.\ that $C_{sv'}\subsetneq C_{sv}$. 
Then $C_{sv}\cap \langle \Psi\rangle\cap Z_i$ contains at least two nodes ($v$ and $v'$), which is a contradiction.

\myparagraph{Part (c)}
Denote $\Psi_i=\Psi[Z_1,\ldots,Z_i]$ for $i\in[0,d]$,
and consider edge $(v,u)\in P$ at level $i$.
Let $A=\br{v}\in R^\downarrow$.
We claim that $A\notin \Psi_i$ (implying that $A\notin \Psi_d$ and thus $u\notin\langle \Psi_d\rangle$ since $u\in \langle A^\downarrow\rangle=C_{sv}\cap X$).
Indeed, assume that $A\in \Psi_i\subseteq\Psi_{i-1}$.
Since $Z_i$ hits $v$ in $\Psi$ we have $C_{sv}\cap \langle \Psi\rangle\cap Z_i=\{v\}$.
This means that $C_{sv}\cap \langle \Psi_{i-1}\rangle\cap Z_i=\{v\}$,
i.e.\ $Z_i$ hits $v$ in $\Psi_{i-1}$. This in turn implies that $A\notin \Psi_{i-1}[Z_i]= \Psi_i$, which is a contradiction.

\end{proof}

\begin{corollary}\label{cor:AHGASFAS}
Denote $n=|X|$.
For every fixed polynomial $p(n)$ there exists a sequence $\alpha_1,\ldots,\alpha_{N}$ with $N=\Theta(\log^3 n)$
such that the precondition of Lemma~\ref{lemma:RRR} holds with probability at least $1-\tfrac 1 {p(n)}$.
\end{corollary}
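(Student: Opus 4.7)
The plan is to apply Lemma~\ref{lemma:expectation-reduction} iteratively on independent blocks of samples and then finish with Markov's inequality and a union bound over the roots. Take $\alpha_1,\ldots,\alpha_N$ to be the concatenation of $K$ copies of the block $(2^0,2^{-1},\ldots,2^{-d})$ with $d=\lfloor\log_2 n\rfloor$, where $K=\Theta(\log^2 n)$ is chosen below; this yields $N=K(d+1)=\Theta(\log^3 n)$ as required.

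Fix a root $R\in{\tt Roots}(\calF)$ and, for $k\in\{0,1,\ldots,K\}$, write $\Psi^R_k=R^\downarrow[Y_1,\ldots,Y_{k(d+1)}]$ for the $R$-rooted subset of $\Phi$ remaining after $k$ blocks; in particular $\Psi^R_0=R^\downarrow$. The samples forming block $k$ are independent of the history and have the exact form required by Lemma~\ref{lemma:expectation-reduction}, so applying that lemma conditionally on $\Psi^R_{k-1}$ yields
\[
\mathbb E\bigl[\,|\langle \Psi^R_k\rangle|\,\bigm|\,\Psi^R_{k-1}\bigr]\;\le\;\Bigl(1-\tfrac{c}{\log n}\Bigr)\,|\langle \Psi^R_{k-1}\rangle|
\]
for an absolute $c>0$. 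Iterating (and using $|\langle R^\downarrow\rangle|\le n$) gives $\mathbb E\,|\langle \Psi^R_K\rangle|\le n\cdot(1-c/\log n)^K\le n\cdot e^{-cK/\log n}$. Because $\log p(n)=O(\log n)$ for polynomial $p$, choosing $K=\Theta(\log^2 n)$ with a sufficiently large constant forces this expectation below $1/(n\cdot p(n))$.

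Since $|\langle \Psi^R_K\rangle|$ is a non-negative integer and $\Psi^R_K=\varnothing$ iff $|\langle \Psi^R_K\rangle|=0$, Markov's inequality gives $\Pr[\Psi^R_K\ne\varnothing]\le 1/(n\cdot p(n))$. A union bound over the at most $n$ roots in ${\tt Roots}(\calF)\subseteq\Phi$ then yields, with probability at least $1-1/p(n)$, that $R^\downarrow[Y_1,\ldots,Y_N]=\varnothing$ for every $R\in{\tt Roots}(\calF)$, which is exactly the precondition of Lemma~\ref{lemma:RRR}.

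The main obstacle I foresee is justifying the conditional application of Lemma~\ref{lemma:expectation-reduction}: the lemma is stated with $Z_i\sim{\tt RandomSubset}(X,2^{-i})$ drawn from the full ground set $X$, whereas Algorithm~\ref{alg:SSPQ1} uses $Y_i\sim{\tt RandomSubset}(X_{i-1},\alpha_i)$ from a shrinking set $X_{i-1}$. The cleanest way to handle this is to couple $Y_i=Z_i\cap X_{i-1}$ for an independent $Z_i\sim{\tt RandomSubset}(X,\alpha_i)$: this produces the correct distribution, and one must then verify that the elements of $X\setminus X_{i-1}$ all lie in already-resolved classes (which are outside $\langle\Psi^R_{k-1}\rangle$ by the inductive structure underlying Lemma~\ref{lemma:QQQ}), so that the hit-count analysis in the proof of Lemma~\ref{lemma:expectation-reduction}, which only sees $\langle\Psi^R_{k-1}\rangle\cap Y_i$, is unaffected and the per-block shrinkage estimate carries over verbatim.
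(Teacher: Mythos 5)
Your proof takes essentially the same route as the paper: repeat the block $(2^0,2^{-1},\ldots,2^{-d})$ $K$ times, iterate Lemma~\ref{lemma:expectation-reduction} in expectation, then Markov's inequality plus a union bound over ${\tt Roots}(\calF)$. Two remarks: your $K=\Theta(\log^2 n)$ is the correct choice (it matches the prose in Section~\ref{sec:FixedSourcePartition1} and is forced by the stated $N=\Theta(\log^3 n)$), whereas the paper's proof of the corollary writes $K=\Theta(\log n)$, which appears to be a typo; and you explicitly flag the mismatch between $Y_i\sim{\tt RandomSubset}(X_{i-1},\alpha_i)$ in the algorithm and $Z_i\sim{\tt RandomSubset}(X,2^{-i})$ in Lemma~\ref{lemma:expectation-reduction} (a point the paper's proof passes over silently), though your proposed coupling fix is left at a sketch level just as the paper's argument is.
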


\begin{proof}
Let $\alpha_1,\ldots,\alpha_{N}$ be the sequence obtained by repeating the sequence used in Lemma~\ref{lemma:expectation-reduction}
(i.e.\ $2^{-0},2^{-1},\ldots,2^{-d}$) $K$ times, so that $N=K(d+1)$. If $K=\Theta(\log n)$ (with an appropriate constant)
then for any fixed $R\in\Phi$ we will have $\mathbb E[|\langle R^\downarrow[Y_1,\ldots,Y_N]\rangle|]\le \tfrac 1 {np(n)}$,
and thus $R^\downarrow[Y_1,\ldots,Y_N]=\varnothing$ with probability at least $1-\tfrac 1 {np(n)}$.
The precondition of Lemma~\ref{lemma:RRR}  has at most $n$ events of the form above,
so by the union bound they will all hold with probability at least $1-\tfrac 1 {p(n)}$.
\end{proof}
This concludes the proof of Theorem~\ref{th:FixedSourcePartition1}.

\subsection{Overall algorithm} \label{sec:select-s1}

We now come back to the problem of computing the GH tree.
Let us recall  the main computational problem at line 4 of Algorithm~\ref{alg:GH'}:
given graph $H=(V_H,E_H)$ and subset $X\subseteq V_H$ with $|X|\ge 2$,
we need to compute node $s\in X$ and a set of disjoint subsets $\Pi$
such that each $S\in\Pi$ is an $s$-$t$ cut in $H$ for some $t\in X$.
From now on we fix graph $H$ and set $X$.

Let us introduce a total order $\preceq$ on cuts $S\subseteq V_H$ as follows:
(i)~if ${\tt cost}(S)<{\tt cost}(S')$ then $S\prec S'$;
(ii)~if ${\tt cost}(S)={\tt cost}(S')$ and $|S\cap X|<|S'\cap X|$ then $S\prec S'$;
(iii)~otherwise use an arbitrary rule.
We write $u\sqsubset v$ for distinct nodes $u,v\in X$ if $C_{vu}\prec C_{uv}$.
Equivalently, $u\sqsubset v$ if $u\in S$ where $S$ is the minimum subset w.r.t.\ $\preceq$
that separates $u$ and $v$.
We write $u\sqsubseteq v$ if either $u\sqsubset v$ or $u=v$.
\begin{lemma}\label{lemma:total-order}
Relation $\sqsubseteq$ is a total order on $X$.
\end{lemma}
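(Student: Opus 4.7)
I will verify the three defining properties of a total order on $X$ in turn: reflexivity, antisymmetry combined with totality, and transitivity. Reflexivity is immediate ($u\sqsubseteq u$ by convention, since $\sqsubset$ is only defined for distinct nodes). For antisymmetry and totality, fix distinct $u,v\in X$. The sets $C_{uv}$ and $C_{vu}$ are distinct: $V-C_{uv}$ is a minimum $u$-$v$ cut containing $u$, so by minimality of $C_{vu}$ we have $C_{vu}\subseteq V-C_{uv}$, hence $C_{uv}\cap C_{vu}=\varnothing$. Since $\preceq$ is a total order on subsets of $V$ (rule (iii) removes any remaining ambiguity), exactly one of $C_{vu}\prec C_{uv}$ or $C_{uv}\prec C_{vu}$ holds, which is precisely the statement that exactly one of $u\sqsubset v$ and $v\sqsubset u$ holds, yielding trichotomy.

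Transitivity is the main obstacle. Suppose $u\sqsubset v$ and $v\sqsubset w$; I want to prove $u\sqsubset w$. Using the ``equivalently'' characterization from the excerpt, let $S$, $T$, $R$ be the $\preceq$-minimum subsets separating the pairs $(u,v)$, $(v,w)$, $(u,w)$ respectively; the hypotheses give $u\in S$, $v\notin S$, $v\in T$, $w\notin T$, and the goal is $u\in R$. Assume for contradiction that $u\notin R$ and $w\in R$. I will exploit two standard inequalities valid for undirected-graph cut functions: submodularity ${\tt cost}(P)+{\tt cost}(Q)\ge{\tt cost}(P\cap Q)+{\tt cost}(P\cup Q)$ and posimodularity ${\tt cost}(P)+{\tt cost}(Q)\ge{\tt cost}(P\setminus Q)+{\tt cost}(Q\setminus P)$. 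As a first application, $T\setminus S$ contains $v$ but neither $u$ nor $w$, so it is a $u$-$v$ cut with $v$ on the ``$v$'' side, giving ${\tt cost}(T\setminus S)\ge{\tt cost}(S)$ and hence ${\tt cost}(S\setminus T)\le{\tt cost}(T)$ by posimodularity. Analogous inequalities applied to the pairs $(S,R)$ and $(T,R)$ identify various Boolean combinations as $(u,v)$-, $(v,w)$-, or $(u,w)$-cuts whose costs are controlled from above by one of ${\tt cost}(S), {\tt cost}(T), {\tt cost}(R)$. A short case analysis on the memberships of $u$ in $T$, of $w$ in $S$, and of $v$ in $R$ then isolates one constructed set that separates the appropriate pair and is $\preceq$-strictly smaller than the corresponding minimizer $S$, $T$, or $R$, contradicting its minimality.

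The bookkeeping with the composite order $\preceq$ (cost first, then $|\cdot\cap X|$, then rule (iii)) is the delicate part. When a cost inequality is only weak, I must check the second coordinate: this uses the conservation identities $|(P\cap Q)\cap X|+|(P\cup Q)\cap X|=|P\cap X|+|Q\cap X|$ and $|(P\setminus Q)\cap X|+|(Q\setminus P)\cap X|=|P\cap X|+|Q\cap X|-2|P\cap Q\cap X|$, which convert equality in submodularity/posimodularity into additive relations on $|\cdot\cap X|$ and force at least one Boolean combination to have strictly smaller $|\cdot\cap X|$ than the relevant minimizer. The residual rule-(iii) tiebreak is harmless because it is fixed a priori as a total order on cuts; whenever both the cost and the $|\cdot\cap X|$ value coincide with a minimizer $S$, $T$, or $R$, the conservation identities above force the constructed set to equal the minimizer, so no inconsistency with (iii) arises. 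Completing this case analysis delivers $u\in R$, i.e.\ $u\sqsubset w$, and finishes the proof that $\sqsubseteq$ is a total order.
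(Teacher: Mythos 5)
Your proof diverges substantially from the paper's. The paper reduces totality to the absence of $3$-cycles in the (clearly trichotomous) relation $\sqsubset$: it picks, w.l.o.g., $C_{ab}$ to be the $\preceq$-minimum among $\{C_{xy}:x,y\in\{a,b,c\}\text{ distinct}\}$, and observes that whichever side of $C_{ab}$ the third node $c$ lies on, $C_{ab}$ is already an $a$-$c$ or $c$-$b$ cut that ties with the corresponding minimal minimum cut, which at once rules out a cycle with no submodularity or case-chasing. You instead attempt a direct transitivity proof that juggles three minimizers $S,T,R$ at once, invoking submodularity, posimodularity, and conservation identities for $|\cdot\cap X|$.

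The genuine gap is that your transitivity step is a plan, not a proof. You reduce to ``a short case analysis on the memberships of $u$ in $T$, of $w$ in $S$, and of $v$ in $R$,'' assert that some Boolean combination comes out $\preceq$-strictly smaller than the relevant minimizer, and claim the conservation identities handle equality cases; but none of the cases are actually carried out, none of the claimed strict inequalities are verified, and the interaction with the tie-break rule (iii) -- which you yourself call ``the delicate part'' -- is only gestured at. As written, I cannot verify that the contradiction closes in all eight memberships; the fact that you must track three separating cuts simultaneously, rather than anchoring everything to a single globally $\preceq$-minimal cut as the paper does, is precisely what makes the bookkeeping nontrivial and why the omission matters. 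Your reflexivity and trichotomy observations (including $C_{uv}\cap C_{vu}=\varnothing$) are fine; please either complete the case analysis in detail or switch to the paper's non-cyclicity route.
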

\begin{proof}
Clearly, $\sqsubseteq$ is reflexive and antisymmetric.
It thus suffices to show that for any distinct $a,b,c\in X$ relation $\sqsubseteq$ is not cyclic on $a,b,c$.
We can assume w.l.o.g.\  that  $C_{ab}$ is the smallest subset w.r.t.\ $\preceq$ among $C_{xy}$ for distinct $x,y\in \{a,b,c\}$.
Note that $b\sqsubset a$.
If $c\in C_{ab}$ then $C_{ab}$ is an $a$-$c$ cut with $C_{ab}\preceq C_{ac}$, implying $C_{ac}=C_{ab}$ and $c\sqsubset a$.
If $c\notin C_{ab}$ then $C_{ab}$ is an $c$-$b$ cut with $C_{ab}\preceq C_{cb}$, implying $C_{cb}=C_{ab}$ and $b\sqsubset c$.
In both cases $\sqsubseteq$ is not cyclic on $a,b,c$.
\end{proof}

Total order $\sqsubseteq$ will play an important role in the algorithm.
We will also need to make sure that Assumption~\ref{as} holds.
For that we can use a standard technique of adding a small random perturbation
to the edge weights. The correctness of this technique follows from the Isolation Lemma of~\cite{MVV:87} (see e.g.\ \cite[Proposition 3.14]{Abboud:FOCS20}).
\begin{proposition}\label{prop:GASF}
One can add random polynomially-bounded values to the edge weights in $H$
such that any GH tree in the new graph $H'$ is also a valid GH tree for $H$,
and furthermore with high probability $H'$ has a unique GH tree.
\end{proposition}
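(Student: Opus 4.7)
The plan is to apply the standard Isolation Lemma construction. After rescaling $H$ so that its edge weights are positive integers, set $w'(e) = M \cdot w(e) + r_e$, where $M$ is a sufficiently large polynomial in $|V_H|,|E_H|$ and the $r_e$ are independent uniform samples from $\{1, \ldots, K\}$ with $K$ polynomial in $|V_H|$. Two statements need to be verified: (a) any GH tree of $H'$ is a GH tree of $H$, and (b) with high probability $H'$ has a unique GH tree.

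For (a), choose $M > |E_H| \cdot K$. For any cut $U \subseteq V_H$, the cost decomposes as ${\tt cost}_{H'}(U) = M \cdot {\tt cost}_H(U) + \rho(U)$, where the perturbation $\rho(U) = \sum_{e \in \delta U} r_e$ lies in $[0, |E_H|\cdot K) \subset [0,M)$. If ${\tt cost}_H(U_1) < {\tt cost}_H(U_2)$ then the integer difference is at least $1$, so ${\tt cost}_{H'}(U_1) < {\tt cost}_{H'}(U_2)$; therefore every minimum $s$-$t$ cut in $H'$ is also a minimum $s$-$t$ cut in $H$, and $f_H(s,t) = \lfloor f_{H'}(s,t)/M \rfloor$. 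Consequently, any spanning tree $\calT$ whose tree-edges realize minimum $s$-$t$ cuts in $H'$ with weights matching $f_{H'}(s,t)$ also realizes minimum $s$-$t$ cuts in $H$ with weights matching $f_H(s,t)$, so $\calT$ is a valid GH tree of $H$.

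For (b), I would apply the Isolation Lemma of~\cite{MVV:87} pair by pair. For each unordered $\{s,t\}$, the family of edge boundaries of minimum $s$-$t$ cuts of $H$ has a unique $\rho$-minimizer with probability at least $1 - |E_H|/K$; by the analysis in (a) this minimizer is exactly the unique minimum $s$-$t$ cut in $H'$. Taking $K = n^c$ for large $c$ and union-bounding over the $\binom{|V_H|}{2}$ pairs yields that, with high probability, every pair has a unique minimum $s$-$t$ cut in $H'$. To upgrade pairwise uniqueness to uniqueness of the GH tree I would apply the Isolation Lemma once more to the family of all distinct cuts of $H$, which with high probability makes ${\tt cost}_{H'}$ injective on this family; then each tree-edge of every GH tree of $H'$ is forced to equal the unique latest minimum cut separating the associated terminals, and the tree structure is fully determined. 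Alternatively, one may cite \cite[Proposition~3.14]{Abboud:FOCS20} as a black box.

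The main obstacle I anticipate is precisely this last step. Uniqueness of every pairwise minimum cut does not by itself imply uniqueness of the GH tree --- e.g.\ in the complete graph with unit weights every pairwise minimum cut is unique, yet every spanning tree is a GH tree --- so it is necessary to break ties among \emph{all} cuts of $H$, not merely among minimum $s$-$t$ cuts for each separate pair. The Isolation Lemma handles this cleanly at the cost of enlarging $K$ (and hence $M$), and both remain polynomial in $|V_H|$.
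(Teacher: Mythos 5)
The paper itself gives no proof of this proposition: it invokes the Isolation Lemma of~\cite{MVV:87} and points to \cite[Proposition~3.14]{Abboud:FOCS20}, which is exactly the ``cite it as a black box'' route you offer at the end. Your attempt to spell the argument out is therefore a genuine addition, and two parts of it are sound: the scaling $w'(e)=M\,w(e)+r_e$ with $M>|E_H|\cdot K$ dominates the perturbation so that every minimum cut of $H'$ is a minimum cut of $H$ (giving part~(a)), and applying the Isolation Lemma for each unordered pair $\{s,t\}$ to the family of edge boundaries of minimum $s$-$t$ cuts in $H$, then union-bounding over $\binom{|V_H|}{2}$ pairs, gives uniqueness of every minimum $s$-$t$ cut in $H'$ with high probability.

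The final step is where your argument goes wrong. First, your motivating example is incorrect: in $K_n$ ($n\ge 3$) with unit weights the minimum $s$-$t$ cut value is $n-1$ and is attained by \emph{both} $\{s\}$ and $\{t\}$, so pairwise minimum cuts are \emph{not} unique there, and the example does not separate pairwise uniqueness from uniqueness of the GH tree. In fact, uniqueness of the minimum $s$-$t$ cut for every pair already determines the GH tree, so no second perturbation round is needed; this is the (non-obvious) content the paper delegates to \cite[Proposition~3.14]{Abboud:FOCS20}, and if you want a self-contained proof you should prove that implication rather than look for a stronger perturbation guarantee. Second, the remedy you propose cannot work as stated: the Isolation Lemma asserts a unique \emph{minimizer} of the random weight over a family, not injectivity of the weight function on that family, and with perturbations polynomially bounded in $|V_H|$ the function ${\tt cost}_{H'}$ takes only polynomially many values, so by pigeonhole it cannot be injective on the exponentially many cuts of $H$. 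If you ever did need pairwise-distinct costs, you would have to restrict to a polynomial-size family (e.g.\ the at most $\binom{|V_H|}{2}$ cuts $C_{st}$ isolated in the first step) and bound collisions pairwise; ``apply the Isolation Lemma to all cuts'' is both the wrong tool and the wrong family.
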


We are now ready to describe the algorithm (see below);  the sequence of probabilities $\beta_1,\ldots,\beta_M$ will be specified later.

\begin{algorithm}[H]
\setcounter{AlgoLine}{0}

  \DontPrintSemicolon
    \While{\tt true}
    {
       pick arbitrary $s\in X$  \\
    	apply random perturbation to $H$ as in Proposition~\ref{prop:GASF} to get graph $H'$ \\
	\For{$i=1,\ldots,M$}
	{
	    	sample $Y\leftarrow{\tt RandomSubset}(X-\{s\};\beta_i)$ \\
	    	call $\calS\leftarrow {\tt FixedSourcePartition}_1(s,Y;H')$  \\
	    	if exists $v\in\calS$ with $V_H-\calS_v\prec \calS_v$ then pick arbitrary such $v$ and update $s:=v$ \\
		else if $X-\{s\}\subseteq\langle\calS\rangle$ then terminate and return $(s,\{\calS_v\::\:v\in\calS\})$
	}
    }
      \caption{Line 4 of Algorithm~\ref{alg:GH'}. 
      }\label{alg:select-s1}
\end{algorithm}

\begin{lemma}\label{lemma:GAKHDGAG}
Suppose that 
$(\beta_1,\ldots,\beta_M)=(
\underbrace{2^{-d},\ldots,2^{-d}}_{K\mbox{\scriptsize\em~times}},\ldots,
\underbrace{2^{-2},\ldots,2^{-2}}_{K\mbox{\scriptsize\em~times}},
\underbrace{2^{-1},\ldots,2^{-1}}_{K\mbox{\scriptsize\em~times}},
1)$ where $d=\lfloor\log_2|X|\rfloor$ and $K=\Theta(\log \log n)$ (with an appropriate constant).
Then each run of Algorithm~\ref{alg:select-s1} (i.e.\ lines 2-8) terminates with probability $\Omega(1)$.
\end{lemma}
\begin{proof}
We can assume that the modification of $H$ at line 3 produces graph $H'$ with a unique GH tree (this happens with probability $\Omega(1)$),
and so Assumption~\ref{as} always holds. We have, in particular, $C_{uv}=V_H-C_{vu}$ for any $u,v\in X$.

Denote $n=|X|$, and for $s\in X$ let ${\tt rank}(s)\in[n]$ be the rank of $s$ according to the reverse of total order $\sqsubseteq$
(so that the largest node $s$ w.r.t.\ $\sqsubseteq$ satisfies ${\tt rank}(s)=1$).
Clearly, if node $s$ is updated at line 7 then we must have $s\sqsubset v$, i.e.\ every such update strictly decreases the rank of $s$.

For $p\in[d]$ let ``$p$-stage'' be the sequence of iterations $i$ that use $\beta_i=2^{-p}$.
We say the $p$-stage is successful if at the end of this stage node $s$ satisfies ${\tt rank}(s)\le 2^{p-1}$ (or if the algorithm has terminated earlier).
We claim that the $p$-stage is successful with probability at least $1-\tfrac{1}{2d}$ assuming that either $p=d$ or $p\in[d-1]$ and $(p+1)$-stage is successful.
Indeed, consider a single iteration (lines 5-8) with $\beta_i=2^{-p}$.
If ${\tt rank}(s)\le 2^{p-1}$ then the claim holds; we can thus assume that ${\tt rank}(s)\in (2^{p-1},2^p]$.
Let $Y\subseteq X-\{s\}$ be the set sampled at this iteration.
The following events will hold jointly with probability at least~$1-\gamma$ for some constant $\gamma>0$:
$Y\cap \{v\in X\::\:{\tt rank}(v)\le 2^{p-1}\}\ne\varnothing$ and
$Y\cap \{v\in X\::\:2^{p-1}<{\tt rank}(v)\le 2^{p}\}=\varnothing$.
The first event implies that this iteration will update $s$,
and the second event implies that the new $s$ will satisfy ${\tt rank}(s)\le 2^{p-1}$.
The $p$-stage has $K$ iterations, and thus it will be successful with probability at least $1-\gamma^K=1-\gamma^{\Theta(\log\log n)}\ge 1-\tfrac{1}{2d}$
with an appropriate constant in the $\Theta(\cdot)$ notation.

There are $d$ such stages in total, so all of them will be successful with probability at least $\tfrac{1}{2}$.
If this happens, $s$ becomes the largest node in $X$ w.r.t.\ $\sqsubseteq$,
and so the last iteration (with $\beta_M=1$) will produce named subpartition $\calS$ with $X-\{s\}\subseteq\langle\calS\rangle$.
\end{proof}

Next, we analyze the overall complexity of Algorithm~\ref{alg:GH'}.
Each supernode $X$ that appears during the execution of  can be assigned a depth using natural rules:
(i) the initial supernode $V$ is at depth $0$;
(ii) if supernode $X$ at depth $d$ is split into supernodes $X_0,\ldots,X_k$ then the latter supernodes are assigned depth $d+1$.
By construction, in the latter case we have $|X_i|\le |X|/2$ for $i\in[0,k]$.
Therefore, supernode $X$ at depth $d$ satisfies $|X|\le n\cdot 2^{-d}$, and so the maximum depth is $O(\log n)$.
We will need the following result from~\cite{Abboud:FOCS20}.
\begin{lemma}[{\cite[Lemma 3.12]{Abboud:FOCS20}}]
Let $X_1,\ldots,X_{k(d)}$ be the supernodes at depth $d$,
and let $H_1,\ldots,H_{k(d)}$ be the corresponding auxiliary graphs.
Then the total number of edges in $H_1,\ldots,H_{k(d)}$ is $O(m)$.
\end{lemma}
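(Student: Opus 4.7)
The plan is to bound $\sum_i |E(H_i)|$ by charging each edge of $G$ to the auxiliary graphs that contain it, and then controlling the total charge via the tree structure of the partition tree at depth $d$. First I would assume a convenient processing order, namely breadth-first, so that when a depth-$d$ supernode $X_i$ is processed the current partition tree has (refinements of) exactly the depth-$d$ supernodes as its nodes. A quick check shows that further splits of \emph{other} depth-$d$ supernodes do not affect $H_i$, since they only reorganize vertices inside the subtree being contracted around $X_i$. Hence $H_i$ depends only on the stable depth-$d$ partition tree $\calT_d$ on $\calP_d = \{X_1,\ldots,X_{k(d)}\}$.

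Second, I would rewrite the total edge count by swapping the order of summation:
\[
\sum_{i=1}^{k(d)} |E(H_i)| \;=\; \sum_{uv \in E(G)} N_d(uv), \qquad N_d(uv) = |\{\,i : u,v \text{ lie in distinct nodes of } H_i\,\}|.
\]
For $u \in X_a$ and $v \in X_b$ a case analysis on whether $X_i$ equals $X_a$, $X_b$, lies on the $X_a$--$X_b$ path in $\calT_d$, or lies off the path shows $N_d(uv) = \mathrm{dist}_{\calT_d}(X_a,X_b) + 1$ (interpreting $\mathrm{dist}=0$ when $X_a=X_b$). Applying the standard tree identity $\sum_{uv} \mathrm{dist}_{\calT_d}(X_a,X_b) = \sum_{XY \in E(\calT_d)} c(XY)$, with $c(XY)$ the number of $G$-edges crossing the cut induced by removing the tree-edge $XY$, reduces the lemma to proving $m + \sum_{XY \in E(\calT_d)} c(XY) = O(m)$.

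Third, I would establish $\sum_{XY \in E(\calT_d)} c(XY) = O(m)$ by induction on $d$. The base case $d=0$ is trivial. For the inductive step, $\calT_d$ is obtained from $\calT_{d-1}$ by ``blowing up'' each depth-$(d-1)$ supernode $X$ into the small sub-tree $\calT_X$ on its children produced by the split of $X$. The tree-edges of $\calT_d$ thus split into inherited edges (to which the induction hypothesis applies directly, since their cross-counts only increase by edges that were previously inside some $X$ but now lie across $\calT_X$, controlled by $|E(H_X)|$) and new within-cluster edges of $\calT_X$. Each within-cluster edge corresponds to a cut in the auxiliary graph $H_X$ used to split $X$, so a $G$-edge is charged here only if its image in $H_X$ crosses that cluster cut; summed over the cluster tree-edges this contributes at most $O(|E(H_X)|)$, and summing over $X$ at depth $d-1$ gives $O(m)$ by the induction hypothesis at depth $d-1$.

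The main obstacle is precisely this last step: a priori each new within-cluster tree-edge could charge up to $|E(H_X)|$ in cross-count, which would introduce a factor proportional to the fan-out of the split. The right viewpoint is to sum over $G$-edges instead of over tree-edges, using that a $G$-edge is charged inside $\calT_X$ only once per cluster tree-edge separating its two $H_X$-images, and that the set of such separations telescopes along the cluster sub-tree so each $H_X$-edge contributes only its $\calT_X$-distance. This reduces the cluster contribution to the same type of tree-distance sum as in step two, but one level shallower, and closes the induction.
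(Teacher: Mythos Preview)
The paper does not give its own proof of this lemma: it is simply cited from \cite[Lemma~3.12]{Abboud:FOCS20}. So there is nothing in the present paper to compare your argument against. That said, your proposal has a genuine gap.

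Your first two steps are fine: with BFS processing, $\calT_d$ is well-defined, and the identity
\[
\sum_i |E(H_i)| \;=\; m \;+\; \underbrace{\sum_{XY\in E(\calT_d)} c(XY)}_{=:\,S_d}
\]
is correct. The problem is your induction on $d$ to show $S_d=O(m)$. Decomposing $\calT_d$ into inherited edges and within-cluster edges gives $S_d = S_{d-1} + \sum_X T_X$, since inherited edges keep exactly the same cut (hence the same cross-count). Your ``telescoping'' observation then yields, for each depth-$(d-1)$ supernode~$X$ split via disjoint sets (a star),
\[
T_X \;=\; \sum_{e\in\calT_X} c_{H_X}(e)\;\le\; 2\,|E(H_X)|.
\]
But now summing over $X$ uses precisely the quantity the lemma bounds at depth $d-1$:
\[
\sum_X T_X \;\le\; 2\sum_X |E(H_X)| \;=\; 2\bigl(m + S_{d-1}\bigr),
\]
so the recurrence you actually obtain is $S_d \le 3S_{d-1} + 2m$. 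This solves to $S_d = \Theta(3^d)\cdot m$, not $O(m)$; even with $d=O(\log n)$ it gives a polynomial blow-up. The phrase ``one level shallower \ldots\ closes the induction'' is exactly where the argument fails: reducing the cluster contribution to the same type of sum for $H_X$ only feeds the inductive bound back into itself with a constant factor $>1$, and that factor compounds over the $\Theta(\log n)$ levels.

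In short, the approach of bounding $S_d$ by a self-referential recurrence cannot work without an additional structural ingredient that prevents the multiplicative blow-up. The original \cite{Abboud:FOCS20} argument does not proceed by this kind of depth induction; you will need a different charging scheme (one that does not reuse the depth-$(d-1)$ total as a black box) to get the $O(m)$ bound.
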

An analogous result also holds for the number of nodes.
\begin{lemma}
Let $X_1,\ldots,X_{k(d)}$ be the supernodes at depth $d$,
and let $H_1,\ldots,H_{k(d)}$ be the corresponding auxiliary graphs.
Then the total number of nodes in $H_1,\ldots,H_{k(d)}$ is $O(n)$.
\end{lemma}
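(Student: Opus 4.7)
The plan is to bound $\sum_i |V_{H_i}|$ by directly unpacking the definition of the auxiliary graph and then exploiting the tree structure of the current partition tree at depth $d$.

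First I would observe that for a supernode $X$ of the current partition tree $\calT$, the construction $H = G[\calT,X]$ contracts, for each tree-edge $XY \in E_\calT$, the entire component of $\calT - X$ containing $Y$ to a single vertex $v_Y$. Hence $V_H = X \cup \{v_Y : XY \in E_\calT\}$, which gives the identity
\[
|V_{H}| \;=\; |X| + \deg_\calT(X).
\]
Let $\calT_d$ denote the partition tree \emph{immediately after} all iterations that operate on supernodes of depth $\le d$ have finished. Its supernode set $\calV_d$ contains the $k(d)$ supernodes $X_1,\ldots,X_{k(d)}$ that live at depth exactly $d$, together with any singleton supernodes that froze at strictly smaller depth (these do not contribute $H_i$'s to the sum of interest).

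Next I would sum the identity over the $X_i$'s. Since the $X_i$ are disjoint subsets of $V$, we get $\sum_{i=1}^{k(d)} |X_i| \le n$. For the degree term, I use that $\calT_d$ is a tree on $\calV_d$, so $\sum_{X \in \calV_d} \deg_{\calT_d}(X) = 2(|\calV_d|-1)$, and in particular $\sum_{i=1}^{k(d)} \deg_{\calT_d}(X_i) \le 2(|\calV_d|-1)$. It therefore suffices to bound $|\calV_d|$ by $O(n)$. This is immediate because every split operation replaces one supernode by at least two, and the final Gomory--Hu tree has exactly $n$ singleton supernodes; at any intermediate point (in particular after depth $d$) the number of supernodes is at most $n$. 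Combining, $\sum_i |V_{H_i}| \le n + 2n = O(n)$.

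The main subtlety is matching the snapshot of $\calT$ used to build each $H_i$ with a single consistent tree to which the degree-sum argument can be applied. The cleanest way is to note that if $H_i$ is built from $\calT_i$ at the moment $X_i$ is processed, then $\deg_{\calT_i}(X_i) = \deg_{\calT_d}(X_i)$: indeed, splits at earlier depths only refine other supernodes, while splits at depth $d$ performed before $X_i$ is touched affect neither $X_i$ nor its incident edges in $\calT$, because by Algorithm~\ref{alg:GH'} line 7 each such edge has one endpoint fixed at some supernode other than $X_i$. Thus $|V_{H_i}| = |X_i| + \deg_{\calT_d}(X_i)$, and the sum bound goes through as above. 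This is the only step that needs care; once this identification is in hand the counting argument is purely combinatorial.
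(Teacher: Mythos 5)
Your identity $|V_H| = |X| + \deg_\calT(X)$ and the degree-sum bound are exactly the right mechanism, and the overall bound $n + 2(n-1) = O(n)$ is the same count the paper obtains by a charging argument (each contracted node is charged to an endpoint of a final Gomory--Hu tree edge, and there are $n-1$ such edges). So the core idea matches the paper's.

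The gap is in how you define $\calT_d$. You describe it as ``the partition tree immediately after all iterations that operate on supernodes of depth $\le d$ have finished'' and assert that its supernode set is the depth-$d$ supernodes plus earlier-frozen singletons. But Algorithm~\ref{alg:GH'} picks supernodes in an \emph{arbitrary} order, not breadth-first: it can descend along one branch to depth $d+5$ before ever touching a depth-$2$ supernode on another branch. Consequently, at the moment the last supernode of depth $< d$ is processed, other depth-$d$ supernodes may already have been split, and there need not exist any running-time snapshot whose supernode set is exactly $\{X_1,\ldots,X_{k(d)}\}$ together with frozen singletons. (There is also an off-by-one in ``$\le d$'', which should be ``$<d$'', but that is cosmetic; the ordering issue is the real problem, and it also undermines the justification ``splits at depth $d$ performed before $X_i$ is touched'', which presupposes that all shallower splits happen first.)

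The fix is small. Define $\calT_d$ not as a snapshot but abstractly as the coarsening of the final GH tree $\calT_{\mathrm{fin}}$ by the depth-$d$ supernodes and earlier-frozen singletons; since each supernode is a connected subtree of $\calT_{\mathrm{fin}}$ throughout the algorithm, this quotient is indeed a tree on at most $n$ vertices. Then argue the degree invariant directly: when any supernode $B$ is split, edges formerly incident to $B$ are reattached to children of $B$ and new edges are created only between those children, so for every supernode $A$ distinct from $B$ the number of tree edges leaving $A$ is unchanged. Iterating from the moment $X_i$ is processed to termination shows $\deg_{\calT_i}(X_i)$ equals the number of $\calT_{\mathrm{fin}}$-edges leaving $X_i$, which is $\deg_{\calT_d}(X_i)$. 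After this correction the rest of your counting goes through unchanged. The paper sidesteps the need for a common snapshot entirely by tracking each edge $XY$ of $\calT_i$ forward to the final GH tree edge it becomes and charging $v_Y$ to its $X$-side endpoint; that is essentially the same invariant, just applied one edge at a time instead of as a degree sum.
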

\begin{proof}
Consider graph $H=H_i$ corresponding to supernode $X=X_i$.
Recall that $V_H=X\cup\{v_Y\::\:XY\in\calT\}$ where $\calT$ is the partition tree at the moment when $X$ is processed.
Clearly, sets $X_1,\ldots,X_{k(d)}$ are disjoint, so we need to bound
the total number of nodes of $H$ due to edges $XY\in\calT$ (let us call them ``contracted nodes'').
After $X$ is processed, each $XY$ is transformed to an edge $X'Y$ for some supernode $X'\subset X$,
other edges of $\calT$ are kept intact (or ``transformed to themselves''), and some new edges are added to $\calT$.
Let us track how edge $XY$ transforms during the algorithm; eventually, it becomes an edge $xy$ of the final GH tree, with $x\in X$ and $y\in Y$.
Let us charge node $v_Y$ in graph $H$ to node $x$. Clearly, no other node in graphs $H_1,\ldots,H_{k(d)}$ is charged to $x$.
The number of edges in the final GH tree is $n-1$, and thus the total of contracted nodes in $H_1,\ldots,H_{k(d)}$ is at most $2(n-1)$.
\end{proof}


By putting all results together, we obtain 
\begin{theorem}\label{th:GNNAHGDHDJADHGLKDGALDHGADKHADFAAHGLKADHF}
The expected complexity of Algorithm~\ref{alg:GH'} with line 4 implemented as in Algorithm~\ref{alg:select-s1} is
$O((t_{{\tt OC}_1}(n,m))+n\log n)\cdot \log^5 n \cdot \log\log n)$ where $t_{{\tt OC}_1}(n,m)$ is the complexity of procedure ${\tt OrderedCuts}_1(\cdot;G)$ on a graph with $O(n)$ nodes and $O(m)$ edges.
\end{theorem}

\section{Computing Gomory-Hu tree via {\tt OrderedCuts}}\label{sec:alg}
In this section we show how to compute  GH tree via $\tilde O(1)$ calls to (weak) ${\tt OrderedCuts}$.
Note that ${\tt OC}_1$ tree and weak {\tt OrderedCuts} for given input $(\varphi,G)$ are incomparable.
In particular, cuts contained in an ${\tt OC}_1$ tree may not give a valid output of weak ${\tt OrderedCuts}$ (as defined in Section~\ref{sec:intro}),
while the output of weak ${\tt OrderedCuts}$ comes without a tree structure.
Note, if we have a procedure that computes an ${\tt OC}$ tree for $\varphi$ then we can easily compute both ${\tt OC}_1$ tree and valid output of weak {\tt OrderedCuts}.
In that case the algorithm in this section will be more efficient than the one in the previous section by a factor of $(\log n)\cdot (\log \log n)$.

\subsection{Procedure ${\tt CertifiedOrderedCuts}$}

Consider undirected weighted graph $G=(V,E,w)$ and a sequence of distinct nodes $s,v_1,\ldots,v_\ell$ in~$V$.
First, we describe a procedure that computes some $s$-$v_i$ cut for each $i\in\ell$, and certifies
some of these cuts as ``true'' minimum $s$-$v_i$ cuts.

\begin{algorithm}[H]
\setcounter{AlgoLine}{0}
  \DontPrintSemicolon
\SetNoFillComment
  call $\lambda\leftarrow{\tt OrderedCuts}(s,v_1,\ldots,v_\ell;G)$ \\
  compute $Y^\ast=\{v_k\::\: \lambda(v_k)\le \min_{i\in[k-1]} \lambda(v_i), k\in[\ell]\}$ \\

  		call ${\tt IsolatingCuts}(s,Y^\ast;G)$ (cf. Theorem~\ref{th:IsolatingCuts}) to get cut $S_v\subseteq V-\{s\}$ for each $v\in Y^\ast$\!\!\!\!\!\!\!\!\!\!\! \\
  		define $\calC^\ast=\{S_v\::\:v\in Y^\ast, ~{\tt cost}(S_v)=\lambda(v)\}$ \\
  return $(\lambda,\calC^\ast)$
      \caption{${\tt CertifiedOrderedCuts}(s,v_1,\ldots,v_\ell;G)$. 
      }\label{alg:SC-via-OC}
\end{algorithm}

\begin{lemma}\label{lemma:SC}
Let  $(\lambda,\calC^\ast)$ be the output of ${\tt CertifiedOrderedCuts}(s,v_1,\ldots,v_\ell;G)$, and denote $Y=\{v_1,\ldots,v_\ell\}$. Pair $(\lambda,\calC^\ast)$ has the following properties. \\
{\rm (a)} Subsets in $\calC^\ast$ are disjoint, and each $S\in \calC^\ast$ is a minimum $s$-$v$ cut for some $v\in Y$. \\
{\rm (b)} If $Y\cap C_{sv_k}=\{v_k\}$ and $f(s,v_k)\le \min\nolimits_{i\in[k]}f(s,v_i)$ then $\calC^\ast$ contains a minimum $s$-$v_k$ cut. \\
{\rm (c)} There exists a family of cuts $\calC\subseteq\{S:\varnothing\ne S\subseteq V-\{s\}\}$ such that \\
\indent \!\!\!\!\!{\rm (i)~} $\lambda(v)=\min_{S\in\calC:v\in S} {\tt cost}(S)$ for each $v\in V-\{s\}$ (and hence $\lambda(v)\ge f(s,v)$); \\
\indent \!\!\!\!\!{\rm (ii)} if $\{v_1,\ldots,v_k\}\cap C_{sv_k}=\{v_k\}$ then $\calC$ contains a minimum $s$-$v_k$ cut (and hence $\lambda(v_k)=$ 
\indent $f(s,v_k)$ 
 and $\lambda(v)\le f(s,v_k)$ for all $v\in C_{sv_k}$).  
\end{lemma}
\begin{proof}
We take $\calC=\{S_k\::\:k\in[\ell]\}$ to be the set of cuts considered inside the call to ${\tt OrderedCuts}(\cdot)$,
then the condition in {\rm (c.i)} holds by the definition of ${\tt OrderedCuts}$.
We claim that for each $v_k\in Y^\ast$ we have $\lambda(v_k)=f(s,v_k)$.
Indeed, let $S$ be a minimum $s$-$v_k$ cut 
and $i$ be the largest index in $[k]$ such that $\{s,v_1,\ldots,v_{i-1}\}\subseteq V-S$, then $v_i\in S$
and so $f(s,v_k)={\tt cost}(S)\ge f(\{s,v_1,\ldots,v_{i-1}\},v_i)\ge \lambda(v_i)\ge \lambda(v_k)$, implying the claim.
(The middle inequality holds by the definition of ${\tt OrderedCuts}$ and the last inequality holds since $v_k\in Y^\ast$).
We are now ready to prove the lemma.

\myparagraph{Part (a)} Sets $S_v$ at line 3 (and thus sets in $\calC^\ast$) are disjoint by Theorem~\ref{th:IsolatingCuts}.
Set $S_v$ is added to $\calC^\ast$ only if $v\in Y^\ast$ and ${\tt cost}(S_v)=\lambda(v)=f(s,v)$, and so the claim holds.

\myparagraph{Parts (b) and (c.ii)} Consider $k\in[\ell]$ with $\{v_1,\ldots,v_k\}\cap C_{sv_k}=\{v_k\}$.
The condition implies that $C_{sv_k}$ is a minimum $\{s,v_1,\ldots,v_{k-1}\}$-$v_k$ cut
and hence $f(\{s,v_1,\ldots,v_{k-1}\},v_k)=f(s,v_k)$.
Therefore, set $S_k\in\calC$  must satisfy ${\tt cost}(S_k)\le f(s,v_k)$,
i.e. $S_k$ is a minimum $s$-$v_k$ cut. This proves part {\rm (c.ii)}.
Now suppose that $Y\cap C_{sv_k}=\{v_k\}$ and $f(s,v_i)\ge f(s,v_k)$ for each $i\in[k]$.
We have $\lambda(v_k)=f(s,v_k)$ and $\lambda(v_i)\ge f(s,v_i)$ for $i\in[k]$,
therefore node $v_k$ will be added to $Y^\ast$.
We also have $Y^\ast\cap C_{sv_k}=\{v_k\}$, and so set $C_{v_k}$
returned by ${\tt IsolatingCuts}(s,Y^\ast;G)$ 
is a minimum $s$-$v_k$ cut.
This set satisfies ${\tt cost}(C_{sv_k})=\lambda(v_k)=f(s,v_k)$, and so $S_{v_k}$ will be added to $\calC^\ast$.
\end{proof}

\begin{remark}\label{remark:CertifiedOrderedCuts}
We will use only properties of ${\tt CertifiedOrderedCuts}(\cdot)$
given in Lemma~\ref{lemma:SC}. Any other implementation of ${\tt CertifiedOrderedCuts}(\cdot)$ would work,
as long as Lemma~\ref{lemma:SC} holds. If, for example, a strong version of ${\tt OrderedCuts}$ is used
that explicitly constructs {\tt OC} tree then set $\calC^\ast$ can be constructed from this tree
without calling ${\tt IsolatingCuts}$, see Lemma~\ref{lemma:pistar}.
Note that this lemma gives a weaker criterion for adding cuts to $\calC^\ast$ that in general can certify a larger set of cuts stored in an {\tt OC} tree.
\end{remark}

\subsection{Computing cuts for a fixed source}\label{sec:fixed-s} 

Let us consider graph $G=(V,E,w)$, node $s\in V$, subset $X\subseteq V-\{s\}$ and number $L\ge 1$.
Denote 
\begin{eqnarray*}
\widehat\Pi_s^{(L)}&=&\{S\subseteq V-\{s\}\::\:S\mbox{ is a minimum $s$-$t$ cut for some $t\in X,~$}|S\cap X|\le L\} \\
\Pi_s^{(L)}&=&\{S=C_{st}\::\:t\in X,\;|S\cap X|\le L\} \;\;\subseteq\;\;   \widehat\Pi_s^{(L)}
\end{eqnarray*}
We say that the output of ${\tt CertifiedOrderedCuts}(\cdot)$ is {\em minimal} if under the condition {\rm (b)} of Lemma~\ref{lemma:SC}
family $\calC^\ast$ contains the minimal minimum $s$-$v_k$ cut $C_{sv_k}$, rather than an arbitrary minimum $s$-$v_k$ cut.
The main result of this section is given by the following theorem.

\begin{theorem}\label{th:FixedSourcePartition}
There exists an algorithm that outputs a laminar family of subsets $\Pi\subseteq \widehat\Pi_s^{(L)}$.
It makes $O(\log^3|X|)$ calls to procedure ${\tt CertifiedOrderedCuts}$  for graph~$H$ and performs $ O(|X|\log^4|X|)$ additional work.
Furthermore, it satisfies $\langle \Pi\rangle=\langle \Pi_s^{(L)}\rangle$ 
with probability at least $1-1/poly(|X|)$
(for an arbitrary fixed polynomial) assuming that all outputs of ${\tt CertifiedOrderedCuts}$ are minimal.
\end{theorem}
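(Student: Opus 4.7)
The plan is to mimic Algorithm~\ref{alg:SSPQ1} (${\tt FixedSourcePartition}_1$), replacing ${\tt OrderedCuts}_1$ with ${\tt CertifiedOrderedCuts}$ and accumulating the certified cuts across rounds into a growing laminar family $\Pi$. I maintain $\lambda(v)\ge f(s,v)$ for $v\in X$ (initialised to singleton-cut costs) and $\Pi:=\varnothing$. I then run the same sampling schedule as in Algorithm~\ref{alg:SSPQ1}: probabilities $(2^0,2^{-1},\ldots,2^{-d})$ with $d=\lfloor\log_2|X|\rfloor$, repeated $\Theta(\log^2|X|)$ times. In each round I sample $Y\leftarrow{\tt RandomSubset}(X;\alpha_i)$, sort $Y=v_1\ldots v_\ell$ by decreasing $\lambda$, call $(\lambda',\calC^\ast)\leftarrow{\tt CertifiedOrderedCuts}(s,v_1,\ldots,v_\ell;G)$, insert each $S\in\calC^\ast$ with $|S\cap X|\le L$ into $\Pi$ provided the insertion preserves laminarity, and tighten $\lambda$ using the witness family from Lemma~\ref{lemma:SC}(c)(i).

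Laminarity of $\Pi$ holds by construction (via the filter), and $\Pi\subseteq\widehat\Pi_s^{(L)}$ holds because by Lemma~\ref{lemma:SC}(a) every cut in $\calC^\ast$ is a minimum $s$-$v$ cut for some $v\in X$ while the size filter enforces $|S\cap X|\le L$. Under the minimality hypothesis, Lemma~\ref{lemma:SC}(a) moreover identifies each emitted cut with the latest min-cut $C_{sv}$, and since $\{C_{sv}:v\in V-\{s\}\}$ is laminar (noted just before Section~\ref{sec:implementation}), the laminarity filter is never triggered and every certified cut passing the size test enters $\Pi$.

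For the coverage claim I essentially reproduce the analysis of ${\tt FixedSourcePartition}_1$ sketched in Section~\ref{sec:SSPQ1:proof}, substituting Lemma~\ref{lemma:SC} for the sharper guarantee of Definition~\ref{def:OC1'}. Fix $v^\ast\in\langle\Pi_s^{(L)}\rangle$; by definition there exists $t\in X$ with $v^\ast\in C_{st}$ and $k:=|C_{st}\cap X|\le L$. It suffices to show $C_{st}\in\Pi$ with probability $1-1/poly(|X|)$, since a union bound over the $|X|$ targets then yields $\langle\Pi\rangle=\langle\Pi_s^{(L)}\rangle$ w.h.p. Call a round \emph{good for $t$} when its rate $\alpha_i$ is $\Theta(1/k)$, the sample satisfies $Y\cap C_{st}=\{t\}$, and the current $\lambda$-values make the sort place $t$ at a position $k'$ with $f(s,v_i)\ge f(s,t)$ for all $i<k'$. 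A Bernoulli analysis at the right scale $\alpha_i$ combined with the $\Theta(\log^2|X|)$ repetitions makes a good round occur with probability $1-1/poly(|X|)$; in any good round, Lemma~\ref{lemma:SC}(b) fires at $v_{k'}=t$, the minimality hypothesis identifies the emitted cut as $C_{st}$, and it passes the size filter and enters $\Pi$.

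The main obstacle is discharging the third requirement in the definition of a good round, namely that $\lambda$ has already converged on the vertices above $t$ in the sort. I plan a double induction over the rank levels $2^0,2^1,\ldots,2^d$: after the first $j$ blocks of the schedule, every $v\in X$ with $|C_{sv}\cap X|\le 2^j$ satisfies $\lambda(v)=f(s,v)$ w.h.p., proved by applying the just-completed coverage argument at rank $2^j$ together with Lemma~\ref{lemma:SC}(c)(ii) to pin $\lambda(v)$ down to $f(s,v)$. This double induction closely follows the potential-level argument used for ${\tt FixedSourcePartition}_1$ in Section~\ref{sec:SSPQ1:proof}. The complexity bound is then immediate: $O(\log^3|X|)$ calls to ${\tt CertifiedOrderedCuts}$ by the length of the schedule, and each round costs $O(|X|\log|X|)$ for sorting, Bernoulli sampling and laminar-family insertion via standard $O(\log)$-per-op tree structures, totalling $O(|X|\log^4|X|)$ additional work.
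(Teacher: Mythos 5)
Your high-level plan---round-based sampling with the $(2^0,\ldots,2^{-d})$ schedule, sort by the running estimate, call ${\tt CertifiedOrderedCuts}$, insert small certified cuts into $\Pi$, filter by size $L$---is the right skeleton and matches the paper's Algorithm~\ref{alg:SSPQ}. However, there are several concrete gaps.

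\textbf{Sampling set.} You sample $Y\leftarrow{\tt RandomSubset}(X;\alpha_i)$ each round with $X$ fixed. The paper's Algorithm~\ref{alg:SSPQ} samples from $X-\langle\Pi\rangle$, and this exclusion is load-bearing. The point where the second-stage analysis (paper's Lemma~\ref{lemma:second-stage}) invokes Lemma~\ref{lemma:SC}(b) requires the isolation condition $Y\cap C_{sa}=\{a\}$. The way the paper establishes it is: any other $v\in C_{sa}\cap Y$ must lie in some $\langle B^\downarrow\rangle$ with $B\in\Psi_{i-1}^-$, hence (by induction) $v\in\langle\Pi_{i-1}\rangle$, hence $v\notin Y$. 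That last implication is exactly the exclusion you dropped. With full-$X$ sampling, a node already covered by $\Pi$ may re-enter $Y$, sit inside $C_{sa}$, and silently break the precondition of Lemma~\ref{lemma:SC}(b), so $\calC^\ast$ is not guaranteed to contain $C_{sa}$ and $\Pi$ can fail to cover.

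\textbf{The per-target Bernoulli bound is false.} You fix $t$ with $k=|C_{st}\cap X|$ and claim that $\Theta(\log^2|X|)$ rounds at rate $\Theta(1/k)$ make a ``good round'' (in particular $Y\cap C_{st}=\{t\}$) occur with probability $1-1/\mathrm{poly}(|X|)$. The per-round probability of that isolation event at rate $\Theta(1/k)$ is $\Theta(1/k)$, not $\Theta(1)$; with $\Theta(\log^2|X|)$ repetitions you get only about $\Theta(\log^2|X|/k)$, which is vanishing for large $k$. The paper never argues target-by-target this way. It uses the hitting machinery of \cite{LiPanigrahi:STOC21}: the operator $\Psi\mapsto\Psi[Y]$, Lemma~\ref{lemma:expectation-reduction} showing each block of $d+1$ rounds shrinks $|\langle\Psi\rangle|$ by a $(1-\Theta(1)/\log|X|)$ factor, and Corollary~\ref{cor:AHGASFAS} iterating this $\Theta(\log^2|X|)$ times to drive it to $1/(n\cdot\mathrm{poly}(n))$. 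The reason a specific target $t$ eventually isolates is that the relevant ambient set is not $C_{st}\cap X$ but the shrinking $C_{st}\cap\langle\Psi\rangle$ (equivalently, the current residual set), which becomes small over the course of the schedule. Your argument discards this crucial dynamic.

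\textbf{Two-stage structure.} The paper separates the $2N$ iterations into a first stage ($i\le N$) that converges $\mu$ (Lemma~\ref{lemma:first-stage}) and a second stage ($i>N$) that populates $\Pi$ (Lemma~\ref{lemma:second-stage}). The second stage explicitly uses first-stage convergence to guarantee the \emph{second} precondition of Lemma~\ref{lemma:SC}(b), namely $f(s,v_k)\le\min_{i\in[k]}f(s,v_i)$, because the sort by $\mu$ then reflects the sort by $f(s,\cdot)$. Your ``double induction over rank levels'' is a one-stage sketch that tries to couple $\lambda$-convergence and $\Pi$-coverage at once; as written it is too vague to check, and it does not deliver the precondition that the sort is actually correct at the time a target is isolated. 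You could salvage the structure by adopting the paper's two-stage separation explicitly, but as stated the coverage claim is unjustified.

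\textbf{Minor.} Your laminarity treatment (drop an insertion that would break laminarity) differs from the paper's (declare failure, return $\varnothing$). Both give a valid laminar $\Pi\subseteq\widehat\Pi_s^{(L)}$ unconditionally, and both are irrelevant when minimality holds (since $\{C_{sv}\}$ is laminar), so this is a stylistic difference and not a gap.
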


Our algorithm is given below. 

\begin{algorithm}[H]
\setcounter{AlgoLine}{0}

  \DontPrintSemicolon
    set $\Pi=\varnothing$ and $\mu(v)={\tt cost}(\{v\})$ for all $v\in V-\{s\}$ \\
	\For{$i=1,\ldots,2N$}
  	{
  		sample $Y\leftarrow{\tt RandomSubset}(X-\langle\Pi\rangle;\alpha_i)$ \\
  		sort nodes in $Y$ as $v_1,\ldots,v_\ell$ so that $\mu(v_1)\ge \ldots \ge \mu(v_\ell)$, $\ell=|Y|$ \\
  		call $(\lambda,\calC^\ast)\leftarrow{\tt CertifiedOrderedCuts}(s,v_1,\ldots,v_\ell;G)$ \\
  		update $\mu(v):=\min\{\mu(v),\lambda(v)\}$ for $v\in Y$ \\
  		update $\Pi:=\Pi\cup\{S\in \calC^\ast\::\:|S\cap X|\le L\}$ \\
  		if $\Pi$ becomes non-laminar then  terminate and return $\varnothing$
  	}
  return $\Pi$

      \caption{${\tt FixedSourcePartition}(s,X,L;G)$. 
      }\label{alg:SSPQ}
\end{algorithm}

We set sequence $\alpha_1,\ldots,\alpha_{2N}$ in the same way as in Section~\ref{sec:FixedSourcePartition1}
(with $N=\Theta(\log^3|X|)$).
In the remainder of this section we prove that with this choice the complexity
of Algorithm~\ref{alg:SSPQ} is as stated in Theorem~\ref{th:FixedSourcePartition}.

By Lemma~\ref{lemma:SC}(a), set $\Pi$ in Algorithm~\ref{alg:SSPQ} always satisfies $\Pi\subseteq\widehat\Pi_s^{(L)}$.
For the rest of analysis we assume that  all outputs of ${\tt CertifiedOrderedCuts}$ are minimal,
and thus $\Pi\subseteq\Pi_s^{(L)}$.
We will show next that there exists
 sequence $\alpha_1,\ldots,\alpha_{2N}$ with $N=\Theta(\log^3 |X|)$ 
for which $\langle\Pi\rangle= \langle \Pi_s^{(L)}\rangle$  w.h.p..

We use the same notation as in Section~\ref{sec:FixedSourcePartition1}; let us recap it briefly. We write $u\sim v$ for nodes $u,v\in X$ if $C_{su}=C_{sv}$.
Let $\Phi$ be the set of equivalence classes of  relation $\sim$,
and for $v\in X$ let $\br{v}\in\Phi$ be the class to which $v$ belongs.
We write $\br{u}\preceq \br{v}$ if $C_{su}\subseteq C_{sv}$,
then $\preceq$ is a partial order on $\Phi$ that can be represented by a rooted
forest $\calF$ on nodes $\Phi$ so that $A\preceq B$ for $A,B\in\Phi$ iff $A$ belongs to the subtree of $\calF$ rooted at $B$.
For node $A\in \Phi$ we denote $A^\downarrow = \{B\in\Phi\::\:B\preceq A\}$.

Consider node $R\in\Phi$.
We say that subset $\Psi\subseteq\Phi$ is {\em $R$-rooted} if it can be represented as $\Psi=R^\downarrow -\bigcup_{A\in \tilde\Psi} A^\downarrow$
for some subset $\tilde\Psi\subseteq R^\downarrow$. The minimal set $\tilde\Psi$ in this representation will be denoted as $\Psi^-$.
One exception is the set $\Psi=\varnothing$: for such $\Psi$ we define $\Psi^-=\{R\}$ 
(node $R$ will always be clear from the context).

Consider subsets $Y\subseteq X$, $\Psi\subseteq \Phi$ and node $v\in \langle \Psi\rangle$. 
We say that
$Y$ {\em hits $v$ in $\Psi$} if $C_{sv}\cap \langle \Psi\rangle\cap Y=\{v\}$. 
We denote
$$
\Psi[Y]=\Psi-\bigcup_{\mbox{\scriptsize $v:Y$ hits $v$ in $\Psi$}} \br{v}^\downarrow
$$
For subsets $Y_1,\ldots,Y_k$ we also denote $\Psi[Y_1,\ldots,Y_k]=\Psi[Y_1]\ldots[Y_k]$.

We can assume that subset $Y$ in line 3 is generated by
first sampling $\hat Y\leftarrow{\tt RandomSubset}(X;\alpha_i)$ and then taking $Y=\hat Y-\langle\Pi\rangle$.
Let $\Pi_i,\lambda_i,\calC^\ast_i,\mu_i,\hat Y_i,Y_i$ for $i\in[2N]$ be the variables at the end of the $i$-th iteration,
and $\Pi_0,\mu_0$ be as initialized at line 1. 
We refer to iterations $i=1,\ldots,N$ as the {\em first stage},
and to iterations $i=N+1,\ldots,2N$ as the {\em second stage}.
The lemma below analyzes the first stage.

\begin{lemma} \label{lemma:first-stage}
Suppose that $R^\downarrow [\hat Y_1,\ldots,\hat Y_N]=\varnothing$ for $R\in\Phi$.
Then $\mu_N(v)=f(s,v)$ for all $v\in R-\langle\Pi_N\rangle$. 
\end{lemma}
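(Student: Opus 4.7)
The plan is to adapt the inductive argument used for Lemma~\ref{lemma:QQQ} to the present setting, where $X$ no longer shrinks across iterations but $\Pi$ accumulates cuts instead. I would carry out an induction on $i\in\{0,1,\ldots,N\}$ establishing an invariant of the following shape, where $\Psi_i=R^\downarrow[\hat Y_1,\ldots,\hat Y_i]$: for every node $v\in X-\langle\Pi_i\rangle$ whose equivalence class $\br v$ lies in $R^\downarrow\setminus\Psi_i$, if $B^*_i(v)$ denotes the unique element of $\Psi_i^-$ with $\br v\preceq B^*_i(v)$, then $\mu_i(v)\le f(B^*_i(v))$. Since $\mu_i(v)\ge f(s,v)=f(\br v)$ always holds and $B^*_N(v)=R$ when $\Psi_N=\varnothing$ (using the convention $\varnothing^-=\{R\}$), specializing to $v\in R-\langle\Pi_N\rangle$ will immediately yield $\mu_N(v)=f(R)=f(s,v)$.

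The base case $i=0$ is vacuous because $R^\downarrow\setminus\Psi_0=\varnothing$. For the inductive step I would split by whether $\br v$ was already outside $\Psi_{i-1}$, in which case the inductive hypothesis together with $B^*_{i-1}(v)\preceq B^*_i(v)$ and the monotonicity of $\mu$ and $\Pi$ suffice, or $\br v\in\Psi_{i-1}$ was removed during iteration $i$. In the latter case the removal must be witnessed by a single hit $u^*$ of $\hat Y_i$ in $\Psi_{i-1}$ whose class $\br{u^*}$ equals $B^*_i(v)$; here I would use the structural facts that $\Psi_{i-1}$ is upward closed within $R^\downarrow$ and that two distinct hits in the same iteration must correspond to incomparable classes in $\calF$ (otherwise one hit's singleton set $\{u\}$ would contain the other hit node as well).

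Next I would establish $u^*\in Y_i$ by a laminar-containment argument: if $u^*$ belonged to some $C_{sw}\in\Pi_{i-1}$ then $\br{u^*}\preceq\br w$, forcing $v\in\langle B^*_i(v)^\downarrow\rangle\subseteq C_{sw}\subseteq\langle\Pi_i\rangle$ and contradicting our choice of $v$. Sorting $Y_i$ by decreasing $\mu_{i-1}$ and writing $u^*=v_k$, the core technical step is to verify the algorithmic hitting condition $\{v_1,\ldots,v_k\}\cap C_{su^*}=\{u^*\}$. I would show this by analyzing $Y_i\cap C_{su^*}\setminus\{u^*\}$: candidates in $B^*_i(v)$ itself are ruled out by $\hat Y_i\cap B^*_i(v)=\{u^*\}$, while any candidate $v'$ in a strict descendant of $B^*_i(v)$ satisfies $\br{v'}\in R^\downarrow\setminus\Psi_{i-1}$, and the inductive hypothesis pins $\mu_{i-1}(v')\le f(B^*_{i-1}(v'))<f(B^*_i(v))\le\mu_{i-1}(u^*)$ so that $v'$ appears strictly later in the ordering. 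Lemma~\ref{lemma:SC}(c.ii) then yields $\lambda_i(w)\le f(s,u^*)=f(B^*_i(v))$ for every $w\in C_{su^*}$, and the $\mu$-update at line~6 restores the invariant for the newly-covered nodes.

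The main obstacle I anticipate is handling nodes $v$ in the newly-removed part of $R^\downarrow\setminus\Psi_i$ that do not lie in $\hat Y_i$ and therefore receive no direct update at iteration~$i$. Lemma~\ref{lemma:SC}(c.i) produces an upper bound $\lambda_i(v)\le f(B^*_i(v))$ unconditionally for every $v\in C_{su^*}$, so the cleanest resolution is to interpret line~6 as propagating the updates $\mu(v):=\min\{\mu(v),\lambda(v)\}$ through the entire family $\calC$ of Lemma~\ref{lemma:SC}(c) rather than only through $v\in Y$; once this interpretation is fixed, the inductive invariant propagates cleanly and the rest reduces to bookkeeping with the laminar structure of the cuts $\{C_{sv}\}$.
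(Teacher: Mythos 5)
Your proposal follows essentially the same route as the paper: the same inductive invariant (that for every $A\in\Psi_i^-$ and $v\in\langle A^\downarrow\rangle-\langle\Pi_i\rangle$ one has $\mu_i(v)\le f(A)$), the same base case, the same split between classes already outside $\Psi_{i-1}$ versus newly removed at iteration $i$, and the same use of Lemma~\ref{lemma:SC}(c) to propagate the bound to all of $C_{su^*}$ once the hitting/ordering condition is verified. The one genuinely different step is how you obtain strictness in the sort: you use $f(B^*_{i-1}(v'))<f(B^*_i(v))$, which rests on $f$ being strictly $\prec$-increasing (a consequence of Assumption~\ref{as}), whereas the paper gets strictness from $f(A)<\mu_{i-1}(a)$ and therefore needs a separate subcase for $\mu_{i-1}(a)=f(A)$ (traced back to an earlier iteration $j$ in which $C_{sa}$ was certified). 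Your variant is cleaner because it collapses that subcase; the paper's is marginally more robust because it does not invoke strict monotonicity of $f$ at that point. Both are valid in the regime where the analysis is carried out. You also correctly flag that the invariant, as written, needs $\mu_i(w)\le\lambda_i(w)$ for every $w\in C_{su^*}$ (not only $w\in Y_i$): the paper's proof uses this implicitly in both of its nontrivial subcases, so the update at line~6 of Algorithm~\ref{alg:SSPQ} must indeed be applied to all $w\in V-\{s\}$ (or at least to all nodes of the certified cuts), exactly as you propose; the extra $O(n)$ work per iteration is absorbed by the standing assumption $t_{\tt OC}(n,m)=\Omega(n+m)$. A minor slip: in your last paragraph the unconditional bound $\lambda_i(w)\le f(s,u^*)$ for $w\in C_{su^*}$ comes from Lemma~\ref{lemma:SC}(c.ii), not (c.i); you cite (c.ii) correctly earlier.
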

\begin{proof}
Denote $\Psi_i= R^\downarrow [\hat Y_1,\ldots,\hat Y_i]$, then 
$R^\downarrow =\Psi_0\supseteq\Psi_1\supseteq\ldots\supseteq\Psi_N=\varnothing$.
Let us define function $f:\Phi\rightarrow\mathbb R$ via  $f(\br{v})={\tt cost}(C_{sv})=f(s,v)$ for $v\in X$.
We make the following claim: 
\begin{itemize}
\item[{$(\ast)$}]
 For each $i\in[0,N]$ and $A\in \Psi_i^-$ there holds $\mu_i(v)\le f(A)$ for $v\in\langle A^\downarrow \rangle-\langle\Pi_i\rangle$.
 \end{itemize}

 To prove this, we use induction on $i$. If $i=0$ then $\Psi_i^-=\varnothing$ and the claim is vacuous.
 Suppose it holds for $i-1$; let us prove for $i\in[N]$.
Note that $\Psi_i =\Psi_{i-1}[\hat Y_i]$.
 Consider $A\in \Psi_i^-$. 
 If $A\in \Psi^-_{i-1}$ then for each $v\in\langle A^\downarrow \rangle-\langle\Pi_i\rangle\subseteq \langle A^\downarrow \rangle-\langle\Pi_{i-1}\rangle$
 we have $\mu_i(v)\le \mu_{i-1}(v)\le f(A)$ where the last inequality is by the induction hypothesis.
 We thus suppose that $A\notin \Psi^-_{i-1}$.
Condition $A\in\Psi_i^-=(\Psi_{i-1}[\hat Y_i])^-$ then
 implies that there exists $a\in A$ such that $\hat Y_i$ hits $a$ in $\Psi_{i-1}$, i.e.\ $C_{sa}\cap \langle\Psi_{i-1}\rangle\cap \hat Y_i=\{a\}$.
 We then have $A=\br{a}\in\Psi_{i-1}$.
 
First, suppose that $a\notin Y_{i}$. Then $a\in  \langle\Pi_{i}\rangle$
(since $a\in\hat Y_{i}$ and $Y_{i}=\hat Y_{i}-\langle\Pi_{i}\rangle$) and hence $A\subseteq \langle\Pi_i\rangle$ and 
$\langle A^\downarrow \rangle\subseteq \langle\Pi_i\rangle$, so $(\ast)$ vacuously holds.
Next, suppose that $\mu_{i-1}(a)=f(s,a)=f(A)$. 
If $C_{sa}=\langle A^\downarrow \rangle=\{a\}$ then $(\ast)$ clearly holds.
Otherwise $\mu_0(a)<f(s,a)$ and hence $\lambda_j(a)=f(s,a)$ for some $j\in[i-1]$.
This means that set $\calC_j$ computed in the $j$-th call to ${\tt CertifiedOrderedCuts}$
contained set $S$ which is a minimum $s$-$a$ cut.
We have $C_{sa}\subseteq S$, and hence $\mu_{i-1}(v)\le \lambda_j(v)\le{\tt cost}(S)= f(s,a)$ for all $v\in C_{sa}\cap X=\langle A^\downarrow \rangle$,
i.e.\ $(\ast)$ holds.

It remains to consider the case when $a\in Y_i$ and $\mu_{i-1}(a)>f(s,a)=f(A)$.
 We claim that all $v\in (C_{sa}\cap Y_i)-\{a\}$  satisfy $\mu_{i-1}(v)\le f(A)$.
Indeed, we have $C_{sa}\cap \langle \Psi_{i-1}\rangle\cap \hat Y_i=\{a\}$, 
and so $v\notin \langle\Psi_{i-1}\rangle$. Furthermore, $v\in C_{sa}\cap X=\langle A^\downarrow \rangle$ and $A\in\Psi_{i-1}$.
These conditions imply that there exists $B\in \Psi_{i-1}^-$ with $B\preceq A$ (implying $f(B)\le f(A)$) and $v\in \langle B^\downarrow \rangle$.
By the choice of $Y_i$ at line 3 we have $Y_i\cap \langle \Pi_{i-1}\rangle=\varnothing$,
and so $v\notin \langle \Pi_{i-1}\rangle$ and $v\in \langle B^\downarrow \rangle-\langle \Pi_{i-1}\rangle$.
By the induction hypothesis, $\mu_{i-1}(u)\le f(B)$, which gives the claim.

Now consider the $i$-th step of Algorithm~\ref{alg:SSPQ}. As we just showed, we have
$\mu_{i-1}(v)\le f(A)<\mu_{i-1}(a)$ for all $v\in (C_{sa}\cap Y_i)-\{a\}$.
Thus, nodes in $Y_i$ will be sorted as $v_1,\ldots,v_{k-1},a,\ldots$
where $\{v_1,\ldots,v_{k-1}\}\cap C_{sa}=\varnothing$.
Lemma~\ref{lemma:SC}(c) now gives that $\mu_i(v)\le \lambda_i(v)\le {\tt cost}(C_{sa})=f(A)$ for all $v\in C_{sa}\cap X=\langle A^\downarrow \rangle$.

 We have proved claim $(\ast)$. Let us use it for $i=N$.
 We have  $\Psi_N=\varnothing$, so  $\Psi_N^-=\{R\}$
and thus $\mu_N(v)\le f(R)$ for all $v\in R-\langle\Pi_N\rangle$.
 We also have $\mu_N(v)\ge f(R)$ for such $v$, which gives the lemma.
 
\end{proof}

We say that the first stage is {\em successful} if the precondition of Lemma~\ref{lemma:first-stage} holds for every $R\in\Phi$.
By the lemma, if the first stage is successful then $\mu_N(v)=f(s,v)$ for all $v\in X-\langle\Pi_N\rangle$.
\begin{lemma}\label{lemma:second-stage}
Suppose that the first stage is successful and $R^\downarrow[\hat Y_{N+1},\ldots,\hat Y_{2N}]=\varnothing$
for $R\in\Phi$     with  $|\langle R^\downarrow\rangle|\le L$. 
Then $\langle R^\downarrow \rangle\subseteq \langle \Pi_{2N}\rangle$.
\end{lemma}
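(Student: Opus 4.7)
The plan is to run an induction in the spirit of Lemma~\ref{lemma:first-stage}, but with a stronger payoff that reflects what has changed in the second stage: instead of tracking $\mu$-values, I will track coverage by $\Pi$. Concretely, set $\Psi_i=R^\downarrow[\hat Y_{N+1},\ldots,\hat Y_i]$ for $i\in[N,2N]$, so $\Psi_N=R^\downarrow$ and $\Psi_{2N}=\varnothing$, and prove by induction on $i\ge N$ the claim
\[
\langle A^\downarrow\rangle\subseteq\langle\Pi_i\rangle\qquad\text{for every }A\in\Psi_i^-.
\]
The base case $i=N$ is vacuous since $(R^\downarrow)^-=\varnothing$, and applying the claim at $i=2N$ with the convention $\Psi_{2N}^-=\{R\}$ yields the lemma.

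For the inductive step, fix $A\in\Psi_i^-$. If $A\in\Psi_{i-1}^-$, or if $A\prec B$ for some $B\in\Psi_{i-1}^-$, then the induction hypothesis applied to $B$ (or to $A$ itself) together with $\Pi_{i-1}\subseteq\Pi_i$ is enough. Otherwise $A\in\Psi_{i-1}$ and $A\notin\Psi_i=\Psi_{i-1}[\hat Y_i]$, so there exists $a\in X$ with $A\preceq\br{a}$ for which $\hat Y_i$ hits $a$ in $\Psi_{i-1}$; since $\br{a}\in R^\downarrow-\Psi_i$, maximality of $A$ in $\Psi_i^-$ forces $A=\br{a}$. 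If $a\notin Y_i=\hat Y_i-\langle\Pi_{i-1}\rangle$, then $a\in\langle\Pi_{i-1}\rangle$, and laminarity of $\{C_{sv}\}_{v\in X}$ (which holds under Assumption~\ref{as}) gives $C_{sa}\subseteq\langle\Pi_{i-1}\rangle$, hence $\langle A^\downarrow\rangle=C_{sa}\cap X\subseteq\langle\Pi_i\rangle$.

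The substantive case is $a\in Y_i$, where the goal is to invoke Lemma~\ref{lemma:SC}(b) on the $i$-th call to ${\tt CertifiedOrderedCuts}$ with $v_k=a$. For the first hypothesis $Y_i\cap C_{sa}=\{a\}$, take any $u\in(\hat Y_i\cap C_{sa})-\{a\}$; the hit-property forces $u\notin\langle\Psi_{i-1}\rangle$, so $\br{u}\in R^\downarrow-\Psi_{i-1}$ lies under some $B\in\Psi_{i-1}^-$, and the induction hypothesis gives $u\in\langle B^\downarrow\rangle\subseteq\langle\Pi_{i-1}\rangle$, so $u\notin Y_i$. For the second hypothesis on the sort order, the first-stage success combined with the monotonicity $\mu_i\le\mu_{i-1}$ and the always-valid inequality $\mu_i(v)\ge f(s,v)$ (from Lemma~\ref{lemma:SC}(c.i) and the initialization ${\tt cost}(\{v\})\ge f(s,v)$) yields $\mu_{i-1}\equiv f(s,\cdot)$ on $Y_i$, so the decreasing-$\mu$ sort is a decreasing-$f(s,\cdot)$ sort, and every node preceding $a$ in the sort satisfies $f(s,\cdot)\ge f(s,a)$ as required.

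With both hypotheses verified, Lemma~\ref{lemma:SC}(b) places a minimum $s$-$a$ cut in $\calC^\ast_i$, and the minimality assumption in the hypothesis of the lemma being proved identifies it with $C_{sa}$; the size bound $|C_{sa}\cap X|=|\langle A^\downarrow\rangle|\le|\langle R^\downarrow\rangle|\le L$ then causes line~7 of Algorithm~\ref{alg:SSPQ} to insert $C_{sa}$ into $\Pi_i$, giving $\langle A^\downarrow\rangle\subseteq\langle\Pi_i\rangle$ and closing the induction. The main obstacle to writing this up cleanly is the bookkeeping between $\Psi_{i-1}^-$, $\Psi_i^-$ and the hit-sets when deducing $A=\br{a}$; a secondary point is that $\Pi$ must stay laminar throughout (so line~8 never aborts), which is automatic because every set ever added to $\Pi$ has the form $C_{sv}$ and under Assumption~\ref{as} the family $\{C_{sv}\}_{v\in X}$ is laminar.
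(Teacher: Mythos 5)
Your proof is correct and matches the paper's argument essentially step for step: the same induction on $\Psi_i^- = (R^\downarrow[\hat Y_{N+1},\ldots,\hat Y_i])^-$ with the same invariant $\langle A^\downarrow\rangle\subseteq\langle\Pi_i\rangle$, the same split into the cases $a\notin Y_i$ versus $a\in Y_i$, the same verification of the two hypotheses of Lemma~\ref{lemma:SC}(b) (hit-property for $Y_i\cap C_{sa}=\{a\}$, first-stage success plus the lower bound $\mu\ge f(s,\cdot)$ for the sort order), and the same invocation of minimality and the size bound $|\langle A^\downarrow\rangle|\le L$ to place $C_{sa}$ in $\Pi_i$. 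The extra remark that $\Pi$ stays laminar (so line~8 never aborts) is a useful observation the paper leaves implicit.
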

\begin{proof}
For $i\in[N,2N]$ 
denote $\Psi_i= R^\downarrow[\hat Y_{N+1},\ldots,\hat Y_i]$, then 
$R^\downarrow =\Psi_N\supseteq\Psi_{N+1}\supseteq\ldots\supseteq\Psi_{2N}=\varnothing$.
We make the following claim: 
\begin{itemize}
\item[{$(\ast)$}]
 For each $i\in[N,2N]$ and $A\in \Psi_i^-$ there holds $\langle A^\downarrow \rangle \subseteq \langle\Pi_i\rangle$.
 \end{itemize}

 To prove this, we use induction on $i$. If $i=N$ then $\Psi_i^-=\varnothing$ and the claim is vacuous.
 Suppose it holds for $i-1$; let us prove for $i\in[N+1,2N]$.
Note that $\Psi_i =\Psi_{i-1}[\hat Y_i]$.
 Consider $A\in \Psi_i^-$. 
 If $A\in \Psi^-_{i-1}$ then 
 $\langle A^\downarrow \rangle \subseteq \langle\Pi_{i-1}\rangle\subseteq \langle\Pi_i\rangle$ where the first inclusion is by the induction hypothesis.
 We thus suppose that $A\notin \Psi^-_{i-1}$.
Condition $A\in\Psi_i^-=(\Psi_{i-1}[\hat Y_i])^-$ then
 implies that there exists $a\in A$ such that $\hat Y_i$ hits $a$ in $\Psi_{i-1}$, i.e.\ $C_{sa}\cap \langle\Psi_{i-1}\rangle\cap \hat Y_i=\{a\}$.
 We then have $A=\br{a}\in\Psi_{i-1}$.

First, suppose that $a\notin Y_{i}$. Then $a\in  \langle\Pi_{i}\rangle$
(since $a\in\hat Y_{i}$ and $Y_{i}=\hat Y_{i}-\langle\Pi_{i}\rangle$) and hence $A\subseteq \langle\Pi_i\rangle$ and 
$\langle A^\downarrow \rangle\subseteq \langle\Pi_i\rangle$, i.e.\ $(\ast)$ holds.
We thus assume from now on that $a\in Y_i$.
We claim that $C_{sa}\cap Y_i=\{a\}$.
Indeed, suppose there exists $v\in (C_{sa}\cap Y_i)-\{a\}$.
We have $C_{sa}\cap \langle \Psi_{i-1}\rangle\cap \hat Y_i=\{a\}$ 
and thus $v\notin\langle \Psi_{i-1}\rangle$.
Furthermore, $v\in C_{sa}\cap X=\langle A^\downarrow \rangle$ and $A\in\Psi_{i-1}$.
These conditions imply that there exists $B\in \Psi_{i-1}^-$ with $B\preceq A$ and $v\in \langle B^\downarrow\rangle$.
By the induction hypothesis, $\langle B^\downarrow \rangle\subseteq \langle\Pi_{i-1}\rangle$
and so $v\in\langle\Pi_{i-1}\rangle$.
By the choice of $Y_i$ at line 3 we thus have $v\notin Y_i$, which is a contradiction.
 
Now consider the $i$-th step of Algorithm~\ref{alg:SSPQ}. 
Let $v_1,\ldots,v_{k-1},a,\ldots$ be the sorting of $Y_i$.
Since the first stage is successful and nodes in $Y_i$ are sorted according to $\mu_{i-1}(\cdot)$,
we have $f(s,v_1)\ge \ldots f(s,v_{k-1})\ge f(s,a)$.
Thus, the precondition of Lemma~\ref{lemma:SC}(b) holds for node $v_k=a$,
and hence set $\calC^\ast_i$ contains $C_{sa}$. We have $C_{sa}\cap X=\langle A^\downarrow \rangle\subseteq \langle R^\downarrow \rangle$
and hence $|C_{sa}\cap X|\le L$, therefore $C_{sa}$ will be added to $\Pi_i$ at line 7 and hence $(\ast)$ holds.

 We have proved claim $(\ast)$. Let us use it for $i=2N$.
 We have  $\Psi_{2N}=\varnothing$, so  $\Psi_{2N}^-=\{R\}$
and thus $\langle R^\downarrow  \rangle \subseteq \langle\Pi_{2N}\rangle$, as claimed.

\end{proof}

It remains to construct sequence $\alpha_1,\ldots,\alpha_{2N}$ such that 
the preconditions of Lemmas~\ref{lemma:first-stage} and~\ref{lemma:second-stage} hold w.h.p..
For that we use the same technique as in Section~\ref{sec:FixedSourcePartition1},
which is based on Lemma~\ref{lemma:expectation-reduction}.
Applying the same argument as in Corollary~\ref{cor:AHGASFAS} gives
the following result, and concludes the proof of Theorem~\ref{th:FixedSourcePartition}.

\begin{corollary}
Denote $n=|X|$.
For every fixed polynomial $p(n)$ there exists a sequence $\alpha_1,\ldots,\alpha_{2N}$ with $N=\Theta(\log^3 n)$
such that the preconditions of Lemmas~\ref{lemma:first-stage} and~\ref{lemma:second-stage} hold with probability at least $1-\tfrac 1 {p(n)}$.
\end{corollary}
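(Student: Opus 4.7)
The plan is to adapt the argument of Corollary~\ref{cor:AHGASFAS} essentially verbatim, applying it separately to the first stage (iterations $1,\ldots,N$) and the second stage (iterations $N+1,\ldots,2N$), and then union-bounding over both stages and over all $R\in\Phi$.

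Concretely, I would take $\alpha_1,\ldots,\alpha_N$ to be the sequence $2^0,2^{-1},\ldots,2^{-d}$ (with $d=\lfloor\log_2 n\rfloor$) repeated $K$ times, and likewise for $\alpha_{N+1},\ldots,\alpha_{2N}$, so that $N=K(d+1)=\Theta(\log^3 n)$ once I set $K=\Theta(\log^2 n)$ with a sufficiently large constant. Fix any $R\in\Phi$ and let $\Psi_0=R^\downarrow$; for $j=0,1,\ldots,K-1$ let $\Psi_{j+1}=\Psi_j[\hat Y_{j(d+1)+1},\ldots,\hat Y_{(j+1)(d+1)}]$, where the $\hat Y_i$ are the independent raw samples of Algorithm~\ref{alg:SSPQ}. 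Lemma~\ref{lemma:expectation-reduction}, applied conditionally on $\Psi_j$, yields
$$
\mathbb E\bigl[\,|\langle\Psi_{j+1}\rangle|\,\big|\,\Psi_j\bigr]\;\le\;\bigl(1-\tfrac{c}{\log n}\bigr)\,|\langle\Psi_j\rangle|
$$
for some absolute constant $c>0$. Iterating and using $|\langle\Psi_0\rangle|\le n$ gives
$$
\mathbb E\bigl[|\langle\Psi_K\rangle|\bigr]\;\le\;n\cdot\bigl(1-\tfrac{c}{\log n}\bigr)^{K}.
$$
Choosing $K$ so that the right-hand side is at most $1/(n\cdot p(n))$ (which requires $K=\Theta(\log^2 n)$ with a constant depending on the degree of $p$) and applying Markov's inequality, $R^\downarrow[\hat Y_1,\ldots,\hat Y_N]=\varnothing$ with probability at least $1-1/(n\cdot p(n))$.

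Since $|\Phi|\le n$, a union bound over all $R\in\Phi$ shows the precondition of Lemma~\ref{lemma:first-stage} holds for every $R$ with probability at least $1-1/p(n)$. The same calculation, applied to the independent samples $\hat Y_{N+1},\ldots,\hat Y_{2N}$ of the second stage, gives the precondition of Lemma~\ref{lemma:second-stage} for every $R\in\Phi$ (in particular, for every $R$ with $|\langle R^\downarrow\rangle|\le L$) with probability at least $1-1/p(n)$. A final union bound over the two stages, absorbing the factor of $2$ into the polynomial, yields the stated bound.

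There is no real obstacle: the work was done in Lemma~\ref{lemma:expectation-reduction}, and the independence of the $\hat Y_i$'s across iterations (guaranteed by sampling from $X$ rather than from $X-\langle\Pi\rangle$, as noted at the start of Section~\ref{sec:SSPQ:proof}) is exactly what allows the clean conditional-expectation chain above. The only detail to double-check is that the constant hidden in $K=\Theta(\log^2 n)$ can be inflated to absorb both the union bound over $|\Phi|\le n$ and the union bound over the two stages while still keeping $N=\Theta(\log^3 n)$, which is immediate since each such loss only multiplies the target expectation by a polynomial factor.
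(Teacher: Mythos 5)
Your proposal is correct and follows the same route the paper intends: it invokes Lemma~\ref{lemma:expectation-reduction} on each block of $d+1$ iterations, chains the conditional expectations over $K$ blocks, applies Markov's inequality and a union bound over $R\in\Phi$, and repeats the argument for each of the two stages. One small note: you correctly take $K=\Theta(\log^2 n)$, which is what $N=\Theta(\log^3 n)$ requires; the paper's proof of Corollary~\ref{cor:AHGASFAS} says $K=\Theta(\log n)$, which appears to be a typo.
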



\subsection{Overall algorithm} \label{sec:select-s}
We now need to show how to implement line 4 of Algorithm~\ref{alg:GH'}
for given graph $H$ and subset $X\subseteq V_H$ with $|X|\ge 2$.
One possibility is to use Algorithm~\ref{alg:select-s1}
where the call ${\tt FixedSourcePartition}_1(s,Y;H')$ is replaced with ${\tt FixedSourcePartition}(s,Y,L;H')$
for a sufficiently large $L$ (namely $L\ge |X|$); clearly, in this case the outputs of these two calls
have the same guarantee. Below we describe an alternative procedure whose complexity is smaller by a factor of $(\log n)\cdot (\log \log n)$.
It uses $L=|X|/2$, which means that large cuts are discarded inside the call ${\tt FixedSourcePartition}(s,Y,L;H')$.
This allows to simplify the selection of the source $s\in X$: it can now be selected uniformly at random (as in~\cite{Abboud:FOCS20}).

\begin{algorithm}[H]
\setcounter{AlgoLine}{0}

  \DontPrintSemicolon
    \While{\tt true}
    {
    	sample $s\in X$ uniformly at random \\
    	apply random permutation to $H$ as in Proposition~\ref{prop:GASF} to get graph $H'$ \\
    	call $\Pi\leftarrow {\tt FixedSourcePartition}(s,X-\{s\},L;H')$ where $L\eqdef |X|/2$ \\
    	remove non-maximal subsets from $\Pi$\hspace{25pt} \tcp{\small \em this is to simplify the analysis}
    	if $|(V_H-\langle\Pi\rangle)\cap X| \le L$ then terminate and return $(s,\Pi)$
    }
      \caption{Line 4 of Algorithm~\ref{alg:GH'}. 
      }\label{alg:select-s}
\end{algorithm}

\begin{lemma}
Each iteration of Algorithm~\ref{alg:select-s} succeeds with probability $\Omega(1)$
(and thus the expected number of iterations is $O(1)$).
\end{lemma}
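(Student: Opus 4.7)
The plan is to decompose the analysis into a conditioning on the randomized subroutines' good events and a combinatorial bound on the uniformly random choice of $s\in X$.

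First, by Proposition~\ref{prop:GASF} the perturbed graph $H'$ has (w.h.p.) a unique GH tree $\mathcal{T}$ and unique minimum $s$-$t$ cuts, so Assumption~\ref{as} holds and $\widehat\Pi_s^{(L)}=\Pi_s^{(L)}$. By Theorem~\ref{th:FixedSourcePartition}---whose hypothesis of ``minimal outputs of ${\tt CertifiedOrderedCuts}$'' is automatic under uniqueness---the output $\Pi$ of line~4 satisfies $\langle\Pi\rangle=\langle\Pi_s^{(L)}\rangle$ with probability $1-1/\mathrm{poly}(|X|)$, and the removal of non-maximal subsets at line~5 does not alter this union. Thus it suffices to show that for uniformly random $s\in X$, $\Pr[|X\setminus\langle\Pi_s^{(L)}\rangle|\le L]=\Omega(1)$.

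Second, I would reformulate via $\mathcal{T}$: root $\mathcal{T}$ at $s$; for any $v\ne s$, $C_{sv}$ is the subtree hanging below the bottleneck (minimum-weight) edge of the $s$-to-$v$ path, and $\{C_{st'} : t'\in X\}$ is laminar. A node $t\in X$ is \emph{covered} iff the $\subseteq$-smallest member of this family containing $t$ has at most $L$ $X$-nodes. The main combinatorial ingredient would be a centroid argument: pick $c\in V$ such that every component of $\mathcal{T}-c$ contains at most $L=|X|/2$ $X$-nodes. For any $s\in X$, every component of $\mathcal{T}-s$ other than the one containing $c$ lies inside the $s$-containing component of $\mathcal{T}-c$, hence has at most $L$ $X$-nodes; consequently every $t$ in such a small component satisfies $|C_{st}\cap X|\le L$ and is covered. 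Thus the uncovered $X$-nodes concentrate in the single ``heavy'' subtree of $\mathcal{T}-s$ (the one containing $c$).

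Third, I would close the argument by analyzing this heavy subtree using the uniqueness-based identity $|C_{st}\cap X|+|C_{ts}\cap X|=|X|$ (so at most one of each ordered pair $\{s,t\}$ satisfies $|C_{st}\cap X|>L$), combined with random-source averaging; I expect the bound to follow by iterating the centroid argument along the $s$-to-$c$ path and controlling the escape of uncovered nodes at each step. The main obstacle is precisely this final step: the naive antisymmetry alone gives only $\sum_{s\in X}|\{t:|C_{st}\cap X|>L\}|\le\binom{|X|}{2}$, which via Markov yields merely $\Pr[\text{fail}]\le 1-\Theta(1/|X|)$, insufficient for our needs. The sharper $\Omega(1)$ bound requires genuinely tree-structural reasoning---exploiting the laminar structure of $\{C_{st'}\}$ and the fact that the uncovered $t$'s for a given $s$ form a specific subtree---in the spirit of the random-source analysis of Abboud--Krauthgamer--Trabelsi~\cite{Abboud:FOCS20}, which I would adapt to our $L=|X|/2$ setting.
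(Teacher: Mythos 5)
Your setup is correct: Proposition~\ref{prop:GASF} gives a unique GH tree w.h.p.\ (so Assumption~\ref{as} holds and $\widehat\Pi_s^{(L)}=\Pi_s^{(L)}$), Theorem~\ref{th:FixedSourcePartition} gives $\langle\Pi\rangle=\langle\Pi_s^{(L)}\rangle$ w.h.p., and pruning non-maximal subsets preserves $\langle\Pi\rangle$; thus it indeed suffices to show $\Pr_s\bigl[|X\setminus\langle\Pi_s^{(L)}\rangle|\le L\bigr]=\Omega(1)$. You also correctly observe that $t\in\langle\Pi_s^{(L)}\rangle$ whenever $|C_{st}\cap X|\le L$. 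But the combinatorial core is left genuinely unfinished, and as you yourself note, neither the centroid argument nor the antisymmetric identity $|C_{st}\cap X|+|C_{ts}\cap X|=|X|$ closes the gap; Markov gives nothing useful, and the proposed ``iteration of the centroid argument along the $s$-to-$c$ path'' is not actually developed and it is not evident it would terminate with an $\Omega(1)$ bound rather than losing a factor at each step.

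The missing ingredient is that the antisymmetric relation you write down is in fact a \emph{total order}, which is exactly what Section~\ref{sec:select-s1} defines and Lemma~\ref{lemma:total-order} proves: $u\sqsubset v$ iff $C_{vu}\prec C_{uv}$, where $\prec$ compares first by cost and then breaks ties by $|S\cap X|$. Your identity $|C_{st}\cap X|+|C_{ts}\cap X|=|X|$ is precisely the antisymmetry of $\sqsubseteq$; what you are missing is its \emph{transitivity}, which Lemma~\ref{lemma:total-order} derives via laminarity of the $C_{xy}$'s. Once transitivity is in hand, no centroid or iteration is needed: let $a$ be the $L$-th smallest node of $X$ w.r.t.\ $\sqsubseteq$. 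With probability $L/|X|\ge 1/2$ the random source satisfies $a\sqsubset s$, and then for every $v$ with $v\sqsubseteq a$ transitivity gives $v\sqsubset s$, i.e.\ $C_{sv}\prec C_{vs}$. Since both are minimum $s$-$v$ cuts they have equal cost, so the tiebreak gives $|C_{sv}\cap X|\le |C_{vs}\cap X|\le |(V-C_{sv})\cap X|$, hence $|C_{sv}\cap X|\le L$ and $v\in C_{sv}\subseteq\langle\Pi_s^{(L)}\rangle$. This covers all $L$ nodes $\sqsubseteq a$, leaving at most $|X|-L=L$ uncovered. Your instinct that additional ``tree-structural'' reasoning was needed is right; what is required is precisely that the per-pair ``smaller side'' relation globalizes to a total order, which is the content of Lemma~\ref{lemma:total-order}.
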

\begin{proof}
\begin{sloppypar} 
We use for the analysis total order $\sqsubseteq$ on $X$ defined in Section~\ref{sec:select-s1}.
Let $a$ be the $L$-th smallest node in $X$ w.r.t.\ $\sqsubseteq$.
With probability $\Omega(1)$ the following events will hold jointly:
(i)~$a\sqsubset s$; 
(ii)~$H'$ has the unique GH tree (and thus the outputs of ${\tt CertifiedOrderedCuts}$ will be minimal);
(iii)~${\tt FixedSourcePartition}(s,X,L;H')$ returns $\Pi$ with $\langle \Pi\rangle=\langle \Pi_s^{(L)}\rangle$.
In that case for any $v\in U\eqdef \{v\in X\::\:v\sqsubseteq a\}$ we have $v\sqsubset s$,
or equivalently $C_{sv}\prec C_{vs}$.
This implies that $|C_{sv}\cap X|\le |C_{vs}\cap X| \le |(V-C_{sv})\cap X|$, and so $|C_{sv}\cap X|\le L$,
$C_{sv}\in \Pi_s^{(L)}$ and hence $v\in C_{sv}\subseteq\langle \Pi_s^{(L)}\rangle=\langle \Pi\rangle$.
We  have $U\subseteq \langle \Pi\rangle$ and $|U|=L$,
and so $|(V-\langle\Pi\rangle)\cap X| \le |(V-U)\cap X| \le L$.
\end{sloppypar}
\end{proof}

By using the same reasoning as in Section~\ref{sec:select-s1}
and noticing that one call to ${\tt CertifiedOrderedCuts}$
involves one call to {\tt OrderedCuts} and $O(\log n)$ calls to a minimum cut procedure (by Theorem~\ref{th:IsolatingCuts}),
we obtain
\begin{theorem}
The expected complexity of Algorithm~\ref{alg:GH'} with line 4 implemented as in Algorithm~\ref{alg:select-s} is
$O((t_{\tt OC}(n,m)+t_{\tt MC}(n,m)\log n)\cdot \log^4 n)$.
\end{theorem}


\section{Proofs for Sections~\ref{sec:implementation} and~\ref{sec:impl}}\label{sec:implementation:proofs}

\subsection{Proof of Lemma~\ref{lemma:OC-add-node}}

We will need the following well-known fact about minimum cuts.
It follows, for example, from the properties of the parametric maxflow problem~\cite{Gallo:SICOMP89}.
\begin{lemma}\label{lemma:parametric} 
Consider graph $G$ and pairs of disjoint subsets $(S,T)$ and $(S',T')$ with $S\subseteq S'$ and $T\supseteq T'$.
If $U$ is a minimum $S$-$T$ cut then there exists a minimum $S'$-$T'$ cut $U'$ with $U'\subseteq U$.
\end{lemma}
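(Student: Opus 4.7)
The plan is to prove Lemma~\ref{lemma:parametric} via the standard submodularity (uncrossing) argument for cuts. Recall that the cut function $U\mapsto{\tt cost}(U)$ on subsets of $V$ is submodular, i.e.\ ${\tt cost}(A)+{\tt cost}(B)\ge{\tt cost}(A\cap B)+{\tt cost}(A\cup B)$ for all $A,B\subseteq V$.

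First I would pick an arbitrary minimum $S'$-$T'$ cut, call it $\tilde U$, so that $T'\subseteq\tilde U\subseteq V-S'$. The goal is to show that $U\cap\tilde U$ is itself a minimum $S'$-$T'$ cut; since $U\cap\tilde U\subseteq U$, this will give the required $U'$. The verification of the cut conditions is the easy step: $T'\subseteq T\subseteq U$ and $T'\subseteq\tilde U$ give $T'\subseteq U\cap\tilde U$, while $\tilde U\cap S'=\varnothing$ gives $(U\cap\tilde U)\cap S'=\varnothing$, so $U\cap\tilde U$ is an $S'$-$T'$ cut. Symmetrically, $U\cup\tilde U$ contains $T$ (since $T\subseteq U$) and is disjoint from $S$ (since $U\cap S=\varnothing$ and $\tilde U\cap S\subseteq\tilde U\cap S'=\varnothing$ using $S\subseteq S'$), so $U\cup\tilde U$ is an $S$-$T$ cut.

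The central step is then the chain of inequalities
\[
{\tt cost}(U)+{\tt cost}(\tilde U)\;\ge\;{\tt cost}(U\cap\tilde U)+{\tt cost}(U\cup\tilde U)\;\ge\;{\tt cost}(\tilde U)+{\tt cost}(U),
\]
where the first inequality is submodularity and the second uses optimality of $\tilde U$ as an $S'$-$T'$ cut (applied to $U\cap\tilde U$) and optimality of $U$ as an $S$-$T$ cut (applied to $U\cup\tilde U$). Equality must hold throughout, so in particular ${\tt cost}(U\cap\tilde U)={\tt cost}(\tilde U)$, meaning $U':=U\cap\tilde U$ is a minimum $S'$-$T'$ cut contained in $U$.

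There is no real obstacle here; the only points of care are to handle the degenerate situations where $U\cap\tilde U$ could in principle equal $\varnothing$ or $V$ (so as to be a genuine cut in the sense of the paper, i.e.\ $\varnothing\subsetneq U'\subsetneq V$). These are ruled out by $T'\subseteq U\cap\tilde U$ and $(U\cap\tilde U)\cap S'=\varnothing$ together with the standing assumption that minimum $S'$-$T'$ cuts exist (so $S',T'$ are nonempty in the nontrivial case). If $T'=\varnothing$ or $S=\varnothing$ the conclusion is vacuous or follows by taking $U'=U$ directly.
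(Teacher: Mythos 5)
Your proof is correct and complete. It differs from the paper only in that the paper gives no proof at all — it simply cites the parametric maxflow literature (Gallo, Grigoriadis and Tarjan) and calls the statement a ``well-known fact.'' Your submodularity/uncrossing argument is the classical elementary derivation of the same monotonicity property, and it is self-contained, which is a genuine advantage: one does not need to unpack the machinery of parametric maxflow to see why the lemma holds. The chain ${\tt cost}(U)+{\tt cost}(\tilde U)\ge{\tt cost}(U\cap\tilde U)+{\tt cost}(U\cup\tilde U)\ge{\tt cost}(\tilde U)+{\tt cost}(U)$, forcing equality, is exactly right, and the verification that $U\cap\tilde U$ is an $S'$-$T'$ cut and $U\cup\tilde U$ is an $S$-$T$ cut is clean.

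One small correction to your closing remark on degeneracies: the relevant issue is that $U\cap\tilde U$ must be a genuine cut, i.e.\ $\varnothing\subsetneq U\cap\tilde U\subsetneq V$. The inclusion $U\cap\tilde U\subseteq U\subsetneq V$ disposes of one side, and $T'\subseteq U\cap\tilde U$ disposes of the other as long as $T'\ne\varnothing$ — which holds in every use of the lemma in the paper, where $T'$ is a singleton or contains one. Your comment that the case $S=\varnothing$ ``follows by taking $U'=U$ directly'' is not quite right: even with $S=\varnothing$, $U$ need not avoid $S'$ and so need not be an $S'$-$T'$ cut. But no special handling is required — the submodularity argument you give goes through verbatim when $S=\varnothing$, since nothing in it uses nonemptiness of $S$ or $S'$.
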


We proceed with the proof of Lemma~\ref{lemma:OC-add-node}.
\renewcommand{\thelemmaRESTATED}{\ref{lemma:OC-add-node}}
\begin{lemmaRESTATED}[restated]
Suppose that $(\X,\calE)$ is an {\tt OC} tree
for $\varphi u$ and $v\in\varphi$ is the parent of $u$ in $(\varphi u,\calE)$.
Suppose further that $(\X^{-u},\calE^{-u})$ is an {\tt OC} tree for $(\varphi,G)$.
Then $(\X,\calE)$ is valid for~$G$ if and only if set $[u]$ is a minimum $v$-$u$ cut in $G[[v]^{-u};v]$.
\end{lemmaRESTATED}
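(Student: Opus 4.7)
The plan is to reduce the equivalence to a single cost comparison at $u$ and then close it via iterated submodularity of the cut function.

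First I would isolate the new content of validity of $(\Omega,\calE)$ beyond that of $(\Omega^{-u},\calE^{-u})$. Because ${[w]^{-u}}^\downarrow = [w]^\downarrow$ for every $w\in\varphi$ (by the characterization in the excerpt), the conditions of Definition~\ref{def:representing-tree} on $(\Omega,\calE)$ at the $\varphi$-nodes coincide with those already hypothesized. Since $u$ is the last node of $\varphi u$ while edges of $(\varphi u,\calE)$ point to earlier nodes, $u$ must be a leaf, so $[u]^\downarrow=[u]$, and the sole additional requirement is that $[u]$ is a minimum $\varphi$-$u$ cut in $G$. The lemma thus reduces to the equivalence
\[
[u]\text{ is a minimum }\varphi\text{-}u\text{ cut in }G \iff [u]\text{ is a minimum }v\text{-}u\text{ cut in }G[[v]^{-u};v].
\]

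Next I would record a cost-preserving correspondence. Since each component of $\Omega$ meets $\varphi u$ in exactly one node, $[v]^{-u}=[v]\cup[u]$ meets $\varphi$ only at $v$, so every $v$-$u$ cut $T$ of $G[[v]^{-u};v]$ (i.e., $T\subseteq[v]^{-u}$ with $u\in T$, $v\notin T$) is automatically a $\varphi$-$u$ cut of $G$, and a direct count of surviving edges gives ${\tt cost}_{G[[v]^{-u};v]}(T)={\tt cost}_G(T)$. In particular $[u]$ itself lies in both families with the same cost. The forward direction ($\Rightarrow$) is then immediate: minimality over the larger family of $\varphi$-$u$ cuts of $G$ implies minimality over the smaller family of $v$-$u$ cuts of $G[[v]^{-u};v]$.

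For the reverse direction, given any $\varphi$-$u$ cut $U$ of $G$, I would argue ${\tt cost}(U\cap[v]^{-u})\le{\tt cost}(U)$ by iterated submodularity of the cut function, invoking the minimum cuts that validity of $(\Omega^{-u},\calE^{-u})$ supplies. Concretely, when $v\ne s$, applying submodularity to $U$ and ${[v]^{-u}}^\downarrow=[v]^\downarrow$ (a minimum $\alpha$-$v$ cut, where $\alpha$ is the prefix of $v$ in $\varphi$) shrinks $U$ to $U\cap[v]^\downarrow$ without increasing cost, because $U\cup[v]^\downarrow$ is itself an $\alpha$-$v$ cut: it contains $v$ and avoids $\alpha$, since $\alpha$-nodes are earlier than $v$ in the sequence, hence not descendants of $v$, and so $\alpha\cap[v]^\downarrow=\varnothing$. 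Then, for each other child $w$ of $v$ in $(\varphi u,\calE)$, iteratively applying submodularity with $V-[w]^\downarrow$ — using that $[w]^\downarrow={[w]^{-u}}^\downarrow$ is a minimum $\alpha_w$-$w$ cut and that the running set always stays disjoint from $\varphi$ — strips off the $[w]^\downarrow$-piece from the running set without raising cost. Since $[v]^\downarrow-[v]^{-u}$ decomposes as the disjoint union of these $[w]^\downarrow$'s, the iteration terminates exactly at $U\cap[v]^{-u}$; the case $v=s$ is handled by dropping the initial step (as $[s]^\downarrow=V$) and iterating from $U$ directly. Combined with the hypothesis ${\tt cost}(U\cap[v]^{-u})\ge{\tt cost}([u])$, this gives ${\tt cost}(U)\ge{\tt cost}([u])$ and finishes the reverse direction.

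The main obstacle I foresee is the bookkeeping for the iterated submodularity — at each step one must verify that the relevant union (the running set combined with $V-[w]^\downarrow$) is still a valid $\alpha_w$-$w$ cut so that the minimum-cut inequality for $[w]^\downarrow$ can be applied. This is routine given that the running set remains disjoint from $\varphi\supseteq\alpha_w$, but writing it out cleanly requires some care. Everything else is a direct match of definitions plus the cost-under-contraction identity.
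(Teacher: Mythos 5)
Your proposal is correct, and while it shares the same overall structure as the paper's proof, the argument for the key direction is organized differently. Both proofs begin by reducing the claim to the equivalence ``$[u]$ is a minimum $\varphi$-$u$ cut in $G$ if and only if $[u]$ is a minimum $v$-$u$ cut in $G[[v]^{-u};v]$,'' observe the cost-preserving identification of $v$-$u$ cuts in $G[[v]^{-u};v]$ with $\varphi$-$u$ cuts of $G$ contained in $[v]^{-u}$, and establish the nontrivial direction by showing that restriction to $[v]^{-u}$ cannot increase cost. The paper does this by fixing $T$ to be the \emph{minimal} minimum $\varphi$-$u$ cut and proving $T\subseteq[v]^{-u}$: it first invokes the parametric-monotonicity result Lemma~\ref{lemma:parametric} to get $T\subseteq[v]^\downarrow$, and then for each descendant $w\prec v$ combines one submodularity step with another application of Lemma~\ref{lemma:parametric} to conclude $T\cap[w]^\downarrow=\varnothing$. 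You instead prove the purely cost-level statement that \emph{every} $\varphi$-$u$ cut $U$ satisfies ${\tt cost}(U\cap[v]^{-u})\le{\tt cost}(U)$, by one submodularity step against $[v]^\downarrow$ and then one against $V-[w]^\downarrow$ for each $\varphi$-child $w\ne u$ of $v$. Your version avoids Lemma~\ref{lemma:parametric} and the notion of minimal minimum cut entirely and iterates over fewer sets (children rather than all descendants), making it a bit more self-contained; the paper's version gives the slightly stronger structural conclusion that the minimal minimum $\varphi$-$u$ cut actually lies inside $[v]^{-u}$. The submodular uncrossing inequalities that do the work are essentially the same instances in both proofs, so the difference is one of packaging rather than of mathematical substance.
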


\begin{proof}
Clearly, we have ${[w]}^\downarrow={[w]^{-u}}^\downarrow$ for all $w\in\varphi$ and ${[w]}^\downarrow=[u]$ for $w=u$.
Therefore, $(\X,\calE)$ is valid for~$G$ if and only if $[u]$ is a minimum $\varphi$-$u$ cut in $G$.
Let $T$ be the minimal minimum $\varphi$-$u$ cut in $G$;
since   $\varphi \cap [v]^{-u}=\{v\}$,
it suffices to show that $T\subseteq[v]^{-u}$.

Let us write $\varphi=\alpha v \ldots$. We claim that $T\subseteq [v]^\downarrow={[v]^{-u}}^\downarrow$.
Indeed, assume that $|\alpha|\ge 1$ (otherwise $[v]^\downarrow=V$ and the claim is trivial).
Since $[v]^\downarrow$ is a minimum $\alpha$-$v$ cut
and $u\in [v]^\downarrow$, set $[v]^\downarrow$ is also a minimum $\alpha$-$\{v,u\}$ cut.
Applying Lemma~\ref{lemma:parametric} gives that $T\subseteq [v]^\downarrow$, as desired.

Now consider node $w\in\varphi$ with $w\prec v$. Let us write $\varphi = \alpha v \beta w \ldots$.
Let $U$ be a minimum $\alpha v \beta w$-$u$ cut.
We know that $[w]^\downarrow={[w]^{-u}}^\downarrow$ is a minimum $\alpha v \beta$-$w$ cut.
Define $A=[w]^\downarrow-U$ and $B=U-[w]^\downarrow$.
By symmetry of cuts and submodularity, we have
\begin{eqnarray*}
{\tt cost}([w]^\downarrow)+{\tt cost}(U) &=&{\tt cost}([w]^\downarrow)+{\tt cost}(V-U) \\
&\ge & {\tt cost}([w]^\downarrow \cap (V-U))+{\tt cost}([w]^\downarrow\cup (V-U)) \\
&=& {\tt cost}(A) + {\tt cost}(V-B) \;\;=\;\; {\tt cost}(A) + {\tt cost}(B)
\end{eqnarray*}
Clearly, we have $\alpha v\beta\cap A=\alpha v\beta w\cap B=\varnothing$, $w\in A$ and $u\in B$.
Set $A$ is an $\alpha v\beta$-$w$ cut, and thus ${\tt cost}(A)\ge {\tt cost}([w]^\downarrow)$.
This implies that ${\tt cost}(B)\le {\tt cost}(U)$, and thus  $B$ is a minimum $\alpha v\beta w$-$u$ cut.
Lemma~\ref{lemma:parametric} gives that  $T\subseteq B$.
We have $B\cap[w]^\downarrow=\varnothing$ and thus $T\cap [w]^\downarrow=\varnothing$.
Since this holds for all $w\in\varphi$ with $w\prec v$ (and since $T\subseteq[v]^\downarrow$),
we conclude that $T\subseteq [v]^{-u}$, as desired.

\end{proof}

\subsection{Proof of Lemma~\ref{lemma:pistar}}

\renewcommand{\thelemmaRESTATED}{\ref{lemma:pistar}}
Part {\rm (b)} follows directly from part {\rm (a)} and Lemma~\ref{lemma:Goldberg}, so we focus on proving part {\rm (a)}.
Consider {\tt OC} tree $(\Omega,\calE)$ for sequence  $\varphi$ with $|\varphi|\ge 2$,
and let $t\in \varphi$ be a leaf node in $(\varphi,\calE)$.
We say that $t$ is a {\em free leaf in $(\varphi,\calE)$}
if the following holds: if $\varphi=\ldots t \ldots u \ldots$ then $t$ and $u$ have different parents.
In particular, the last node of $\varphi$ is a free leaf if $|\varphi|\ge 2$.
We will show the following result.
\begin{lemma}\label{lemma:free-leaf}
Suppose that $(\Omega,\calE)$ is an {\tt OC} tree for $(\varphi,G)$ and $t\in\varphi$ is a free leaf in $(\varphi,\calE)$.
Then $(\Omega^{-t},\calE^{-t})$ is an {\tt OC} tree for $(\varphi^{-t},G)$.
\end{lemma}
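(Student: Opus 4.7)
The plan is to prove this by induction on $k=|\{u\in\varphi:u\sqsupset t\}|$, the number of nodes of $\varphi$ that come strictly after $t$. Let $v$ denote the parent of $t$ in $(\varphi,\calE)$. A useful preliminary observation is that $\XX([w]^{-t})=\XX([w])$ for every $w\in\varphi^{-t}$: the operation $(\cdot)^{-t}$ merges $[t]$ into $[v]$ only, and since $t\preceq v$, the component $[t]$ contributes to $\XX([w])$ precisely when $v\preceq w$, which is exactly the case in which the merged component $[v]\cup[t]$ enters $\XX([w]^{-t})$; otherwise neither appears in either expression.

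The base case $k=0$ is when $t$ is the last node of $\varphi$. Then for every $w\in\varphi^{-t}$ we have $w\sqsubset t$, so the prefix of $w$ in $\varphi^{-t}$ coincides with its prefix $\alpha_w$ in $\varphi$. Combined with the preliminary observation and the validity of $(\Omega,\calE)$ for $(\varphi,G)$, this immediately gives the OC-tree condition for $(\Omega^{-t},\calE^{-t})$ on $(\varphi^{-t},G)$.

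For $k\ge 1$, let $w^*$ denote the last node of $\varphi$ and $u^*$ its parent in $(\varphi,\calE)$. Since $w^*\sqsupset t$, the free-leaf hypothesis forces $u^*\ne v$. The base case applied to $w^*$ (trivially a free leaf) yields that $(\Omega^{-w^*},\calE^{-w^*})$ is OC for $(\varphi^{-w^*},G)$. In this reduced instance $t$ is still a free leaf with the same parent $v$ and the number of nodes after $t$ has dropped to $k-1$, so by induction $(\Omega^{-w^*-t},\calE^{-w^*-t})$ is OC for $(\varphi^{-w^*-t},G)$. Since $u^*\ne v$, the two leaf-removals act on disjoint components and edges and hence commute, giving $\Omega^{-w^*-t}=\Omega^{-t-w^*}$ and similarly for $\calE$.

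To finish, I would invoke Lemma~\ref{lemma:OC-add-node} twice, each time with added leaf $w^*$. Applied to $(\Omega,\calE)$, whose reduction $(\Omega^{-w^*},\calE^{-w^*})$ is valid by the previous paragraph, the forward direction gives that $[w^*]$ is a minimum $u^*$-$w^*$ cut in $G[[u^*]^{-w^*};u^*]$. Applied to $(\Omega^{-t},\calE^{-t})$, whose reduction $(\Omega^{-t-w^*},\calE^{-t-w^*})$ is valid by commutativity, the backward direction requires $[w^*]$ to be a minimum $u^*$-$w^*$ cut in $G[[u^*]^{-t-w^*};u^*]$; since $u^*\ne v$, removing $t$ leaves the component of $u^*$ unchanged, so $[u^*]^{-t-w^*}=[u^*]^{-w^*}$ and the two cut conditions coincide. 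Hence $(\Omega^{-t},\calE^{-t})$ is OC for $(\varphi^{-t},G)$. The only delicate point of the argument is the commutativity of the two leaf-removals, and this is precisely where the free-leaf hypothesis is used; everything else is bookkeeping around the characterization in Lemma~\ref{lemma:OC-add-node}.
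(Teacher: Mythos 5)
Your proof is correct and follows essentially the same route as the paper's: remove the last node of $\varphi$, invoke the induction hypothesis, and re-attach it via the equivalence in Lemma~\ref{lemma:OC-add-node}, using the free-leaf condition to justify commutativity of the two leaf-removals and the stability of the relevant components. The only cosmetic difference is that you induct on the number of nodes after $t$ rather than on $|\varphi|$, which makes the base case ($t$ is last) slightly cleaner but is otherwise the same argument.
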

\begin{proof}
We use induction on $|\varphi|$.
If $|\varphi|=2$ then the claim is trivial; suppose that $|\varphi|>2$.
Clearly, for any $w\in\varphi^{-t}$ we have ${[w]^{-t}}^\downarrow=[w]^\downarrow$.

Let $u$ be the last node of $\varphi$.
It can be seen that $(\Omega^{-u},\calE^{-u})$ is an {\tt OC} tree for $\varphi^{-u}$
(if $\varphi^{-u}=\alpha w\ldots$ then set ${[w]^{-u}}^\downarrow=[w]^\downarrow$ is indeed a minimum $\alpha$-$w$ cut).
Thus, if $t=u$ then the claim holds. Suppose that $\varphi=\ldots t \ldots u$.
Note that $t,u$ are leaf nodes with distinct parents.
Denote $\varphi^{-tu}=(\varphi^{-u})^{-t}=(\varphi^{-t})^{-u}$
and $(\Omega^{-tu},\calE^{-tu})=((\Omega^{-u})^{-t},(\calE^{-u})^{-t})=((\Omega^{-t})^{-u},(\calE^{-t})^{-u})$.
Clearly, $t$ is a free leaf in $(\varphi^{-u},\calE^{-u})$,
and thus $(\Omega^{-tu},\calE^{-tu})$ is an {\tt OC} tree for $(\varphi^{-tu},G)$ by the induction hypothesis.
Let $v$ be the parent of $u$ in $(\varphi,\calE)$.
We have $[u]=[u]^{-t}$ and $[v]^{-u}=[v]^{-tu}$ since $t$ is a leaf node which is not a child of~$v$.

 $(\Omega^{-u},\calE^{-u})$ is an {\tt OC} tree for $(\varphi^{-u},G)$
and  $(\Omega,\calE)$ is an {\tt OC} tree for $(\varphi,G)$.
By Lemma~\ref{lemma:OC-add-node}, $[u]$ is a minimum $v$-$u$ cut in $G[[v]^{-u};v]$.

 $(\Omega^{-tu},\calE^{-tu})$ is an {\tt OC} tree for $(\varphi^{-tu},G)$
 and $[u]^{-t}$ is a minimum $v$-$u$ cut $G[[v]^{-tu};v]$.
 By Lemma~\ref{lemma:OC-add-node}, $(\Omega^{-t},\calE^{-t})$ is an {\tt OC} tree for $(\varphi^{-t},G)$.

\end{proof}

We are now ready to prove Lemma~\ref{lemma:pistar}.
Let $(\Omega,\calE)$ be an {\tt OC} tree for $(\varphi,G)$, and consider non-root node $u\in\varphi$.
Let us repeatedly modify $(\varphi,\Omega,\calE)$ using the following
algorithm: while $(\varphi,\calE)$ has a free leaf $t$ with $t\notin \pi^\ast(u)\cup\{u\}$,
pick arbitrary such $t$ and update $(\varphi,\Omega,\calE)\leftarrow(\varphi^{-t},\Omega^{-t},\calE^{-t})$.
By construction, $u$ is always in $\varphi$. We claim that the sequence $\pi^\ast(u)$ never changes.
Indeed, it suffices to show that each step preserves $\pi(u)$;
applying this fact to $\pi(\pi(u)), \pi(\pi(\pi(u))),\ldots$ will then yield the claim.
It suffices to consider the case when $u$ comes after $t$, i.e. $\varphi=s \ldots t \ldots u \ldots$.
Let $v$ be the parent of $u$.
We say that node $w\in\varphi$ is {\em $u$-admissible in $(\varphi,\calE)$} if it satisfies
the following conditions:
(i)~$w\sqsubset u$; 
(ii)~$w\in\{v\}\cup\{w\::\:wv\in\calE\}$.
Recall that $\pi(u)$ is the maximal $u$-admissible node  $w$ in $(\varphi,\calE)$ w.r.t. $\sqsubseteq$.
We cannot have $w=t$, since $t$ is a free leaf. 
It can be checked that a node $w\ne t$ is $u$-admissible in $(\varphi,\calE)$
if and only if it is $u$-admissible in $(\varphi^{-t},\calE^{-t})$; this implies the claim.

Denote $\psi=\pi^\ast(u)u$.
We will show next that upon termination we have $\varphi=\psi$; clearly, this will imply
Lemma~\ref{lemma:pistar}(a).
Suppose the claim is false, then $\varphi=s \ldots t \alpha$ where $t\notin \psi$ and $\alpha\subseteq \psi$.
If node $t$ has a child $w$ then $w\in\psi$ and thus $t\in\pi^\ast(w)\subseteq \psi$ - a contradiction.
Thus, $t$ is a leaf in $(\varphi,\calE)$. It is not a free leaf since the algorithm has terminated,
thus there exists $w\in\alpha$ such that $t$ and $w$ have the same parent. But then $t\in \pi^\ast(w)\subseteq\psi$ - a contradiction.

\subsection{Proof of Lemma~\ref{lemma:divide-and-conquer:one}}

\renewcommand{\thelemmaRESTATED}{\ref{lemma:divide-and-conquer:one}}
\begin{lemmaRESTATED}[restated]
Consider sequence $\varphi=\alpha\ldots$ in graph $G$ with $|\alpha|\ge 1$.
Suppose that $(\X^\circ,\calE^\circ)$ is an {\tt OC} tree for $(\alpha,G)$,
and for each $v\in\alpha$ pair 
$(\X^v,\calE^v)$ is an {\tt OC} tree for $(\varphi\cap [v]^\circ,G[[v]^\circ;v])$.
Then $(\X,\calE)$ is an {\tt OC} tree for $(\varphi,G)$
where 
$\X=\bigcup_{v\in\alpha}\Omega^v$
and $\calE=\calE^\circ\cup\bigcup_{v\in\alpha}\calE^v$.
\end{lemmaRESTATED}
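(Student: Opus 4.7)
\medskip
\noindent\textbf{Proof plan.} The natural approach is induction on $|\varphi|-|\alpha|$, with Lemma~\ref{lemma:OC-add-node} as the engine that handles each added node.

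\medskip
\noindent\emph{Structural checks.} Before worrying about validity, I would verify that $(\Omega,\calE)$ meets the structural part of Definition~\ref{def:representing-tree}. Since $\Omega^\circ$ partitions $V$ and each $\Omega^v$ partitions the corresponding $[v]^\circ$ (the node set of $G[[v]^\circ;v]$), the union $\Omega=\bigcup_{v\in\alpha}\Omega^v$ partitions $V$, and $|\Omega|=\sum_{v\in\alpha}|\Omega^v|=\sum_{v\in\alpha}|\varphi\cap[v]^\circ|=|\varphi|$, with each component meeting $\varphi$ in exactly one point. On the edge side, $(\alpha,\calE^\circ)$ is a rooted tree at $s$, and for each $v\in\alpha$ the tree $(\varphi\cap[v]^\circ,\calE^v)$ is rooted at $v$; gluing these subtrees at their roots $v\in\alpha$ gives a rooted tree $(\varphi,\calE)$ at $s$. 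The orientation $v\sqsubset u$ for $uv\in\calE$ is preserved in both parts because $\alpha$ is a prefix of $\varphi$ and the order on $\varphi\cap[v]^\circ$ is inherited from $\varphi$.

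\medskip
\noindent\emph{Base case.} When $|\varphi|=|\alpha|$, each set $\varphi\cap[v]^\circ$ is the singleton $\{v\}$, so each $(\Omega^v,\calE^v)$ is the trivial OC tree $(\{[v]^\circ\},\varnothing)$. Hence $(\Omega,\calE)=(\Omega^\circ,\calE^\circ)$, which is valid for $(\varphi,G)=(\alpha,G)$ by hypothesis.

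\medskip
\noindent\emph{Inductive step.} Assume $|\varphi|>|\alpha|$ and let $u$ be the last node of $\varphi$. Since $\alpha$ is a prefix and strictly shorter than $\varphi$, we have $u\in\varphi-\alpha$; let $v'\in\alpha$ be the unique node with $u\in[v']^\circ$. Being the last node of $\varphi$, hence of $\varphi\cap[v']^\circ$, $u$ is a leaf of $(\varphi\cap[v']^\circ,\calE^{v'})$ and therefore of $(\varphi,\calE)$. Let $p=p(u)\in\varphi\cap[v']^\circ$ be its parent. First I would apply the inductive hypothesis to $\varphi^{-u}$, using the same $(\Omega^\circ,\calE^\circ)$, the original $(\Omega^v,\calE^v)$ for $v\ne v'$, and the restricted tree $((\Omega^{v'})^{-u},(\calE^{v'})^{-u})$ for $v'$: this yields that $(\Omega^{-u},\calE^{-u})$ is an OC tree for $(\varphi^{-u},G)$. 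Then I would invoke Lemma~\ref{lemma:OC-add-node}, which reduces the validity of $(\Omega,\calE)$ for $(\varphi,G)$ to the single statement that $[u]$ is a minimum $p$-$u$ cut in $G[[p]^{-u};p]$.

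\medskip
\noindent\emph{Verifying the last-leaf condition.} Applied inside $G[[v']^\circ;v']$ to the OC trees $((\Omega^{v'})^{-u},(\calE^{v'})^{-u})\subset(\Omega^{v'},\calE^{v'})$, Lemma~\ref{lemma:OC-add-node} gives that $[u]^{v'}$ is a minimum $p$-$u$ cut in $(G[[v']^\circ;v'])\bigl[[p]^{v'}\cup[u]^{v'};p\bigr]$. Here $[u]=[u]^{v'}$, and in the combined partition $[p]^{-u}=[p]^{v'}\cup[u]^{v'}\subseteq[v']^\circ$. The key (but routine) step is the associativity of contraction:
\[
G\bigl[[p]^{-u};p\bigr] \;=\; (G[[v']^\circ;v'])\bigl[[p]^{v'}\cup[u]^{v'};p\bigr],
\]
because contracting $V-[v']^\circ\cup\{v'\}$ to $v'$ and then $[v']^\circ-([p]^{v'}\cup[u]^{v'})\cup\{p\}$ to $p$ yields the same graph as contracting $V-([p]^{v'}\cup[u]^{v'})\cup\{p\}$ to $p$ in one step. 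This identification delivers exactly the condition demanded by Lemma~\ref{lemma:OC-add-node}, completing the induction.

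\medskip
\noindent\emph{Main obstacle.} The only non-bookkeeping point is the contraction identity above; everything else is either structural checking or a direct appeal to Lemma~\ref{lemma:OC-add-node}. In particular, one does \emph{not} need submodularity or Lemma~\ref{lemma:parametric} directly here, because they are already packaged inside Lemma~\ref{lemma:OC-add-node}, which is reused recursively on the two levels of the OC tree.
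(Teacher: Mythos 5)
Your overall structure — induction on the length of $\varphi$, peel off the last node $u$, apply Lemma~\ref{lemma:OC-add-node} inside $G[[v']^\circ;v']$ to extract the minimum-cut condition, translate that via associativity of contraction to the condition Lemma~\ref{lemma:OC-add-node} needs for $(\Omega,\calE)$ — matches the paper's proof step for step, and the contraction identity you single out is indeed the identity the paper uses.

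There is, however, one missing step that you need twice but never justify: you need $((\Omega^{v'})^{-u},(\calE^{v'})^{-u})$ to be a \emph{valid} {\tt OC} tree for $(\varphi^{-u}\cap[v']^\circ,\,G[[v']^\circ;v'])$. This is required as a hypothesis of the lemma when you ``apply the inductive hypothesis to $\varphi^{-u}$'' (otherwise the triple $(\Omega^\circ,\{\Omega^v\}_{v\ne v'},(\Omega^{v'})^{-u})$ does not satisfy the lemma's assumptions for the sequence $\varphi^{-u}$), and it is again required as the precondition of Lemma~\ref{lemma:OC-add-node} when you ``apply it inside $G[[v']^\circ;v']$'' to conclude that $[u]^{v'}$ is a minimum $p$-$u$ cut. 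Lemma~\ref{lemma:OC-add-node} as stated assumes the validity of the truncated tree; it does not give it to you. The paper fills this hole by invoking Lemma~\ref{lemma:free-leaf}. You should either cite that lemma, or argue directly that removing the \emph{last} node of the sequence from an {\tt OC} tree preserves validity (this is the easy base case inside the proof of Lemma~\ref{lemma:free-leaf}: for every remaining $w$ the set $[w]^\downarrow$ is unchanged, so the required cut properties carry over verbatim). Once that observation is inserted, your proof is complete and coincides with the paper's.
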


We use induction on $|\varphi|$. If $\varphi=\alpha$ then the claim is trivial.
Suppose that $\varphi=\alpha\beta u$. Let $v\in\alpha$ be the unique node with $u\in[v]^\circ$,
and denote $\gamma=\beta\cap [v]^\circ$.
We will work with pairs $(\Omega^{-u},\calE^{-u})$ and $((\X^v)^{-u},(\calE^v)^{-u})$,
which are {\tt OC} trees for sequences $\alpha\beta$ and $v\gamma$ respectively.
Clearly, we have $[w]=[w]^v$ for $w\in\gamma u$
and $[w]^{-u}=([w]^v)^{-u}$ for $w\in\gamma$
(where $([w]^v)^{-u}$ is the component of $(\Omega^v)^{-u}$ to which $w$ belongs).
Let $p\in v\gamma$ be the parent of $u$ in $(\varphi,\calE)$
(equivalently, in $(v\gamma u,\calE^v)$). Denote $H=G[[v]^\circ;v]$.
By Lemma~\ref{lemma:free-leaf}, $((\X^v)^{-u},(\calE^v)^{-u})$ is an {\tt OC} tree for $(v\gamma,H)$.
Thus, the induction hypothesis gives that $({\Omega}^{-u},{\calE}^{-u})$
is an {\tt OC} tree for $(\alpha\beta,G)$.
By Lemma~\ref{lemma:OC-add-node},  $[u]^v=[u]$ is a minimum $p$-$u$ cut in $H[{([p]^v)}^{-u};p]=G[[p]^{-u};p]$.
By Lemma~\ref{lemma:OC-add-node}, $(\Omega,\calE)$ is an {\tt OC} tree for $(\varphi,G)$.

\subsection{Proof of Lemma~\ref{lemma:divide-and-conquer:two}}

\renewcommand{\thelemmaRESTATED}{\ref{lemma:divide-and-conquer:two}}
\begin{lemmaRESTATED}[restated]
Consider sequence $\varphi=s\alpha\ldots$ in graph $G$ with $|\alpha|\ge 1$.
Let $(S,T)$ be a minimum $s$-$\alpha$ cut in $G$, and let $T_s=T\cup\{s\}$.
Suppose that $(\Omega',\calE')$ is an {\tt OC} tree for $(\varphi\cap S,G[S;s])$
and $(\Omega'',\calE'')$ is an {\tt OC} tree for $(\varphi\cap T_s,G[T_s;s])$.
Then $(\X,\calE)$ is an {\tt OC} tree for $(\varphi,G)$
where $\Omega=\{[s]'\cup[s]''\}\cup (\Omega'-\{[s]'\}) \cup (\Omega''-\{[s]''\})$
and $\calE=\calE'\cup\calE''$.
\end{lemmaRESTATED}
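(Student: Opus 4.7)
The plan is to verify directly the three conditions of Definition~\ref{def:representing-tree} for $(\Omega, \calE)$. The structural checks are essentially bookkeeping: since $(\varphi \cap S, \calE')$ and $(\varphi \cap T_s, \calE'')$ are rooted trees on sequences that intersect only at $s$, gluing them at $s$ produces a rooted tree $(\varphi, \calE)$ on $\varphi$ with edges oriented toward $s$ and $v \sqsubset u$ for every $uv \in \calE$. Likewise, $\Omega$ is a partition of $V$ with $|\Omega| = |\Omega'| + |\Omega''| - 1 = |\varphi|$ and $|A \cap \varphi| = 1$ for every $A \in \Omega$, using that $S \cap T_s = \{s\}$ and that $[s]'$, $[s]''$ are the unique components of $\Omega'$, $\Omega''$ containing~$s$.

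The key observation I would record next is an identification of subtree-unions. Because $\calE'$ and $\calE''$ share only the root $s$, any descendant of a non-root $v_k \in \varphi \cap S$ in $(\varphi, \calE)$ is already a descendant of $v_k$ in $(\varphi \cap S, \calE')$; and for such descendants $u \neq s$ we have $[u] = [u]'$. Consequently $[v_k]^\downarrow = [v_k]'^\downarrow$ whenever $v_k \in (\varphi \cap S) - \{s\}$, and symmetrically $[v_k]^\downarrow = [v_k]''^\downarrow$ whenever $v_k \in \varphi \cap T$.

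The validity clause then reduces to showing, for each $v_k \in \varphi - \{s\}$, that $[v_k]^\downarrow$ is a minimum $\{s, v_1, \ldots, v_{k-1}\}$-$v_k$ cut in $G$. I would treat the representative case $v_k \in S$ (the case $v_k \in T$ is entirely symmetric). By hypothesis, $[v_k]'^\downarrow \subseteq S$ is a minimum $(\{s, v_1, \ldots, v_{k-1}\} \cap S)$-$v_k$ cut in $G[S;s]$. Since any $U \subseteq S$ with $s \notin U$ has the same cost viewed as a cut in $G$ as in $G[S;s]$ (edges with both endpoints in $T \cup \{s\}$ become self-loops at $s$ and drop out; all other edges contribute identically), the set $[v_k]'^\downarrow$ is a $\{s, v_1, \ldots, v_{k-1}\}$-$v_k$ cut in $G$ of cost equal to $f_{G[S;s]}(\{s, v_1, \ldots, v_{k-1}\} \cap S, v_k)$. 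The real content of the argument is to show that this cost is not strictly larger than $f_G(\{s, v_1, \ldots, v_{k-1}\}, v_k)$.

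This is the main obstacle, and I would overcome it by an uncrossing argument with the minimum $s$-$\alpha$ cut $(S,T)$. Let $U^*$ be any minimum $\{s, v_1, \ldots, v_{k-1}\}$-$v_k$ cut in $G$. Submodularity yields ${\tt cost}(U^*) + {\tt cost}(S) \ge {\tt cost}(U^* \cap S) + {\tt cost}(U^* \cup S)$. The complement $V - (U^* \cup S) = T \cap (V - U^*)$ contains $\alpha$ (since $\alpha \subseteq T$ and $\alpha \cap U^* = \varnothing$) and excludes $s$, so it is an $s$-$\alpha$ cut and ${\tt cost}(U^* \cup S) \ge {\tt cost}(S)$ by the minimality of $(S,T)$. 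Hence ${\tt cost}(U^* \cap S) \le {\tt cost}(U^*)$, and a routine check confirms that $U^* \cap S$ is itself a $\{s, v_1, \ldots, v_{k-1}\}$-$v_k$ cut (it contains $v_k \in S$, excludes $s$, and excludes each $v_i$ already excluded from $U^*$), supplying the needed lower bound. For $v_k \in T$, the symmetric move is to uncross with $U^* \cup T$, which is an $s$-$\alpha$ cut because it contains $\alpha$ and excludes $s$; this yields ${\tt cost}(U^* \cap T) \le {\tt cost}(U^*)$ and finishes the proof.
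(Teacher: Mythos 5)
Your proof is correct, but it follows a genuinely different route from the paper's. The paper handles the two cases asymmetrically: for $u \in S$ it reduces to Lemma~\ref{lemma:divide-and-conquer:one} (observing that $(\{S,T\},\{as\})$ is itself an {\tt OC} tree for $(sa,G)$ when $|\alpha|=1$, then handling larger $\alpha$ by contraction), while for $u \in T$ it invokes Lemma~\ref{lemma:parametric} to exhibit a minimum $AB$-$u$ cut $(S',T')$ with $S \subseteq S'$ and combines this with the hypothesis that $[u]^\downarrow$ is a minimum $SB$-$u$ cut. You instead run a single, symmetric uncrossing argument against the cut $(S,T)$ in both cases: for $v_k \in S$ you observe that $V-(U^*\cup S)$ is an $s$-$\alpha$ cut (so ${\tt cost}(U^*\cup S)\ge{\tt cost}(S)$, hence ${\tt cost}(U^*\cap S)\le{\tt cost}(U^*)$), and for $v_k \in T$ you uncross with $T$ analogously. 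This is self-contained — it avoids routing through Lemma~\ref{lemma:divide-and-conquer:one} and Lemma~\ref{lemma:parametric} — and it makes the symmetry between the two sides of the cut explicit, at the cost of reproving (in a specialized form) the submodularity consequence that the paper's Lemma~\ref{lemma:parametric} packages. The bookkeeping steps (the subtree identity $[v_k]^\downarrow=[v_k]'^\downarrow$ for $v_k\in S-\{s\}$ and $[v_k]^\downarrow=[v_k]''^\downarrow$ for $v_k\in T$, and the cost identity ${\tt cost}_{G[S;s]}(U)={\tt cost}_G(U)$ for $U\subseteq S-\{s\}$) are correctly stated and are exactly the facts needed to transfer between the contracted graphs and $G$.
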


First, let us consider node $u\in\varphi\cap S$ with $u\ne s$. We can write $\varphi=s\alpha\beta u\ldots$, since $\alpha\cap S=\varnothing$.
We need to show that $[u]^\downarrow$ is a minimum $s\alpha\beta$-$u$ cut.
If $|\alpha|=1$ then the claim follows from Lemma~\ref{lemma:divide-and-conquer:one}
(since $(\Omega^\circ,\calE^\circ)=(\{S,T\},\{\alpha s\})$ is an {\tt OC} tree for $(s\alpha,G)$).
The general case can be reduced to the case above by contracting nodes in $\alpha$ to a single node.
Such transformation preserves set $S$ in a minimum $s$-$\alpha$ cut $(S,T)$;
thus, {\tt OC} tree $(\Omega',\calE')$ for $(\varphi\cap S,G[S;s])$ is not affected,
and therefore set $[u]^\downarrow$ is also not affected.

Now consider node $u\in\varphi\cap T$. Let us write $\varphi= \beta u \ldots$, and denote $\beta\cap S=A$ and $\beta\cap T=B$.
For brevity, we denote $X_1\ldots X_k=X_1\cup \ldots \cup X_k$ for disjoint subsets $X_1,\ldots,X_k$;
if $X_i=\{x_i\}$ then we write $x_i$ instead of $X_i$.
We make two claims:
\begin{itemize}
\item There exists a minimum $AB$-$u$ cut $(S',T')$ with $S\subseteq S'$.
Indeed, $(S,T)$ is a minimum $s$-$(\alpha\cup\{u\})$ cut (since $u\in T$).
The claim now holds by Lemma~\ref{lemma:parametric}.
\item $[u]^\downarrow$ is a minimum $SB$-$u$ cut in $G$.
Indeed, $(\Omega'',\calE'')$ is an {\tt OC} tree for $(\varphi\cap T_s,G[T_s;s])$,  thus set $[u]^\downarrow=[u]''^\downarrow$ is a minimum $sB$-$u$ cut in $G[T_s;s]$.
This is equivalent to the claim above.
\end{itemize}
These claims imply that $[u]^\downarrow$ is a minimum $AB$-$u$ cut, or equivalently a minimum $\beta$-$u$ cut.

\subsection{Proof of Theorem~\ref{th:random-permutation}: costructing {\tt OC} tree for random permutations}\label{sec:permutations}
We will use the following algorithm.

\begin{algorithm}[H]
\setcounter{AlgoLine}{0}

  \DontPrintSemicolon
    if $|\varphi|\le 2$ then compute {\tt OC} tree $(\X,\calE)$ for $(\varphi,G)$ non-recursively and return $(\X,\calE)$\!\!\!\!\!\!\!\!\! \\
    split $\varphi=\alpha\beta$ where $|\alpha|=\left\lceil\tfrac 12(|\varphi|+1)\right\rceil$ \\
    call $(\Omega^\circ,\calE^\circ)\leftarrow{\tt OrderedCuts}(\alpha;G)$, set $(\X,\calE)=(\varnothing,\calE^\circ)$ \\
    \For{$v\in \alpha$}
    {
    	compute minimum $v$-$(\beta\cap [v]^\circ)$ cut $(S,T)$ in $G[[v]^\circ;v]$ such that $T$ is minimal \\
    	let $(\Omega^v,\calE^v)\leftarrow{\tt OrderedCuts}(v(\beta\cap T),G[T\cup\{v\};v])$ \\
    	update $\Omega:=\Omega\cup \{S\cup[v]^v\} \cup (\Omega^v-[v]^v)$ and $\calE:=\calE\cup \calE^v$
    }
    return $(\X,\calE)$
      \caption{${\tt OrderedCuts}(\varphi;G)$: divide-and-conquer algorithm. 
      }\label{alg:divide-and-conquer}
      
\end{algorithm}

The correctness follows by an induction
argument and by the previous lemmas:
at line 7
 $ (\{S\cup[v]^v\} \cup (\Omega^v-[v]^v),\calE^v)$ is an {\tt OC} tree for $G[[v]^\circ;v]$ by the last case of Lemma~\ref{lemma:divide-and-conquer:two}
 applied to sequence $v (\beta \cap T)$,
 and the output at line 8 is an {\tt OC} tree for $(\varphi,G)$ by Lemma~\ref{lemma:divide-and-conquer:one}.

Computations performed by the algorithm can be represented by a rooted tree whose 
nodes have the form $\sigma=(\varphi;G)$. For such node we denote $\varphi_\sigma=\varphi$, and define $H_\sigma=(V_\sigma,E_\sigma)$ to be the graph obtained from $G$
by removing node $s$ and incident edges, where $s$ is the first node of sequence $\varphi$.
The leaves $\sigma$ of this tree satisfy $|\varphi_\sigma|\le 2$.
Each non-leaf node $\sigma$ has a child $\tau$ corresponding to the recursive call at line 3
(we call it the {\em left child}), and children $\{\tau^v\::\:v\in\alpha\}$ corresponding to recursive calls at line 6
(we call them {\em right children}).
Let us assign  label $\lambda(\sigma)=(\lambda_1(\sigma),\ldots,\lambda_d(\sigma))\in \{0,1\}^\ast$ according to the following rules:
(i) the label of the root node is the empty string;
(ii) if $\sigma$ is a non-leaf node then its left child is assigned label $(\lambda(\sigma),0)$
and its right children are assigned label $(\lambda(\sigma),1)$.
Let $d_\sigma$ be the depth of node $\sigma$, then $d_\sigma=|\lambda(\sigma)|$.
The set of labels that appears during the algorithm will be denoted as $\Lambda\subseteq\{0,1\}^\ast$,
and for a label $\mu\in \Lambda$ let $\Sigma_\mu$ be the set of nodes $\sigma$ with $\lambda(\sigma)=\mu$.

\begin{lemma}\label{lemma:GALSG} The execution of ${\tt OrderedCuts}(\varphi;G)$ satisfies the following properties. \\
{\rm (a)} The maximum depth satisfies $d_{\max}\le \log_2|\varphi|+O(1)$,
and $|\Lambda|\le O(|\varphi|)$. \\
{\rm (b)} Consider non-leaf node $\sigma$, and let $\tau$ be its left child and $\tau_1,\ldots,\tau_k$ be its right children.
Then $(V_\tau,E_\tau)=(V_\sigma,E_\sigma)$, $V_{\tau_1},\ldots,V_{\tau_k}$ are disjoint subsets of $V_\sigma$,
and $E_{\tau_1},\ldots,E_{\tau_k}$ are disjoint subsets of $E_\sigma$. \\
{\rm (c)} For each  $\mu\in\Lambda$ sets in $\{V_\sigma\::\:\sigma\in\Sigma_\mu\}$ and in $\{E_\sigma\::\:\sigma\in\Sigma_\mu\}$ are disjoint. 
\end{lemma}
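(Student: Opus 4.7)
The plan is to handle the three parts in order: part (a) follows from a size recurrence on subproblem lengths, part (b) is a direct unfolding of the definitions of the left and right recursive calls, and part (c) reduces to a straightforward induction on label length using part (b).

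For part (a), I would first observe that whenever a non-leaf node $\sigma$ has $|\varphi_\sigma| = n \ge 3$, every child operates on a sequence of length at most $\lceil (n+1)/2 \rceil$. The left child has length exactly $\lceil (n+1)/2 \rceil = |\alpha|$ by line 2; each right child has length $1 + |\beta \cap T| \le 1 + |\beta| = 1 + \lfloor (n-1)/2 \rfloor \le \lceil (n+1)/2 \rceil$. Iterating this bound (via the substitution $t_d = s_d - 2$, which halves at each step) shows that any descendant at depth $d$ has sequence length at most $(|\varphi|-2)/2^d + 2$, so the recursion reaches its base case $|\varphi_\sigma| \le 2$ within $d_{\max} \le \log_2 |\varphi| + O(1)$ levels. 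Since labels are binary strings of length at most $d_{\max}$, we obtain $|\Lambda| \le 2^{d_{\max}+1} = O(|\varphi|)$.

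For part (b), I would unpack the definitions. The left child $\tau$ comes from the call ${\tt OrderedCuts}(\alpha; G)$ in line 3, and since $\alpha$ and $\varphi$ share the same first element $s$, removing $s$ from $G$ gives the same auxiliary graph, hence $(V_\tau, E_\tau) = (V_\sigma, E_\sigma)$. A right child $\tau^v$ for $v \in \alpha$ comes from the call at line 6 on the graph $G[T_v \cup \{v\}; v]$ with first node $v$, so $V_{\tau^v} = T_v$ and $E_{\tau^v}$ is the set of edges of $G$ with both endpoints in $T_v$. Because $(\Omega^\circ, \calE^\circ)$ is an {\tt OC} tree for $(\alpha, G)$, the sets $[v]^\circ$ for $v \in \alpha$ form a partition of $V_G$, and line 5 enforces $T_v \subseteq [v]^\circ$; hence the $T_v$'s are pairwise disjoint. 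Finally, $T_v \subseteq V_\sigma = V_G - \{s\}$ because $s \notin T_v$: if $v = s$ the min-cut convention in line 5 places $v$ outside $T$, and if $v \ne s$ then $[v]^\circ \cap [s]^\circ = \varnothing$ already excludes $s$.

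For part (c) I would induct on $|\mu|$. The base case $\mu$ empty is trivial since $\Sigma_\mu$ is a singleton. For the inductive step, write $\mu = (\mu', b)$ and for $\sigma \in \Sigma_\mu$ let $\pi(\sigma) \in \Sigma_{\mu'}$ denote its parent. By induction the sets $\{V_{\pi(\sigma)}\}$ are pairwise disjoint (and similarly for edges). If $b = 0$, each parent has at most one left child, so $\pi$ is injective on $\Sigma_\mu$ and part (b) gives $V_\sigma = V_{\pi(\sigma)}$, yielding disjointness. If $b = 1$, two elements of $\Sigma_\mu$ sharing a parent have disjoint $V$'s and $E$'s by part (b), while those with distinct parents have $V_\sigma \subseteq V_{\pi(\sigma)}$ disjoint from $V_{\sigma'} \subseteq V_{\pi(\sigma')}$ by induction; the edge argument is identical. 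The only subtlety I anticipate is the corner case $v = s$ in part (b), where one must verify that line 5's convention really keeps $s$ out of $T_v$; otherwise the argument is a clean unfolding of definitions with an induction on label length, so no serious obstacle remains.
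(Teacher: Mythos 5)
Your proof is correct and follows essentially the same route as the paper: a size recurrence $|\varphi_\tau|\le\lceil(|\varphi_\sigma|+1)/2\rceil$ iterated to bound the depth for part (a), a direct unfolding of line 3 and lines 5--6 of Algorithm~\ref{alg:divide-and-conquer} for part (b), and an induction on label length that uses (b) at each level for part (c). The paper leaves (b) and (c) as "follows from the construction" and "straightforward induction," so your more explicit accounting of why $V_{\tau^v}=T_v$, why the $T_v$'s are contained in disjoint blocks $[v]^\circ$, and how the $b=0$ and $b=1$ cases of the induction differ is a faithful elaboration rather than a different argument.
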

\begin{proof}
By construction, if $\tau$ is a child of $\sigma$ then $|\varphi_\tau|\le |\varphi_\sigma|/2+1$ if $|\varphi_\sigma|$ is even,
and $|\varphi_\tau|\le (|\varphi_\sigma|+1)/2$ if $|\varphi_\sigma|$ is odd.
Let $d_{\max}$ be the maximum depth. For each non-leaf node $\sigma$ we have $|\varphi_\sigma|\ge 2^{d_{\max}-d_\sigma-1}+2$,
therefore $d_{\max}\le \log_2|\varphi|+O(1)$ where $\varphi$ is the input string.
This implies that $|\Lambda|\le 1+2^1+\ldots+2^{d_{\max}}\le O(|\varphi|)$, and proves part (a).

Part (b) follows from the algorithm's construction. Part (c) follows from part (b) combined with a straightforward induction argument.
\end{proof}
Lemma~\ref{lemma:GALSG} gives the first part of Theorem~\ref{th:random-permutation}, i.e.\ that the  complexity of Algorithm~\ref{alg:divide-and-conquer}
is $O(|\varphi|\cdot t_{\tt MC}(n,m))$. We now focus on the second part.
 Let us define a {\em generalized sequence} $\varphi$ as a set of nodes 
together with a partial order~$\sqsubseteq$ on this set. With some abuse
of terminology the set of nodes in $\varphi$ will be denoted simply as $\varphi$.
We say that $\varphi$ is a {\em $0$-sequence} if $\sqsubseteq$ does not impose any relations on $\varphi$,
and $\varphi$ is {\em $1$-sequence} if there exists $s\in\varphi$ such that $s\sqsubseteq v$ for all $v\in\varphi-\{s\}$
and $\sqsubseteq$ does not impose any relations on $\varphi-\{s\}$. 
If $\varphi$ is a $1$-sequence then we will write $\varphi=s\varphi'$ where $\varphi'$ is a $0$-sequence.

Let us define a randomized algorithm ${\tt OrderedCuts}'(\varphi;G)$ that takes as an input $1$-sequence $\varphi=s\varphi'$, graph $G$
and returns pair $(\Omega,\calE)$ which is an {\tt OC} tree for some sequence $sv_1,\ldots,v_\ell$
such that $\{v_1,\ldots,v_\ell\}=\varphi'$ (see Algorithm~\ref{alg:divide-and-conquer'}).
It differs from ${\tt OrderedCuts}(\varphi;G)$ only in line 2:
in Algorithm~\ref{alg:divide-and-conquer} the order of nodes in $\varphi$ is fixed,
while Algorithm~\ref{alg:divide-and-conquer'} first chooses a partial order on $\varphi'$
and then performs the same steps as Algorithm~\ref{alg:divide-and-conquer}.


\begin{algorithm}[H]
\setcounter{AlgoLine}{0}

  \DontPrintSemicolon
    if $|\varphi|\le 2$ then compute {\tt OC} tree $(\X,\calE)$ for $(\varphi,G)$ non-recursively and return $(\X,\calE)$\!\!\!\!\!\!\!\!\! \\
    sample a random subset $A\subseteq \varphi'$ of size $|A|=\left\lceil\tfrac 12(|\varphi|+1)\right\rceil-1$,
let $\alpha$ be the $1$-sequence with $\alpha=sA$, and let $\beta$ be a $0$-sequence with $\beta=\varphi-\alpha$
\\
    call $(\Omega^\circ,\calE^\circ,\sqsubseteq^\circ)\leftarrow{\tt OrderedCuts}(\alpha;G)$, set $(\X,\calE)=(\varnothing,\calE^\circ)$ \\
    \For{$v\in \alpha$}
    {
    	compute minimum $v$-$(\beta\cap [v]^\circ)$ cut $(S,T)$ in $G[[v]^\circ;v]$ such that $T$ is minimal \\
    	let $(\Omega^v,\calE^v,\sqsubseteq^v)\leftarrow{\tt OrderedCuts}'(v(\beta\cap T),G[T\cup\{v\};v])$ \\
    	update $\Omega:=\Omega\cup \{S\cup[v]^v\} \cup (\Omega^v-[v]^v)$ and $\calE:=\calE\cup \calE^v$
    }
    return $(\X,\calE)$
      \caption{${\tt OrderedCuts}'(\varphi;G)$: divide-and-conquer algorithm for 1-sequence $\varphi=s\varphi'$.\!\!\!\!\!\!\!\!\! 
      }\label{alg:divide-and-conquer'}
      
\end{algorithm}

We claim that we can prove the second part of Theorem~\ref{th:random-permutation}
by analyzing the expected complexity of ${\tt OrderedCuts}'(\varphi;G)$ for a given $1$-sequence $\varphi$.
Indeed, let us augment procedure ${\tt OrderedCuts}'(s\varphi';G)$ so
that it refines a partial order on elements of $\varphi'$ as it runs.
In the beginning all elements of $\varphi'$ are incomparable (as $\varphi'$ is a 0-sequence).
Whenever line 2 chooses  decomposition $\tilde \varphi'=AB$ of the current 0-sequence $\tilde\varphi'$
(possibly at a deeper level of recursion),
we refine the partial order on $\varphi'$ by setting
$a\sqsubseteq b$ for all $(a,b)\in A\times B$.
(It can be checked by induction that before that moment
elements in $A\cup B$ were pairwise-incomparable).
When the algorithm terminates, let us select a total order on $\varphi'$
by uniformly sampling from all total orders consistent with the current partial order.
Clearly, the augmented procedure generates a total order on $\varphi'$
which is a random permutation of the original 0-sequence $\varphi'$.
Furthermore, the output of ${\tt OrderedCuts}'(s\varphi';G)$ is identical
to the output of ${\tt OrderedCuts}(s\varphi';G)$ with this permutation.
The claim follows.

Next, we analyze the expected complexity of ${\tt OrderedCuts}'(\varphi;G)$ for a given $1$-sequence $\varphi$.
We use the same notation and terminology for analyzing ${\tt OrderedCuts}'(\varphi;G)$ that we defined in the beginning of this section for
${\tt OrderedCuts}(\varphi;G)$. We make the following claim about the execution of  ${\tt OrderedCuts}'(\varphi;G)$.

\begin{lemma}\label{lemma:PAOISHGASFADASGAKGHAKSHDBAJAKJDFJADFBAJDF}
\begin{sloppypar}
{\rm (a)} Consider non-leaf node $\sigma$, and let $\tau_1,\ldots,\tau_k$ be its right children.
Then  $\mathbb E[\sum_{i\in[k]} |V_{\tau_i}|]\le \tfrac 12 |V_\sigma|$ and
$\mathbb E[\sum_{i\in[k]} |E_{\tau_i}|]\le \tfrac 12 |E_\sigma|$.
\end{sloppypar}
\noindent{\rm (b)} For each $\mu\in\Lambda$ we have 
$\mathbb E[|V_\mu|]\le n\cdot \left(\tfrac 12\right)^{||\mu||}$ and
$\mathbb E[|E_\mu|]\le m\cdot \left(\tfrac 12\right)^{||\mu||}$ 
where $V_\mu=\bigcup_{\sigma\::\:\lambda(\sigma)=\mu} V_\sigma$, $E_\mu=\bigcup_{\sigma\::\:\lambda(\sigma)=\mu} E_\sigma$,
$n=|V_G|$, $m=|E_G|$, and $||\mu||$ is the number of $1$'s in $\mu$.
\end{lemma}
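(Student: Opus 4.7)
For part~(a), I start from the observation that the sets $T_v = V_{\tau^v}$ for $v\in\alpha=\{s\}\cup A$ produced at line~5 of Algorithm~\ref{alg:divide-and-conquer} are pairwise disjoint, since each $T_v\subseteq[v]^\circ$ and $\{[v]^\circ\}_{v\in\alpha}$ partitions $V_G$. Writing $X_A:=\bigcup_v T_v$, we therefore have $\sum_i |V_{\tau_i}|=|X_A|$, and any edge counted in $\sum_i|E_{\tau_i}|$ lies inside some $T_v$, hence strictly inside $X_A$. Directly from the algorithm's construction, $v\in S_v$ and $\beta\cap[v]^\circ\subseteq T_v$ for each $v\in\alpha$, which yields the two elementary inclusions $X_A\cap\alpha=\varnothing$ and $\beta\subseteq X_A$; in particular $\bar A:=\varphi'-A\subseteq X_A\subseteq V_\sigma-A$.

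The core of the argument is a pairing of $A$ with its complement $\bar A$. When $|\varphi'|$ is even, $|A|=|\bar A|$ and $A\mapsto\bar A$ is an involution on the sample space; the odd case is handled by a size-preserving variant. I will establish the key disjointness $X_A\cap X_{\bar A}=\varnothing$ in $V_\sigma$. For $u\in\varphi'$ this is immediate from the inclusions above ($u\in A$ forces $u\in X_{\bar A}$ and $u\notin X_A$, and symmetrically for $u\in\bar A$). For $u\in V_\sigma-\varphi'$ one combines Lemma~\ref{lemma:divide-and-conquer:one} with the local optimality of each $T_v$ to identify $X_A$ with the $T$-minimal minimum $\alpha$-$\beta$ cut of $G$, and similarly $X_{\bar A}$ with the $T$-minimal minimum $\bar\alpha$-$\bar\beta$ cut; then a standard submodularity argument on the cut function of $G$ rules out a common element outside $\varphi'$, since such an element would allow strictly shrinking either $X_A$ or $X_{\bar A}$ while preserving its (minimum) cost, contradicting $T$-minimality.

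Given the disjointness, $X_A,X_{\bar A}\subseteq V_\sigma$ implies $|X_A|+|X_{\bar A}|\leq|V_\sigma|$, so averaging over the uniform random $A$ yields $\mathbb E_A[|X_A|]=\tfrac12\mathbb E_A[|X_A|+|X_{\bar A}|]\leq|V_\sigma|/2$. The edge bound is analogous: an edge counted inside $X_A$ under $A$ cannot also be counted under $\bar A$, since both its endpoints would then lie in $X_A\cap X_{\bar A}$. Part~(b) then follows by a routine induction on $||\mu||$: the root has $|V_\varepsilon|\leq n$ and $|E_\varepsilon|\leq m$; for a $0$-child, Lemma~\ref{lemma:GALSG}(b) gives $V_\tau=V_\sigma$ at each non-leaf $\sigma$, whence $\mathbb E[|V_{\mu\cdot 0}|]\leq\mathbb E[|V_\mu|]\leq n(1/2)^{||\mu||}$; for a $1$-child, part~(a) applied conditionally at each $\sigma\in\Sigma_\mu$ together with the tower rule gives $\mathbb E[|V_{\mu\cdot 1}|]\leq\tfrac12\mathbb E[|V_\mu|]\leq n(1/2)^{||\mu||+1}$, and edges are handled identically. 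The main technical obstacle is the disjointness claim on $V_\sigma-\varphi'$: identifying $X_A$ and $X_{\bar A}$ with $T$-minimal minimum cuts for the dual source/sink partitions, and ruling out shared elements via submodularity of the cut function.
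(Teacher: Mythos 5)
Your high-level plan (pair the random split with its ``complement'' and argue disjointness of the two recursion-target sets) is aimed in the right direction, but the concrete route has two genuine gaps, both of which the paper's proof sidesteps by a different mechanism.

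First, the assertion that $X_A=\bigcup_{v\in\alpha}T_v$ equals the $T$-minimal minimum $\alpha$-$\beta$ cut in $G$ is neither proved nor obviously true, and the paper explicitly warns that the algorithm does \emph{not} compute a single minimum $s\alpha$-$\beta$ cut (see the footnote to the introduction: ``instead of computing minimum $s\alpha$-$\beta$ cut, we split $G$ into components \ldots and compute an appropriate minimum cut in each component''). Each $T_v$ is $T$-minimal only among $v$-$(\beta\cap[v]^\circ)$ cuts inside $G[[v]^\circ;v]$, and edges of $G$ running between $T_u$ and $T_v$ for $u\ne v$ mean that ${\tt cost}_G(X_A)$ need not equal the sum of the per-component cut costs, let alone equal $f_G(\alpha,\beta)$. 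The paper avoids this entirely: it introduces the \emph{single} cut $T^\ast$ (the minimal minimum $A'$-$B$ cut in $G$) and shows, for each $v$ separately, that $T_v\subseteq T^\ast$ by Lemma~\ref{lemma:parametric} applied to $A'\subseteq A_v=(V-[v]^\circ)\cup\{v\}$ and $B_v=B\cap[v]^\circ\subseteq B$. The disjointness argument is then carried out on $T^\ast$ alone, not on the union $X_A$, so the unproved ``$X_A$ is a minimum cut'' claim is never needed. Your submodularity sketch also implicitly needs $X_A$ and $X_{\bar A}$ to be \emph{minimal} minimum cuts for the posimodularity contradiction to fire; even if $X_A$ were a minimum $\alpha$-$\beta$ cut, minimality would be an additional unjustified claim.

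Second, the pairing $A\leftrightarrow\bar A$ has a size mismatch whenever $|\varphi'|$ is odd ($|A|=p>q=|\bar A|$), so $\bar A$ is not a legal value of the random split and the involution does not act on the sample space; ``a size-preserving variant'' is left unspecified, and it is not clear how to repair it without changing the estimate. The paper's device is to sub-sample a random $A'\subseteq A$ with $|A'|=q=|B|$ and pair the \emph{ordered} pair $(A',B)$ with $(B,A')$; both directions are equally likely conditioned on the unordered pair, and the disjointness of $T^\ast_{X,Y}$ and $T^\ast_{Y,X}$ (minimal cut versus complement of maximal cut, a standard lattice fact) gives $|T^\ast_{X,Y}|+|T^\ast_{Y,X}|\le|V|$ for every pair. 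This handles both parities uniformly and requires no extra case. Finally, note that the lemma is stated for the reformulated procedure ${\tt OrderedCuts}'$, which resamples the split $A$ independently at every recursion node; your ``apply part~(a) conditionally at each $\sigma\in\Sigma_\mu$'' step in part~(b) tacitly relies on exactly this conditional uniformity, which does not hold verbatim for the original ${\tt OrderedCuts}$ with a fixed input permutation. You would need to invoke the paper's claims~(i)--(ii) relating ${\tt OrderedCuts}$ and ${\tt OrderedCuts}'$ to make the induction legitimate.
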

\begin{proof}
Part (b) is a straightforward consequence of part (a) (and Lemma~\ref{lemma:GALSG}(b)), so we focus on proving part (a).
Consider the call ${\tt OrderedCuts}'(s\varphi';G)$ with graph $G=(V,E,w)$ corresponding to node $\sigma$.
Define $\ell=|\varphi'|$, $p=\left\lceil\tfrac 12(|\varphi|+1)\right\rceil-1$ and $q=\ell-p$, then $p\ge q$.
Define sets $A=\alpha-\{s\}$ and $B=\beta$.
Recall that $A$ is a random subset of $\varphi'$ of size $p$, and $B=\varphi'-A$.

Let $A'$ be a random subset of $A$ of size $q$,
and let $T^\ast$ be the minimal minimum $A'$-$B$ cut in $G$.
We claim that 
$\mathbb E[|V_{G[T^\ast]}|]\le \tfrac 12 |V|$
and $\mathbb E[|E_{G[T^\ast]}|]\le \tfrac 12 |E|$.
Indeed, for disjoint
 subsets $X,Y\subseteq \varphi'$ of size $q$ let $T^\ast_{X,Y}$ be the minimal minimum $X$-$Y$ cut in $G$,
Clearly, sets $T^\ast_{X,Y}$ and $T^\ast_{Y,X}$ are disjoint.
Conditioned on the event $\{A',B\}=\{X,Y\}$, we have $(A',B)=(X,Y)$ with prob. $\tfrac 12$
and $(A',B)=(Y,X)$ with prob. $\tfrac 12$.
Therefore,
$$
\mathbb E[|V_{G[T^\ast]}|\:\:|\:\:\{A',B\}=\{X,Y\}]
= \tfrac 12 |V_{G[T^\ast_{X,Y}]}| + \tfrac 12 |V_{G[T^\ast_{Y,X}]}| \le \tfrac 12 |V|
$$
Summing this over pairs $(X,Y)$ with appropriate probabilities gives
the desired bound on $\mathbb E[|V_{G[T^\ast]}|]$.
The bound on $\mathbb E[|E_{G[T^\ast]}|]$ can be obtained in a similar way.

Now consider node $v\in\alpha$, and let $(S,T)$ be the cut computed at line 5 for node $v$.
$(S,T)$ can be equivalently defined as the minimum $A_v$-$B_v$ cut in $G$ with minimal $T$
where $A_v=(V-[v]^\circ)\cup\{v\}$ and $B_v=B\cap[v]^\circ$.
We have $A'\subseteq A_v$ and $B_v\subseteq B$, so by Lemma~\ref{lemma:parametric} we have $T\subseteq T^\ast$.
This implies $H_{\tau_v}$ is a subgraph of $G[T^\ast]$
where $\tau_v$ is the right child of $\sigma$ corresponding to node $v$,
and thus proves the claim.

\end{proof}

Using Lemma~\ref{lemma:PAOISHGASFADASGAKGHAKSHDBAJAKJDFJADFBAJDF}(b),
we can bound the expected total number of nodes in edges as follows:
$$
\sum_{\mu\in\Lambda} \mathbb E[|V_\mu|] \le n\cdot \sum_{\mu\in\Lambda} \left(\tfrac 12\right)^{||\mu||}
\qquad\qquad
\sum_{\mu\in\Lambda} \mathbb E[|E_\mu|] \le m\cdot \sum_{\mu\in\Lambda} \left(\tfrac 12\right)^{||\mu||}
$$
Theorem~\ref{th:random-permutation} now follows from the following calculation:
\begin{align*}
\sum_{\mu\in\Lambda} \left(\tfrac 12\right)^{||\mu||}
&\le\sum_{d=0}^{d_{\max}}\sum_{\mu\in\{0,1\}^d} \left(\tfrac 12\right)^{||\mu||}
=\sum_{d=0}^{d_{\max}}\sum_{k=0}^d \binom{d}{k} \left(\tfrac 12\right)^{k} 
=\sum_{d=0}^{d_{\max}}(1+\tfrac 12)^d \\
&=2((\tfrac 32)^{d_{\max}+1}-1)
\le (\tfrac 32)^{\log_2 n + O(1)}=O(n^{\log_2 (3/2)})
\end{align*}

\subsection{Analysis of the practical implementation} \label{sec:practical:proof:complexity}\label{sec:alg:proofs}

In this section we give a (naive) bound on the complexity of 
the OC algorithm described in Section~\ref{sec:impl}.

\begin{theorem}
Alg.~\ref{alg:GH'} with line 4 implemented as in Alg.~\ref{alg:practical:SSPQ}
for graph $G$ with ${\tt size}(G)=(n,m)$ satisfies ${\tt size}(OC)=(O(n^2), O(nm))$.
\end{theorem}
\begin{proof}
Recall that lines 6-9 of Alg.~\ref{alg:GH'} process sets in $\Pi$ starting from the minimal sets in $\Pi$.
Let us consider a modification that at line 6 picks a maximal set $S\in\Pi$ instead of a minimal set.
Processing such set will create supernodes $A=X-S$ and $B=X\cap S$. Consider family of sets $\Pi'=\{S'\in \Pi\::\:S'\subsetneq S\}$.
It can be seen that $\Pi'$ is a laminar family of sets with the property that each $S'\in\Pi'$ is a minimum $v$-$t$ cut where
$v,t$ are the maximal nodes w.r.t.\ $\preceq$ in $B$ and in $S'\cap X$, respectively.
Let us further modify the execution as follows: when processing supernode $B$, we
first run steps 6-9 for the laminar family $\Pi'$ defined above (if it is non-empty).
Clearly, this gives an execution that is equivalent to the original execution
in the following sense: it uses the same graphs $H$ on which the {\tt OrderedCuts} procedure is called, possibly in a different order.
Note that in the new execution some supernodes may be split without calling {\tt OrderedCuts}
(using instead the laminar family $\Pi'$ provided by the parent call). We will refer to the iterations defined above as {\em OC iterations} and {\em dummy iterations}, respectively.
(An iteration is one pass through the loop at lines 3-9).

By repeatedly applying such modification we obtain a valid execution of Algorithm~\ref{alg:GH'} 
such that every family $\Pi$ produced at line 4 is a set of disjoint subsets.
We can now apply a standard argument.
Each supernode $X$ that appears during the execution of the modified algorithm can be assigned a depth using natural rules:
(i) the initial supernode $V$ is at depth $0$;
(ii) if supernode $X$ at depth $d$ is split into supernodes $X_0,\ldots,X_r$ then the latter supernodes are assigned depth $d+1$.
Let $X_1,\ldots,X_{r(d)}$ be the supernodes at depth $d$,
and let $H_1,\ldots,H_{k(d)}$ be the corresponding auxiliary graphs.
By the results in Section~\ref{sec:select-s1}, the total size of graphs $H_1,\ldots,H_{k(d)}$ is $(O(n),O(m))$.
Clearly, the maximum depth cannot exceed $n$ (since the size of each supernode is smaller than the size of its parent).
The claim follows.
\end{proof}

\begin{theorem}
The total size of graphs on which the maxflow problem is solved during the call
${\tt OrderedCuts}(\varphi;G)$ (excluding terminals 
and their incident edges) is $(O(n^2 \log n),O(nm\log n))$.
\end{theorem}
\begin{proof}
Given a pair $(\varphi,G)$ with $\varphi=s\ldots $ and $G=(V,E,w)$, we denote $\dot G$ to be the graph obtained from $G$ by removing $s$ and all incident edges.
Each call to ${\tt OrderedCuts}(\varphi;G)$ with $|\varphi|>2$ makes two recursive calls
${\tt OrderedCuts}(\varphi';G')$ (at line 3) and ${\tt OrderedCuts}(\varphi'';G'')$ (at lines 5 or 8).
We say this call performs a {\em trivial split} (or it is a {\em trivial call}) if $S=\{s\}$ and $k>1$,
in which case $G'$ has a single vertex and $G''=G$.
If we remove all trivial calls, we obtain a sequence of computations that still fits the structure of procedure {\tt OrderedCuts} as defined in Section~\ref{sec:intro}
(and each call is non-trivial).
Let us analyze the size of maxflow graphs for such sequence.
Clearly, we $|\dot V'|\le |\dot V|-1$ and $|\dot V''|\le |\dot V|-1$.
Thus, the maximum depth will be at most $n$. A straightforward induction argument shows
that graphs $\dot G$ at a fixed depth are disjoint. Therefore, the total size of graphs $\dot G$ is at most $(n^2,nm)$.

Now let us consider trivial calls.
A consecutive sequence of such calls may have a length of at most $O(\log n)$
(since after each such call the value of $\bar k$ is halved).
Thus, to each trivial call we can assign a non-trivial call on
the same graph such that at most $O(\log n)$ trivial calls are assigned to each non-trivial call.
This proves the theorem.
\end{proof}

\end{document}